\newtheorem{theorem}{Theorem}
\newtheorem{proposition}[theorem]{Proposition}
\newtheorem{corollary}[theorem]{Corollary}
\newtheorem{lemma}[theorem]{Lemma}
\newcommand{\be}{\begin{equation}}
\newcommand{\ee}{\end{equation}}
\newcommand{\bea}{\begin{eqnarray}}
\newcommand{\eea}{\end{eqnarray}}
\newcommand{\ba}{\begin{array}}
	\newcommand{\ea}{\end{array}}
\newcommand{\bean}{\begin{eqnarray*}}
	\newcommand{\eean}{\end{eqnarray*}}
\newcommand{\pa}{\partial}
\begin{document}

\title{Bilinear equations in Darboux transformations by Boson-Fermion correspondence}
\author{Yi Yang, Jipeng Cheng$^*$}
\dedicatory { School of Mathematics, China University of
Mining and Technology, \\Xuzhou, Jiangsu 221116, P.\ R.\ China}
\thanks{*Corresponding author. Email: chengjp@cumt.edu.cn.}
\begin{abstract}
Bilinear equation is an important property for integrable nonlinear evolution equation. Many famous research objects in mathematical physics, such as Gromov-Witten invariants, can be described in terms of bilinear equations to show their connections with the integrable systems. Here in this paper, we mainly discuss the bilinear equations of the transformed tau functions under the successive applications of the Darboux transformations for the KP hierarchy, the modified KP hierarchy (Kupershmidt-Kiso version) and the BKP hierarchy, by the method of the Boson-Fermion correspondence. The Darboux transformations are considered in the Fermionic picture, by multiplying the different Fermionic fields on the tau functions. Here the Fermionic fields are corresponding to the (adjoint) eigenfunctions, whose changes under the Darboux transformations are showed to be the ones of the squared eigenfunction potentials in the Bosonic picture, used in the spectral representations of the (adjoint) eigenfunctions. Then the successive applications of the Darboux transformations are given in the Fermionic picture. Based upon this, some new bilinear equations in the Darboux chain are derived, besides the ones of $(l-l')$ -th modified KP hierarchy. The corresponding examples of these new bilinear equations are given.

\textbf{Keywords}: bilinear equations; Darboux transformations; Boson-Fermion correspondence; tau functions; squared eigenfunction potential.
\end{abstract}
\maketitle
\tableofcontents
\section{Introduction}
Bilinear equations \cite{Babelon2003,Jimbo1983,Date1983,Hirota2004} are an important integrability for the nonlinear evolution equations, which are usually expressed in the Hirota bilinear forms of the tau functions by the Hirota bilinear operators\cite{Hirota2004} or bilinear residue identities of tau functions (or wave functions) \cite{Jimbo1983,Date1983,Miwa2000,Kac1998}. There exists a geometric interpretation for these kinds of bilinear equations. In fact by viewing tau functions as points in the infinite Grassmannian, the bilinear equations can be written as the Pl\"ucker relations \cite{Sato1982,Date1983,Miwa2000,Willox2004}. As for the algebraic aspects, the bilinear equations of the KP hierarchy (or the BKP hierarchy) are constructed as equivalent conditions of the orbit of $GL_\infty$ (or $O_\infty$) acting on the corresponding highest weight vectors\cite{Jimbo1983,Date1983,Kac2019,Kac1998,Wang2019,You1988}. Bilinear equations play an important role in mathematical physics. Besides in seeking various solutions of the nonlinear evolution equations (for example \cite{Feng2018, Ma2018}), the bilinear equations can also connect the famous Gromov-Witten invariants with the integrable systems, for example \cite{Milanov2007,Milanov2016,Carlet2013,Cheng2019}. Here in this article, we mainly discuss bilinear equations of the transformed tau functions under the Darboux transformations.

Darboux transformation is a kind of powerful method to construct solutions of the integrable system\cite{Matveev1991,Chau1992,Oevelpa1993,Oevelrmp1993,He2002}. In the KP hierarchy, there exist two types of elementary Darboux transformations\cite{Chau1992,Oevelpa1993,Oevelrmp1993,He2002}, that is, the differential type $T_d(\Phi)=\Phi\partial^{-1}\Phi^{-1}$ and the integral type $T_i(\Psi)=\Psi^{-1}\partial^{-1}\Psi$ (see Subsection \ref{subsectionkptdti} for more details). If denote $\tau^{[l]}(t)$ to be the transformed tau functions from $\tau(t)$ under $l$-step Darboux transformation $T_d$, also called the $T_d$ Darboux chain (see \eqref{tdphi}),  then it can be found that $\tau^{[l]}(t)$ satisfy \cite{Willox2004},
 \begin{align}
 &{\rm Res}_{\lambda}\lambda^{l-l'}\tau^{[l]}(t-\varepsilon(\lambda^{-1}))
\tau^{[l']}(t'+\varepsilon(\lambda^{-1}))e^{\xi(t-t',\lambda)}=0,
\quad l\geq l', \label{bilinnmkp}
 \end{align}
which is just the bilinear equation of the $(l-l')$-th modified KP hierarchy\cite{Jimbo1983,Kac2018}, or the one of the discrete KP hierarchy\cite{Adler1999,Haine2000,Dickey1999} with the discrete variables $l$ and $l'$ only taking the non-negative values. The 0-th modified KP hierarchy is the usual KP hierarchy\cite{Miwa2000,Date1983}, while the $1$st modified KP hierarchy can be used to describe the generating functions of open and closed intersection numbers \cite{Alexandrov2015}. Note that the discussion above is only restricted to the case of $T_d$. So it will be very natural to ask what the bilinear equations are in the case of $T_i$ or in the more generalized case of mixed using $T_d$ and $T_i$ in the KP hierarchy, and whether there are similar results for other integrable hierarchies. In what follows, we will discuss these questions for the KP, modified KP (the Kupershshmidt-Kiso version)\footnote{There are many versions of the modified KP hierarchy\cite{Chengjgp2018,Chengjnmp2018}. In this paper, the modified KP hierarchy always means the Kupershmidt-Kiso version \cite{Kupershmidt1985,Kiso1990,Chengjgp2018, Oevelrmp1993,Takebe2006} except for special illustrations.} and BKP hierarchies, since these three integrable hierarchies are closely connected with each other. Note that there are the Miura links between the KP and modified KP hierarchies\cite{Oevelrmp1993,Shaw1997}, and the BKP hierarchy can be viewed as their sub-hierarchies\cite{Date1983,Yang2020}. Further these relations lead to the close relations for their Darboux transformations\cite{Shaw1997,He2006,He2007,Yang2020}.

It is usually very difficult to obtain the bilinear equations of the tau functions in the chain of Darboux transformations by the Bosonic approach.
In fact, the efficient way to construct the bilinear equations is the Boson-Fermion correspondence\cite{Jimbo1983,Kac2019,Kac1998,Kac2018,Alexandrov2013,Date1983,Miwa2000}, i.e., linking the Fermionic Fock space and the Bosonic Fock space (see Section \ref{secprekpmkpbkp} for more details). To establish the bilinear equations in the Darboux transformations, the primary step is to express the Darboux transformation in the Fermionic picture, which is always presented in the changes of the Fermionic tau function \cite{Chau1992,Willox1998}. Under Darboux transformations, the changes of the Bosonic tau functions are closely related with the (adjoint) eigenfunctions \cite{Chau1992,Oevelpa1993,Oevelrmp1993,Chengjnmp2018,He2002,He2007,Yang2020}. So the key is to find suitable Fermionic fields to express the (adjoint) eigenfunctions. Fortunately the (adjoint) wave functions are corresponding to the generating functions of Fermions \cite{Jimbo1983,Miwa2000} (i.e., $\psi(\lambda)$ and $\psi^*(\lambda)$ in (\ref{psilambda})), while any (adjoint) eigenfunctions are related to the (adjoint) wave functions through the spectral representations (see Section \ref{secprekpmkpbkp} for more details) \cite{Aratyn1998,Chengjgp2018,Cheng2010,Willox1998,Loris1999} with the spectral density given by squared eigenfunction potential (SEP). Here the SEPs \cite{Oevelpa1993,Oevel1998} are usually defined as the integrals of the products of the eigenfunction and the adjoint eigenfunction with respect to $x$ in the KP case (other cases can be found in Section \ref{secprekpmkpbkp}), whose predecessor is so called Cauchy-Baker-Akhiezer kernel\cite{Grinevich1989}.

By using these facts, any (adjoint) eigenfunction is showed to be corresponding to one Fermionic field, which further leads to the correspondence of SEP and this Fermionic field.  Therefore under the Darboux transformations, the changes of the Fermionic fields corresponding to the (adjoint) eigenfunction are in fact the ones of the corresponding SEPs. Based upon these, we establish the Darboux transformations for the modified KP and BKP hierarchies in the Fermionic picture. Further we have obtained the transformed Fermionic tau functions under successive applications of Darboux transformations, especially the mixed using $T_d$ and $T_i$ in the KP case, which is one of the most important results in this paper. With this, the corresponding bilinear equations are obtained by using some important relations of fermions in Subsection \ref{subsectionrelationfreeferm}. At last, some new examples of the bilinear relations are given. Particularly, the bilinear equations involving the binary Darboux transformations are very important in the proof of the Adler-Shiota-van Moerbeke (ASvM) formulas in the additional symmetries \cite{Adler1995,Dickey1995,Chengjgp2018,Tu2007,Li2015}, and in the derivation of the bilinear equations \cite{Cheng1994,Chenjnmp2019,Shen2011,Loris1997} of the symmetry constraints of the KP, modified KP and BKP hierarchies.

This paper is organized in the way below. Firstly in Section 2, some basics facts on the constructions of the KP, modified KP and BKP hierarchies by free Fermions are reviewed, and the spectral representations of these three integrable hierarchies are discussed by using the relations among them.  Also some important relations on the free Fermions are established. Next based upon these, the transformed Fermionic tau functions and the bilinear equations under the successive applications of Darboux transformations for the KP, modified KP and BKP hierarchies are investigated in Section 3-5 respectively. At last, some conclusions and discussions are presented in Section 6.

\section{Preliminaries on the KP, modified KP and BKP hierarchies}\label{secprekpmkpbkp}
In this section, we firstly review some basic facts on the free Fermions and Boson-Fermion correspondence, and construct the $(l-l')$-th modified KP hierarchy. Then starting from the bilinear equations, we reviewed the basic facts of the KP, modified KP and BKP hierarchies, including the dressing structures and Lax equations, spectral representations of the eigenfunctions. Different from before, the spectral representations for the modified KP and the BKP hierarchy are constructed from the ones of the KP hierarchy by considering the relations among them. At last, we derived some important relations about the free Fermions, which will be used in the constructions of the bilinear equations in Darboux transformations.
\subsection{Boson-Fermion correspondence and $(l-l')$-th modified KP hierarchy}\label{subsectionbosonfermi}
In this subsection, we will review the Fermionic approach in the construction of the integrable systems and some facts on the $(l-l')$-th modified KP hierarchy. One can refer to \cite{Jimbo1983,Miwa2000,Kac1998} for more details. Let $\mathcal{A}$ be the Clifford algebra generated by the free Fermions $\psi_j$ and $\psi^*_j$ $(j\in \mathbb{Z})$, satisfying the following relations
\begin{eqnarray}\label{cliffordrelation}
[\psi_i,\psi_j]_+=[\psi^*_i,\psi^*_j]_+=0, \quad [\psi_i,\psi^*_j]_+=\delta_{ij},
\end{eqnarray}
where $[A,B]_+=AB+BA$. If define the vacuum vector $|0\rangle$ and the dual vacuum vector $\langle0|$ as follows
\begin{eqnarray}
\psi_i|0\rangle=0\quad(i<0), && \psi^*_i|0\rangle=0\quad(i\geq0),\nonumber\\
\langle0|\psi_i=0\quad(i\geq0), && \langle0|\psi^*_i=0\quad(i<0),\label{vacuum}
\end{eqnarray}
then one can obtain the Fermionic Fock space $\mathcal{F}=\mathcal{A}|0\rangle$ and its dual space $\mathcal{F}^*=\langle 0|\mathcal{A}$. The pairing between $\mathcal{F}$ and $\mathcal{F}^*$ are given as follows,
\begin{align*}
\mathcal{F}^*\times\mathcal{F}&\longrightarrow \mathbb{C}\\
(\langle 0 |a_1, a_2|0\rangle)&\longmapsto \langle 0 |a_1 a_2|0\rangle,
\end{align*}
where $\langle 0 |a_1 a_2|0\rangle$ can be computed by $\langle 0|1|0\rangle=1$, the relations \eqref{cliffordrelation} \eqref{vacuum} and the Wick theorem.
The Fermionic Fock space $\mathcal{F}$ can be used as the representation space of infinite dimensional Lie algebra $gl_\infty$ and its corresponding group $GL_\infty$. Here the infinite Lie algebra $gl_\infty$ is defined by
\begin{eqnarray*}
gl_\infty=\{\sum_{i,j\in Z}a_{i,j}:\psi_i\psi^*_j:|\text{there exists an N such that $a_{ij}=0,|i-j|>N$}\}\oplus \mathbb{C}
\end{eqnarray*}
and the corresponding group $GL_\infty$ is given by
\begin{eqnarray*}
GL_\infty=\{e^{X_1}e^{X_2}\cdots e^{X_k}|X_i\in gl_\infty\}.
\end{eqnarray*}

The charges of $\mathcal{F}$ and $\mathcal{F}^*$ can be defined the way below,
\begin{align*}
&\text{charge of $\psi_{j}=1$, charge of $\psi^*_{j}=-1,$}\\
&\text{charge of $|0\rangle$=0, charge of $a|0\rangle$= that of $a$},\\
&\text{charge of $\langle0|$=0, charge of $\langle0|a$= -that of $a$},
\end{align*}
then one can decompose $\mathcal{F}$ and $\mathcal{F}^*$ according to different charges
\begin{eqnarray*}
\mathcal{F}=\bigoplus_{m\in Z}\mathcal{F}_m,\quad \mathcal{F^*}=\bigoplus_{m\in Z}\mathcal{F}^{*}_m.
\end{eqnarray*}
For $m>0$, define the following vectors with charge $m$,
\begin{eqnarray*}
&&|m\rangle=\psi_{m-1}\cdots\psi_{0}|0\rangle,\quad |-m\rangle=\psi^*_{-m}\cdots\psi^*_{-1}|0\rangle\\
&&\langle m|=\langle 0|\psi^*_{0}\cdots\psi^*_{m-1},\quad \langle-m|=\langle 0|\psi_{-1}\cdots\psi_{-m}.
\end{eqnarray*}
By the definitions,
\begin{eqnarray}
&&\psi_{n}|m\rangle=0 \text{ for $n<m$}, \quad \psi^*_{n}|m\rangle=0\text{ for $n\geq m$}, \nonumber\\
&&\langle m|\psi_{n}=0\text{ for $n\geq m$}, \quad \langle m|\psi^*_{n}=0\text{ for $n<m$}.
\end{eqnarray}
If define $S=\sum_{j}\psi_j\otimes\psi_j^*$, one can find
\begin{align*}
S(|l\rangle\otimes|l'\rangle)=0, \quad S^{l-l'}(|l'\rangle\otimes|l\rangle)=(-1)^{\frac{(l-l')(l-l'-1)}{2}}(l-l')!
|l\rangle\otimes|l'\rangle,\text{ for $l\geq l'$}.
\end{align*}
Further if set $f_l=g|l\rangle$ for $g\in GL_{\infty}$ and assume $l\geq l'$, then by the fact $S$ can commute with $g\otimes g$,
\begin{align}
S(f_l\otimes f_{l'})&=0 \text{ ,}\label{klmkp}\\
S^{l-l'}(f_{l'}\otimes f_l)&=(-1)^{\frac{(l-l')(l-l'-1)}{2}}(l-l')!f_l\otimes f_{l'}.\label{klcmkp}
\end{align}
Here \eqref{klmkp} is the $(l-l')$-th modified KP hierarchy \cite{Jimbo1983,Kac1998, Kac2018} in the Fermionic picture. \eqref{klmkp} is showed to be equivalent to \eqref{klcmkp} in \cite{Kac1998, Kac2018}. Particularly, the 0-th modified KP hierarchy is the usual KP hierarchy\cite{Miwa2000,Date1983} and \eqref{klcmkp} for $l=l'+1$ is the modified KP hierarchy in the Kupershmidt-Kiso version \cite{Kupershmidt1985,Kiso1990,Chengjgp2018, Oevelrmp1993,Takebe2006}, which is equivalent to the $1$st modified KP hierarchy. Next we will mainly discuss these two particular cases. In order to rewritten \eqref{klmkp} and \eqref{klcmkp} into the usual forms, i.e., the Bosonic forms, we will next review the Boson-Fermion correspondence.

Introduce the generating sums of free Fermions
\begin{eqnarray}
\psi(\lambda)=\sum_{i\in Z}\psi_i\lambda^i,&& \psi^*(\lambda)=\sum_{i\in Z}\psi^*_i\lambda^{-i}\label{psilambda}
\end{eqnarray}
and define $\sum_{n\in\mathbb{Z}} H_n\lambda^{-n}=:\psi(\lambda)\psi(\lambda)^*:$
with the normal order $:AB:=AB-\langle 0|AB|0\rangle$. Then it can be proved that $H_n$ satisfies Heisenberg algebraic relations
\begin{align}
[H_m,H_n]=m\delta_{m,-n}.
\end{align}
For the time variables $t=(t_1=x,t_2,t_3,\cdots)$,  define $H(t)=\sum_{n=1}^{\infty}t_nH_n$.
Then
\begin{eqnarray}
e^{H(t)}\psi(\lambda)e^{-H(t)}=e^{\xi(t,\lambda)}\psi(\lambda),&&
e^{H(t)}\psi^*(\lambda)e^{-H(t)}=e^{-\xi(t,\lambda)}\psi^*(\lambda),
\end{eqnarray}
where $\xi(t,\lambda)=\sum_{n=1}^{\infty}t_n\lambda^n$. If introduce the Bosonic Fock space $\mathcal{B}=\mathbb{C}[z,z^{-1},t_1,t_2,t_3,\cdots]$, then there exists an isomorphism $\sigma_t:\mathcal{F}\longrightarrow \mathcal{B}$ given by
\begin{eqnarray*}
\sigma_t(a|0\rangle)=\sum_{j}z^j\langle j|e^{H(t)}a|0\rangle,\quad\text{$a\in\mathcal{A}$}.
\end{eqnarray*}
Note that if charge of $a=l$, then the terms $j\neq l$ on the right hand side will be zero according to the Wick Theorem. By using the isomorphism $\sigma_t$ and the following formulas \cite{Jimbo1983,Miwa2000},
\begin{eqnarray}
\langle l|\psi(\lambda)e^{H(t)} =\lambda^{l-1}\langle l-1|e^{H(t-\varepsilon(\lambda^{-1}))},
&&\langle l|\psi^*(\lambda)e^{H(t)}=\lambda^{-l}\langle l+1|e^{H(t+\varepsilon(\lambda^{-1}))},\label{langlepsi}
\end{eqnarray}
with $\varepsilon(\lambda^{-1})=(\lambda^{-1},\lambda^{-2}/2,\lambda^{-3}/3,\cdots)$,
one can realize the Fermions $\psi_i$ and $\psi_i^*$ in the forms of the Bosonic operators \cite{Jimbo1983,Miwa2000}
\begin{align*}
&\sigma_t\cdot\psi(\lambda)\cdot\sigma_t^{-1}
=e^{\xi(t,\lambda)}e^{-\xi(\tilde{\partial},\lambda^{-1})}e^K\lambda^{H_0},\quad \sigma_t\cdot\psi^*(\lambda)\cdot\sigma_t^{-1}
=e^{-\xi(t,\lambda)}e^{\xi(\tilde{\partial},\lambda^{-1})}e^{-K}\lambda^{1-H_0},
\end{align*}
where $\tilde{\partial}=\Big(\frac{\partial}{\partial x},\frac{1}{2}\frac{\partial}{\partial t_2},\frac{1}{3}\frac{\partial}{\partial t_3},\cdots\Big)$, and the operators $\lambda^{H_0}$ and $e^K$ act on $\mathcal{B}$ by the formulas below
\begin{align*}
\big(\lambda^{H_0}f\big)(z,t)\triangleq f(\lambda z,t),\quad
\big(e^K f\big)(z,t)\triangleq zf(z,t).
\end{align*}
This realization is called the Boson-Fermion correspondence.

After the preparation above, now we can write the bilinear equations (\ref{klmkp}) and (\ref{klcmkp}) into the Bosonic forms. If denote $\tau_l(t)=\langle l|e^{H(t)}g|l\rangle$, then by Boson-Fermion correspondence, (\ref{klmkp}) and (\ref{klcmkp}) will become into
\begin{align}
&{\rm Res}_{\lambda}\lambda^{l-l'}\tau_l(t-\varepsilon(\lambda^{-1}))
\tau_{l'}(t'+\varepsilon(\lambda^{-1}))e^{\xi(t-t',\lambda)}=0,\quad \label{klmkpbilinear}\\
 &{\rm Res}_{\lambda_1}\cdots{\rm Res}_{\lambda_{l-l'}}
\tau_{l'}(t-\sum_{i=1}^{l-l'}\varepsilon(\lambda_i^{-1}))
\tau_l(t'+\sum_{i=1}^{l-l'}\varepsilon(\lambda_i^{-1}))
\nonumber\\
&\times\prod_{i=1}^{l-l'}
\lambda_i^{l'-l}\prod_{i<j}(\lambda_i-\lambda_j)^2 e^{\sum_{i=1}^{l-l'}\xi(t-t',\lambda_i)}=(-1)^{\frac{(l-l')(l-l'-1)}{2}}(l-l')!
\tau_l(t)\tau_{l'}(t'). \label{klcmkpbilinear}
\end{align}
Here ${\rm Res}_{\lambda}\sum_i a_i\lambda^i=a_{-1}$.
These two bilinear equations are equivalent. Next, we will only consider two typical ones, i.e., (\ref{klmkpbilinear}) for $l=l'=0$ and (\ref{klcmkpbilinear}) for $l=l'+1=1$, which are just the usual KP hierarchy and the modified KP hierarchy \cite{Miwa2000,Date1983} of the Kupershmidt-Kiso version \cite{Chengjgp2018, Oevelrmp1993}. Other cases can be discussed in the similar method.
\subsection{Basic facts on the KP hierarchy}\label{subseckp}
In this subsection, we will review the basic facts of the KP hierarchy including the dressing and Lax equations, the spectral representations of the (adjoint) eigenfunctions.

Firstly when $l=l'=0$, (\ref{klmkpbilinear}) is just
\begin{eqnarray}
{\rm Res}_{\lambda}\tau(t-\varepsilon(\lambda^{-1}))
\tau(t'+\varepsilon(\lambda^{-1}))e^{\xi(t-t',\lambda)}=0\label{kptaubilinear}
\end{eqnarray}
with $\tau(t)=\tau_0(t)$. By introducing the wave function $\psi(t,\lambda)$ and the adjoint wave function $\psi^*(t,\lambda)$ in the way below,
\begin{eqnarray}
&&\psi(t,\lambda)= \frac{\tau(t-\varepsilon(\lambda^{-1}))}{\tau(t)}e^{\xi(t,\lambda)}=\frac{\langle 1|e^{H(t)}\psi(\lambda)g|0\rangle}{\langle 0|e^{H(t)}g|0\rangle}, \label{kpwavefunction}\\
&&
\psi^*(t,\lambda)=\frac{\tau(t+\varepsilon(\lambda^{-1}))}{\tau(t)}e^{-\xi(t,\lambda)}
=\frac{1}{\lambda}\frac{\langle -1|e^{H(t)}\psi^*(\lambda)g|0\rangle}{\langle 0|e^{H(t)}g|0\rangle},\label{kpadwavefunction}
\end{eqnarray}
the bilinear equations (\ref{kptaubilinear}) becomes into
\begin{eqnarray}
{\rm Res}_{\lambda}\psi(t,\lambda)\psi^*(t',\lambda)=0.\label{kpwavebilinear}
\end{eqnarray}
Introduce the following pseudo-differential operators (denote $\pa=\pa_x$)
\begin{align*}
W=1+\sum_{j=1}^\infty w_j\pa^{-j},\quad \tilde{W}=1+\sum_{j=1}^\infty \tilde{w}_j\pa^{-j},
\end{align*}
satisfying $\psi(t,\lambda)=W(e^{\xi(t,\lambda)})$ and $\psi^*(t,\lambda)=\tilde{W}(e^{-\xi(t,\lambda)})$, that is $w_i=\frac{p_i(-\tilde{\partial})\tau(t)}{\tau(t)}$ and $\tilde{w}_i=\frac{p_i(\tilde{\partial})\tau(t)}{\tau(t)}$. Here $p_i(t)$ is Schur polynomial defined by $\exp(\xi(t,\lambda))=\sum_ip_i(t)\lambda^i$.  Then one can obtain \cite{Date1983}
\begin{align}
\tilde{W}=(W^{-1})^*,\quad W_{t_n}=-(W\pa^n W^{-1})_{<0}W,\label{wtneq}
\end{align}
where $*$ is the adjoint operation defined by $\Big(\sum_{i}a_i\partial^i\Big)^*=\sum_{i}(-1)^i\partial^i a_i$.
If define the Lax operator $L$ as follows
\begin{eqnarray*}
L=W\pa W^{-1}=\pa+u_1\pa^{-1}+u_2\pa^{-2}+\cdots,
\end{eqnarray*}
then we can obtain the Lax equation of the KP hierarchy\cite{Miwa2000,Date1983}
\begin{eqnarray}
L_{t_n}=[(L^n)_{\geq0},L].\label{KPlax}
\end{eqnarray}
\noindent{\bf Remark:}
By considering the $\partial^{-1}$-terms in the second equation of \eqref{wtneq}, one can obtain ${\rm Res}_\partial L^n=(\log\tau)_{xt_n}$ with ${\rm Res}_\partial\sum_ia_i\partial^i=a_{-1}$. So each $u_i$ in the Lax operator $L$ can be expressed in terms of $(\log\tau)_{xt_n}$. Thus for a fixed Lax operator $L$, the corresponding tau function $\tau$ can be determined up to a multiplication of $c\exp(\sum_ia_it_i)$. For convenience, we let $\tau\sim\tau'$ if two tau functions $\tau$ and $\tau'$ determine the same Lax operator of the KP hierarchy. Further, denote $\tau\approx\tau'$ when $\tau$ and $\tau'$ determine the same dressing operator $W$. It is obviously that $\tau\approx\tau'$ if and only if $\tau'=c\tau$.

In what follows, the spectral representation for the KP hierarchy will be needed in the discussion of the Darboux transformation. That is, for the eigenfunction $\Phi$ and the adjoint eigenfunction $\Psi$ of the KP hierarchy defined by $$\Phi_{t_n}=(L^n)_{\geq0}(\Phi),\ \Psi_{t_n}=-(L^n)^*_{\geq0}(\Psi),$$
 it is showed in \cite{Aratyn1998,Willox1998} that they can be expressed by the wave function $\psi(t,\lambda)$ and the adjoint wave function $\psi^*(t,\lambda)$ of the KP hierarchy respectively, that is,
\begin{align}
\Phi(t)={\rm Res}_{\lambda}\rho(\lambda)\psi(t,\lambda),\quad \Psi(t)={\rm Res}_{\lambda}\rho^*(\lambda)\psi^*(t,\lambda),\label{spectralkp}
\end{align}
where $\rho(\lambda)=-\Omega(\Phi(t'),\psi^*(t',\lambda))$ and $\rho^*(\lambda)=\Omega(\psi(t',\lambda),\Psi(t'))$ belong to $\mathbb{C}((\lambda^{-1}))$.
Here $\Omega(f,g)$ is the squared eigenfunction potential (SEP)\cite{Oevelpa1993,Oevel1998}, determined by
\begin{eqnarray*}
\Omega(f(t),g(t))_x=f(t)g(t),\quad
\Omega(f(t),g(t))_{t_n}={\rm Res}_{\pa}(\pa^{-1}g(t) (L^n)_{\geq0}f(t)\pa^{-1})
\end{eqnarray*}
for the eigenfunction $f(t)$ and the adjoint eigenfunction $g(t)$ of the KP hierarchy, up to a constant.

The expressions of SEP can be derived in the way below. Note that
\begin{align*}
\Omega(\Phi(t),\psi^*(t,\lambda))
=(-\Phi(t)+\mathcal{O}(\lambda^{-1}))\lambda^{-1}e^{-\xi(t,\lambda)},\quad
\Omega(\psi(t,\lambda),\Psi(t))=(\Psi(t)+\mathcal{O}(\lambda^{-1}))\lambda^{-1}e^{\xi(t,\lambda)}.
\end{align*}
So if letting $t=t'+\varepsilon(\lambda^{-1})$ for in first relation of \eqref{spectralkp} and $t'=t+\varepsilon(\lambda^{-1})$ for the second one, then the expressions of SEPs can be derived which are given in the proposition \cite{Aratyn1998,Willox1998} below.
\begin{proposition}\label{sepexpprop} Given the eigenfunction $\Phi(t)$ and the adjoint eigenfunction $\Psi(t)$ of the KP hierarchy,
\begin{align*}
\Omega(\Phi(t),\psi^*(t,\lambda))=&-\frac{1}{\lambda}\Phi(t+\varepsilon(\lambda^{-1}))
\psi^*(t,\lambda),\\
\Omega(\psi(t,\lambda),\Psi(t))=&\frac{1}{\lambda}\Psi(t-\varepsilon(\lambda^{-1}))
\psi(t,\lambda).
\end{align*}
Particularly,
\begin{align*}
\Omega(\psi(t,\mu),\psi^*(t,\lambda))=&-\frac{1}{\lambda}\psi(t+\varepsilon(\lambda^{-1}),\mu)
\psi^*(t,\lambda)+\delta(\lambda,\mu)\nonumber\\
=&\frac{1}{\mu}\psi(t,\mu)\psi^*(t-\varepsilon(\mu^{-1}),\lambda),
\end{align*}
where $\delta(\lambda,\mu)=\frac{1}{\mu}\sum_{n\in\mathbb{Z}}
\left(\frac{\mu}{\lambda}\right)^n=\frac{1}{\lambda}\frac{1}{1-\mu/\lambda}
+\frac{1}{\mu}\frac{1}{1-\lambda/\mu}$.
\end{proposition}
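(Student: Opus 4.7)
The plan is to verify the proposed expressions against the defining characterisation of $\Omega$: namely $\Omega(f,g)_x = fg$ together with the higher $t_n$-equations, plus the leading asymptotic $\Omega(\Phi(t),\psi^*(t,\lambda)) = (-\Phi(t)+\mathcal{O}(\lambda^{-1}))\lambda^{-1}e^{-\xi(t,\lambda)}$ recalled just above the statement. These conditions pin $\Omega$ down uniquely as a function of $t$, the $t$-independent normalisation being fixed by the asymptotic.

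For the first identity, set $F(t,\lambda) := -\lambda^{-1}\Phi(t+\varepsilon(\lambda^{-1}))\psi^*(t,\lambda)$. The asymptotic check is immediate from the Taylor expansion $\Phi(t+\varepsilon(\lambda^{-1}))=\Phi(t)+\mathcal{O}(\lambda^{-1})$ together with $\psi^*(t,\lambda)=(1+\mathcal{O}(\lambda^{-1}))e^{-\xi(t,\lambda)}$. For the derivative identity $F_x = \Phi(t)\psi^*(t,\lambda)$, I would write $\Phi = \sigma_\Phi/\tau$ with $\sigma_\Phi(t) := \Phi(t)\tau(t)$, so that $F(t,\lambda) = -\lambda^{-1}\sigma_\Phi(t+\varepsilon(\lambda^{-1}))\tau(t)^{-1}e^{-\xi(t,\lambda)}$. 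Using $\partial_x e^{-\xi(t,\lambda)} = -\lambda e^{-\xi(t,\lambda)}$ and clearing the $\tau(t)^2$ denominator, the equality $F_x = \Phi\psi^*$ reduces to a Hirota-type identity between $\sigma_\Phi$ and $\tau$, which is precisely the $1$st modified KP bilinear equation \eqref{klmkpbilinear} with $l=1$, $l'=0$. In the Fermionic picture this identity holds because, for a common $g\in GL_\infty$, the Bosonic tau functions $\sigma_\Phi$ and $\tau$ arise respectively from the charge-$1$ state $\psi_\Phi g|0\rangle$ and the charge-$0$ state $g|0\rangle$, and \eqref{klmkp} yields $S(\psi_\Phi g|0\rangle\otimes g|0\rangle)=0$, whose Bosonic image is the required identity. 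The $t_n$-equations defining $\Omega_{t_n}$ are checked analogously using the full bilinear identity \eqref{klmkpbilinear}; and the second identity follows symmetrically by interchanging $\psi\leftrightarrow\psi^*$ and $\Phi\leftrightarrow\Psi$, with the shift now in $t-\varepsilon(\lambda^{-1})$.

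The ``Particularly'' clause is obtained by specialising $\Phi(t)=\psi(t,\mu)$ in the first identity (and $\Psi(t)=\psi^*(t,\mu)$ in the second); the two equivalent expressions there correspond to the two choices of which wave function is treated as the eigenfunction. I expect this specialisation to be the main obstacle, because the ``constant of integration'' in $\Omega$ is no longer a mere scalar when both arguments are wave functions: regarded as a formal Laurent series it picks up a distributional contribution reflecting the pole at $\lambda=\mu$ which is invisible to the derivative condition alone. Tracking this precisely means carrying the expansions in the two regions $|\mu|<|\lambda|$ versus $|\mu|>|\lambda|$ and showing that the discrepancy between the two forms is exactly $\delta(\lambda,\mu) = \lambda^{-1}(1-\mu/\lambda)^{-1} + \mu^{-1}(1-\lambda/\mu)^{-1}$.
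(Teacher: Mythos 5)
Your proposal is correct, but it proves the proposition by a genuinely different route from the paper. The paper's argument is a direct derivation rather than a verification: writing $\rho(\lambda)=-\Omega(\Phi(t'),\psi^*(t',\lambda))=h(t',\lambda)e^{-\xi(t',\lambda)}$ with $h\in\lambda^{-1}\mathbb{C}[[\lambda^{-1}]]$ (the displayed asymptotic), it substitutes $t=t'+\varepsilon(z^{-1})$ into the spectral representation \eqref{spectralkp}, so that $e^{\xi(t-t',\lambda)}=(1-\lambda/z)^{-1}$ collapses the residue onto the single value $\Phi(t'+\varepsilon(z^{-1}))=z\,h(t',z)\,\tau(t')/\tau(t'+\varepsilon(z^{-1}))$, from which $h$ --- and hence $\Omega$ --- is read off; the second identity and the two forms of the ``Particularly'' clause follow the same pattern. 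You instead characterise $\Omega$ by its defining derivative conditions plus the asymptotic normalisation and check the candidate, reducing the $x$-derivative check to the differential Fay identity \eqref{fayequation} of Lemma \ref{mkpfaylemma} for the pair $(\tau,\Phi\tau)$; that reduction is correct, and since both $\Omega$ and your $F$ have the form $(\text{series in }\lambda^{-1})\,e^{-\xi(t,\lambda)}$, the $x$-derivative condition alone already forces equality, so the remaining $t_n$-checks you defer are not actually needed. Both arguments ultimately rest on the same bilinear identity for $(\tau,\Phi\tau)$ --- the paper's Lemma \ref{mkpfaylemma} is itself obtained from \eqref{mkptaubilinear} by exactly the specialisation $t-t'=\varepsilon(z^{-1})$ --- so the paper's version is shorter because it solves for $\Omega$ instead of verifying it, while yours makes explicit the uniqueness/normalisation issue and, in the ``Particularly'' clause, the origin of the $\delta(\lambda,\mu)$ term from the two expansion regions, which the paper states without comment. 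One small mislabelling: the identity your derivative check needs is the Bosonic image of the inhomogeneous relation $S(g|0\rangle\otimes\psi_\Phi g|0\rangle)=\psi_\Phi g|0\rangle\otimes g|0\rangle$, i.e.\ \eqref{mkptaubilinear}/\eqref{tauhattildetaubilinear}, not of the homogeneous $S(\psi_\Phi g|0\rangle\otimes g|0\rangle)=0$ that you quote; the two are equivalent by \cite{Kac1998,Kac2018}, so nothing breaks, but the citation should point to \eqref{klcmkpbilinear} with $l=l'+1=1$ rather than to \eqref{klmkpbilinear}.
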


Therefore the spectral representation (\ref{spectralkp}) can be rewritten into
\begin{align}
\Phi(t)&={\rm Res}_{\lambda}\frac{1}{\lambda}\psi(t,\lambda)\psi^*(t',\lambda)\Phi(t'+\varepsilon(\lambda^{-1})),\nonumber\\
\Psi(t)&={\rm Res}_{\lambda}\frac{1}{\lambda}\psi^*(t,\lambda)\psi(t',\lambda)\Psi(t'-\varepsilon(\lambda^{-1})).
\end{align}
If denote $\tau^+(t)=\Phi(t)\tau(t)$ and $\tau^-(t)=\Psi(t)\tau(t)$, then one can find that the spectral representation (\ref{spectralkp}) of $\Phi(t)$ and $\Psi(t)$ are equivalent to the following bilinear relations involving $\tau^\pm(t)$ and $\tau(t)$, that is
\begin{align}
&{\rm Res}_{\lambda}\lambda^{-1}\tau(t-\varepsilon(\lambda^{-1}))
\tau^+(t'+\varepsilon(\lambda^{-1}))
e^{\xi(t-t',\lambda)}=\tau^+(t)\tau(t'),\nonumber\\
&{\rm Res}_{\lambda}\lambda^{-1}\tau^-(t-\varepsilon(\lambda^{-1}))
\tau(t'+\varepsilon(\lambda^{-1}))
e^{\xi(t-t',\lambda)}=\tau(t)\tau^-(t'),\label{tauhattildetaubilinear}
\end{align}
which are just the bilinear equations of the modified KP hierarchy of the Kupershmidt-Kiso version.
\subsection{The mKP hierarchy and the Miura links}
Just as showed in the last subsection, the bilinear equation of the modified KP hierarchy is given as follows (that is (\ref{klcmkpbilinear}) for $l=l'+1=1$),
\begin{eqnarray}
{\rm Res}_{\lambda}\lambda^{-1}\tau_0(t-\varepsilon(\lambda^{-1}))
\tau_1(t'+\varepsilon(\lambda^{-1}))
e^{\xi(t-t',\lambda)}=\tau_1(t)\tau_0(t').\label{mkptaubilinear}
\end{eqnarray}
Here $(\tau_0,\tau_1)$ is called the tau pair of the modified KP hierarchy.
Firstly, we can introduce the wave function $w(t,\lambda)$ and the adjoint wave function $w^*(t,\lambda)$ of the modified KP hierarchy\cite{Chengjgp2018}
\begin{eqnarray}
 &&w(t,\lambda)= \frac{\tau_0(t-\varepsilon(\lambda^{-1}))}{\tau_1(t)}e^{\xi(t,\lambda)}=\frac{\langle 1|e^{H(t)}\psi(\lambda)g|0\rangle}{\langle 1|e^{H(t)}g|1\rangle},\label{mkpwave}\\
 &&w^*(t,\lambda)=\frac{\tau_1(t+\varepsilon(\lambda^{-1}))}{\tau_0(t)}\lambda^{-1}e^{-\xi(t,\lambda)}
 =\frac{\langle 0|e^{H(t)}\psi^*(\lambda)g|1\rangle}{\langle 0|e^{H(t)}g|0\rangle}\lambda^{-1},\label{adwavefunction}
\end{eqnarray}
so that (\ref{mkptaubilinear}) can be written into
\begin{eqnarray}
{\rm Res}_{\lambda}w(t,\lambda)w^*(t',\lambda)=1.\label{mkpwavebilinear}
\end{eqnarray}
If introduce the dressing operator $Z=\sum_{i=0}^\infty z_i\pa^{-i}$ such that $w(t,\lambda)=Z\left(e^{\xi(t,\lambda)}\right)$, then by the similar method to the KP case, one can get
\begin{eqnarray} w^*(t,\lambda)=(Z^{-1}\pa^{-1})^*\left(e^{-\xi(t,\lambda)}\right),&& Z_{t_n}=-(Z\pa^n Z^{-1})_{<1} Z.\label{evoeqofZ}
\end{eqnarray}
Then the Lax operator $\mathcal{L}$ of the modified KP hierarchy can be introduced as $\mathcal{L}=Z\pa Z^{-1}=\pa+v_0+v_1\pa^{-1}+v_2\pa^{-2}+\cdots$, satisfying the Lax equation \cite{Chengjgp2018, Oevelrmp1993} below
\begin{eqnarray}
\mathcal{L}_{t_n}=[(\mathcal{L}^n)_{\geq1},\mathcal{L}]\label{mKPkklax}
\end{eqnarray}
\noindent{\bf Remark:}
Note that \cite{Chenijmpa2019} ${\rm Res}_\partial\mathcal{L}^n=(\log\tau_0)_{xt_n}$ and ${\rm Res}_\partial(\partial \mathcal{L}^n\partial^{-1})^*=-(\log\tau_1)_{xt_n}$, which can be obtained by comparing the $\partial^0$ and $\partial^{-1}$-terms in the evolution equation (\ref{evoeqofZ}) of the dressing operator $Z$. Thus each $v_i$ in $\mathcal{L}$ can be expressed by $(\log\tau_0)_{xt_n}$ and $v_0=(\log(\tau_1/\tau_0))_x$, or  $(\log\tau_1)_{xt_n}$ and $v_0$. Therefore the tau pair
\begin{align*}
(\tau_0',\tau_1')=e^{ax}\Big(c_0\exp(\sum_{i>1} a_it_i)\tau_0,c_1\exp(\sum_{i>1} b_it_i)\tau_1\Big)
\end{align*}
has the same Lax operator of the modified KP hierarchy as the tau pair $(\tau_0,\tau_1)$. In this case, we denote $(\tau_0',\tau_1')\sim (\tau_0,\tau_1)$. Further when $(\tau_0',\tau_1')=c(\tau_0,\tau_1)$, they share the same dressing structure, denoted by $(\tau_0',\tau_1')\approx (\tau_0,\tau_1)$.

Before further discussion, the lemma \cite{Chengjgp2018} below is needed, which can be proved by applying $\pa_{x'}$ on the both sides of (\ref{mkptaubilinear}) and set $t-t'=\varepsilon(z^{-1})$.
\begin{lemma}\label{mkpfaylemma} If $\tau_0$ and $\tau_1$ are tau functions of the modified KP hierarchy satisfying (\ref{mkptaubilinear}), then
\begin{align}
\pa_x\left(\frac{\tau_1(t+\varepsilon(\lambda^{-1}))}{\lambda\tau_0(t)}\right)
=\frac{\tau_1(t+\varepsilon(\lambda^{-1}))}{\tau_0(t)}
-\frac{\tau_0(t+\varepsilon(\lambda^{-1}))\tau_1(t)}{\tau_0(t)^2},\label{fayequation}
\end{align}
and therefore the wave function $w(t,\lambda)$ and the adjoint wave function $w^*(t,\lambda)$ of the modified KP hierarchy satisfy
\begin{align}
w(t,\lambda)_x=\lambda \frac{\tau_0(t)\tau_1(t-\varepsilon(\lambda^{-1}))}{\tau_1(t)^2}e^{\xi(t,\lambda)},\quad w^*(t,\lambda)_x=- \frac{\tau_0(t+\varepsilon(\lambda^{-1}))\tau_1(t)}{\tau_0(t)^2}e^{-\xi(t,\lambda)}.
\label{wxwstarxmkp}
\end{align}
\end{lemma}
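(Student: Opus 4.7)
The plan is to follow the hint and work directly from the mKP bilinear equation \eqref{mkptaubilinear}. First I would apply $\partial_{x'} \equiv \partial/\partial t_1'$ to both sides. On the right this gives $\tau_1(t)\tau_{0,x}(t')$; on the left the derivative acts both on $\tau_1(t'+\varepsilon(\lambda^{-1}))$, producing its $x$-derivative at the shifted argument, and on $e^{\xi(t-t',\lambda)}$, producing an overall factor $-\lambda$.

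Next I would specialize via $t-t' = \varepsilon(z^{-1})$, which turns the exponential into $e^{\xi(\varepsilon(z^{-1}),\lambda)} = (1-\lambda/z)^{-1} = \sum_{n\ge 0}(\lambda/z)^n$. Introducing
\[
F(\lambda) = \tau_0(t-\varepsilon(\lambda^{-1}))\tau_{1,x}(t-\varepsilon(z^{-1})+\varepsilon(\lambda^{-1})), \quad G(\lambda) = \tau_0(t-\varepsilon(\lambda^{-1}))\tau_1(t-\varepsilon(z^{-1})+\varepsilon(\lambda^{-1})),
\]
both formal series in $\lambda^{-1}$, a direct extraction of the $\lambda^{-1}$-coefficient gives ${\rm Res}_\lambda\lambda^{-1}F(\lambda)(1-\lambda/z)^{-1} = F(z)$ and ${\rm Res}_\lambda G(\lambda)(1-\lambda/z)^{-1} = z\bigl(G(z)-G(\infty)\bigr)$. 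Substituting $F(z) = \tau_0(t-\varepsilon(z^{-1}))\tau_{1,x}(t)$, $G(z) = \tau_0(t-\varepsilon(z^{-1}))\tau_1(t)$, $G(\infty) = \tau_0(t)\tau_1(t-\varepsilon(z^{-1}))$ and rearranging produces the intermediate Fay identity
\[
\tau_0(t-\varepsilon(z^{-1}))\tau_{1,x}(t) - \tau_1(t)\tau_{0,x}(t-\varepsilon(z^{-1})) = z\bigl[\tau_0(t-\varepsilon(z^{-1}))\tau_1(t) - \tau_0(t)\tau_1(t-\varepsilon(z^{-1}))\bigr].
\]
Shifting $t \mapsto t + \varepsilon(z^{-1})$, dividing the result by $z\,\tau_0(t)^2$ and renaming $z = \lambda$ recognizes the left side as $\partial_x[\tau_1(t+\varepsilon(\lambda^{-1}))/(\lambda\tau_0(t))]$, giving exactly \eqref{fayequation}.

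For the consequence \eqref{wxwstarxmkp} I would plug definitions \eqref{mkpwave} and \eqref{adwavefunction} into $\partial_x$. For $w^*$, factoring out $e^{-\xi(t,\lambda)}$ and differentiating produces $\partial_x[\tau_1(t+\varepsilon(\lambda^{-1}))/(\lambda\tau_0(t))]\,e^{-\xi} - \tau_1(t+\varepsilon(\lambda^{-1}))/\tau_0(t)\cdot e^{-\xi}$; inserting \eqref{fayequation} cancels the first homogeneous piece and leaves $-\tau_0(t+\varepsilon(\lambda^{-1}))\tau_1(t)/\tau_0(t)^2\cdot e^{-\xi}$, as claimed. For $w_x$ I would instead divide the intermediate Fay identity above by $\tau_1(t)^2$ (before the shift $t\mapsto t+\varepsilon(z^{-1})$); the left side collapses to $\partial_x[\tau_0(t-\varepsilon(z^{-1}))/\tau_1(t)]$, and after renaming $z=\lambda$ and multiplying by $e^{\xi(t,\lambda)}$ one collects $w_x = \lambda\tau_0(t)\tau_1(t-\varepsilon(\lambda^{-1}))/\tau_1(t)^2\cdot e^{\xi(t,\lambda)}$.

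The main obstacle is the bookkeeping of the two residues: since the $F$-term carries the explicit $\lambda^{-1}$ prefactor while the $G$-term had its $\lambda^{-1}$ cancelled by the $-\lambda$ from differentiating the exponential, the residues evaluate with different "shapes", $F(z)$ versus $z(G(z)-G(\infty))$. The constant-term subtraction $G(\infty) = \tau_0(t)\tau_1(t-\varepsilon(z^{-1}))$, easy to drop in a careless computation, is precisely what generates the inhomogeneous term $-\tau_0(t+\varepsilon(\lambda^{-1}))\tau_1(t)/\tau_0(t)^2$ in \eqref{fayequation}; everything past that point is algebraic rearrangement and substitution of the definitions of $w$ and $w^*$.
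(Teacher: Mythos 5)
Your proposal is correct and follows exactly the route the paper itself indicates (applying $\pa_{x'}$ to \eqref{mkptaubilinear} and setting $t-t'=\varepsilon(z^{-1})$); the residue evaluations $F(z)$ versus $z(G(z)-G(\infty))$ and the subsequent shifts and divisions all check out, and your identification of the $G(\infty)$ subtraction as the source of the inhomogeneous term is the right key observation. The only nitpick is that dividing your intermediate Fay identity by $\tau_1(t)^2$ yields $-\pa_x\bigl[\tau_0(t-\varepsilon(z^{-1}))/\tau_1(t)\bigr]$ rather than $+\pa_x[\cdot]$, but your final formula for $w_x$ is correct, so this is only a slip in the prose.
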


By this lemma, we can obtain the proposition \cite{Chengjgp2018} below.
\begin{proposition}\label{mkp2tau}
If $\tau_0$ and $\tau_1$ are tau functions of the modified KP hierarchy, i.e.,
\begin{eqnarray*}
{\rm Res}_{\lambda}\lambda^{-1}\tau_0(t-\varepsilon(\lambda^{-1}))
\tau_1(t'+\varepsilon(\lambda^{-1}))
e^{\xi(t-t',\lambda)}=\tau_1(t)\tau_0(t'),
\end{eqnarray*}
then $\tau_0$ and $\tau_1$ are the tau functions of the KP hierarchy,
\begin{align*}
{\rm Res}_{\lambda}\tau_i(t-\varepsilon(\lambda^{-1}))
\tau_i(t'+\varepsilon(\lambda^{-1}))
e^{\xi(t-t',\lambda)}=0,\quad i=1,2.
\end{align*}
\end{proposition}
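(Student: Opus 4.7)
The plan is to reduce the two sought KP bilinear identities to first-order derivatives of the modified KP wave bilinear equation (\ref{mkpwavebilinear}) in $x$ and $x'$, and then invoke Lemma \ref{mkpfaylemma} to rewrite $w_x$ and $w^*_{x'}$ directly in terms of the tau functions.

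First I would observe that because the right-hand side of the wave bilinear equation ${\rm Res}_\lambda w(t,\lambda) w^*(t',\lambda) = 1$ does not depend on any time, differentiating in $x = t_1$ and in $x' = t_1'$ immediately produces the two companion identities
\begin{equation*}
{\rm Res}_\lambda w_x(t,\lambda) w^*(t',\lambda) = 0, \qquad {\rm Res}_\lambda w(t,\lambda) w^*_{x'}(t',\lambda) = 0,
\end{equation*}
where I use that the formal residue commutes with $\partial_x$ and $\partial_{x'}$.

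Next I would plug in the explicit formulas (\ref{wxwstarxmkp}). In the first identity the factor $\lambda$ coming from $w_x$ cancels the $\lambda^{-1}$ in $w^*$; after pulling out the $\lambda$-independent quantity $\tau_0(t)/\bigl(\tau_1(t)^2\tau_0(t')\bigr)$ from the residue, what remains is exactly
\begin{equation*}
{\rm Res}_\lambda \tau_1(t-\varepsilon(\lambda^{-1})) \tau_1(t'+\varepsilon(\lambda^{-1})) e^{\xi(t-t',\lambda)} = 0,
\end{equation*}
which is the KP bilinear equation for $\tau_1$. An entirely analogous computation on the second identity, substituting the formula for $w^*_{x'}$, yields the KP bilinear equation for $\tau_0$ after dividing out the nonzero $\lambda$-independent prefactor $-\tau_1(t')/\bigl(\tau_1(t)\tau_0(t')^2\bigr)$.

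The only bookkeeping required is that those prefactors really are $\lambda$-independent, which is immediate from the definitions of $w$ and $w^*$ in (\ref{mkpwave})--(\ref{adwavefunction}), and that Lemma \ref{mkpfaylemma} applies under hypothesis (\ref{mkptaubilinear}), which is exactly the hypothesis of that lemma. I therefore do not anticipate any substantial obstacle: the whole argument amounts to two differentiations of a known bilinear identity followed by a direct substitution from Lemma \ref{mkpfaylemma}.
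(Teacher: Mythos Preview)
Your proposal is correct and follows essentially the same route as the paper: the paper states that Proposition~\ref{mkp2tau} is obtained directly from Lemma~\ref{mkpfaylemma}, and your argument---differentiating the wave bilinear identity (\ref{mkpwavebilinear}) in $x$ and $x'$ and then substituting the expressions (\ref{wxwstarxmkp}) for $w_x$ and $w^*_{x'}$---is precisely the natural way to do this. No substantive difference in approach.
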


The eigenfunction $\hat\Phi(t)$ and the adjoint eigenfunction $\hat\Psi(t)$ of the modified KP hierarchy are defined by
\begin{align*}
  \hat\Phi_{t_n}=(\mathcal{L}^{n})_{\geq 1}(\hat\Phi),\quad \hat\Psi_{t_n}=-\Big(\pa^{-1}(\mathcal{L}^{n})^*_{\geq 1}\pa\Big)(\hat\Psi).
\end{align*}
The spectral representation of the (adjoint) eigenfunction of the modified KP hierarchy can be derived from the ones in the KP case, by using the Miura links between the modified KP hierarchy (\ref{mKPkklax}) and the KP hierarchy (\ref{KPlax}), which are given as follows.

$\bullet$ Anti-Miura transformations (KP$\longrightarrow$modified KP)
\begin{equation*}
L \rightarrow \mathcal{L}=
\begin{cases}
T_mL T_m^{-1},& T_m(\Phi)=\Phi^{-1},\\
T_nL T_n^{-1},& T_n(\Psi)=\pa^{-1}\Psi,
\end{cases}
\end{equation*}
where $\Phi$ and $\Psi$ are the eigenfunction and the adjoint eigenfunction of the KP hierarchy respectively.

$\bullet$ Miura transformation (modified KP$\longrightarrow$KP)
\begin{equation*}
\mathcal{L} \rightarrow L=
\begin{cases}
T_\mu\mathcal{L} T_\mu^{-1},& T_\mu=z_0^{-1},\\
T_\nu\mathcal{L} T_\nu^{-1},& T_\nu=z_0^{-1}\pa,
\end{cases}
\end{equation*}
where $z_0$ to be the coefficient of $\pa^0$-term in the dress operator $Z$ of the modified KP hierarchy.

The corresponding changes of the dressing operators, the eigenfunctions and the adjoint eigenfunctions, the wave and the adjoint wave functions, and tau functions under the Miura links are given in the two tables below\cite{Shaw1997}.
\begin{center}
\begin{tabular}{llllllll}
\multicolumn{8}{c}{Table I. Anti-Miura transformation: KP $\rightarrow$ modified KP}\\
\hline\hline
\ \ \ $L\rightarrow \mathcal{L} $ & \ \ \ \ $Z=$ & \ \ \ \ $\hat\Phi_1=$& \ \ \ $\hat\Psi_1=$ & $w=$ & $w^*=$ & $\tau_0=$& $\tau_1=$\\
\hline
 \ $T_m(\Phi)=\Phi^{-1}$ & \ \ \ \ $\Phi^{-1}W $ &\ \ \ $\Phi^{-1}\Phi_1$ & \ $-\int\Phi\Psi_{1x}dx$& $\Phi^{-1}\psi$ &$-\int\Phi\psi^*_xdx$ & $\tau$ & $\Phi\tau$\\
 \ $T_n(\Psi)=\pa^{-1}\Psi$ & \ \ \ $\pa^{-1}\Psi W \pa $&\ \ \ $\int\Psi\Phi_1 dx$ & \ $\Psi^{-1}\Psi_{1x}$&$\lambda\int\Psi\psi dx$ &$\Psi^{-1}\psi^*_x/\lambda$ & $\Psi\tau$ & $\tau$\\
\hline\hline
\end{tabular}
\end{center}
\begin{center}
\begin{tabular}{lllllll}
\multicolumn{6}{c}{Table II. Miura transformation: modified KP $\rightarrow$ KP}\\
\hline\hline
\ \ \ $\mathcal{L}\rightarrow L$ & \ \ \ \ $W=$ & \ \ \ \ $\Phi=$\ \ \ & \ \ $\Psi=$&$\psi=$ & $\psi^*=$ & $\tau=$\\
\hline
 \  $T_\mu=z_0^{-1}$ & \ \ \ \ $z_0^{-1}Z $ &\ \ \ $z_0^{-1}\hat \Phi$ & \ $-z_0\hat\Psi_{x}$& $z_0^{-1}w$ &$-z_0w^*_x$ &$\tau_0$ \\
 \ $T_\nu=z_0^{-1}\pa$ & \ \ \ $z_0^{-1}\pa Z\pa^{-1}$ &\ $z_0^{-1}\hat\Phi_x$ & \ $z_0\hat\Psi$& $z_0^{-1}w_x/\lambda$ & $\lambda z_0w^*$& $\tau_1$ \\
\hline\hline
\end{tabular}
\end{center}
Here $\Phi_1\neq c\Phi$ and $\Psi_1\neq c\Psi$ are the eigenfunction and the adjoint eigenfunction of the KP hierarchy with respect to the Lax operator $L$, while $\hat \Phi_1$ and $\hat \Psi_1$ mean the eigenfunction and the adjoint eigenfunction of the modified KP hierarchy with respect to $\mathcal{L}$. $w=w(t,\lambda)$ and $w^*=w(t,\lambda)$ are the wave and the adjoint wave functions of the modified KP hierarchy respectively, while $\psi=\psi(t,\lambda)$ and $\psi^*=\psi^*(t,\lambda)$ are the ones of the KP hierarchy. The results for tau functions can be obtained by comparing the coefficients of $\partial^0$ and $\partial^{-1}$ in the dressing operators $Z$ and $W$, which are not considered in \cite{Shaw1997}.

\noindent{\bf Remark:} In the anti-Miura transformations, $(\tau_0,\tau_1)=(\tau,\Phi\tau)$ or $(\tau_0,\tau_1)=(\Psi\tau,\tau)$ still satisfies the bilinear equation (\ref{mkptaubilinear}) of the modified KP hierarchy, according to the results in (\ref{tauhattildetaubilinear}), while $\tau=\tau_0$ or $\tau=\tau_1$ satisfies the bilinear equation of the KP hierarchy (\ref{kptaubilinear}) by Proposition \ref{mkp2tau}.

After the preparation above, now we can discuss the spectral representations of the modified KP hierarchy, by using the Miura links from the results for the KP hierarchy. For this, define the squared eigenfunction potential $\Omega(\hat\Phi,\hat\Psi_x)$ and $\hat \Omega(\hat\Phi_x,\hat\Psi)$ for the eigenfunction $\hat\Phi$ and the adjoint eigenfunction $\hat\Psi$ of the modified KP hierarchy in the way below\cite{Oevel1998,Chengjgp2018}.
\begin{align*}
\Omega(\hat\Phi,\hat\Psi_x)_x=&\hat\Phi\hat\Psi_x,\quad \Omega(\hat\Phi,\hat\Psi_x)_{t_n}={\rm Res}_\pa(\pa^{-1}\hat\Psi_x(L^n)_{\geq 1}\hat\Phi\pa^{-1}),\\
\hat \Omega(\hat\Phi_x,\hat\Psi)_x=&\hat\Phi_x\hat\Psi,\quad \hat \Omega(\hat\Phi_x,\hat\Psi)_{t_n}={\rm Res}_\pa(\pa^{-1}\hat\Psi\pa(L^n)_{\geq 1}\pa^{-1}\hat\Phi_x\pa^{-1}),
\end{align*}
The relation of $\Omega(\hat\Phi,\hat\Psi_x)$ and $\hat \Omega(\hat\Phi_x,\hat\Psi)$ is as follows.
\begin{eqnarray}
\hat \Omega(\hat\Phi_x,\hat\Psi)=-\Omega(\hat\Phi,\hat\Psi_x)+\hat\Phi\hat\Psi.\label{hatsep}
\end{eqnarray}
Then one can have the following spectral representations of the eigenfunction and the adjoint eigenfunction for the modified KP hierarchy in the proposition below.
\begin{proposition}\label{propmkpsp}
Let $w(t,\lambda)$ and $w^*(t,\lambda)$ be the wave function and the adjoint wave function for the modified KP hierarchy respectively, then for the eigenfunction $\hat\Phi(t)$ and the adjoint eigenfunction $\hat\Psi(t)$ of the modified KP hierarchy,
\begin{align}
&{\rm Res}_{\lambda}w(t',\lambda)\hat \Omega(\hat\Phi(t)_x,w^*(t,\lambda))
=\hat\Phi(t)-\hat\Phi(t'),\label{mkpphmphsp}\\
&{\rm Res}_{\lambda}w^*(t',\lambda) \Omega(w(t,\lambda),\hat\Psi(t)_x)=\hat\Psi(t)-\hat\Psi(t').\label{mkppsmpssp}
\end{align}
\end{proposition}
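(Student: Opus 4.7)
The plan is to reduce each of the two identities in Proposition \ref{propmkpsp} to the known KP spectral representation \eqref{spectralkp} by using the Miura links of Table II, with the identity \eqref{hatsep} playing the role of converting the modified-KP squared eigenfunction potential into a form that pairs cleanly with the KP side.

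For \eqref{mkpphmphsp}, I would first rewrite $\hat\Omega(\hat\Phi(t)_x, w^*(t,\lambda)) = \hat\Phi(t)w^*(t,\lambda) - \Omega(\hat\Phi(t), w^*_x(t,\lambda))$ via \eqref{hatsep}. The left-hand side of \eqref{mkpphmphsp} then splits as $\hat\Phi(t)\,{\rm Res}_\lambda w(t',\lambda)w^*(t,\lambda) - {\rm Res}_\lambda w(t',\lambda)\Omega(\hat\Phi(t), w^*_x(t,\lambda))$. The first piece collapses to $\hat\Phi(t)$ by the bilinear identity \eqref{mkpwavebilinear}. For the second piece, I would invoke the Miura link $T_\mu=z_0^{-1}$ from Table II: the mKP eigenfunction $\hat\Phi$, wave function $w$, and the $x$-derivative $w^*_x$ of the adjoint wave correspond respectively to the KP data $\Phi=z_0^{-1}\hat\Phi$, $\psi=z_0^{-1}w$, and $\psi^*=-z_0 w^*_x$. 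Comparing $x$-derivatives forces $\Omega(\hat\Phi, w^*_x) = -\Omega(\Phi,\psi^*)$, and the KP spectral representation \eqref{spectralkp} applied with $t$ and $t'$ interchanged gives ${\rm Res}_\lambda \psi(t',\lambda)\Omega(\Phi(t),\psi^*(t,\lambda)) = -\Phi(t')$. Transporting the $z_0(t')$-factor through $w(t',\lambda)=z_0(t')\psi(t',\lambda)$ together with $\hat\Phi(t')=z_0(t')\Phi(t')$ yields ${\rm Res}_\lambda w(t',\lambda)\Omega(\hat\Phi(t), w^*_x(t,\lambda)) = \hat\Phi(t')$, and assembling both pieces produces the claimed $\hat\Phi(t)-\hat\Phi(t')$.

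The second identity \eqref{mkppsmpssp} is proved symmetrically: \eqref{hatsep} rewrites $\Omega(w(t,\lambda),\hat\Psi(t)_x) = w(t,\lambda)\hat\Psi(t) - \hat\Omega(w_x(t,\lambda),\hat\Psi(t))$, the bilinear identity peels off $\hat\Psi(t)$ from the first summand, and the remaining residue is carried to the KP side by the companion Miura link $T_\nu=z_0^{-1}\partial$ (under which $\hat\Psi\mapsto \Psi=z_0\hat\Psi$, $w_x\mapsto \lambda z_0\psi$, and $w^*\mapsto \psi^*/(\lambda z_0)$). The KP spectral representation for adjoint eigenfunctions then delivers $\hat\Psi(t')$.

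The main technical obstacle is that $\Omega$ and $\hat\Omega$ are each defined only up to an additive function independent of $x$, so the crucial identity $\Omega(\hat\Phi,w^*_x) = -\Omega(\Phi,\psi^*)$ is really an equality of SEPs only after fixing a common normalization. Checking that the natural $\lambda\to\infty$ asymptotics on the two sides of the Miura link are compatible, so that no spurious constant survives the residue, is where the bookkeeping lives; Lemma \ref{mkpfaylemma}, which expresses $w^*_x$ as a ratio of tau functions, is what one uses to verify this compatibility. Once the normalizations are matched the two-line argument above closes, and the same Lemma supplies the analogous check for the second identity.
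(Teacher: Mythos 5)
Your proposal is correct and follows essentially the same route as the paper's own proof: both split $\hat\Omega(\hat\Phi_x,w^*)$ into $\hat\Phi w^*$ plus the KP squared eigenfunction potential $\Omega(\Phi,\psi^*)$ via the Miura link $T_\mu=z_0^{-1}$ (equivalently, via \eqref{hatsep}), then dispatch the first piece with the bilinear identity \eqref{mkpwavebilinear} and the second with the KP spectral representation \eqref{spectralkp}, treating \eqref{mkppsmpssp} symmetrically through $T_\nu$. Your extra remark on fixing the additive normalization of the SEPs is a sensible point of care that the paper leaves implicit, but it does not change the argument.
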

\begin{proof}
Firstly, if denote $\Phi(t)=z_0^{-1}\hat\Phi(t)$ and $\psi^*(t,\lambda)=-z_0w^*(t,\lambda)_x$, then $\Phi$ and $\psi^*(t,\lambda)$ are the eigenfunction and the adjoint wave function of the KP hierarchy, according to the results of the Miura transformations $T_\mu=z_0^{-1}$, showed in Table II. Then one can compute
$\hat \Omega(\hat\Phi(t)_x,w^*(t,\lambda))$ in the way below,
\begin{align}
\hat \Omega(\hat\Phi(t)_x,w^*(t,\lambda))=\int\hat\Phi(t)_xw^*(t,\lambda)dx
=\hat\Phi(t)w^*(t,\lambda)+\Omega(\Phi(t),\psi^*(t,\lambda)).\label{mkpsepkpsep}
\end{align}
Further by considering $z_0^{-1}w(t,\lambda)$ is the wave function of the KP hierarchy (see Table I), one can prove (\ref{mkpphmphsp}) according to (\ref{mkpwavebilinear}) and (\ref{spectralkp}). (\ref{mkppsmpssp}) can be proved by the similar method.
\end{proof}

By considering the relation of $\Omega(\Phi,\Psi_x)$ and $\hat \Omega(\Phi_x,\Psi)$ in (\ref{hatsep}), one can obtain the corollary \cite{Chengjgp2018} below.
\begin{corollary}\label{corspectralmkp}
Under the same condition in Proposition \ref{propmkpsp},
\begin{align}
\hat\Phi(t)={\rm Res}_{\lambda}w(t,\lambda)\hat\rho(\lambda),\quad
\hat\Psi(t)={\rm Res}_{\lambda}w^*(t,\lambda)\hat\rho^*(\lambda),\label{mkpphpssp}
\end{align}
with $\hat\rho(\lambda)=\Omega(\hat\Phi(t'),w^*(t',\lambda)_{x'})$ and $\hat\rho^*(\lambda)=\hat \Omega(w(t',\lambda)_{x'},\hat\Psi(t'))$.
\end{corollary}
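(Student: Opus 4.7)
The plan is to reduce this corollary to Proposition~\ref{propmkpsp} by combining it with the algebraic identity (\ref{hatsep}) and the bilinear identity (\ref{mkpwavebilinear}). First I would swap the roles of $t$ and $t'$ in (\ref{mkpphmphsp}) to obtain
$${\rm Res}_{\lambda}w(t,\lambda)\hat\Omega(\hat\Phi(t')_{x'},w^*(t',\lambda))=\hat\Phi(t')-\hat\Phi(t).$$
The SEP identity (\ref{hatsep}) is really just $x$-integration by parts, $\hat\Omega(f_x,g)+\Omega(f,g_x)=fg$ (modulo the irrelevant constant of integration), so it continues to hold when the adjoint eigenfunction there is replaced by the adjoint wave function $w^*(t',\lambda)$ viewed as a function of $x'$. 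Applying this with $f=\hat\Phi(t')$ and $g=w^*(t',\lambda)$ yields
$$\hat\Omega(\hat\Phi(t')_{x'},w^*(t',\lambda))=\hat\Phi(t')w^*(t',\lambda)-\Omega(\hat\Phi(t'),w^*(t',\lambda)_{x'}).$$

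Substituting this into the swapped identity and splitting the residue into the two obvious pieces, the first piece contributes $\hat\Phi(t')\,{\rm Res}_{\lambda}w(t,\lambda)w^*(t',\lambda)=\hat\Phi(t')$ by (\ref{mkpwavebilinear}), while the second is by definition $-{\rm Res}_{\lambda}w(t,\lambda)\hat\rho(\lambda)$. The two $\hat\Phi(t')$ cancel and the first equation of the corollary drops out. The second equation is produced in exactly the same way, starting from (\ref{mkppsmpssp}) with $t$ and $t'$ swapped: one uses the dual form $\Omega(f,g_x)=fg-\hat\Omega(f_x,g)$ with $f=w(t',\lambda)$ and $g=\hat\Psi(t')$, separates the $\hat\Psi(t')\,{\rm Res}_{\lambda}w^*(t,\lambda)w(t',\lambda)=\hat\Psi(t')$ term via (\ref{mkpwavebilinear}), and reads off $\hat\Psi(t)={\rm Res}_{\lambda}w^*(t,\lambda)\hat\rho^*(\lambda)$.

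There is no serious obstacle. The only point that deserves a brief sanity check is that (\ref{hatsep}), stated for a genuine eigenfunction/adjoint-eigenfunction pair of the modified KP hierarchy, may legitimately be applied to $\hat\Phi$ paired with the adjoint wave function $w^*(t',\lambda)$ (and symmetrically for $\hat\Psi$ paired with $w(t',\lambda)$). This is immediate once one notes that the identity is nothing more than $x$-integration by parts at the level of the defining relation $\Omega_x=fg$, $\hat\Omega_x=f_xg$, which is insensitive to the additional $t_n$-evolution compatibilities used elsewhere in the definition of the SEPs.
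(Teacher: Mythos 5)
Your proof is correct and follows essentially the same route the paper intends: the corollary is obtained from Proposition \ref{propmkpsp} by converting $\hat\Omega$ into $\Omega$ via the integration-by-parts relation (\ref{hatsep}) and absorbing the resulting $\hat\Phi(t')w^*(t',\lambda)$ (resp.\ $w(t',\lambda)\hat\Psi(t')$) term with the bilinear identity (\ref{mkpwavebilinear}). Your explicit use of (\ref{mkpwavebilinear}) and the remark on applying (\ref{hatsep}) to the adjoint wave function merely spell out steps the paper leaves implicit.
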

According to (\ref{mkpsepkpsep}), Proposition \ref{sepexpprop} and Lemma \ref{mkpfaylemma}, one can obtain the expressions of SEPs in the modified KP hierarchy.
\begin{proposition}\label{sepexpressionmkp}
Given the eigenfunction $\hat \Phi(t)$ and $\hat \Psi(t)$, the expressions of SEPs for the modified KP hierarchy are listed below.
\begin{eqnarray*}
\Omega(\hat\Phi(t),w^*(t,\lambda)_x)&=&w^*(t,\lambda)\hat\Phi(t+\varepsilon(\lambda^{-1})),\\
\hat \Omega(\hat\Phi(t)_x,w^*(t,\lambda))&=&w^*(t,\lambda)
\left(\hat\Phi(t)-\hat\Phi(t+\varepsilon(\lambda^{-1}))\right),\\
\hat \Omega(w(t,\mu)_x,\hat\Psi(t))&=&w(t,\mu)\hat\Psi(t-\varepsilon(\mu^{-1})),\\
\Omega(w(t,\mu),\hat\Psi(t)_x)&=&w(t,\mu)\left(\hat\Psi(t)-\hat\Psi(t-\varepsilon(\mu^{-1}))\right).
\end{eqnarray*}
Particularly,
\begin{align*}
\Omega(w(t,\mu),w^*(t,\lambda)_x)=w^*(t,\lambda)w(t+\varepsilon(\lambda^{-1}),\mu),\\
\hat \Omega(w(t,\mu)_x,w^*(t,\lambda))=w(t,\mu)w^*(t-\varepsilon(\mu^{-1}),\lambda).
\end{align*}
\end{proposition}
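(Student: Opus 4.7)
My plan is to leverage the Miura link between the KP and modified KP hierarchies from Table II (using $T_\mu = z_0^{-1}$) together with the KP SEP formulas in Proposition \ref{sepexpprop} and the Fay-type identities in Lemma \ref{mkpfaylemma}. The key numerical input is the identification $z_0 = \tau_0/\tau_1$, read off from the leading asymptotics of $w(t,\lambda) = Z e^{\xi(t,\lambda)} = \tau_0(t-\varepsilon(\lambda^{-1}))/\tau_1(t) \cdot e^{\xi(t,\lambda)}$. Because of (\ref{hatsep}), it suffices to prove only one formula from each of the pairs $\{$first, second$\}$ and $\{$third, fourth$\}$; the particular cases for $w(t,\mu)$ and $w^*(t,\lambda)$ will then follow by specialization.

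Concretely, I would begin with the second formula. By the bridge identity (\ref{mkpsepkpsep}) from the proof of Proposition \ref{propmkpsp}, one has $\hat\Omega(\hat\Phi(t)_x, w^*(t,\lambda)) = \hat\Phi(t) w^*(t,\lambda) + \Omega(\Phi(t), \psi^*(t,\lambda))$, with $\Phi = z_0^{-1}\hat\Phi$ and $\psi^* = -z_0 w^*_x$ regarded as KP eigenfunction and adjoint wave function. Applying Proposition \ref{sepexpprop} converts the second summand into $-\lambda^{-1}\Phi(t+\varepsilon(\lambda^{-1}))\psi^*(t,\lambda)$. Substituting the Miura formulas together with Lemma \ref{mkpfaylemma} (specifically the ratio $w^*_x/w^* = -\lambda\,\tau_0(t+\varepsilon(\lambda^{-1}))\tau_1(t)/[\tau_0(t)\tau_1(t+\varepsilon(\lambda^{-1}))]$), the tau-function prefactors cancel cleanly to yield $-\hat\Phi(t+\varepsilon(\lambda^{-1}))w^*(t,\lambda)$, which produces the desired form $w^*(\hat\Phi(t)-\hat\Phi(t+\varepsilon(\lambda^{-1})))$. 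The first formula is then immediate from (\ref{hatsep}) applied with $\hat\Psi = w^*$. For the fourth formula, an analogous integration by parts under $T_\mu$ gives $\Omega(w(t,\mu), \hat\Psi(t)_x)_x = w\hat\Psi_x = z_0\psi\cdot(-z_0^{-1}\Psi) = -\psi\Psi$, so that $\Omega(w, \hat\Psi_x) = -\Omega(\psi, \Psi)$ up to an $x$-independent constant. Proposition \ref{sepexpprop} supplies $\Omega(\psi,\Psi) = \mu^{-1}\Psi(t-\varepsilon(\mu^{-1}))\psi(t,\mu)$; substituting back and using the companion Fay identity for $w_x$ from Lemma \ref{mkpfaylemma} yields $w(t,\mu)(\hat\Psi(t)-\hat\Psi(t-\varepsilon(\mu^{-1})))$. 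The third formula then follows from (\ref{hatsep}), and the two particular cases arise by specializing $\hat\Phi = w(t,\mu)$ and $\hat\Psi = w^*(t,\lambda)$, which are, respectively, an mKP eigenfunction and adjoint eigenfunction by the dressing construction.

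The main obstacle is the algebraic collapse after the Miura substitution: one must verify that the combination $\lambda^{-1}z_0(t)/z_0(t+\varepsilon(\lambda^{-1})) = \lambda^{-1}\tau_0(t)\tau_1(t+\varepsilon(\lambda^{-1}))/[\tau_1(t)\tau_0(t+\varepsilon(\lambda^{-1}))]$ multiplied by $w^*_x$ collapses exactly to $-w^*$, and analogously for the $w$-side. This cancellation is sharp and depends on the correct direction $z_0=\tau_0/\tau_1$ (as opposed to its reciprocal) as well as the precise shape of the Fay identity. A secondary technical point is fixing the $x$-independent additive constant in the fourth formula derivation; this can be settled either by matching asymptotics as $|\mu|\to\infty$ or by checking consistency with the $t_n$-evolution characterization of $\Omega$.
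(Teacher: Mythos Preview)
Your approach is essentially the paper's: combine the bridge identity (\ref{mkpsepkpsep}) (obtained via the Miura link), Proposition~\ref{sepexpprop}, and Lemma~\ref{mkpfaylemma}. The derivation of the first two formulas through $T_\mu$ is exactly right, and the use of (\ref{hatsep}) to pass between $\Omega$ and $\hat\Omega$ within each pair is the correct shortcut.

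There is, however, a concrete slip in your treatment of the third and fourth formulas. You propose to stay with $T_\mu$, where $\Psi=-z_0\hat\Psi_x$. Then Proposition~\ref{sepexpprop} gives
\[
-\Omega(\psi,\Psi)=\tfrac{1}{\mu}\,\frac{z_0(t-\varepsilon(\mu^{-1}))}{z_0(t)}\,\hat\Psi_x\!\big(t-\varepsilon(\mu^{-1})\big)\,w(t,\mu),
\]
which involves $\hat\Psi_x$ at the shifted argument, not $\hat\Psi$. No Fay identity for $w_x$ will undo this derivative for a general adjoint eigenfunction $\hat\Psi$, so the promised collapse to $w(\hat\Psi(t)-\hat\Psi(t-\varepsilon(\mu^{-1})))$ does not occur along this route. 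The fix is to switch to the other Miura map $T_\nu=z_0^{-1}\partial$ on this side, where $\Psi=z_0\hat\Psi$ and $\psi=z_0^{-1}w_x/\mu$ (Table~II, with $\tau=\tau_1$). The integration by parts then yields the dual bridge identity
\[
\Omega(w,\hat\Psi_x)=w\hat\Psi-\mu\,\Omega(\psi,\Psi),
\]
and substituting $\Omega(\psi,\Psi)=\mu^{-1}\Psi(t-\varepsilon(\mu^{-1}))\psi(t,\mu)$ together with $w_x/w=\mu\,\tau_0(t)\tau_1(t-\varepsilon(\mu^{-1}))/[\tau_1(t)\tau_0(t-\varepsilon(\mu^{-1}))]$ from Lemma~\ref{mkpfaylemma} gives $\mu\,\Omega(\psi,\Psi)=w(t,\mu)\hat\Psi(t-\varepsilon(\mu^{-1}))$ on the nose, with no additive ambiguity to fix. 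The third formula then follows from (\ref{hatsep}) as you said. With this one change, your plan goes through and coincides with the paper's intended argument.
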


\subsection{The neutral free Fermions and the BKP hierarchy}
The BKP hierarchy can be seen as the sub-hierarchy of the KP hierarchy or the modified KP hierarchy \cite{Date1983,Yang2020}, which can be expressed by the neutral free Fermions\cite{Date1983,Jimbo1983,Kac2019,Kac1998,vandeleur2015,You1988,Wang2019}
\begin{align*}
\phi_n=\frac{\psi_n+(-1)^n\psi_{-n}^*}{\sqrt{2}},\quad n\in \mathbb{Z},
\end{align*}
satisfying
\begin{equation}
  [\phi_m,\phi_n]_+=(-1)^m\delta_{m,-n}.\label{neufermirelation}
\end{equation}
Denote the $\mathcal{A}_B$ as the Clifford algebra generated by $\{\phi_n\}_{n\in \mathbb{Z}}$. Then
define the vacuum $|0\rangle$ and $\langle 0|$ in the way below
\begin{align}
&\phi_{-j}|0\rangle=0 \quad (j>0),\quad \langle 0|\phi_{j}=0 \quad(j>0),\nonumber\\
&\phi_{0}|0\rangle=\frac{1}{\sqrt{2}}|0\rangle,\quad \langle 0|\phi_{0}=\frac{1}{\sqrt{2}}\langle 0|.
 \label{neufermivacuum}
\end{align}
One can thus obtain the Fock space for neutral Fermions $\mathcal{F}_B=\mathcal{A}_B|0\rangle$ and its dual space $\mathcal{F}_B^*=\langle 0|\mathcal{A}_B$. Also there is a bilinear pairing $\mathcal{F}_B^*\times \mathcal{F}_B\rightarrow \mathbb{C}$, denoted by $\langle 0|ab|0\rangle$ and computed according to (\ref{neufermirelation}), (\ref{neufermivacuum}) and $\langle 0|0\rangle=1$.

If define $S_B=\sum_{j\in \mathbb{Z}}(-1)^j\phi_j\otimes\phi_{-j}$
 and
 $$g\in O_\infty=\Big\{\exp\Big(\sum_{m,n\in\mathbb{Z}}b_{mn}:\phi_m\phi_n:\Big)|\text{$\exists$ $N$, $b_{mn}=0$, if $|m+n|>N$}\Big\},$$
  then one can obtain the bilinear equations of the BKP hierarchy in the Fermionic version \cite{Kac1998,Kac2019,vandeleur2015}
\begin{align}
S_B(g|0\rangle\otimes g|0\rangle)=\frac{1}{2}g|0\rangle\otimes g|0\rangle.\label{bkpbilinearfermi}
\end{align}
By introducing the following two generating series:
\begin{align*}
\phi(z)=\sum_{j\in\mathbb{Z}}\phi_jz^{j},\quad \sum_{k\in 2\mathbb{Z}+1}H_{B,k}z^{-k-1}=:\frac{1}{2z}\phi(z)\phi(-z):
\end{align*}
and defining $H_B(t)=\sum_{k=0}^\infty t_{2k+1}H_{B,2k+1}$, then we can define
the isomorphism between $\mathcal{F}_B$ and $\mathbb{C}[t_1,t_3,t_5\cdots]$ as follows\cite{Kac2019,You1988,Wang2019},
\begin{align*}
\sigma_{B,t}:&\mathcal{F}_B\rightarrow\mathbb{C}[t_1,t_3,t_5\cdots]\nonumber\\
&a|0\rangle\mapsto\langle 0|e^{H_B(t)}a|0\rangle,\quad\text{$a\in\mathcal{A}_B$}.
\end{align*}
Here the normal order of neutral Fermions $:\cdot:$ is the same as before. Then the neutral Boson-Fermion correspondence can be introduced in the way below by using the relation \cite{Wang2019} $\sqrt{2}\langle 0|\phi(z)=\langle 0|e^{-H_B(2\widetilde{\varepsilon}(\lambda^{-1}))}$,
\begin{align*}
\sigma_{B,t}\cdot\phi(\lambda)\cdot\sigma_{B,t}^{-1}=\frac{1}{\sqrt{2}}
e^{\widetilde{\xi}(t,\lambda)}e^{-2\widetilde{\xi}(\hat{\partial},\lambda^{-1})}
\end{align*}
where $\widetilde{\xi}(t,\lambda)=\sum_{k=0}t_{2k+1}\lambda^{2k+1}$ and $\hat{\partial}=\Big(\frac{\partial}{\partial x},\frac{1}{3}\frac{\partial}{\partial t_3},\frac{1}{5}\frac{\partial}{\partial t_5},\cdots\Big)$. Now if denote $\tau_B(t)=\langle 0|e^{H_B(t)}g|0\rangle$, we can rewritten (\ref{bkpbilinearfermi}) into
\begin{align}
{\rm Res}_{\lambda}\lambda^{-1}\tau_B(t-2\widetilde{\varepsilon}(\lambda^{-1}))
\tau_B(t'+2\widetilde{\varepsilon}(\lambda^{-1}))
e^{\widetilde{\xi}(t-t',\lambda)}=\tau_B(t)\tau_B(t'),\label{bkpbilinearboson}
\end{align}
where $\widetilde{\varepsilon}(\lambda^{-1})=(\lambda^{-1},\lambda^{-3}/3,
\lambda^{-5}/5,\cdots)$.
By introducing the wave functions of the BKP hierarchy in the way below\cite{Date1983}£º
\begin{align}
\psi_B(t,\lambda)=\frac{\tau_{B}(t-2\widetilde
{\varepsilon}(\lambda^{-1}))}{\tau_{B}(t)}e^{\widetilde{\xi}(t,\lambda)}
=\frac{\sqrt{2}\langle 0|e^{H_B(t)}\phi(\lambda)g|0\rangle}{\langle 0|e^{H_B(t)}g|0\rangle}\label{bkpwavefunction},
\end{align}
then (\ref{bkpbilinearboson}) will become into
\begin{eqnarray}
{\rm Res}_{\lambda}\lambda^{-1}\psi_B(t,\lambda)\psi_B(t',-\lambda)=1.\label{bkpwavebilinear}
\end{eqnarray}
So if let $W_B=1+\sum_{j=1}v_{j}\pa^{-j}$ satisfying $\psi_B(t,\lambda)=W_B(e^{\widetilde{\xi}(t,\lambda)})$, then it can be proved that
\begin{align}
W_B\pa^{-1}W_B^*=\pa^{-1},\quad (W_B)_{t_{2k+1}}=-(W_B\pa^{2k+1}W_B^{-1})_{<0}W_B.
\end{align}
Further define $L_B=W_B\pa W_B^{-1}$, one can obtain the Lax equation of the BKP hierarchy
\begin{align}
L_B^{*}=-\pa L_B \pa^{-1},\quad (L_B)_{t_{2k+1}}=[(L_B^{2k+1})_{\geq 0},L_B].
\end{align}
\noindent{\bf Remark:}
Given a fixed Lax operator $L_B$ of the BKP hierarchy, the tau function $\tau_B$ is determined up to a multiplication of $c\exp(\sum_{i\ {\rm odd}}a_it_i)$. $\tau'_B\sim\tau_B$ and $\tau'_B\approx\tau_B$ have similar meanings to the KP and modified KP cases.

In \cite{Yang2020}, the BKP hierarchy is viewed as the Kupershmidt
reduction of the modified KP hierarchy, i.e., $\mathcal{L}^*=\partial\mathcal{L}\partial^{-1}$. As a sub-hierarchy of the modified KP hierarchy, the wave function $\psi_B(t,\lambda)$ of the BKP hierarchy is related with the adjoint wave function $w^*(t,\lambda)$ of the mKP hierarchy in the way below\cite{Yang2020},
\begin{align}
w^*(t,\lambda)=\frac{1}{\lambda}\psi_B(t,-\lambda).\label{wspsib}
\end{align}
Therefore the bilinear equation \eqref{mkpwavebilinear} of the modified KP hierarchy can be naturally reduced into the one of the BKP hierarchy given in \eqref{bkpwavebilinear}.
What's more, any eigenfunction $\Phi_B(t)$ of the BKP hierarchy (satisfying $\Phi_{B,t_{2k+1}}=(L_B^{2k+1})_{\geq 1}(\Phi_B)$) can also be viewed as the adjoint eigenfunction of the BKP hierarchy seen as the Kupershmidt
reduction of the modified KP hierarchy\cite{Yang2020}, that is,
\begin{align*}
\Phi_B(t)_{t_{2k+1}}=-(\partial (L_B^{2k+1})_{\geq 1}\partial^{-1})^*(\Phi_B(t)).
\end{align*}
By using these facts, the corresponding spectral representation for the BKP hierarchy can be derived naturally from the one of the modified KP hierarchy (see Proposition \ref{propmkpsp}), which is given in the proposition below\cite{Loris1999,Cheng2010}.

\begin{proposition}\label{propspecbkp}
For the eigenfunction $\Phi_B(t)$ of the BKP hierarchy,
\begin{align}
{\rm Res}_{\lambda}\frac{1}{\lambda}\Omega(\Phi_B(t)_x,
\psi_{B}(t,-\lambda))\psi_{B}(t',\lambda)
=\Phi_B(t)-\Phi_B(t'),\label{bkpspphimphi}
\end{align}
Further,
\begin{align}
\Phi_B(t)={\rm Res}_{\lambda}\frac{1}{\lambda}\rho_B(\lambda)\psi_{B}(t,\lambda),\quad \rho_B(\lambda)=\Omega(\Phi_B(t'),\psi_{B}(t',-\lambda)_x).\label{bkpspphi}
\end{align}
\end{proposition}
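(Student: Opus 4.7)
The plan is to deduce Proposition~\ref{propspecbkp} by pushing the modified KP spectral representation through the Kupershmidt reduction recorded immediately before the proposition, rather than starting from scratch at the BKP level. Two inputs are needed. First, a BKP eigenfunction $\Phi_{B}$ is simultaneously an adjoint eigenfunction of the parent modified KP hierarchy under the reduction $\mathcal{L}^{*}=\partial\mathcal{L}\partial^{-1}$, so $\Phi_{B}$ plays the role of $\hat{\Psi}$ in Proposition~\ref{propmkpsp}. Second, the modified KP (adjoint) wave function and the BKP wave function are linked by $w^{*}(t,\lambda)=\lambda^{-1}\psi_{B}(t,-\lambda)$ and, at the BKP locus (even times set to zero), by $w(t,\lambda)=\psi_{B}(t,\lambda)$, the latter following from $Z=W_{B}$ and $\xi(t,\lambda)=\widetilde{\xi}(t,\lambda)$ on odd times.

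To derive \eqref{bkpspphimphi}, I would substitute $\hat{\Psi}\mapsto\Phi_{B}$, $w\mapsto\psi_{B}(\cdot,\lambda)$, and $w^{*}\mapsto\lambda^{-1}\psi_{B}(\cdot,-\lambda)$ into \eqref{mkppsmpssp}, obtaining
\[
{\rm Res}_{\lambda}\frac{1}{\lambda}\,\psi_{B}(t',-\lambda)\,\Omega(\psi_{B}(t,\lambda),\Phi_{B}(t)_{x})=\Phi_{B}(t)-\Phi_{B}(t').
\]
Then I would perform the change of variable $\lambda\mapsto-\lambda$ inside the residue; the sign flip from ${\rm Res}_{\lambda}$ and the sign flip from the factor $1/\lambda$ cancel, and the resulting identity is, up to the harmless symmetry of $\Omega(\cdot,\cdot)$ in its two slots at the level of the $x$-derivative, exactly \eqref{bkpspphimphi}.

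To pass from \eqref{bkpspphimphi} to \eqref{bkpspphi}, I would apply the integration-by-parts identity $\Omega(f_{x},g)=fg-\Omega(f,g_{x})$ inside the residue, splitting the left-hand side as
\[
\Phi_{B}(t)\cdot{\rm Res}_{\lambda}\frac{1}{\lambda}\psi_{B}(t,-\lambda)\psi_{B}(t',\lambda)-{\rm Res}_{\lambda}\frac{1}{\lambda}\,\Omega(\Phi_{B}(t),\psi_{B}(t,-\lambda)_{x})\,\psi_{B}(t',\lambda).
\]
The first residue equals $1$ by the BKP wave bilinear \eqref{bkpwavebilinear} after the substitution $\lambda\mapsto-\lambda$, so the first piece contributes exactly $\Phi_{B}(t)$. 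Cancelling this against the $\Phi_{B}(t)$ on the right-hand side of \eqref{bkpspphimphi} leaves
\[
\Phi_{B}(t')={\rm Res}_{\lambda}\frac{1}{\lambda}\,\Omega(\Phi_{B}(t),\psi_{B}(t,-\lambda)_{x})\,\psi_{B}(t',\lambda),
\]
and swapping $t\leftrightarrow t'$ yields \eqref{bkpspphi} with $\rho_{B}(\lambda)=\Omega(\Phi_{B}(t'),\psi_{B}(t',-\lambda)_{x})$.

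The main obstacle I expect is the careful bookkeeping around $\Omega$: it is defined only up to a $\lambda$-independent integration constant in $x$, and the modified KP SEP (with its $t_{n}$-evolution through all $n$) and the BKP SEP (only seeing odd times) are a priori different objects. One must verify that, under the reduction $\mathcal{L}^{*}=\partial\mathcal{L}\partial^{-1}$ and after restricting to odd times, the two notions of $\Omega$ coincide, and that the argument swap $\Omega(\psi_{B}(t,-\lambda),\Phi_{B}(t)_{x})\leftrightarrow\Omega(\Phi_{B}(t)_{x},\psi_{B}(t,-\lambda))$ does not introduce any $\lambda$-dependent term that would survive the residue. A smaller but nontrivial step is checking that the identification $w(t,\lambda)=\psi_{B}(t,\lambda)$ at the BKP locus is compatible with all the even-time derivatives implicitly used in defining the modified KP SEP, which again reduces to invoking the Kupershmidt constraint at the level of the dressing operator.
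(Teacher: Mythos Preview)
Your proposal is correct and follows essentially the same route as the paper: the paper explicitly states that Proposition~\ref{propspecbkp} is obtained by specializing the modified KP spectral representation (Proposition~\ref{propmkpsp}) through the Kupershmidt reduction, using $\hat{\Psi}\mapsto\Phi_{B}$, $w^{*}(t,\lambda)\mapsto\lambda^{-1}\psi_{B}(t,-\lambda)$, and $w(t,\lambda)\mapsto\psi_{B}(t,\lambda)$. Your passage from \eqref{bkpspphimphi} to \eqref{bkpspphi} via integration by parts together with the BKP wave bilinear \eqref{bkpwavebilinear} is precisely the BKP analogue of how Corollary~\ref{corspectralmkp} is extracted from Proposition~\ref{propmkpsp} using \eqref{hatsep}, and your flagged concern about the argument swap in $\Omega$ is harmless here since at the BKP level $\Omega(f,g)$ is simply the common antiderivative $\int fg\,dx$.
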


If assume $\Omega(\psi_B(t,\lambda),\Phi_B(t)_x)=K(t,\lambda)e^{\xi(t,\lambda)}$ with $K(t,\lambda)\in \mathbb{C}((\lambda^{-1}))$, then by choosing $t-t'=2\widetilde{\varepsilon}(\lambda^{-1})$ in \eqref{bkpspphimphi} one can obtain the following expressions of SEPs for the BKP hierarchy\cite{Loris1999,Cheng2010}.
\begin{proposition}
Assume $\Phi_B(t)$ to be the eigenfunction of the BKP hierarchy, then the expression of SEPs for the BKP hierarchy are given below.
\begin{align*}
&\Omega(\psi_B(t,\lambda),\Phi_B(t)_x)=\frac{1}{2}\psi_B(t,\lambda)
\Big(\Phi_B(t)-\Phi_B(t-2\widetilde{\varepsilon}(\lambda^{-1}))\Big)\\
&\Omega(\Phi_B(t),\psi_B(t,\lambda)_x)=\frac{1}{2}\psi_B(t,\lambda)
\Big(\Phi_B(t)+\Phi_B(t-2\widetilde{\varepsilon}(\lambda^{-1}))\Big).
\end{align*}
In particular,
\begin{align*}
\Omega(&\psi_B(t,\mu),\psi_B(t,-\lambda)_x)=\frac{1}{2}\psi_B(t,\mu)
\Big(\psi_B(t,-\lambda)-\psi_B(t-2\widetilde{\varepsilon}(\mu^{-1}),-\lambda)\Big)\nonumber\\
&=\frac{1}{2}\psi_B(t,-\lambda)
\Big(\psi_B(t,\mu)+\psi_B(t+2\widetilde{\varepsilon}(\lambda^{-1}),\mu)\Big)
+\frac{1}{2}(\lambda+\mu)\delta(\lambda,\mu).
\end{align*}
\end{proposition}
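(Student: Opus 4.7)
The strategy follows the author's own hint: make an exponential ansatz for the SEP, substitute it into the spectral representation (\ref{bkpspphimphi}), and isolate the unknown coefficient by choosing a special shift of the time argument. Concretely, since the defining relation $\Omega(\psi_B(t,\lambda),\Phi_B(t)_x)_x=\psi_B(t,\lambda)\Phi_B(t)_x$ inherits the leading exponential $e^{\widetilde\xi(t,\lambda)}$ from $\psi_B$, I would first postulate
\[
\Omega(\psi_B(t,\lambda),\Phi_B(t)_x) = K(t,\lambda)\,e^{\widetilde\xi(t,\lambda)},\qquad K(t,\lambda)\in\mathbb{C}((\lambda^{-1})),
\]
and analogously for $\Omega(\Phi_B(t),\psi_B(t,\lambda)_x)$, using also the obvious $x$-integration identity $\Omega(\Phi_B(t)_x,\psi_B(t,-\lambda))+\Omega(\Phi_B(t),\psi_B(t,-\lambda)_x)=\Phi_B(t)\psi_B(t,-\lambda)$ to couple the two unknowns.

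To pin down $K$, I would insert the ansatz into (\ref{bkpspphimphi}) and specialize $t-t'=2\widetilde\varepsilon(\mu^{-1})$. The key algebraic identity
\[
e^{\widetilde\xi(2\widetilde\varepsilon(\mu^{-1}),\lambda)}=\frac{\mu+\lambda}{\mu-\lambda}
\]
converts the oscillating exponential in $\lambda$ into a rational factor with a simple pole at $\lambda=\mu$, and the residue in $\lambda$ can then be computed term-by-term. Combined with the BKP bilinear wave identity ${\rm Res}_\lambda\lambda^{-1}\psi_B(t,\lambda)\psi_B(t',-\lambda)=1$ from (\ref{bkpwavebilinear}), this forces $K(t,\mu)$ to take the claimed form. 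The split into an antisymmetric combination $\Phi_B(t)-\Phi_B(t-2\widetilde\varepsilon(\lambda^{-1}))$ and a symmetric combination $\Phi_B(t)+\Phi_B(t-2\widetilde\varepsilon(\lambda^{-1}))$ then reflects the $\lambda\leftrightarrow-\lambda$ symmetry of the Kupershmidt reduction (\ref{wspsib}), while the overall factor $\tfrac12$ matches the neutral-fermion normalization $\phi_0|0\rangle=\tfrac{1}{\sqrt{2}}|0\rangle$ used in Section 2.4.

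For the ``Particularly'' part I would specialize $\Phi_B(t)=\psi_B(t,\mu)$, which is legitimate because wave functions are in particular eigenfunctions of the BKP hierarchy. The additional distributional term $\tfrac12(\lambda+\mu)\delta(\lambda,\mu)$ then arises from the on-shell contribution at $\lambda=\mu$, exactly as the $\delta(\lambda,\mu)$ term did in the KP analogue of Proposition \ref{sepexpprop}; the prefactor $(\lambda+\mu)$ is read off from the numerator of $\tfrac{\mu+\lambda}{\mu-\lambda}$ at the pole. The main technical obstacle, in my view, is precisely this last residue bookkeeping: one must carefully separate the genuine Laurent-series contribution to $K$ from the distributional contribution concentrated at $\lambda=\mu$, and verify that after this separation the two expressions in ``Particularly'' agree. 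Once that is in place, all four identities of the proposition follow by reading off the symmetric and antisymmetric parts under $\lambda\leftrightarrow-\lambda$.
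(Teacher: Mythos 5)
Your proposal follows essentially the same route as the paper, which itself only sketches the argument in one sentence preceding the proposition: assume the exponential ansatz $\Omega(\psi_B(t,\lambda),\Phi_B(t)_x)=K(t,\lambda)e^{\widetilde{\xi}(t,\lambda)}$, substitute into the spectral representation \eqref{bkpspphimphi}, and set $t-t'=2\widetilde{\varepsilon}(\lambda^{-1})$ to extract $K$, with the integration-by-parts identity supplying the second SEP and the specialization $\Phi_B=\psi_B(t,\mu)$ (plus the $\delta(\lambda,\mu)$ bookkeeping, as in Proposition \ref{sepexpprop}) giving the ``Particularly'' part. Your expansion of that recipe, including the identity $e^{\widetilde{\xi}(2\widetilde{\varepsilon}(\mu^{-1}),\lambda)}=\frac{\mu+\lambda}{\mu-\lambda}$, is correct and consistent with the paper's intent.
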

Further if define the SEP of B type (BSEP)\cite{Loris1999,Cheng2010}
$\Omega_B(\Phi_{B1},\Phi_{B2})\triangleq
\Omega(\Phi_{B2},\Phi_{B1,x})-\Omega(\Phi_{B1},\Phi_{B2,x})$, then one can find that
\begin{align}
\Omega_B(\Phi_B(t),\psi_B(t,\lambda))=
-\psi_B(t,\lambda)\Phi_B(t-2\widetilde{\varepsilon}(\lambda^{-1})).\label{omegab}
\end{align}

\subsection{Important relations on free Fermions}\label{subsectionrelationfreeferm}

In this section, we will give  important relations about free Fermions used to construct the bilinear equations. Firstly denote
\begin{align*}
&H_{n,a}=\{(j_a,j_{a-1},\cdots,j_1)|n\geq j_a>\cdots>j_1\geq 1\},\\
&m+H_{n,a}=\{(m+j_a,m+j_{a-1},\cdots,m+j_1)|n\geq j_a>\cdots>j_1\geq 1\}.
\end{align*}
Note that $H_{n,n}=\vec{\bf n}$, $H_{n,0}=\{0\}$ and $H_{n,a}=\emptyset$ with $a>n$ or $a<0$. For $\vec{\bf n}=(n,n-1,\cdots,2,1)$, set $\beta_{\vec{\bf n}}=\beta_n\beta_{n-1}\cdots\beta_2\beta_1$ and $\vec{\beta}_{\vec{\bf n}}=(\beta_n,\beta_{n-1},\cdots,\beta_1)$. Define
\begin{align*}
\vec{\alpha}_{\vec{\bf m}}\cup\vec{\beta}_{\vec{\bf n}}\triangleq
(\alpha_m,\cdots,\alpha_1,\beta_n,\cdots,\beta_1),\quad
\vec{\alpha}_{\vec{\bf m}}\setminus\{\alpha_i\}\triangleq(\alpha_m,\cdots,\alpha_{i+1},\alpha_{i-1},
\cdots,\alpha_1).
\end{align*}
and $|\vec{\alpha}|\triangleq\alpha_m+\cdots+\alpha_1$.
.

Next, let us discuss the relations on the free Fermions $\psi_i$ and $\psi_j^*$, which will be used to the bilinear equations of the Darboux transformations of the KP and modified KP hierarchies.

\begin{lemma}\label{s4beta}
For $\beta_i\in V=\oplus_{l}\mathbb{C}\psi_l$ and $\beta_j^*\in V^*=\oplus_{l}\mathbb{C}\psi^*_l$, one has the following relations
\begin{align*}
&S(\beta_{\vec{\bf n}}\otimes 1)=(-1)^n (\beta_{\vec{\bf n}}\otimes 1) S,\quad S(1\otimes\beta_{\vec{\bf k}}^*)=(-1)^k(1\otimes\beta_{\vec{\bf k}}^*)S,\\
&S(1\otimes\beta_{\vec{\bf n}})=\sum_{l=1}^n(-1)^{n-l}\beta_l\otimes\beta_{\vec{\bf n}\setminus\{l\}}+(-1)^n (1\otimes\beta_{\vec{\bf n}}) S,\\
&S(\beta_{\vec{\bf k}}^*\otimes 1)=\sum_{l=1}^k(-1)^{k-l}\beta_{\vec{\bf k}\setminus\{l\}}^*\otimes \beta_l^*+(-1)^k (\beta_{\vec{\bf k}}^*\otimes 1)S.
\end{align*}
\end{lemma}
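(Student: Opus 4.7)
The plan is to reduce everything to the basic Clifford relations \eqref{cliffordrelation}, exploiting two simple facts: (i) any $\beta_i\in V$ anticommutes with every $\psi_j$ (and similarly any $\beta_i^*\in V^*$ anticommutes with every $\psi_j^*$), because both sides are linear in generators that pairwise anticommute; (ii) for $\beta_l=\sum_m c_{lm}\psi_m\in V$ and $\beta_l^*=\sum_m d_{lm}\psi_m^*\in V^*$, the anticommutators $[\psi_j^*,\beta_l]_+=c_{lj}$ and $[\psi_j,\beta_l^*]_+=d_{lj}$ are \emph{scalars}, and moreover
\begin{align*}
\sum_{j}[\psi_j^*,\beta_l]_+\,\psi_j=\beta_l,\qquad
\sum_{j}[\psi_j,\beta_l^*]_+\,\psi_j^*=\beta_l^*.
\end{align*}

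For the first two identities, I would just push $\beta_{\vec{\bf n}}$ (resp.\ $\beta_{\vec{\bf k}}^*$) through $\sum_j\psi_j\otimes\psi_j^*$ slot-by-slot. By fact (i) each of the $n$ elementary swaps of $\beta_i$ past $\psi_j$ contributes a sign $-1$, so $\beta_{\vec{\bf n}}\psi_j=(-1)^n\psi_j\beta_{\vec{\bf n}}$, which immediately gives $S(\beta_{\vec{\bf n}}\otimes 1)=(-1)^n(\beta_{\vec{\bf n}}\otimes 1)S$. The same argument on the second tensor factor (with $\psi^*$'s in place of $\psi$'s) yields the identity for $1\otimes\beta_{\vec{\bf k}}^*$.

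For the third identity, the main step is to move $\psi_j^*$ across the product $\beta_n\beta_{n-1}\cdots\beta_1$. A straightforward induction on $n$, using $\psi_j^*\beta_l=-\beta_l\psi_j^*+[\psi_j^*,\beta_l]_+$ and the fact that the anticommutator is a scalar (so it commutes out of the remaining product), gives
\begin{align*}
\psi_j^*\beta_{\vec{\bf n}}=(-1)^n\beta_{\vec{\bf n}}\psi_j^*
+\sum_{l=1}^{n}(-1)^{n-l}[\psi_j^*,\beta_l]_+\,\beta_{\vec{\bf n}\setminus\{l\}}.
\end{align*}
Substituting this into $S(1\otimes\beta_{\vec{\bf n}})=\sum_j\psi_j\otimes\psi_j^*\beta_{\vec{\bf n}}$ and using fact (ii) to collapse $\sum_j[\psi_j^*,\beta_l]_+\psi_j=\beta_l$ produces precisely the claimed formula. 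The fourth identity follows by the mirror-image argument: expand $\psi_j\beta_{\vec{\bf k}}^*$ by an analogous induction using $[\psi_j,\beta_l^*]_+$, substitute into $S(\beta_{\vec{\bf k}}^*\otimes 1)=\sum_j\psi_j\beta_{\vec{\bf k}}^*\otimes\psi_j^*$, and collapse with $\sum_j[\psi_j,\beta_l^*]_+\psi_j^*=\beta_l^*$ on the \emph{second} tensor slot.

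The only real difficulty is bookkeeping of signs: one must be careful that the $(-1)^{n-l}$ arises from the $n-l$ anticommutations needed to drag the commutator past $\beta_n,\ldots,\beta_{l+1}$, while dragging it further through the remaining $\beta_{l-1},\ldots,\beta_1$ costs nothing because the commutator is a scalar. Beyond this sign accounting, everything is a direct use of the Clifford relations.
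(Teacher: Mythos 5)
Your proposal is correct and follows essentially the same route as the paper: the first two identities from the anticommutativity of $V$ with the $\psi_j$'s (resp. $V^*$ with the $\psi_j^*$'s), and the third and fourth by commuting $\psi_j^*$ (resp. $\psi_j$) through the product via the Clifford relations and collapsing the scalar anticommutators with $\sum_j[\psi_j^*,\beta_l]_+\psi_j=\beta_l$. Your sign bookkeeping matches the paper's; you merely spell out the induction that the paper leaves implicit.
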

\begin{proof}
Note that $S(\beta_{\vec{\bf n}}\otimes 1)$ and $S(1\otimes\beta_{\vec{\bf k}}^*)$ are obvious. Next we only discuss $S(1\otimes\beta_{\vec{\bf n}})$, since $S(\beta_{\vec{\bf k}}^*\otimes 1)$ is almost the same.

Actually by using the fact $\sum_j[\psi_j^*,\beta_l]_+\psi_j=\beta_l$,
\begin{align*}
&S(1\otimes\beta_{\vec{\bf n}})=
\sum_j(\psi_j\otimes\psi_j^*)\cdot(1\otimes\beta_{\vec{\bf n}})\\
=&\sum_j\sum_{l=1}^n(-1)^{n-l}[\psi_j^*,\beta_l]_+\psi_j\otimes
\beta_{\vec{\bf n}\setminus\{l\}}
+(-1)^n\sum_j(\psi_j\otimes\beta_{\vec{\bf n}}\psi_j^*)\\
=&\sum_{l=1}^n(-1)^{n-l}\beta_l\otimes\beta_{\vec{\bf n}\setminus\{l\}}+(-1)^n (1\otimes\beta_{\vec{\bf n}}) S.
\end{align*}
\end{proof}
\begin{lemma}\label{sl4beta}
For $\beta_i\in V=\oplus_{l}\mathbb{C}\psi_l$ and $\beta_j^*\in V^*=\oplus_{l}\mathbb{C}\psi^*_l$,
\begin{align*}
&S^l(\beta_{\vec{\bf n}}\otimes 1)=(-1)^{nl} (\beta_{\vec{\bf n}}\otimes 1) S^l,\quad S^l(1\otimes\beta_{\vec{\bf k}}^*)=(-1)^{kl}(1\otimes\beta_{\vec{\bf k}}^*)S^l,\nonumber\\
&S^l(1\otimes\beta_{\vec{\bf n}})=(-1)^{nl}\sum_{j=0}^lC_l^jj!A_{n,j}^+S^{l-j},\quad
S^l(\beta_{\vec{\bf k}}^*\otimes 1)=(-1)^{kl}\sum_{j=0}^lC_l^jj!A_{k,j}^-S^{l-j},
\end{align*}
where for $n\geq j\geq 0$,
\begin{align*}
A_{n,j}^+=\sum_{\vec{\gamma}\in H_{n,j}}(-1)^{-|\vec{\gamma}|}\beta_{\vec{\gamma}}
\otimes\beta_{\vec{\bf n}\setminus\vec{\gamma}},\quad
A_{n,j}^-=\sum_{\vec{\gamma}\in H_{n,j}}(-1)^{-|\vec{\gamma}|}\beta^*_{\vec{\bf n}\setminus\vec{\gamma}}\otimes\beta^*_{\vec{\gamma}}.
\end{align*}
In particular, $A_{n,0}^+=1\otimes\beta_{\vec{\bf n}}$, $A_{n,0}^-=\beta_{\vec{\bf n}}^*\otimes1$ and $A_{n,j}^\pm=0$ for $n<j$ or $j<0$.
\end{lemma}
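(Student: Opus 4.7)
The plan is induction on $l$. For the first two identities, iterate the $l=1$ statements $S(\beta_{\vec{\bf n}}\otimes 1)=(-1)^n(\beta_{\vec{\bf n}}\otimes 1)S$ and $S(1\otimes\beta_{\vec{\bf k}}^*)=(-1)^k(1\otimes\beta_{\vec{\bf k}}^*)S$ from Lemma \ref{s4beta} exactly $l$ times; each application contributes a sign $(-1)^n$ (resp.\ $(-1)^k$), accumulating to $(-1)^{nl}$ (resp.\ $(-1)^{kl}$). No reorganization of the other factor is needed since $S^l$ remains on the right after each move.

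For the third identity I induct on $l$. The base case $l=1$ matches Lemma \ref{s4beta} after observing that $A_{n,0}^+=1\otimes\beta_{\vec{\bf n}}$ and $A_{n,1}^+=\sum_{l=1}^n(-1)^{-l}\beta_l\otimes\beta_{\vec{\bf n}\setminus\{l\}}$, so that $(-1)^n[A_{n,0}^{+}S+A_{n,1}^{+}]$ reproduces exactly the $l=1$ right-hand side. The inductive step rests on the auxiliary identity
\[
S\cdot A_{n,j}^{+}=(-1)^n\bigl(A_{n,j}^{+}\cdot S+(j+1)A_{n,j+1}^{+}\bigr).
\]
Granting this, $S^{l+1}(1\otimes\beta_{\vec{\bf n}})=S\cdot\bigl[(-1)^{nl}\sum_{j=0}^{l}C_l^j\,j!\,A_{n,j}^{+}S^{l-j}\bigr]$ expands into two sums, the reindex $j\mapsto j+1$ in the new contribution aligns exponents of $S$, and Pascal's rule $C_l^j+C_l^{j-1}=C_{l+1}^j$ collapses the binomial factors to $C_{l+1}^{j}$, yielding the formula for $l+1$.

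The main work is the auxiliary identity. For a single term $\beta_{\vec{\gamma}}\otimes\beta_{\vec{\bf n}\setminus\vec{\gamma}}$ with $\vec{\gamma}\in H_{n,j}$, first push $S$ past $\beta_{\vec{\gamma}}\otimes 1$ using Lemma \ref{s4beta} (picking up $(-1)^j$), then apply its third identity to $S(1\otimes\beta_{\vec{\bf n}\setminus\vec{\gamma}})$; the proof of that identity uses only anticommutation and $\sum_k[\psi_k^*,\beta]_+\psi_k=\beta$, so it extends verbatim from $\vec{\bf n}=(n,\dots,1)$ to any ordered product in $V$. After collecting signs this produces the $(-1)^n A_{n,j}^{+}S$ piece together with a residue
\[
(-1)^n\sum_{\vec{\gamma}\in H_{n,j}}(-1)^{-|\vec{\gamma}|}\sum_{p=1}^{n-j}(-1)^{-p}\,\beta_{\vec{\gamma}}\beta_{\alpha_p}\otimes\beta_{\vec{\bf n}\setminus\vec{\gamma}\setminus\{\alpha_p\}},
\]
where $\alpha_p$ is the $p$-th element of $\vec{\bf n}\setminus\vec{\gamma}$ in decreasing order.

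The delicate step, and the one I expect to be the main obstacle, is to recognize this residue as $(-1)^n(j+1)A_{n,j+1}^{+}$. The idea is to reorganize the sum by the augmented set $\vec{\gamma}'=\vec{\gamma}\cup\{\alpha_p\}\in H_{n,j+1}$: each such $\vec{\gamma}'$ arises in exactly $j+1$ ways, one for each choice of $e\in\vec{\gamma}'$. Reordering $\beta_{\vec{\gamma}}\beta_{e}$ into the decreasing product $\beta_{\vec{\gamma}'}$ contributes $(-1)^{s-1}$, where $s$ is the position of $e$ in $\vec{\gamma}'$; and the position $p$ of $e$ in the sorted $\vec{\bf n}\setminus\vec{\gamma}$ satisfies $p=e-(s-1)$. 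Combined with $(-1)^{-|\vec{\gamma}|}=(-1)^{-|\vec{\gamma}'|+e}$, all dependence on the chosen $e$ cancels modulo $2$ and the overall sign collapses to $(-1)^{-|\vec{\gamma}'|}$, so the $j+1$ identical contributions sum to $(j+1)A_{n,j+1}^{+}$. The fourth identity is proved by the same argument with $\psi\leftrightarrow\psi^*$ throughout, using the symmetric statements of Lemma \ref{s4beta} to move $S$ past tensor factors in $V^*$.
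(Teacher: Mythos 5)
Your proposal is correct and follows essentially the same route as the paper: both establish the key commutation relation $SA_{n,j}^{+}=(-1)^{n}\bigl((j+1)A_{n,j+1}^{+}+A_{n,j}^{+}S\bigr)$ and then induct on $l$, with Pascal's rule $C_l^{j}+C_l^{j-1}=C_{l+1}^{j}$ collapsing the binomial coefficients. The only difference is how that relation is derived — the paper iterates the recursion $A_{n,j}^{+}=\sum_{l=j}^{n}(\beta_{l}\otimes\beta_{\vec{\bf n}\setminus\vec{l}})A_{l-1,j-1}^{+}$, whereas you verify it by a direct term-by-term sign count (your bookkeeping $p=e-(s-1)$ and the cancellation to $(-1)^{-|\vec{\gamma}'|}$ checks out), which is an equally valid and arguably more explicit derivation of the same identity.
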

\begin{proof}
The results for $S^l(\beta_{\vec{\bf n}}\otimes 1)$ and $S^l(1\otimes\beta_{\vec{\bf k}}^*)$ are obviously by using Lemma \ref{s4beta}. Next we try to prove $S^l(1\otimes\beta_{\vec{\bf n}})$, where the key is to compute $SA_{n,j}^+$. Firstly note that $A_{n,j}^+=\sum_{l=j}^n\Big(\beta_{l}\otimes\beta_{\vec{\bf n}\setminus\vec{ l}}\Big)A_{l-1,j-1}^+$,
then by Lemma \ref{s4beta},
\begin{align*}
SA_{n,j}^+=(-1)^nA_{n,j+1}^++\sum_{l=j}^n(-1)^{n-1}
\Big(\beta_{l}\otimes\beta_{\vec{\bf n}\setminus\vec{l}}\Big)SA_{l-1,j-1}^+.
\end{align*}
Repeat the procedure above, one at last obtains
\begin{align*}
SA_{n,j}^+=&(-1)^njA_{n,j+1}^++\sum_{\vec{\gamma}\in H_{n,j}}(-1)^{n-1-|\vec{\gamma}|+\gamma_1}
\Big(\beta_{\gamma}\otimes\beta_{(\vec{\bf n}\setminus\vec{\gamma})\setminus\overrightarrow{\gamma_1-1}}\Big)
SA_{\gamma_1-1,0}^+\\
=&(-1)^n\Big((j+1)A_{n,j+1}^++A_{n,j}^+S\Big),
\end{align*}
where $\vec{\gamma}=(\gamma_j,\cdots,\gamma_1)\in H_{n,j}$.
After the preparation above, one can easily prove the results for $S^l(1\otimes\beta_{\vec{\bf n}})$ by induction on $l$. Similar discussion can be done for $S^l(\beta_{\vec{\bf k}}^*\otimes 1)$.
\end{proof}
By Lemma \ref{sl4beta}, one can obtain the lemma below
\begin{lemma}\label{slbetaotimesbeta}
For $\tilde{\beta}_i,\bar\beta_i,\beta_i\in V$ and $\tilde{\beta}^*_j,\bar\beta^*_j,\beta^*_j\in V^*$, the actions of $S^l$ on $\tilde{\beta}^*_{\overrightarrow{\bf {k_1}}}\beta^*_{\vec{\bf {k}}}\tilde{\beta}_{\overrightarrow{\bf {n_1}}}\beta_{\vec{\bf {n}}}\otimes\bar {\beta}^*_{\overrightarrow{\bf {k_2}}}\beta^*_{\vec{\bf {k}}}\bar \beta_{\overrightarrow{\bf {n_2}}}\beta_{\vec{\bf {n}}}$ and $\tilde{\beta}_{\overrightarrow{\bf {n_1}}}\beta_{\vec{\bf {n}}}\tilde{\beta}^*_{\overrightarrow{\bf {k_1}}}\beta^*_{\vec{\bf {k}}}\otimes\bar \beta_{\overrightarrow{\bf {n_2}}}\beta_{\vec{\bf {n}}}\bar {\beta}^*_{\overrightarrow{\bf {k_2}}}\beta^*_{\vec{\bf {k}}}$ are given by
\begin{align}
&S^{l}\Big(\tilde{\beta}^*_{\overrightarrow{\bf {k_1}}}\beta^*_{\vec{\bf {k}}}\tilde{\beta}_{\overrightarrow{\bf {n_1}}}\beta_{\vec{\bf {n}}}\otimes\bar {\beta}^*_{\overrightarrow{\bf {k_2}}}\beta^*_{\vec{\bf {k}}}\bar \beta_{\overrightarrow{\bf {n_2}}}\beta_{\vec{\bf {n}}}\Big)=\sum_{j=0}^lC_{l}^j
(-1)^{(l-j)(n_1+n_2+k_1+k_2)}j!B_{k_1,n_2,j}^+S^{l-j},\quad l\geq 0, \label{slak1n2}\\
&S^{l}\Big(\tilde{\beta}_{\overrightarrow{\bf {n_1}}}\beta_{\vec{\bf {n}}}\tilde{\beta}^*_{\overrightarrow{\bf {k_1}}}\beta^*_{\vec{\bf {k}}}\otimes\bar \beta_{\overrightarrow{\bf {n_2}}}\beta_{\vec{\bf {n}}}\bar {\beta}^*_{\overrightarrow{\bf {k_2}}}\beta^*_{\vec{\bf {k}}}\Big)=\sum_{j=0}^lC_{l}^j
(-1)^{(l-j)(n_1+n_2+k_1+k_2)}j!B_{k_1,n_2,j}^-S^{l-j},\quad l\geq 0.
\end{align}
Here $B_{k_1,n_2,j}^{\pm}$ is defined in the way below,
\begin{align*}
B_{k_1,n_2,j}^+=&
\sum_{a=0}^j\sum_{\vec{\gamma}\in H_{k_1,j-a}}\sum_{\vec{\delta}\in H_{n_2,a}}(-1)^{a(k_2+n_2)+jk_1-|\vec{\gamma}|-|\vec{\delta}|}\tilde{\beta}^*_{\overrightarrow{\bf k_1}\setminus \vec{\gamma}}\beta^*_{\vec{\bf k}}\bar \beta_{\vec{\delta}}\tilde{\beta}_{\overrightarrow{\bf {n_1}}}\beta_{\vec{\bf n}}\otimes \tilde{\beta}^*_{\vec{\gamma}}\bar{\beta}^*_{\overrightarrow{\bf k_2}}\beta^*_{\vec{\bf k}}\bar{\beta}_{\overrightarrow{\bf n_2}\setminus\vec{\delta}}\beta_{\vec{\bf n}},\\
B_{k_1,n_2,j}^-=&
\sum_{a=0}^j\sum_{\vec{\gamma}\in H_{n_2,j-a}}\sum_{\vec{\delta}\in H_{k_1,a}}(-1)^{a(k_1+n_1)+jn_2-|\vec{\gamma}|-|\vec{\delta}|}\bar \beta_{\vec{\gamma}}\tilde{\beta}_{\overrightarrow{\bf {n_1}}}\beta_{\vec{\bf n}}\tilde{\beta}^*_{\overrightarrow{\bf k_1}\setminus \vec{\delta}}\beta^*_{\vec{\bf k}}\otimes\bar{\beta}_{\overrightarrow{\bf n_2}\setminus\vec{\gamma}}\beta_{\vec{\bf n}} \tilde{\beta}^*_{\vec{\delta}}\bar{\beta}^*_{\overrightarrow{\bf k_2}}\beta^*_{\vec{\bf k}}.
\end{align*}
 Particularly, $B_{k_1,n_2,j}^\pm=0$ for $k_1+n_2<j$ and
\begin{align*}
&B_{k_1,n_2,k_1+n_2}^+=(-1)^{n_2(k_1+k_2)+\frac{k_1(k_1-1)}{2}
+\frac{n_2(n_2-1)}{2}}\beta^*_{\vec{\bf k}}\bar \beta_{\overrightarrow{\bf n_2}}\tilde{\beta}_{\overrightarrow{\bf {n_1}}}\beta_{\vec{\bf n}}\otimes \tilde{\beta}^*_{\overrightarrow{\bf k_1}}\bar{\beta}^*_{\overrightarrow{\bf k_2}}\beta^*_{\vec{\bf k}}\beta_{\vec{\bf n}},\\
&B_{k_1,n_2,k_1+n_2}^-=(-1)^{k_1(n_1+n_2)+\frac{k_1(k_1-1)}{2}
+\frac{n_2(n_2-1)}{2}}\bar \beta_{\overrightarrow{\bf n_2}}\tilde{\beta}_{\overrightarrow{\bf {n_1}}}\beta_{\vec{\bf n}}\beta^*_{\vec{\bf k}}\otimes\beta_{\vec{\bf n}}\tilde{\beta}^*_{\overrightarrow{\bf k_1}}\bar{\beta}^*_{\overrightarrow{\bf k_2}}\beta^*_{\vec{\bf k}}.
\end{align*}
\end{lemma}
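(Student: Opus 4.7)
The plan is to prove both identities by induction on $l$, following the template of the proof of Lemma \ref{sl4beta}. The base case $l=0$ holds trivially: the condition $|\vec{\gamma}|+|\vec{\delta}|=0$ forces $\vec{\gamma}=\vec{\delta}=\varnothing$, so $B^{\pm}_{k_1,n_2,0}$ reduces to the tensor product on the left-hand side of \eqref{slak1n2}, while $S^0=\mathrm{id}$.

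For the inductive step I would first establish the single-step recurrence
\[
S\cdot B^{\pm}_{k_1,n_2,j}\;=\;(j+1)\,B^{\pm}_{k_1,n_2,j+1}\;+\;(-1)^{k_1+k_2+n_1+n_2}\,B^{\pm}_{k_1,n_2,j}\,S,
\]
by applying Lemma \ref{s4beta} term-by-term to each summand of $B^{\pm}_{k_1,n_2,j}$. The decisive observation is that only the unshared factors can yield nontrivial contractions: a contraction of the leading $\psi$-factor (from $S=\sum_i\psi_i\otimes\psi_i^*$) with a factor of the shared block $\beta^*_{\vec{\bf k}}$ on the left would deposit the corresponding $\psi^*_l$ on the right, where $\beta^*_{\vec{\bf k}}$ already contains that $\psi^*_l$, killing the term via $(\psi^*_l)^2=0$; a symmetric argument rules out spurious contractions of the leading $\psi^*$-factor with $\beta_{\vec{\bf n}}$ on the right. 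Consequently the only surviving contractions move one further $\tilde\beta^*$ from left to right (enlarging $\vec\gamma$) or one further $\bar\beta$ from right to left (enlarging $\vec\delta$), and summing these contributions over all $(a,\vec\gamma,\vec\delta)$ indexing $B^{\pm}_{k_1,n_2,j}$ produces exactly $(j+1)\,B^{\pm}_{k_1,n_2,j+1}$, since each summand of $B^{\pm}_{k_1,n_2,j+1}$ is reached from $j+1$ predecessors, one for each of its $j+1$ contracted indices. The uniform trailing sign $(-1)^{k_1+k_2+n_1+n_2}$ comes from the fact that every summand of $B^{\pm}_{k_1,n_2,j}$ has total fermion length $k_1+k_2+2k+n_1+n_2+2n$ across its two factors, whose parity is independent of both $a$ and $j$.

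Granted this identity, the inductive step is book-keeping: write $S^{l+1}=S\cdot S^l$, apply the inductive hypothesis, distribute $S$ across each $B^{\pm}_{k_1,n_2,j}$ via the recurrence, shift the index in the $(j+1)\,B^{\pm}_{k_1,n_2,j+1}$ piece, and recombine using Pascal's identity $C_l^j+C_l^{j-1}=C_{l+1}^j$ together with $(j+1)\cdot j!=(j+1)!$. The sign exponent $(l-j)(n_1+n_2+k_1+k_2)$ advances correctly to $(l+1-j)(n_1+n_2+k_1+k_2)$ thanks to the extra factor $(-1)^{k_1+k_2+n_1+n_2}$ supplied by the recurrence. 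The principal obstacle will be the careful sign accounting in the single-step recurrence: one must track the anticommutations of the leading fermion past each of the unshared blocks $\tilde\beta^*_{\overrightarrow{\bf k_1}},\bar\beta^*_{\overrightarrow{\bf k_2}},\tilde\beta_{\overrightarrow{\bf n_1}},\bar\beta_{\overrightarrow{\bf n_2}}$ as well as the shared blocks $\beta^*_{\vec{\bf k}},\beta_{\vec{\bf n}}$, and then verify that the sum over contraction positions reproduces the precise sign $(-1)^{a(k_2+n_2)+jk_1-|\vec{\gamma}|-|\vec{\delta}|}$ appearing in the definition of $B^+_{k_1,n_2,j+1}$ (and its analogue $(-1)^{a(k_1+n_1)+jn_2-|\vec{\gamma}|-|\vec{\delta}|}$ for $B^-$, which follows by a symmetric calculation after interchanging the roles of $\psi$'s and $\psi^*$'s).
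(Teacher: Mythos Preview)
Your approach is correct and matches the paper's in spirit. The paper gives no detailed proof of this lemma: it simply states ``By Lemma~\ref{sl4beta}, one can obtain the lemma below'' and moves on. Your induction on $l$ via the single-step recurrence
\[
S\,B^{\pm}_{k_1,n_2,j}=(j+1)\,B^{\pm}_{k_1,n_2,j+1}+(-1)^{k_1+k_2+n_1+n_2}\,B^{\pm}_{k_1,n_2,j}\,S
\]
is exactly the template used in the paper's proof of Lemma~\ref{sl4beta} (where $SA^{\pm}_{n,j}=(-1)^n\big((j+1)A^{\pm}_{n,j+1}+A^{\pm}_{n,j}S\big)$), now extended to the composite operator $B^{\pm}$. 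Your key observation---that contractions with the shared blocks $\beta^*_{\vec{\bf k}}$ and $\beta_{\vec{\bf n}}$ vanish because the deposited factor $\beta^*_l$ (respectively $\beta_m$) meets a second copy of itself on the other tensor leg and squares to zero---is the mechanism that makes the recurrence close on $B^{\pm}$; this is correct.

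One small point of phrasing: you write that the contraction ``deposits the corresponding $\psi^*_l$''. What is actually deposited is the full linear combination $\beta^*_l\in V^*$, and the vanishing follows from $(\beta^*_l)^2=0$ (valid for any element of $V^*$), not from $(\psi^*_l)^2=0$ alone. The conclusion is unaffected.

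A reader of the paper's one-line proof might instead factor the tensor product as
\[
(\tilde\beta^*_{\overrightarrow{\bf k_1}}\otimes 1)(\beta^*_{\vec{\bf k}}\tilde\beta_{\overrightarrow{\bf n_1}}\beta_{\vec{\bf n}}\otimes 1)(1\otimes\bar\beta^*_{\overrightarrow{\bf k_2}}\beta^*_{\vec{\bf k}})(1\otimes\bar\beta_{\overrightarrow{\bf n_2}})(1\otimes\beta_{\vec{\bf n}})
\]
and apply Lemma~\ref{sl4beta} separately to the two ``active'' factors $(\tilde\beta^*_{\overrightarrow{\bf k_1}}\otimes 1)$ and $(1\otimes\bar\beta_{\overrightarrow{\bf n_2}})$, using the trivial commutation rules of Lemma~\ref{s4beta} for the passive factors. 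That route requires combining the resulting double sum over $A^-_{k_1,\cdot}$ and $A^+_{n_2,\cdot}$ via the Vandermonde-type identity $\sum_a C_j^a\,a!\,(j-a)!=(j+1)\cdot j!/(j+1)=j!\cdot(j+1)$ \ldots\ which is precisely the $(j+1)$ multiplicity you obtain directly. The two organizations are equivalent; yours is arguably cleaner since it handles both active blocks in one stroke and makes the vanishing of the shared-block contractions explicit.
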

Now let us switch to discussing the relations of the neutral free Fermions, which will be needed in the constructions of the bilinear equations in the BKP Darboux transformations.
\begin{lemma}\label{sbbeta}
For $\beta_{i}\in V_B=\oplus_l\mathbb{C}\phi_l$, one has the following relations
\begin{align*}
&S_B(\beta_{\vec{\bf n}}\otimes 1)=\sum_{l=1}^n(-1)^{n-l}\beta_{\vec{\bf n}\setminus\{l\}}\otimes\beta_l+(-1)^n (\beta_{\vec{\bf n}}\otimes 1) S_B,\\
&S_B(1\otimes\beta_{\vec{\bf n}})=\sum_{l=1}^n(-1)^{n-l}\beta_l\otimes\beta_{\vec{\bf n}\setminus\{l\}}+(-1)^n (1\otimes\beta_{\vec{\bf n}}) S_B.
\end{align*}
\end{lemma}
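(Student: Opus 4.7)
The plan is to mirror the proof of Lemma \ref{s4beta}, replacing the charged anticommutation relations $[\psi_i,\psi_j^*]_+=\delta_{ij}$ by the neutral ones $[\phi_m,\phi_n]_+=(-1)^m\delta_{m,-n}$. Since $\beta_{\vec{\bf n}}$ now sits in the same space $V_B$ as the tensor factors of $S_B$, the ``obvious'' cases that dropped out in Lemma \ref{s4beta} no longer trivialize: \emph{both} identities of the present lemma will acquire non-trivial boundary contributions. The key algebraic input is the pair of resolution-of-identity formulas
\begin{align*}
\sum_{j\in\mathbb{Z}}(-1)^j[\phi_{-j},\beta]_+\,\phi_j=\beta,\qquad
\sum_{j\in\mathbb{Z}}(-1)^j[\phi_{j},\beta]_+\,\phi_{-j}=\beta,\qquad \beta\in V_B,
\end{align*}
which I would verify in a single line by expanding $\beta=\sum_k c_k\phi_k$ and noting that $(-1)^j(-1)^{-j}=1$ cancels exactly against the sign coming out of $[\phi_{\mp j},\phi_k]_+$.

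For the second identity I would write
\begin{align*}
S_B(1\otimes\beta_{\vec{\bf n}})=\sum_{j\in\mathbb{Z}}(-1)^j\,\phi_j\otimes\phi_{-j}\beta_n\beta_{n-1}\cdots\beta_1,
\end{align*}
and push $\phi_{-j}$ rightwards through the product via $\phi_{-j}\beta_l=-\beta_l\phi_{-j}+[\phi_{-j},\beta_l]_+$, producing
\begin{align*}
\phi_{-j}\beta_{\vec{\bf n}}=\sum_{l=1}^n(-1)^{n-l}[\phi_{-j},\beta_l]_+\,\beta_{\vec{\bf n}\setminus\{l\}}+(-1)^n\beta_{\vec{\bf n}}\phi_{-j}.
\end{align*}
Substituting back, the tail piece immediately collects into $(-1)^n(1\otimes\beta_{\vec{\bf n}})S_B$, while the boundary terms, upon applying the first resolution identity above, collapse to $\sum_{l=1}^n(-1)^{n-l}\beta_l\otimes\beta_{\vec{\bf n}\setminus\{l\}}$. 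That is exactly the second formula claimed.

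The first identity is proved symmetrically: starting from $S_B(\beta_{\vec{\bf n}}\otimes 1)=\sum_j(-1)^j\phi_j\beta_{\vec{\bf n}}\otimes\phi_{-j}$, I would anticommute $\phi_j$ rightwards through $\beta_{\vec{\bf n}}$ and then use the second resolution identity to recover $\beta_l$ in the right tensor factor. The only genuine source of care is sign bookkeeping: the $(-1)^j$ built into $S_B$ must combine with the $(-1)^{-j}$ or $(-1)^j$ produced by the neutral-Fermion anticommutator so that these two factors cancel and the boundary sums reassemble cleanly into $\beta_l$ on the correct tensor factor. No deeper obstacle appears; the argument is essentially the neutral transcription of the computation already carried out for Lemma \ref{s4beta}.
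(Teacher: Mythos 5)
Your proof is correct and follows exactly the route the paper intends: the paper's proof is the single sentence that the lemma follows by the method of Lemma \ref{s4beta} using the resolution $\beta_l=\sum_j(-1)^j[\beta_l,\phi_{-j}]_+\phi_j$, which is precisely your first identity (and your second is the same one after reindexing $j\to-j$). Your additional observation that both formulas now carry boundary terms — because $\beta_{\vec{\bf n}}$ lives in the same space $V_B$ as both tensor factors of $S_B$ — is accurate, and your sign bookkeeping checks out.
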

\begin{proof}
This lemma can be proved by the similar method in Lemma \ref{s4beta} with $\beta_l=\sum_j(-1)^j[\beta_l,\phi_{-j}]_+\phi_j$.
\end{proof}
For $\beta_{i}\in V_B$ with $i=1,2,\cdots,n$ and $n\geq j\geq 0$, denote \begin{align*}
\mathfrak{A}_{n,j}^+=\sum_{\vec{\gamma}\in H_{n,j}}(-1)^{-|\vec{\gamma}|}\beta_{\vec{\bf n}\setminus\vec{\gamma}}
\otimes\beta_{\vec{\gamma}},\quad
\mathfrak{A}_{k,j}^-=\sum_{\vec{\gamma}\in H_{k,j}}(-1)^{-|\vec{\gamma}|}\beta_{\vec{\gamma}}\otimes\beta_{\vec{\bf k}\setminus\vec{\gamma}}.
\end{align*}
It is obviously that $\mathfrak{A}^{\pm}_{n,j}=0$ for $j>n$ or $j<0$ and $\mathfrak{A}_{n,0}^+=\beta_{\vec{\bf n}}\otimes 1$, $\mathfrak{A}_{n,0}^-=1\otimes \beta_{\vec{\bf n}}$.
\begin{lemma}\label{sblbeta}
For $\beta_{i}\in V_B=\oplus_l\mathbb{C}\phi_l$, $S_B^l(\beta_{\vec{\bf n}}\otimes 1)$ and $S_B^l(1\otimes \beta_{\vec{\bf n}})$ can be computed in the way below.
\begin{align*}
S_B^l(\mathfrak{A}_{n,0}^{\pm})=(-1)^{nl}\sum_{j=0}^lC_l^j
\sum_{k=0}^{[j/2]}a_{n,j,k}\mathfrak{A}_{n,j-2k}^{\pm}S^{l-j},
\end{align*}
where $[j/2]$ means the integer that is no greater than $j/2$, and $a_{n,j,k}$ is some constant satisfying the following recursion relation, that is, $a_{n,j+1,k}=na_{n,j,k-1}+(a_{n,j,k}-a_{n,j,k-1})(j-2k+1)$ with $a_{n,j,0}=j!$ and $a_{n,j,k}=0$ if $k<0$ or $k>[j/2]$.
\end{lemma}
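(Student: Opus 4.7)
The plan is to follow the same template as the proof of Lemma \ref{sl4beta}: first establish a one-step recursion expressing $S_B \mathfrak{A}_{n,j}^{\pm}$ in terms of $\mathfrak{A}_{n,j\pm 1}^{\pm}$ and $\mathfrak{A}_{n,j}^{\pm} S_B$, then push through induction on $l$. The genuinely new feature compared with the charged case is that both formulas of Lemma \ref{sbbeta} are of the same mixed type: the action of $S_B$ transfers a neutral Fermion in \emph{both} directions between the two tensor slots, not only left-to-right as in Lemma \ref{s4beta}. This bidirectional transfer is precisely what forces the extra inner sum in $k$ (with index $j-2k$) in the claimed formula.

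First I would compute the one-step recursion. Writing each summand of $\mathfrak{A}_{n,j}^{+}$ as $(\beta_{\vec{\bf n}\setminus\vec\gamma}\otimes 1)(1\otimes\beta_{\vec\gamma})$ and applying Lemma \ref{sbbeta} to the first factor, then using the second identity of Lemma \ref{sbbeta} on the $1\otimes\beta_{\vec\gamma}$ that trails behind the leftover $S_B$, summing over $\vec\gamma\in H_{n,j}$, and matching the signs $(-1)^{-|\vec\gamma|}$, one obtains
\begin{align*}
S_B\,\mathfrak{A}_{n,j}^{+} = (-1)^{n}\bigl[(j+1)\,\mathfrak{A}_{n,j+1}^{+} + (n-j+1)\,\mathfrak{A}_{n,j-1}^{+} + \mathfrak{A}_{n,j}^{+}\,S_B\bigr],
\end{align*}
together with the analogous relation for $\mathfrak{A}_{n,j}^{-}$. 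The integer coefficient $j+1$ counts the number of terms of $\mathfrak{A}_{n,j}^{+}$ that, under left-to-right transfer, contribute to a given term of $\mathfrak{A}_{n,j+1}^{+}$; likewise $n-j+1$ counts the right-to-left contributions producing each term of $\mathfrak{A}_{n,j-1}^{+}$.

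Next I would induct on $l$. The base case $l=0$ is immediate because $a_{n,0,0}=0!=1$. For the inductive step, applying the Step~1 recursion inside the ansatz and collecting the coefficient of $\mathfrak{A}_{n,j'-2k'}^{\pm}\,S_B^{l+1-j'}$ splits into three sources: the ``$\mathfrak{A}_{n,j}^{+}S_B$'' branch contributes $C_{l}^{j'}\,a_{n,j',k'}$, while the two transfer branches with index shift $j\to j+1=j'$ contribute $C_{l}^{j'-1}(j'-2k')\,a_{n,j'-1,k'}$ and $C_{l}^{j'-1}(n-j'+2k')\,a_{n,j'-1,k'-1}$ respectively. Using Pascal's identity $C_{l+1}^{j'}=C_{l}^{j'}+C_{l}^{j'-1}$, the verification reduces to the single identity
\begin{align*}
a_{n,j+1,k} = (j-2k+1)\,a_{n,j,k} + (n-j+2k-1)\,a_{n,j,k-1},
\end{align*}
which is exactly the recursion stated in the lemma after rewriting as $n\,a_{n,j,k-1} + (a_{n,j,k} - a_{n,j,k-1})(j-2k+1)$.

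The main obstacle will be the sign and multiplicity bookkeeping in Step~1: one must carefully verify that, after summation over $\vec\gamma\in H_{n,j}$, the various $(-1)^{-|\vec\gamma|}$ factors combine with the $(-1)^{n-l}$ signs from Lemma \ref{sbbeta} to yield precisely the positive integer coefficients $(j+1)$ and $(n-j+1)$ in front of $\mathfrak{A}_{n,j\pm 1}^{+}$, and that no further ``mixed'' terms survive. The $\mathfrak{A}^{-}$ case is proved verbatim, swapping the roles of the two tensor slots throughout.
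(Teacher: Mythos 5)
Your proposal is correct and follows essentially the same route as the paper: both hinge on the three-term one-step recursion $S_B\mathfrak{A}_{n,j}^{\pm}=(-1)^n\big((j+1)\mathfrak{A}_{n,j+1}^{\pm}+\mathfrak{A}_{n,j}^{\pm}S_B+(n-j+1)\mathfrak{A}_{n,j-1}^{\pm}\big)$ followed by induction on $l$, and the coefficient identity you isolate, $a_{n,j+1,k}=(j-2k+1)a_{n,j,k}+(n-j+2k-1)a_{n,j,k-1}$, is exactly the lemma's stated recursion rewritten. The only difference is organizational: you verify the closed form with $C_l^j a_{n,j,k}$ in a single induction, whereas the paper first introduces intermediate coefficients $a_{n,j,k;l}$ and then separately proves $a_{n,j,k;l}=C_l^j a_{n,j,k}$; your version is a mild streamlining, not a different argument.
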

\begin{proof}
Let us firstly compute $S_B\mathfrak{A}_{n,j}^+$. By using  $\mathfrak{A}_{n,j}^+=\sum_{l=j}^n\Big(\beta_{\vec{\bf n}\setminus\vec{l}}\otimes\beta_{l}\Big)\mathfrak{A}_{l-1,j-1}^+$ and the formula $[S_B(\beta_{\vec{\bf n}}\otimes 1),1\otimes\beta_l]_+=\beta_{l}\beta_{\vec{\bf n}}\otimes 1$ derived by Lemma \ref{sbbeta},
\begin{align*}
S_B\mathfrak{A}_{n,j}^+
=&(-1)^n\mathfrak{A}^+_{n,j+1}+\sum_{l=j}^n(-1)^{n-1}(\beta_{\vec{\bf n}\setminus\vec{l}}\otimes\beta_{\vec{l}})S_B\mathfrak{A}^+_{l-1,j-1}
+\sum_{\vec{\gamma}\in H_{n,j-1}}(-1)^{n-|\vec{\gamma}|}(n-\gamma_{j-1})(\beta_{\vec{\bf n}\setminus\vec{\gamma}}
\otimes\beta_{\vec{\gamma}}).
\end{align*}
Further one can obtain
\begin{align*}
S_B\mathfrak{A}_{n,j}^+=&(-1)^n\Big((j+1)\mathfrak{A}_{n,j+1}^+
+\mathfrak{A}_{n,j}^+S_B\Big)\\
&+\sum_{\vec{\gamma}\in H_{n,j-1}}\sum_{l=1}^{j}(-1)^{n-|\vec{\gamma}|}(\gamma_{l}-\gamma_{l-1}-1)\Big(\beta_{\vec{\bf n}\setminus\vec{\gamma}}\otimes \beta_{\vec{\gamma}}\Big)\\
=& (-1)^n\Big((j+1)\mathfrak{A}_{n,j+1}^++\mathfrak{A}_{n,j}^+S_B
+(n-j+1)\mathfrak{A}_{n,j-1}^+\Big),
\end{align*}
where $\gamma_j=n$ and $\gamma_0=0$. Similarly, one can prove
\begin{align*}
S_B\mathfrak{A}_{n,j}^-
= (-1)^n\Big((j+1)\mathfrak{A}_{n,j+1}^-+\mathfrak{A}_{n,j}^-S_B
+(n-j+1)\mathfrak{A}_{n,j-1}^-\Big).
\end{align*}
After the preparation above, one can easily obtain by induction on $l$,
\begin{align*}
S_B^l(\mathfrak{A}_{n,0}^\pm)=(-1)^{nl}\sum_{j=0}^l
\mathfrak{A}^\pm_{n,j;l}S_B^{l-j},
\end{align*}
where $\mathfrak{A}^\pm_{n,j;l}=\sum_{k=0}^{[(j+1)/2]}a_{n,j,k;l}\mathfrak{A}_{n,j-2k}^{\pm}$ and $a_{n,j,k;l}$ is some constant satisfying the following recursion relations
\begin{align*}
&a_{n,0,0;l+1}=a_{n,0,0;l},\quad a_{n,l+1,k;l+1}=na_{n,l,k-1;l}+(l-2k+1)(a_{n,l,k;l}-a_{n,l,k-1;l}),\\
&a_{n,j,k;l+1}=a_{n,j,k;l}+na_{n,j-1,k-1;l}+(j-2k)(a_{n,j-1,k;l}-a_{j-1,k-1;l}),\quad k=0,1,\cdots,[(j+1)/2].
\end{align*}
Here $a_{n,0,0;0}=1$ and if $j,k$ and $l$ do not satisfy $0\leq j\leq l$ and $0\leq k\leq [j/2]$, we will set $a_{n,j,k;l}=0$. If set $a_{n,j,k}=a_{n,j,k;j}$, then $a_{n,j,0}=j!$ can be proved by induction on $j$.
$\mathfrak{A}^\pm_{n,j;l}$ satisfies the following recursion relations
\begin{align*}
\mathfrak{A}^\pm_{n,j;l+1}=&(-1)^{n(l+1)}\Big( S_B^{l+1}(\mathfrak{A}^\pm_{n,0})S_B^{-l-1+j}\Big)_{[0]}=(-1)^{n}\Big( S_B\sum_{j_1=0}^l\mathfrak{A}^\pm_{n,j_1;l}S_B^{j-j_1-1}\Big)_{[0]}\\
=&(-1)^{n}\Big(\sum_{j_1=0}^l\big(\mathfrak{A}^\pm_{n,j_1;l}S_B
+(S_B\mathfrak{A}^\pm_{n,j_1;l})_{[0]}\big)S_B^{j-j_1-1}\Big)_{[0]}\\
=&(-1)^n(S_B\mathfrak{A}^\pm_{n,j-1;l})_{[0]}
+\mathfrak{A}^\pm_{n,j;l},
\end{align*}
where $(\sum_ib_iS_B^i)_{[0]}=b_0$. In particular, $(-1)^n(S_B\mathfrak{A}^\pm_{n,j-1;j-1})_{[0]}=\mathfrak{A}^\pm_{n,j;j}$ since $\mathfrak{A}^\pm_{n,j;j-1}=0$ by noting $a_{n,j,k;j-1}=0$.

At last this lemma can be proved if $a_{n,j,k;l}=C_l^ja_{n,j,k}$, which is equivalent to $\mathfrak{A}^\pm_{n,j;l}=C_l^j\mathfrak{A}^\pm_{n,j;j}$. In fact, if it holds for $l$, then by the recursion relation of $\mathfrak{A}^\pm_{n,j;l}$, one can find
\begin{align*}
\mathfrak{A}^\pm_{n,j;l+1}
=&(-1)^n(S_B\mathfrak{A}^\pm_{n,j-1;l})_{[0]}
+\mathfrak{A}^\pm_{n,j;l}=(-1)^nC_l^{j-1}(S_B\mathfrak{A}^\pm_{n,j-1;j-1})_{[0]}
+C_l^j\mathfrak{A}^\pm_{n,j;j}\\
=&(C_l^{j-1}+C_l^{j})\mathfrak{A}^\pm_{n,j;j}
=C_{l+1}^j\mathfrak{A}^\pm_{n,j;j}
\end{align*}
still holds for $l+1$.
\end{proof}

\section{The Darboux transformations of the KP hierarchy}
In this section, we firstly review some basic facts on the Darboux transformations of the KP hierarchy, especially the fermionic forms. Then we discuss the transformed tau functions in the fermionic forms under the successive applications of the Darboux transformations. At last, the bilinear equations in the Darboux chains are investigated and some examples are given. Here we would like to point out that the result in Subsection \ref{subsectdti} for the Fermionic tau functions under the mixed using $T_D$ and $T_I$ is the key to find the bilinear equations.
\subsection{Review on the Darboux transformations of the KP hierarchy}\label{subsectionkptdti}
Just as we know, there are two basic Darboux transformations \cite{Chau1992,Oevelpa1993,Oevelrmp1993,He2002} of the KP hierarchy, that is,
\begin{align*}
T_d(\Phi)=\Phi\partial\Phi^{-1},\quad T_i(\Psi)=\Psi^{-1}\partial^{-1}\Psi,
\end{align*}
where $\Phi$ and $\Psi$ are the eigenfunction and the adjoint eigenfunction of the KP hierarchy respectively. $T_d$ and $T_i$ can be derived \cite{Shaw1997} by the Miura links between the KP and modified KP hierarchies through $L\xrightarrow{\text{anti-Miura}}\mathcal{L}\xrightarrow{\text{Miura}}L^{[1]}$.
Under the Darboux transformation, the corresponding changes of the dressing operator $W$, eigenfunction $\Phi_1$, the adjoint eigenfunction $\Psi_1$ and the tau function $\tau$ are listed in Table III\cite{Chau1992,He2002,Oevelpa1993,Oevelrmp1993}.
\begin{center}
\begin{tabular}{lllll}
\multicolumn{5}{c}{Table III. Darboux transformation: KP $\rightarrow$ KP}\\
\hline\hline
\ \ \ $L\rightarrow L^{[1]} $ & \ \ $W^{[1]}=$ & \ \ \ \ $\Phi_1^{[1]}=$& \ \ \ $\Psi_1^{[1]}=$ & $\tau^{[1]}$\\
\hline
 \ $T_d(\Phi)=\Phi\partial\Phi^{-1}$ & \ $\Phi\partial\Phi^{-1}W\partial^{-1} $ &\ \ \ $\Phi(\Phi_1/\Phi)_x$ & \ $-\int\Phi\Psi_{1}dx/\Phi$& $\Phi\tau$\\
 $T_i(\Psi)=\Psi^{-1}\pa^{-1}\Psi$ & \ $\Psi^{-1}\pa^{-1}\Psi W \pa $&\ \ $\int\Psi\Phi_1 dx/\Psi$ & \ $-\Psi(\Psi_{1}/\Psi)_x$& $\Psi\tau$\\
\hline\hline
\end{tabular}
\end{center}
Here we use $A^{[n]}$ to denote the transformed object $A$ under $n$-step Darboux transformations. Sometimes in order to distinguish the actions of $T_d$ and $T_i$, $A^{+[n]}$ is used to indicate the transformed $A$ under $n$-step $T_d$, while $A^{-[n]}$ is for $T_i$.

From (\ref{tauhattildetaubilinear}) and Proposition \ref{mkp2tau}, one can obtain the corollary \cite{Willox1998,Kac2018} below.
\begin{corollary}\label{cortautrankp}
If $\tau(t)$ is a tau function of the KP hierarchy, $\Phi(t)$ is any eigenfunction and $\Psi(t)$ is any adjoint eigenfunction, then $\tau^{+[1]}(t)=\Phi(t)\tau(t)$ and $\tau^{-[1]}(t)=\Psi(t)\tau(t)$ are still the tau functions of the KP hierarchy, that is
\begin{align*}
&{\rm Res}_{\lambda}\tau^{\pm[1]}(t-\varepsilon(\lambda^{-1}))
\tau^{\pm[1]}(t'+\varepsilon(\lambda^{-1}))
e^{\xi(t-t',\lambda)}=0,
\end{align*}
\end{corollary}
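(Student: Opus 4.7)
The plan is to reduce the corollary to an immediate combination of two results already established in the text, namely the bilinear relations \eqref{tauhattildetaubilinear} and Proposition \ref{mkp2tau}. The key observation is that the functions $\tau^{+[1]}=\Phi\tau$ and $\tau^{-[1]}=\Psi\tau$ are, by construction, exactly the functions $\tau^+$ and $\tau^-$ that appeared in the spectral/bilinear discussion preceding \eqref{tauhattildetaubilinear}, where $\Phi$ and $\Psi$ are an eigenfunction and adjoint eigenfunction of the same KP Lax structure associated with $\tau$.

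First I would observe that, by the derivation of \eqref{tauhattildetaubilinear} given just above it, the pair $(\tau,\tau^+)=(\tau,\Phi\tau)$ satisfies the bilinear identity
\begin{align*}
{\rm Res}_{\lambda}\lambda^{-1}\tau(t-\varepsilon(\lambda^{-1}))
\tau^+(t'+\varepsilon(\lambda^{-1}))e^{\xi(t-t',\lambda)}=\tau^+(t)\tau(t'),
\end{align*}
which is precisely the modified KP bilinear equation \eqref{mkptaubilinear} with $\tau_0=\tau$ and $\tau_1=\tau^+$. Symmetrically, the pair $(\tau^-,\tau)=(\Psi\tau,\tau)$ satisfies the same relation with $\tau_0=\tau^-$ and $\tau_1=\tau$. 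Thus in both cases we land inside the hypothesis of Proposition \ref{mkp2tau}.

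Next I would invoke Proposition \ref{mkp2tau}, which tells us that whenever a pair $(\tau_0,\tau_1)$ satisfies the mKP bilinear equation, each component individually is a tau function of the KP hierarchy, i.e.\ each satisfies \eqref{kptaubilinear}. Applying this to $(\tau,\tau^+)$ yields that $\tau^+=\tau^{+[1]}$ obeys \eqref{kptaubilinear}; applying it to $(\tau^-,\tau)$ yields the same conclusion for $\tau^-=\tau^{-[1]}$. Combining these two gives exactly the displayed identity in the corollary.

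I do not anticipate a genuine obstacle: both ingredients, \eqref{tauhattildetaubilinear} and Proposition \ref{mkp2tau}, are stated and (cited as) proved earlier in the excerpt, so the corollary is a purely formal bookkeeping step. The only point worth a line of comment in the final write-up is to make explicit that the $\tau^\pm$ notation of \eqref{tauhattildetaubilinear} coincides with the Darboux-transformed tau functions $\tau^{\pm[1]}$ of Table III, which follows directly from the transformation rules $\tau^{[1]}=\Phi\tau$ (resp.\ $\Psi\tau$) listed there for $T_d$ (resp.\ $T_i$).
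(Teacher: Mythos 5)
Your proposal is correct and coincides exactly with the paper's own argument: the text introduces the corollary with the words ``From (\ref{tauhattildetaubilinear}) and Proposition \ref{mkp2tau}, one can obtain the corollary below,'' which is precisely your two-step reduction. The identification of $\tau^{\pm}$ from \eqref{tauhattildetaubilinear} with the Darboux-transformed $\tau^{\pm[1]}$ of Table III is the same bookkeeping the paper relies on implicitly.
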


By direct computation, one can obtain the lemma \cite{Oevelrmp1993,Willox2004} below
\begin{lemma}\label{commkpdt}
The Bianchi diagram for Darboux transformations $T_d$ and $T_i$ can commute,
$$\xymatrix{&&L^{[1]}\ar[drr]^{T_\beta^{[1]}}&&\\
L^{[0]}\ar[urr]^{T_\alpha}\ar[drr]_{T_\beta}& &&& L^{[2]}\\
&&L^{[1]}\ar[urr]_{T_\alpha^{[1]}}&&}$$
that is,
\begin{align*}
T_d(\Phi_1^{+[1]})T_d(\Phi_2)=&T_d(\Phi_2^{+[1]})T_d(\Phi_1),\quad
T_i(\Psi_1^{-[1]})T_i(\Psi_2)=T_i(\Psi_2^{-[1]})T_i(\Psi_1),\\
T_d(\Phi^{-[1]})T_i(\Psi)=&T_i(\Psi^{+[1]})T_d(\Phi).
\end{align*}
Here $T_\beta^{[1]}$ means the Darboux transformation generated by the (adjoint) eigenfunction of $L^{[1]}$.
\end{lemma}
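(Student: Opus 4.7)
The plan is to verify each of the three operator identities by applying both sides to an arbitrary test function $f$ and showing that the resulting expressions coincide, making essential use of the explicit action of $T_d$ and $T_i$ together with the transformation laws for the (adjoint) eigenfunctions recorded in Table III.

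For the first identity $T_d(\Phi_1^{+[1]})T_d(\Phi_2)=T_d(\Phi_2^{+[1]})T_d(\Phi_1)$, I would first substitute $\Phi_1^{+[1]}=\Phi_2(\Phi_1/\Phi_2)_x$ into the left hand side and unfold the operator as $\Phi_1^{+[1]}\pa(\Phi_1^{+[1]})^{-1}\Phi_2\pa\Phi_2^{-1}$. Applying it to a function $f$ and simplifying with the quotient rule shows that the result is the Wronskian ratio $W(\Phi_1,\Phi_2,f)/W(\Phi_1,\Phi_2)$. Since both $W(\Phi_1,\Phi_2,f)$ and $W(\Phi_1,\Phi_2)$ change sign under the exchange $\Phi_1\leftrightarrow\Phi_2$, the ratio is invariant, so the right hand side produces the same operator. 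The second identity $T_i(\Psi_1^{-[1]})T_i(\Psi_2)=T_i(\Psi_2^{-[1]})T_i(\Psi_1)$ is handled in a dual fashion: after substituting $\Psi_1^{-[1]}=-\Psi_2(\Psi_1/\Psi_2)_x$ and expanding, the action on $f$ reduces to an iterated integral that one can rewrite, via integration by parts, as an antisymmetric ``dual Wronskian'' of $\Psi_1,\Psi_2,f$ divided by the Wronskian of $\Psi_1,\Psi_2$; again symmetry in the two adjoint eigenfunctions yields the claim.

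The third (binary) identity $T_d(\Phi^{-[1]})T_i(\Psi)=T_i(\Psi^{+[1]})T_d(\Phi)$ is the main obstacle, because it mixes the two Darboux types and requires the SEP identity $\Omega(\Phi,\Psi)_x=\Phi\Psi$. Write $\Omega=\Omega(\Phi,\Psi)$, so that by Table III one has $\Phi^{-[1]}=\Omega/\Psi$ and $\Psi^{+[1]}=-\Omega/\Phi$. On the left hand side I would compute
\[
T_i(\Psi)(f)=\Psi^{-1}\!\int\!\Psi f\,dx,\qquad
T_d(\Omega/\Psi)\bigl(T_i(\Psi)(f)\bigr)=\tfrac{\Omega}{\Psi}\,\pa\!\left(\tfrac{\Psi}{\Omega}\cdot\Psi^{-1}\!\int\!\Psi f\right),
\]
and then use $\Omega_x=\Phi\Psi$ to simplify the inner derivative; this yields $f-(\Phi/\Omega)\!\int\!\Psi f\,dx$. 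On the right hand side, starting from $T_d(\Phi)(f)=f_x-(\Phi_x/\Phi)f$ and applying $T_i(-\Omega/\Phi)$ produces $(\Phi/\Omega)\!\int\!(\Omega/\Phi)T_d(\Phi)(f)\,dx$; integrating by parts in this integral and using $(\Omega/\Phi)_x=\Psi-(\Omega\Phi_x)/\Phi^2$ (again coming from $\Omega_x=\Phi\Psi$) produces exactly $f-(\Phi/\Omega)\!\int\!\Psi f\,dx$. Hence both sides agree as integral/differential operators.

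The core technical point common to all three cases is that, although the two-step Darboux operators are built from nonlinear expressions in the (adjoint) eigenfunctions, once unfolded and acted on $f$ they collapse into symmetric functionals of the input data thanks to the Leibniz and integration-by-parts identities; in particular the binary case leans crucially on the SEP property $\Omega_x=\Phi\Psi$, which is the single ingredient that interlocks the differential and integral type transformations. No Lax compatibility is needed beyond the formulas of Table III, so the proof is purely an operator-algebraic calculation.
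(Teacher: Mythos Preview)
Your argument is correct and is exactly the ``direct computation'' that the paper alludes to but does not spell out (the paper simply states ``By direct computation, one can obtain the lemma below'' and cites \cite{Oevelrmp1993,Willox2004}). Your Wronskian-ratio treatment of the two $T_d$ (and dually two $T_i$) cases, and your explicit unfolding of the binary case via $\Omega_x=\Phi\Psi$ with one integration by parts, are the standard verifications and match what one finds in the cited references; nothing is missing.
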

Though almost everything in the KP hierarchy can be kept under the Darboux transformation, there is still some facts that are not clear after the Darboux transformation, for example, the relations between the tau functions and the wave functions, which is given in the lemma below.
\begin{lemma}
Under the Darboux transformation $T_d(\Phi)$ and $T_i(\Psi)$,
\begin{align*}
T_d(\Phi)(\psi(t,\lambda))=\frac{\lambda \tau^{+[1]}(t-\varepsilon(\lambda^{-1}))}{\tau^{+[1]}(t)}e^{\xi(t,\lambda)},\quad &(T_d^*(\Phi))^{-1}(\psi^*(t,\lambda))=\frac{\tau^{+[1]}(t+\varepsilon(\lambda^{-1}))}{\lambda \tau^{+[1]}(t)}e^{-\xi(t,\lambda)};\\
T_i(\Psi)(\psi(t,\lambda))=\frac{\tau^{-[1]}(t-\varepsilon(\lambda^{-1}))}
{\lambda\tau^{-[1]}(t)}e^{\xi(t,\lambda)},\quad &(T_i^*(\Psi))^{-1}(\psi^*(t,\lambda))
=\frac{\lambda\tau^{-[1]}(t+\varepsilon(\lambda^{-1}))}{\tau^{-[1]}(t)}e^{-\xi(t,\lambda)}&
\end{align*}
\end{lemma}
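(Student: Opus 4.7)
The plan is to split the four claims into two pairs of the same type and then invoke, respectively, the Fay-type identity of Lemma \ref{mkpfaylemma} and the explicit SEP formulas of Proposition \ref{sepexpprop}. First I will rewrite each operator in a form ready for either differentiation or integration. Using the adjoint rules $(\partial^{-1})^*=-\partial^{-1}$ and $\partial^*=-\partial$ one gets $T_d^*(\Phi)=-\Phi^{-1}\partial\Phi$ and $T_i^*(\Psi)=-\Psi\partial^{-1}\Psi^{-1}$, hence
\begin{align*}
T_d(\Phi)(\psi)&=\Phi(\Phi^{-1}\psi)_x, & (T_d^*(\Phi))^{-1}(\psi^*)&=-\Phi^{-1}\partial^{-1}(\Phi\psi^*),\\
T_i(\Psi)(\psi)&=\Psi^{-1}\partial^{-1}(\Psi\psi), & (T_i^*(\Psi))^{-1}(\psi^*)&=-\Psi(\Psi^{-1}\psi^*)_x.
\end{align*}
Throughout I will use that, by Table I and Corollary \ref{cortautrankp}, $(\tau,\Phi\tau)=(\tau,\tau^{+[1]})$ and $(\Psi\tau,\tau)=(\tau^{-[1]},\tau)$ are genuine mKP tau pairs.

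For the two \emph{differential} identities (first and fourth), the key remark is that $\Phi^{-1}\psi(t,\lambda)$ is precisely the mKP wave function $w(t,\lambda)$ attached to the pair $(\tau,\tau^{+[1]})$ via \eqref{mkpwave}, and $\lambda^{-1}\Psi^{-1}\psi^*(t,\lambda)$ is the mKP adjoint wave function $w^*(t,\lambda)$ attached to $(\tau^{-[1]},\tau)$ via \eqref{adwavefunction}. Lemma \ref{mkpfaylemma} then supplies closed-form expressions for $w_x$ and $w^*_x$ purely in terms of the tau pair, and a one-line multiplication by $\Phi$ (respectively by $-\lambda\Psi$) brings them into the desired right-hand sides $\lambda\tau^{+[1]}(t-\varepsilon(\lambda^{-1}))/\tau^{+[1]}(t)\cdot e^{\xi(t,\lambda)}$ and $\lambda\tau^{-[1]}(t+\varepsilon(\lambda^{-1}))/\tau^{-[1]}(t)\cdot e^{-\xi(t,\lambda)}$, after simplification with $\tau^{\pm[1]}=\Phi\tau,\Psi\tau$.

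For the two \emph{integral} identities (second and third), I will resolve $\partial^{-1}$ via the squared eigenfunction potential: since $\Omega(f,g)_x=fg$ whenever $f$ is an eigenfunction and $g$ is an adjoint eigenfunction of the KP hierarchy, one has $\partial^{-1}(\Phi\psi^*)=\Omega(\Phi,\psi^*)$ and $\partial^{-1}(\Psi\psi)=\Omega(\psi,\Psi)$. Substituting the closed forms of Proposition \ref{sepexpprop}, namely $\Omega(\Phi,\psi^*)=-\lambda^{-1}\Phi(t+\varepsilon(\lambda^{-1}))\psi^*(t,\lambda)$ and $\Omega(\psi,\Psi)=\lambda^{-1}\Psi(t-\varepsilon(\lambda^{-1}))\psi(t,\lambda)$, multiplying by $-\Phi^{-1}$ or $\Psi^{-1}$, and using $\tau^{\pm[1]}=\Phi\tau$ or $\Psi\tau$, yields the two remaining identities directly. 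The only delicate point, and the main obstacle to avoiding mistakes, is the bookkeeping of the $\lambda$ prefactors and the choice of integration constant in $\partial^{-1}$; in particular one has to remember that $\Psi^{-1}\psi^*$ and the mKP adjoint wave function $w^*$ of the pair $(\tau^{-[1]},\tau)$ differ by a factor of $\lambda$ coming from the definition \eqref{adwavefunction}, which is precisely what accounts for the $\lambda$ appearing in the fourth identity but not in the second.
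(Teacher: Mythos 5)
Your proposal is correct and follows essentially the same route as the paper: the paper also derives $T_d(\Phi)(\psi)$ (and, symmetrically, $(T_i^*(\Psi))^{-1}(\psi^*)$) from Lemma \ref{mkpfaylemma} applied to the mKP tau pairs $(\tau,\Phi\tau)$ and $(\Psi\tau,\tau)$, and obtains $(T_d^*(\Phi))^{-1}(\psi^*)$ and $T_i(\Psi)(\psi)$ from the SEP expressions of Proposition \ref{sepexpprop}. Your write-up merely makes explicit the operator adjoints and the identification of $\Phi^{-1}\psi$ and $\lambda^{-1}\Psi^{-1}\psi^*$ with the mKP (adjoint) wave functions, which the paper leaves implicit.
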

\begin{proof}
Here we will only discuss the case $T_d$, since it is almost the same in the case $T_i$. Firstly by using Lemma \ref{mkpfaylemma}, one can obtain the result for $T_d(\Phi)(\psi(t,\lambda))$. As for $(T_d^*(\Phi))^{-1}(\psi^*(t,\lambda))$, one can use the expression of $\Omega(\Phi(t),\psi^*(t,\lambda))$ in Proposition \ref{sepexpprop}.
\end{proof}

Next we will express the Darboux transformation in the Fermionic language. For this, if introduce
\begin{align*}
\alpha={\rm Res}_{\lambda}\rho(\lambda)\psi(\lambda),\quad \alpha^*={\rm Res}_{\lambda}\frac{1}{\lambda}\rho^*(\lambda)\psi^*(\lambda),
\end{align*}
with $\rho(\lambda)$ and $\rho^*(\lambda)$ given in Subsection \ref{subseckp},
then by (\ref{kpwavefunction}), (\ref{kpadwavefunction}) and \eqref{spectralkp},
\begin{eqnarray}
\Phi(t)=\frac{\langle 1|e^{H(t)}\alpha g|0\rangle}{\langle 0|e^{H(t)}g|0\rangle}, && \Psi(t)=\frac{\langle -1|e^{H(t)}\alpha^*g|0\rangle}{\langle 0|e^{H(t)}g|0\rangle}.\label{eigenfermi}
\end{eqnarray}
Note that $\alpha\in V=\oplus_{i\in\mathbb{Z}}\mathbb{C}\psi_i$ and $\alpha^*\in V^*=\oplus_{i\in\mathbb{Z}}\mathbb{C}\psi_i^*$.
Therefore one can obtain the following proposition \cite{Willox1998,Chau1992} by Table III.
\begin{proposition}\label{tdtifermiversion}
Assume $\tau(t)=\sigma(g|0\rangle)=\langle 0| e^{H(t)} g|0\rangle$ with $g\in GL_\infty$, then
\begin{itemize}
  \item Under $T_d(\Phi)$, $g|0\rangle\rightarrow \alpha g|0\rangle$, i.e., $\tau(t)\rightarrow \tau^{+[1]}(t)=\langle 1| e^{H(t)} \alpha g|0\rangle$.
  \item Under $T_i(\Psi)$, $g|0\rangle\rightarrow \alpha g|0\rangle$, i.e., $\tau(t)\rightarrow \tau^{-[1]}(t)=\langle -1|e^{H(t)}\alpha^*g|0\rangle$.
\end{itemize}
\end{proposition}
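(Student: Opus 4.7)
The plan is to observe that the proposition is essentially a direct reading of Table~III in the Fermionic picture, using the spectral representations \eqref{spectralkp} together with the Bosonic expressions \eqref{kpwavefunction}--\eqref{kpadwavefunction} for the (adjoint) wave functions as matrix elements. All the hard work — the spectral representation of $\Phi,\Psi$ and the Boson--Fermion correspondence itself — has already been done in Section~2, so this result is a short bookkeeping argument.

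First I would recall from Table~III that under the Darboux transformations one has $\tau^{+[1]}(t)=\Phi(t)\tau(t)$ and $\tau^{-[1]}(t)=\Psi(t)\tau(t)$. So the proposition reduces to the two Fermionic identities
\begin{align*}
\Phi(t)\,\tau(t)&=\langle 1|e^{H(t)}\alpha g|0\rangle,\\
\Psi(t)\,\tau(t)&=\langle -1|e^{H(t)}\alpha^{*}g|0\rangle,
\end{align*}
which is precisely \eqref{eigenfermi}. Thus the task reduces to justifying \eqref{eigenfermi}.

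To justify the first identity I would substitute $\psi(t,\lambda)=\langle 1|e^{H(t)}\psi(\lambda)g|0\rangle / \langle 0|e^{H(t)}g|0\rangle$ from \eqref{kpwavefunction} into the spectral representation $\Phi(t)=\operatorname{Res}_\lambda \rho(\lambda)\psi(t,\lambda)$ from \eqref{spectralkp}. Since $\langle 0|e^{H(t)}g|0\rangle=\tau(t)$ is independent of $\lambda$, the residue can be pulled inside the numerator; linearity and the fact that the residue and the matrix element commute with each other then give
\[
\Phi(t)\tau(t)=\Big\langle 1\Big|e^{H(t)}\Big(\operatorname{Res}_\lambda \rho(\lambda)\psi(\lambda)\Big) g\Big|0\Big\rangle=\langle 1|e^{H(t)}\alpha g|0\rangle,
\]
with $\alpha=\operatorname{Res}_\lambda \rho(\lambda)\psi(\lambda)\in V$. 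The second identity is completely parallel, using \eqref{kpadwavefunction} with its extra factor of $\lambda^{-1}$, which is exactly compensated by the $\lambda^{-1}$ already present in the definition of $\alpha^{*}=\operatorname{Res}_\lambda \lambda^{-1}\rho^{*}(\lambda)\psi^{*}(\lambda)\in V^{*}$.

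No serious obstacle is expected: the only subtlety is keeping track of the $\lambda^{-1}$ factor and the charge shifts $\langle 0|\mapsto\langle \pm 1|$, both of which are dictated by the Fermionic formulas for $\psi(t,\lambda)$ and $\psi^{*}(t,\lambda)$. Once these identities are in place, the proposition follows by simply dividing out $\tau(t)$ and comparing with Table~III. For clarity I would state the two cases separately in the written-out proof so that the charges and the definitions of $\alpha,\alpha^{*}$ can be checked independently.
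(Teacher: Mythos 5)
Your proposal is correct and matches the paper's own (very brief) justification: the paper likewise obtains the proposition from Table~III together with the identities \eqref{eigenfermi}, which it derives from \eqref{kpwavefunction}, \eqref{kpadwavefunction} and the spectral representation \eqref{spectralkp} exactly as you do, with the $\lambda^{-1}$ in \eqref{kpadwavefunction} absorbed into the definition of $\alpha^{*}$. Your write-up simply makes explicit the residue-pulling step that the paper leaves implicit.
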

\noindent {\bf Remark}: Slightly abusing the notation, we will use $\tau$ for $g|0\rangle$ if $\tau(t)=\sigma(g|0\rangle)$ and let $z=1$ in the isomorphism $\sigma_t$ between $\mathcal{F}$ and $\mathcal{B}$. In these notations, $\tau\xrightarrow{T_d(\Phi)}\tau^{+[1]}=\alpha\tau$ and $\tau\xrightarrow{T_i(\Psi)}\tau^{-[1]}=\alpha^*\tau$. In what follows, we always believe the initial tau function is $\tau^{[0]}=g|0\rangle$, since other cases are almost the same. Under the Darboux transformations, one can easily find
$\mathcal{F}_l\xrightarrow{T_d}\mathcal{F}_{l+1}$ and $\mathcal{F}_l\xrightarrow{T_i}\mathcal{F}_{l-1}$.

\subsection{Successive applications of $T_d$ in the Fermionic picture}
Before further discussion, let us see the changes of SEPs for the KP hierarchy, which can be computed by direct computations according to Table III.
\begin{lemma}\label{transep}
Under the Darboux transformation $T_d(\Phi)$,
\begin{align*}
&\Omega(\Phi_1(t)^{+[1]},\psi^*(t,\lambda)^{+[1]})=
\lambda\Big(\Omega(\Phi_1(t),\psi^*(t,\lambda))
-\frac{\Phi_1(t)}{\Phi(t)}\Omega(\Phi(t),\psi^*(t,\lambda))\Big),\\
&\Omega(\psi(t,\lambda)^{+[1]},\Psi_1(t)^{+[1]})
=\frac{1}{\lambda}\Big(\Omega(\psi(t,\lambda),\Psi_1(t))
-\frac{\Omega(\Phi(t),\Psi_1(t))}{\Phi(t)}\psi(t,\lambda)
\Big).
\end{align*}
Under the Darboux transformation $T_i(\Psi)$,
\begin{align*}
&\Omega(\Phi_1(t)^{-[1]},\psi^*(t,\lambda)^{-[1]})=\frac{1}{\lambda}
\Big(
\Omega(\Phi_1(t),\psi(t,\lambda))
-\frac{\Omega(\Phi_1(t),\Psi(t))}{\Psi(t)}\psi^*(t,\lambda)\Big),\\
&\Omega(\psi(t,\lambda)^{-[1]},\Psi_1(t)^{-[1]})
=\lambda\Big(\Omega(\psi(t,\lambda),\Psi_1(t))
-\frac{\Psi_1(t)}{\Psi(t)}\Omega(\psi(t,\lambda),\Psi(t))
\Big).
\end{align*}
\end{lemma}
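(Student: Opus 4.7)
The plan is to verify each of the four identities by a direct computation: differentiate both sides in $x$, apply the defining relation $\Omega(f,g)_x = fg$, substitute the explicit Darboux-transformed data, and observe a short cancellation. The two identities for $T_i(\Psi)$ are dual to those for $T_d(\Phi)$ under the formal swap of eigenfunction with adjoint eigenfunction, so it suffices to carry out the $T_d$ case carefully and then transcribe.

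First I would collect the transformation laws that are needed. From Table III, $\Phi_1^{+[1]} = \Phi(\Phi_1/\Phi)_x$ and $\Psi_1^{+[1]} = -\Omega(\Phi,\Psi_1)/\Phi$. From $W^{+[1]} = T_d(\Phi) W\partial^{-1}$ one reads off $\psi^{+[1]} = \lambda^{-1}T_d(\Phi)\psi = \lambda^{-1}\bigl(\psi_x-(\Phi_x/\Phi)\psi\bigr)$ and $\psi^{*+[1]} = \lambda\,(T_d^*(\Phi))^{-1}\psi^* = -\lambda\,\Omega(\Phi,\psi^*)/\Phi$, the last step using the explicit form of $\Omega(\Phi,\psi^*)$ from Proposition \ref{sepexpprop}. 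With these in hand, the first identity reduces to a one-line cancellation. On the left,
\begin{align*}
\partial_x\Omega(\Phi_1^{+[1]},\psi^{*+[1]}) = \Phi_1^{+[1]}\psi^{*+[1]} = \Phi(\Phi_1/\Phi)_x\cdot\bigl(-\lambda\Omega(\Phi,\psi^*)/\Phi\bigr) = -\lambda(\Phi_1/\Phi)_x\,\Omega(\Phi,\psi^*).
\end{align*}
On the right, expanding the $x$-derivative with $\Omega(\Phi_1,\psi^*)_x = \Phi_1\psi^*$ and $\Omega(\Phi,\psi^*)_x = \Phi\psi^*$ produces $\lambda[\Phi_1\psi^* - (\Phi_1/\Phi)_x\Omega(\Phi,\psi^*) - (\Phi_1/\Phi)\Phi\psi^*]$, in which the first and third terms cancel, leaving exactly the same expression. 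The second identity is obtained by the same two-line cancellation using the pair $(\psi^{+[1]},\Psi_1^{+[1]})$ in place of $(\Phi_1^{+[1]},\psi^{*+[1]})$.

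The one subtle point is the constant of integration inherent in an SEP, which matching $x$-derivatives alone cannot fix. My preferred way to handle this is to substitute the Fay-type closed-form expressions from Proposition \ref{sepexpprop} into both sides; each identity then reduces to an algebraic equality between rational combinations of $\Phi,\Phi_1,\psi^*$ evaluated at $t$ and $t+\varepsilon(\lambda^{-1})$ (and the corresponding $\mu^{-1}$-shifts in the $T_i$ case), which fixes the constant to zero automatically. Alternatively one can verify agreement of the $t_n$-derivatives using $L^{+[1]} = T_d(\Phi) L T_d(\Phi)^{-1}$ together with the residue formula $\Omega(f,g)_{t_n} = {\rm Res}_\partial(\partial^{-1}g(L^n)_{\geq 0}f\partial^{-1})$; this is the main (though routine) technical obstacle, and it is bypassed entirely once Proposition \ref{sepexpprop} is invoked. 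Finally, the two $T_i(\Psi)$ formulas follow by the parallel calculation, with $\Phi_1^{-[1]} = \Omega(\Phi_1,\Psi)/\Psi$ and $\Psi_1^{-[1]} = -\Psi(\Psi_1/\Psi)_x$ from the second row of Table III and the analogous transformation laws for $\psi,\psi^*$ coming from $W^{-[1]} = \Psi^{-1}\partial^{-1}\Psi W\partial$.
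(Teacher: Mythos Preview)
Your proposal is correct and follows essentially the same approach as the paper, which simply states that the lemma ``can be computed by direct computations according to Table III'' without further detail. Your explicit differentiation-and-cancellation argument is exactly what that phrase unpacks to, and your attention to the integration constant (via Proposition~\ref{sepexpprop} or the $t_n$-flows) addresses a point the paper leaves implicit. One minor remark: the identification $\psi^{*+[1]}=-\lambda\,\Omega(\Phi,\psi^*)/\Phi$ does not actually require Proposition~\ref{sepexpprop}; it follows immediately from $(T_d(\Phi)^*)^{-1}=-\Phi^{-1}\partial^{-1}\Phi$ and the defining relation $\Omega(\Phi,\psi^*)_x=\Phi\psi^*$.
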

Firstly,
consider the change of $\tau(t)$ under successive applications of $T_d$. For this, let us discuss the following Darboux transformation $T_d$ chain.
\begin{eqnarray}
&&L\xrightarrow{T_d(\Phi_1(t))}L^{[1]}\xrightarrow{T_d(\Phi^{[1]}_2(t))}L^{[2]}
  \rightarrow\cdots\rightarrow L^{[n-1]}\xrightarrow{T_d(\Phi_{n}^{[n-1]}(t))}L^{[n]},\label{tdphi}
\end{eqnarray}
where $\Phi_1,\cdots,\Phi_n$ are independent eigenfunctions of the KP hierarchy. Further we assume $\alpha_i\in V$ such that $\Phi_i(t)=\frac{\langle 1|e^{H(t)}\alpha_i g|0\rangle}{\langle 0|e^{H(t)}g|0\rangle}$ and denote $T_d^{[\vec{\bf n}]}=T_d(\Phi_{n}^{[n-1]}(t))\cdots T_d(\Phi_1(t))$. In what follows, $A^{+[\vec{\bf n}]}$ (for example $\tau^{+[\vec{\bf n}]}$) is also used as the transformed research objects $A$ under $T_d^{[\vec{\bf n}]}$.

For the Darboux chain (\ref{tdphi}), denote $\rho_i^{[j]}(\lambda)$ and $\alpha_i^{[j]}\in V$ in the way below,
\begin{align*}
\rho_i^{[j]}(\lambda)=-\Omega(\Phi_i^{[j]}(t'),\psi^{*}(t',\lambda)^{[j]}),
\quad\alpha_i^{[j]}={\rm Res}_{\lambda}\frac{1}{\lambda^j}\rho_i^{[j]}(\lambda)
\psi(\lambda).
\end{align*}
By Lemma \ref{transep}, one can find for $i>j$
\begin{align}
&\rho_i^{[j]}(\lambda)=\lambda\big(
\rho_i^{[j-1]}(\lambda)+c^{[j-1]}_{i}\rho_{j}^{[j-1]}(\lambda)\big),\quad
\alpha_i^{[j]}=\alpha_i^{[j-1]}+c^{[j-1]}_{i}\alpha_{j}^{[j-1]},\quad \rho_i^{[i]}(\lambda)=0,\quad \alpha_i^{[i]}=0.
\label{alpharelation}
\end{align}
where $c^{[j-1]}_{i}=-\frac{\Phi_{i}^{[j-1]}(t')}{\Phi_{j}^{[j-1]}(t')}$ can be viewed as a constant independent of $t$. Then one has following lemma.
\begin{proposition}\label{ndttd}
Under the $n$-step Darboux transformation $T_d$ (see chain (\ref{tdphi})),
\begin{align*}
\tau^{+[\vec{\bf n}]}=\alpha_n^{[n-1]}\cdots \alpha_1^{[0]}\tau^{[0]}=\alpha_n\cdots \alpha_1\tau^{[0]},
\end{align*}
that is,
\begin{align*}
\tau^{+[\vec{\bf n}]}(t)=\langle n|e^{H(t)}\alpha_n^{[n-1]}\cdots \alpha_1^{[0]}g|0\rangle=\langle n|e^{H(t)}\alpha_n\cdots \alpha_1g|0\rangle,
\end{align*}
where we have set $\tau^{[0]}(t)=\langle 0|e^{H(t)}g|0\rangle$.
\end{proposition}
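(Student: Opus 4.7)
The plan is to proceed by induction on $n$. At stage $k$, I would apply Proposition \ref{tdtifermiversion} to the current tau function, for which the Fermionic field realising the eigenfunction $\Phi_k^{[k-1]}(t)$ of $L^{[k-1]}$ must first be identified. Writing $\tau^{+[k-1]}(t)=\langle k-1|e^{H(t)}g^{+[k-1]}|0\rangle$ and combining the spectral representation $\Phi_k^{[k-1]}(t)={\rm Res}_{\lambda}\rho_k^{[k-1]}(\lambda)\psi^{[k-1]}(t,\lambda)$ with the Fermionic identity
\[
\psi^{[k-1]}(t,\lambda)=\lambda^{-(k-1)}\frac{\langle k|e^{H(t)}\psi(\lambda)g^{+[k-1]}|0\rangle}{\langle k-1|e^{H(t)}g^{+[k-1]}|0\rangle},
\]
which follows from \eqref{langlepsi} after tracking the $k{-}1$ charge shifts accumulated along the Darboux chain, one reads off that the relevant Fermionic field is exactly $\alpha_k^{[k-1]}={\rm Res}_{\lambda}\lambda^{-(k-1)}\rho_k^{[k-1]}(\lambda)\psi(\lambda)$. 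Iterating Proposition \ref{tdtifermiversion} therefore delivers the first equality $\tau^{+[\vec{\bf n}]}=\alpha_n^{[n-1]}\cdots\alpha_1^{[0]}\tau^{[0]}$.

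For the second equality I would work directly in the Clifford algebra. Unfolding the recursion \eqref{alpharelation} shows that each $\alpha_i^{[i-1]}\in V$ admits a scalar decomposition
\[
\alpha_i^{[i-1]}=\alpha_i+\sum_{k=1}^{i-1}d_{ik}\alpha_k,
\]
in which the constants $d_{ik}$ depend only on the spectator point $t'$. Expanding $\prod_{i=n}^{1}\alpha_i^{[i-1]}$ then produces a sum of Fermionic monomials $\alpha_{j_n}\alpha_{j_{n-1}}\cdots\alpha_{j_1}$ subject to $j_k\leq k$. Invoking the two consequences of \eqref{cliffordrelation} that hold for arbitrary elements of $V$, namely the pairwise anticommutation $\alpha_i\alpha_j=-\alpha_j\alpha_i$ and the square-zero relation $\alpha_i^2=0$, every monomial with a repeated index is killed, and the distinctness of the $j_k$ together with $j_k\leq k$ forces $j_k=k$ by a greedy induction starting from $k=1$. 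Only the single monomial $\alpha_n\alpha_{n-1}\cdots\alpha_1$ survives, and it enters with coefficient $\prod_{i}d_{ii}=1$, yielding the desired Clifford-algebra identity; acting on $\tau^{[0]}$ gives the second form of $\tau^{+[\vec{\bf n}]}$.

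The main potential obstacle is the bookkeeping in the first stage, namely verifying that the $\lambda^{-(k-1)}$ factor built into the definition of $\alpha_i^{[j]}$ is exactly what compensates for the cumulative rescaling of the wave function under the Darboux chain, so that $\alpha_k^{[k-1]}$ is genuinely the Fermionic field adapted to the shifted vacuum sector $\mathcal{F}_{k-1}$. Once this identification is secured, the second stage reduces to a purely combinatorial consequence of Fermi statistics, and both equalities in the proposition follow together without additional analytic input.
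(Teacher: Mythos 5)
Your proposal is correct and follows essentially the same route as the paper: the first equality by induction using the spectral representation of $\Phi_{k}^{[k-1]}$ together with the charge-shifted Boson--Fermion formula \eqref{langlepsi} (your identification of the $\lambda^{-(k-1)}$ normalization of $\psi^{[k-1]}$ is exactly the bookkeeping the paper performs), and the second equality from the recursion \eqref{alpharelation} combined with anticommutativity and nilpotency in $V$. Your greedy combinatorial argument forcing $j_k=k$ merely spells out in detail what the paper compresses into the remark that the $\alpha_i^{[j]}$ anticommute with each other.
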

\begin{proof}
Obviously, it is correct for $n=0$ and $n=1$. Next assume $\tau^{[n]}(t)=\langle n|e^{H(t)}\alpha_n^{[n-1]}\cdots \alpha_1^{[0]}g|0\rangle$ holds $n$. Then
\begin{align*}
\tau^{[n+1]}(t)=&\Phi_{n+1}(t)^{[n]}\tau^{[n]}(t)
={\rm Res}_\lambda\rho_{n+1}^{[n]}(\lambda)\psi(t,\lambda)^{[n]}\tau^{[n]}(t)
={\rm Res}_\lambda\rho_{n+1}^{[n]}(\lambda)\tau^{[n]}(t-\varepsilon(\lambda^{-1}))
e^{\xi(t,\lambda)}\\
=&{\rm Res}_\lambda\rho_{n+1}^{[n]}(\lambda)\langle n|e^{H(t-\varepsilon(\lambda^{-1}))}\alpha_n^{[n-1]}\cdots \alpha_1^{[0]}g|0\rangle e^{\xi(t,\lambda)}\\
=&{\rm Res}_\lambda\lambda^{-n}\rho_{n+1}^{[n]}(\lambda)\langle n+1|e^{H(t)}\psi(\lambda)\alpha_n^{[n-1]}\cdots \alpha_1^{[0]}g|0\rangle\\
=&\langle n+1|e^{H(t)}\alpha_{n+1}^{[n]}\alpha_n^{[n-1]}\cdots \alpha_1^{[0]}g|0\rangle.
\end{align*}
Based upon this, $\tau^{[n]}(t)=\langle n|e^{H(t)}\alpha_n\cdots \alpha_1g|0\rangle$ can be obtained by (\ref{alpharelation}) and the fact $\alpha_i^{[j]}$ anti-commutes with each other.
\end{proof}
\noindent{\bf Remark:} For $\vec{\bf n}=(n,n-1,\cdots,2,1)$, denote  $\vec{\alpha}_{\vec{\bf n}}=(\alpha_n,\cdots,\alpha_1)$ and $\tau_{\vec{\alpha}_{\vec{\bf n}}}=\alpha_n\cdots\alpha_1\tau$. According to Proposition \ref{ndttd}, one can know $\tau^{[\vec{\bf n}]}=\tau_{\vec{\alpha}_{\vec{\bf n}}}$.

\subsection{Successive applications of $T_i$ in the Fermionic picture}
Next consider the following chain of Darboux transformation $T_i$,
\begin{eqnarray}
&&L\xrightarrow{T_i(\Psi_1(t))}L^{[1]}\xrightarrow{T_i(\Psi_2^{[1]}(t))}L^{[2]}
  \rightarrow\cdots\rightarrow L^{[n-1]}\xrightarrow{T_i(\Psi_n^{[n-1]}(t))}L^{[n]}.\label{tipsi}
\end{eqnarray}
Here $\Psi_1,\cdots,\Psi_n$ are the independent adjoint eigenfunctions of the KP hierarchy. Denote $\alpha_j^*\in V^*$ such that $\Psi_j(t)=\frac{\langle -1|e^{H(t)}\alpha_j^*g|0\rangle}{\langle 0|e^{H(t)}g|0\rangle}$ and $T_i^{[\vec{\bf n}]}=T_i(\Psi_n^{[n-1]}(t))\cdots T_i(\Psi_1(t))$. The transformed research object $A$ of the KP hierarchy under $T_i^{[\vec{\bf n}]}$ will be denoted as $A^{-[\vec{\bf n}]}$. Let
\begin{align*}
\rho_i^{*[j]}(\lambda)=\Omega(\psi(t',\lambda)^{[j]},\Psi_i^{[j]}(t')),\quad
\alpha_i^{*[j]}={\rm Res}_{\lambda}\frac{1}{\lambda^{j+1}}\rho_i^{*[j]}(\lambda)
\psi^*(\lambda),\quad i>j
\end{align*}
satisfying
\begin{align*}
\rho_i^{*[j]}(\lambda)=\lambda\big(b^{[j-1]}_{i}
\rho_{j}^{*[j-1]}(\lambda)+\rho_i^{*[j-1]}(\lambda)\big),\quad
\alpha_i^{*[j]}=b^{[j-1]}_{i}\alpha_{j}^{*[j-1]}+\alpha_i^{*[j-1]},
\end{align*}
where $b^{[j-1]}_{i}=-\frac{\Psi_i^{[j-1]}(t')}{\Psi_{j}^{[j-1]}(t')}$ is also believed as a constant independent of $t$. And also $\rho_i^{*[i]}=0$ and $\alpha_i^{*[i]}=0$. Similarly, one can obtain the following proposition.
\begin{proposition}\label{ndtti}
For the chain (\ref{tipsi}) of the Darboux transformation $T_i(\Psi)$,
\begin{eqnarray}
\tau^{-[\vec{\bf n}]}=\alpha^{[n-1]*}_n\cdots\alpha^*_1 \tau^{[0]}= \alpha^*_n\cdots\alpha^*_1 \tau^{[0]}.
\end{eqnarray}
\end{proposition}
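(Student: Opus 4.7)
The plan is to mirror the inductive argument used for Proposition \ref{ndttd}, replacing $T_d$ with $T_i$, the eigenfunction with the adjoint eigenfunction, and the Fermionic field $\psi(\lambda)$ with $\psi^*(\lambda)$. I would start with an induction on $n$, the base cases $n=0,1$ being immediate from the definitions and Proposition \ref{tdtifermiversion}. Assuming $\tau^{-[n]}(t)=\langle -n|e^{H(t)}\alpha^{*[n-1]}_n\cdots\alpha^{*[0]}_1 g|0\rangle$, the inductive step computes
\begin{align*}
\tau^{-[n+1]}(t)=\Psi_{n+1}^{[n]}(t)\,\tau^{-[n]}(t)
={\rm Res}_{\lambda}\rho_{n+1}^{*[n]}(\lambda)\,\psi^{*[n]}(t,\lambda)\,\tau^{-[n]}(t),
\end{align*}
where the first equality is Table III and the second is the spectral representation (\ref{spectralkp}) applied to $L^{[n]}$.

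Next I would use the standard identity $\psi^{*[n]}(t,\lambda)\tau^{-[n]}(t)=\tau^{-[n]}(t+\varepsilon(\lambda^{-1}))e^{-\xi(t,\lambda)}$ arising from (\ref{kpadwavefunction}) applied to the transformed tau function, substitute the inductive hypothesis into $\tau^{-[n]}(t+\varepsilon(\lambda^{-1}))$, and then invoke the Boson--Fermion correspondence formula \eqref{langlepsi}: taking $l=-n-1$ gives $\langle -n-1|\psi^*(\lambda)e^{H(t)}=\lambda^{n+1}\langle -n|e^{H(t+\varepsilon(\lambda^{-1}))}$. After commuting $\psi^*(\lambda)$ past $e^{H(t)}$ via $\psi^*(\lambda)e^{H(t)}=e^{\xi(t,\lambda)}e^{H(t)}\psi^*(\lambda)$, the two $e^{\pm\xi(t,\lambda)}$ factors cancel and one obtains
\begin{align*}
\tau^{-[n+1]}(t)
={\rm Res}_{\lambda}\lambda^{-n-1}\rho_{n+1}^{*[n]}(\lambda)\,
\langle -n-1|e^{H(t)}\psi^*(\lambda)\alpha^{*[n-1]}_n\cdots\alpha^{*[0]}_1 g|0\rangle
=\langle -n-1|e^{H(t)}\alpha^{*[n]}_{n+1}\alpha^{*[n-1]}_n\cdots\alpha^{*[0]}_1 g|0\rangle,
\end{align*}
by the very definition $\alpha_{n+1}^{*[n]}={\rm Res}_{\lambda}\lambda^{-n-1}\rho_{n+1}^{*[n]}(\lambda)\psi^*(\lambda)$. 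This establishes the first equality of the Proposition.

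For the second equality, I would iteratively apply the recursion $\alpha^{*[j]}_i=b^{[j-1]}_i\alpha^{*[j-1]}_j+\alpha^{*[j-1]}_i$ to each factor, starting from $\alpha^{*[n-1]}_n$. Because all the $\alpha^*_k$ lie in $V^*=\oplus_l\mathbb{C}\psi^*_l$ and therefore anticommute (in particular square to zero), every term obtained by choosing the ``$b$--piece'' in some $\alpha^{*[j-1]}_i$ produces a factor $\alpha^{*[j-1]}_j$ that winds up adjacent to the same fermion appearing further right in the product, and hence vanishes. Only the identity branch survives in each recursion step, peeling off the superscripts one by one and leaving $\alpha^*_n\cdots\alpha^*_1g|0\rangle$.

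The routine portion is the induction itself and the anticommutation argument at the end; the main point requiring care is the bookkeeping at the Boson--Fermion step, since for $T_i$ one must use the $\psi^*$ version of \eqref{langlepsi} with index $l=-n-1$ and track the $e^{\pm\xi(t,\lambda)}$ factors so that the remaining power of $\lambda$ matches $\lambda^{-(n+1)}$ in the definition of $\alpha^{*[n]}_{n+1}$. Once that alignment is checked, the rest of the argument is a direct transcription of the $T_d$ case.
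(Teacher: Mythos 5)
Your proposal is correct and is essentially the paper's own argument: the paper proves Proposition \ref{ndttd} by exactly this induction and then states that Proposition \ref{ndtti} follows "similarly," which is precisely the transcription you carry out. Your $\lambda$-power bookkeeping (the factor $\lambda^{n+1}$ from \eqref{langlepsi} at $l=-n-1$ matching the $\lambda^{-(j+1)}$ in the definition of $\alpha_i^{*[j]}$) and the final anticommutation/square-zero argument for dropping the superscripts both align with the paper's definitions and with the recursion for $\alpha_i^{*[j]}$.
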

\noindent{\bf Remark:} By Lemma \ref{commkpdt}, $\alpha_1^*\alpha_2^*\tau\approx\alpha_2^*\alpha_1^*\tau$. If denote $\vec{\alpha}^*_{\vec{\bf n}}=(\alpha^*_n,\cdots,\alpha^*_1)$ and $\tau_{\vec{\alpha}^*_{\vec{\bf n}}}=\alpha^*_n\cdots\alpha^*_1\tau$ with $\tau=g|0\rangle$, then $\tau^{-[\vec{\bf n}]}=\tau_{\vec{\alpha}^*_{\vec{\bf n}}}$.

\subsection{Successive applications of the mixed using $T_d$ and $T_i$ in the Fermionic picture}\label{subsectdti}
Now let us discuss the mixed using $T_d(\Phi)$ and $T_i(\Psi)$. It is more complicated in this case. The possible reason is that $\psi_i$ can not always anticommute or commute with $\psi_j^*$, by noting that $[\psi_i,\psi_j^*]_+=\delta_{ij}$.

Consider the following Darboux chain of $T_d$ and $T_i$,
\begin{eqnarray}
&&L\xrightarrow{T_d(\Phi_1(t))}L^{[1]}
\xrightarrow{T_d(\Phi_2^{[1]}(t))}L^{[1]}
  \rightarrow\cdots\rightarrow L^{[n-1]}\xrightarrow
  {T_d(\Phi_n^{[n-1]}(t))}L^{[n]}\nonumber\\
&&\xrightarrow{T_i(\Psi_{1}^{[n]}(t))}L^{[n+1]}
\xrightarrow{T_i(\Psi_{2}^{[n+1]}(t))}\cdots\rightarrow L^{[n+k-1]}\xrightarrow{T_i(\Psi_{n+k}^{[n+k-1]}(t))}
L^{[n+k]}.\label{tdti}
\end{eqnarray}
Denote $T^{+[\vec{\bf n},\vec{\bf k}]}=T_i(\Psi_k^{[n+k-1]})\cdots T_i(\Psi_1^{[n]})T_d^{[\vec{\bf n}]}$ and $T^{-[\vec{\bf n},\vec{\bf k}]}=T_d(\Phi_n^{[n+k-1]})\cdots T_i(\Phi_1^{[k]})T_i^{[\vec{\bf k}]}$, and let $\tau^{\pm[\vec{\bf n},\vec{\bf k}]}$ be the transformed tau functions under $T^{\pm[\vec{\bf n},\vec{\bf k}]}$. Then it is obviously that $T^{+[\vec{\bf n},\vec{\bf k}]}=T^{-[\vec{\bf n},\vec{\bf k}]}$ and $\tau^{+[\vec{\bf n},\vec{\bf k}]}\sim \tau^{-[\vec{\bf n},\vec{\bf k}]}$.

Let $\rho^{*+[\vec{\bf n}]}_{j}$ and $\rho^{-[\vec{\bf k}]}_{j}$ be the transformed results of $\rho^{*}_{j}$ and $\rho_{j}$ under $T_d^{[\vec{\bf n}]}$ and $T_i^{[\vec{\bf k}]}$ respectively. Set $\alpha_j^{*+[\vec{\bf n}]}={\rm Res}_{\lambda}\lambda^{n-1}\rho^{*+[\vec{\bf n}]}_{j}(\lambda)\psi^*(\lambda)$ and $\alpha_j^{-[\vec{\bf k}]}={\rm Res}_{\lambda}\lambda^{k}\rho^{-[\vec{\bf k}]}_{j}(\lambda)
\psi(\lambda)$, which can also be written as $\alpha_j^{*[n]}$ and $\alpha_j^{[k]}$ for short.

\begin{proposition}\label{kptdtitau}
$\tau^{\pm [\vec{\bf n},\vec{\bf k}]}$ are expressed in the following forms
\begin{align}
\tau^{+[\vec{\bf n},\vec{\bf k}]}=\alpha_k^{*+[\vec{\bf n}]}\cdots\alpha_1^{*+[\vec{\bf n}]}
\alpha_n\cdots\alpha_1\tau^{[0]},\quad \tau^{-[\vec{\bf n},\vec{\bf k}]}=\alpha_n^{-[\vec{\bf k}]}\cdots\alpha_1^{-[\vec{\bf k}]}
\alpha_k^*\cdots\alpha_1^*\tau^{[0]}.
\label{taunpkfrelation}
\end{align}
Further
\begin{align}
\tau^{+[\vec{\bf n},\vec{\bf k}]}=\sum_{a=0}^k\sum_{\vec{\gamma}\in H_{k,a}}\sum_{\vec{\delta}\in H_{n,n-a}}C^{+[n,k]}_{a,\vec{\gamma},\vec{\delta}}
\vec{\alpha}^*_{\vec{\bf k}\setminus\vec{\gamma}}\vec{\alpha}_{\vec{\delta}}\tau^{[0]},\quad \tau^{-[\vec{\bf n},\vec{\bf k}]}=\sum_{a=0}^n\sum_{\vec{\gamma}\in H_{n,a}}\sum_{\vec{\delta}\in H_{k,k-a}}C^{-[n,k]}_{a,\vec{\gamma},\vec{\delta}}
\vec{\alpha}_{\vec{\bf n}\setminus\vec{\gamma}}\vec{\alpha}^*_{\vec{\delta}}\tau^{[0]}.\label{taunpkf2cases}
\end{align}
Here $C^{\pm[n,k]}_{a,\vec{\gamma},\vec{\delta}}$ is some constant independent of $t$, which can be determined by $C^{+[n,k]}_{0,0,\vec{\bf n}}=C^{-[n,k]}_{0,0,\vec{\bf k}}=1$,  and the following recursion relations,
\begin{align}
C^{+[n,k+1]}_{a,\vec{\gamma},\vec{\delta}}=C^{+[n,k]}_{a,\vec{\gamma},\vec{\delta}},\quad
C^{+[n,k+1]}_{a+1,\{k+1\}\cup\vec{\gamma},\vec{\delta}\setminus\{\delta_l\}}=
C^{+[n,k]}_{a,\vec{\gamma},\vec{\delta}}\sum_{i=1}^n(-1)^{n+k-l}
\Psi_{k+1}^{[i]}(t')\frac{\tau_{\vec{\alpha}_{\delta_l}\cup\vec{\alpha}_{\overrightarrow{\bf i-1}}(t')}}
{\tau_{\vec{\alpha}_{\overrightarrow{\bf i-1}}}(t')},
\label{crecursion}\\
C^{-[n+1,k]}_{a,\vec{\xi},\vec{\eta}}=C^{-[n,k]}_{a,\vec{\xi},\vec{\eta}},\quad C^{-[n+1,k]}_{a+1,\{n+1\}\cup\vec{\xi},\vec{\eta}\setminus\{\eta_l\}}=
C^{-[n,k]}_{a,\vec{\xi},\vec{\eta}}\sum_{i=1}^k(-1)^{n+k-l}
\Phi_{n+1}^{[i]}(t')
\frac{\tau_{\vec{\alpha}^*_{\eta_l}\cup\vec{\alpha}^*_{\overrightarrow{\bf i-1}}(t')}}
{\tau_{\vec{\alpha}^*_{\overrightarrow{\bf i-1}}}(t')},
\end{align}
with $\vec{\gamma}\in H_{k,a}$, $\vec{\delta}\in H_{n,n-a}$, $\xi\in H_{n,a}$, $\vec{\eta}\in H_{k,k-a}$.
\end{proposition}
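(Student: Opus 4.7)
The plan is a two-stage argument. First, I would establish the Fermionic identities in \eqref{taunpkfrelation} by composing Propositions \ref{ndttd} and \ref{ndtti} on the Darboux-shifted vacuum. Second, I would derive the combinatorial expansion \eqref{taunpkf2cases} by induction on the number of opposite-type Darboux transformations, with the Fermionic anticommutator $[\psi_i,\psi_j^*]_+=\delta_{ij}$ serving as the contraction rule.

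For \eqref{taunpkfrelation}, Proposition \ref{ndttd} immediately yields $\tau^{+[\vec{\bf n}]}=\alpha_n\cdots\alpha_1\tau^{[0]}$, a state of charge $n$. The subsequent $T_i$ chain is driven by the transformed adjoint eigenfunctions $\Psi_j^{+[\vec{\bf n}]}$ of the KP hierarchy with Lax operator $L^{[\vec{\bf n}]}$. The prefactor $\lambda^{n-1}$ in the definition of $\alpha_j^{*+[\vec{\bf n}]}$, as opposed to the $\lambda^{-1}$ appearing in \eqref{kpadwavefunction}, is chosen precisely so that, under the Boson--Fermion correspondence applied at the shifted charge $n-1$, the operator $\alpha_j^{*+[\vec{\bf n}]}$ acting on $\tau^{+[\vec{\bf n}]}$ reproduces the spectral representation of $\Psi_j^{+[\vec{\bf n}]}$ in the transformed hierarchy. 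Iterating Proposition \ref{ndtti} in this transformed setting then yields the first identity in \eqref{taunpkfrelation}; the second identity follows from the symmetric argument starting from a $T_i$ chain.

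For \eqref{taunpkf2cases} in the $+$ case I would induct on $k$. The base $k=0$ is Proposition \ref{ndttd} and corresponds to the single term with $a=0$, $\vec{\gamma}=\emptyset$, $\vec{\delta}=\vec{\bf n}$ and $C^{+[n,0]}_{0,\emptyset,\vec{\bf n}}=1$. For the inductive step, iterating Lemma \ref{transep} along the $T_d$ chain together with the Table III identity $\Psi_j^{[i]}=-\Omega(\Phi_i^{[i-1]},\Psi_j^{[i-1]})/\Phi_i^{[i-1]}$ yields the telescoped expression
\begin{align*}
\alpha_{k+1}^{*+[\vec{\bf n}]}=\alpha_{k+1}^* +\sum_{i=1}^{n}\Psi_{k+1}^{[i]}(t')\,\beta_i,\qquad \beta_i:={\rm Res}_\lambda\,\lambda^{i-2}\psi^{[i-1]}(t',\lambda)\psi^*(\lambda)\in V^*,
\end{align*}
in which the intermediate wave function $\psi^{[i-1]}(t',\lambda)$ has, by Proposition \ref{ndttd}, an explicit Fermionic form with denominator $\tau_{\vec{\alpha}_{\overrightarrow{\bf i-1}}}(t')$. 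Multiplying the induction hypothesis by this sum on the left, the $\alpha_{k+1}^*$ summand merely prepends, yielding $C^{+[n,k+1]}_{a,\vec{\gamma},\vec{\delta}}=C^{+[n,k]}_{a,\vec{\gamma},\vec{\delta}}$. Each $\beta_i$ summand must then be anticommuted through the $k-a$ adjoint factors in $\vec{\alpha}^*_{\vec{\bf k}\setminus\vec{\gamma}}$ (contributing $(-1)^{k-a}$) and contracted against one $\alpha_{\delta_l}\in\vec{\alpha}_{\vec{\delta}}$ via $[\psi_i,\psi_j^*]_+=\delta_{ij}$. Using the spectral representation of $\Phi_{\delta_l}^{[i-1]}$ in the transformed hierarchy and its Fermionic ratio form $\tau_{\vec{\alpha}_{\delta_l}\cup\vec{\alpha}_{\overrightarrow{\bf i-1}}}(t')/\tau_{\vec{\alpha}_{\overrightarrow{\bf i-1}}}(t')$, the contracted scalar is identified; together with the prefactor $\Psi_{k+1}^{[i]}(t')$ and the sign $(-1)^{n-a-l}$ from the contraction position, this reproduces the second recursion in \eqref{crecursion}, since $(-1)^{k-a}(-1)^{n-a-l}=(-1)^{n+k-l}$.

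The main obstacle will be the telescoping that collapses the contraction sum into the single ratio appearing in \eqref{crecursion}. A naive evaluation of the contraction of $\beta_i$ with $\alpha_{\delta_l}$ produces contributions summed over the whole Darboux history, because the iteration \eqref{alpharelation} expressing $\rho_{\delta_l}(\lambda)$ in terms of $\rho_{\delta_l}^{[i-1]}(\lambda)$ and the intermediate spectral densities is itself a sum. One has to show that all pieces except the one corresponding to stage $i$ cancel, leaving the compact single-ratio form. This cancellation is expected to follow from the Darboux identity $\Psi_{k+1}^{[i]}=-\Omega(\Phi_i^{[i-1]},\Psi_{k+1}^{[i-1]})/\Phi_i^{[i-1]}$ combined with the Fermionic form of intermediate eigenfunctions from Proposition \ref{ndttd}. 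The $\tau^{-[\vec{\bf n},\vec{\bf k}]}$ identity is then obtained by the symmetric induction on $n$, exchanging $V\leftrightarrow V^*$, $T_d\leftrightarrow T_i$, and $\Phi\leftrightarrow\Psi$.
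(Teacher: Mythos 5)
Your overall strategy coincides with the paper's: the identities \eqref{taunpkfrelation} are obtained by iterating the argument of Proposition \ref{ndtti} on top of Proposition \ref{ndttd}, and \eqref{taunpkf2cases} by induction on $k$ after splitting $\alpha_{k+1}^{*+[\vec{\bf n}]}=\alpha_{k+1}^*+\sum_{i=1}^n\Psi_{k+1}^{[i]}(t')\beta_i$ with $\beta_i\in V^*$ built from the intermediate wave function $\psi^{[i-1]}(t',\lambda)$; this is exactly the paper's formula \eqref{alphastarn}. However, there is a genuine gap in your inductive step. When you anticommute $\beta_i\in V^*$ through $\vec{\alpha}^*_{\vec{\bf k}\setminus\vec{\gamma}}\vec{\alpha}_{\vec{\delta}}\tau^{[0]}$, the Clifford relations produce not only the contraction terms $[\beta_i,\alpha_{\delta_l}]_+$ but also the residual term $\pm\vec{\alpha}^*_{\vec{\bf k}\setminus\vec{\gamma}}\vec{\alpha}_{\vec{\delta}}\bigl(\beta_i\tau^{[0]}\bigr)$, in which $\beta_i$ has passed all the way through and acts on $g|0\rangle$. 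This term is not of the form appearing in \eqref{taunpkf2cases} and does not vanish by Clifford algebra alone; it vanishes because $\beta_i\tau^{[0]}$ is proportional to $(\sigma_{t'}\otimes 1)S(\alpha_{\overrightarrow{\bf i-1}}\tau^{[0]}\otimes\tau^{[0]})$ and $S(\tau^{[0]}\otimes\tau^{[0]})=0$ by the KP bilinear identity \eqref{klmkp}. This is precisely why the paper packages the correction term as $(\sigma_{t'}\otimes 1)\cdot S(\alpha_{\overrightarrow{\bf i-1}}\tau^{[0]}\otimes\,\cdot\,)$ and invokes Lemma \ref{s4beta}; your proposal never accounts for this residual, so as written the expansion would carry uncontrolled extra terms.

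By contrast, the obstacle you single out as the main difficulty, a telescoping over the whole Darboux history, does not arise. The field contracted against $\beta_i$ is the untransformed $\alpha_{\delta_l}={\rm Res}_{\lambda}\rho_{\delta_l}(\lambda)\psi(\lambda)$, and the completeness relation $\sum_l[\psi_l^*,\beta]_+\psi_l=\beta$ for $\beta\in V$ gives at once
\begin{align*}
[\beta_i,\alpha_{\delta_l}]_+=\frac{\sigma_{t'}(\alpha_{\delta_l}\alpha_{\overrightarrow{\bf i-1}}\tau^{[0]})}{\tau_{\vec{\alpha}_{\overrightarrow{\bf i-1}}}(t')}
=\frac{\tau_{\vec{\alpha}_{\delta_l}\cup\vec{\alpha}_{\overrightarrow{\bf i-1}}}(t')}{\tau_{\vec{\alpha}_{\overrightarrow{\bf i-1}}}(t')},
\end{align*}
which is already the single ratio in \eqref{crecursion}; no expansion of $\rho_{\delta_l}$ through \eqref{alpharelation} and no cancellation over intermediate stages is needed. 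Once you add the missing vanishing argument for the pass-through term, the rest of your plan (the sign bookkeeping, the splitting of $H_{k+1,a}$ into $H_{k,a}$ and $\{k+1\}\cup H_{k,a-1}$, and the symmetric treatment of the $-$ case) goes through as in the paper.
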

\begin{proof}
By the similar way in Proposition \ref{ndtti}, one can obtain
\begin{align*}
\tau^{+[\vec{\bf n},\vec{\bf k}]}=\alpha_k^{*+[\vec{\bf n}]}\cdots\alpha_1^{*+[\vec{\bf n}]}\alpha_n\cdots\alpha_1\tau^{[0]}.
\end{align*}
Next we will try to prove the result for $\tau^{+[\vec{\bf n},\vec{\bf k}]}$ by induction on $k$. Firstly by Lemma \ref{transep} one can obtain
\begin{align*}
\rho_j^{*+[\vec{\bf n}]}(\lambda)=\frac{1}{\lambda^n}\rho_j^{*[0]}(\lambda)+
\sum_{i=1}^n\frac{1}{\lambda^{n+1-i}}\Psi_j^{[i]}(t')\psi^{[i-1]}(t',\lambda),
\end{align*}
which implies that
\begin{align}
\alpha_j^{*+[\vec{\bf n}]}=\alpha_j^*+\sum_{l\in\mathbb{Z}}\sum_{i=1}^n
\frac{\Psi_j^{[i]}(t')}{\tau_{\vec{\alpha}_{\overrightarrow{\bf i-1}}}(t')}\sigma_{t'}(\psi_l
\alpha_{\overrightarrow{\bf i-1}}
\tau^{[0]})\cdot\psi^*_l.\label{alphastarn}
\end{align}

When $k=1$, one can find
\begin{align*}
\tau^{+[\vec{\bf n},1]}=\alpha_1^{*+[\vec{\bf n}]}\alpha_{\vec{\bf n}}\tau^{[0]}
=\alpha_1^*\alpha_{\vec{n}}\tau^{[0]}+\sum_{i=1}^n
\frac{\Psi_1^{[i]}(t')}{\tau_{\vec{\alpha}_{\overrightarrow{\bf i-1}}}(t')}(\sigma_{t'}\otimes 1)\cdot S(\alpha_{\overrightarrow{\bf i-1}}\tau^{[0]}\otimes \alpha_{\vec{\bf n}}\tau^{[0]}).
\end{align*}
Next compute $S(\alpha_{\overrightarrow{\bf i-1}}\tau^{[0]}\otimes \alpha_{\vec{\bf n}}\tau^{[0]})$ by Lemma \ref{s4beta} and the fact $S(\tau^{[0]}\otimes\tau^{[0]})=0$, that is,
\begin{align*}
S(\alpha_{\overrightarrow{\bf i-1}}\tau^{[0]}\otimes \alpha_{\vec{\bf n}}\tau^{[0]})=S(1\otimes \alpha_{\vec{\bf n}})\cdot(\alpha_{\overrightarrow{\bf i-1}}\tau^{[0]}\otimes \tau^{[0]})=\sum_{l=1}^n(-1)^{n-l}\alpha_{\{l\}\cup\overrightarrow{\bf i-1}}\tau^{[0]}
\otimes \alpha_{\vec{\bf n}\setminus\{l\}}\tau^{[0]}.
\end{align*}
Therefore by setting $C_{1,1,\vec{\bf n}\setminus\{l\}}^{+[n,1]}
=(-1)^{n-l}\sum_{i=1}^l\Psi_1^{[i]}(t')
\tau_{\vec{\alpha}_l\cup\vec{\alpha}_{\overrightarrow{\bf i-1}}}(t')/\tau_{\vec{\alpha}_{\overrightarrow{\bf i-1}}}(t')$,
\begin{align*}
\tau^{+[\vec{\bf n},1]}=\alpha_1^*\alpha_{\vec{\bf n}}\tau^{[0]}+\sum_{l=1}^nC^{+[n,1]}_{1,1,\vec{\bf n}\setminus\{l\}}
\alpha_{\vec{\bf n}\setminus\{l\}}\tau^{[0]},
\end{align*}
which has the form of (\ref{taunpkf2cases}) with $C^{+[n,1]}_{0,0,\vec{\bf n}}=1$.

If $\tau^{+[\vec{\bf n}, \vec{\bf k}]}$ in (\ref{taunpkf2cases}) is correct for $k$, then apply $\alpha_{k+1}^{*+[\vec{\bf n}]}$ on $\tau^{+[\vec{\bf n}, \vec{\bf k}]}$,
\begin{align*}
\tau^{+[\vec{\bf n}, \overrightarrow{\bf k+1}]}=\sum_{a=0}^k\sum_{\vec{\gamma}\in H_{k,a}}\sum_{\vec{\delta}\in H_{n,n-a}}C^{+[n,k]}_{a,\vec{\gamma},\vec{\delta}}
\Big(\alpha_{k+1}^{*[n]}\alpha^*_{\vec{\bf k}\setminus\vec{\gamma}}
\alpha_{\vec{\delta}}\tau^{[0]}\Big).
\end{align*}
Further by Lemma \ref{s4beta},
\begin{align*}
&\alpha_{k+1}^{*[n]}\alpha^*_{\vec{\bf k}\setminus\vec{\gamma}}
\alpha_{\vec{\delta}}\tau^{[0]}=\alpha_{k+1}^*\alpha^*_{\vec{\bf k}\setminus\vec{\gamma}}
\alpha_{\vec{\delta}}\tau^{[0]}+\sum_{i=1}^n
\frac{\Psi_{k+1}^{[i]}(t')}{\tau_{\vec{\alpha}_{\overrightarrow{\bf i-1}}}(t')}(\sigma_{t'}\otimes 1)\cdot S(\alpha_{\overrightarrow{\bf i-1}}\tau^{[0]}\otimes \alpha^*_{\vec{\bf k}\setminus\vec{\gamma}}
\alpha_{\vec{\delta}}\tau^{[0]})\\
=&\alpha_{k+1}^*\alpha^*_{\vec{\bf k}\setminus\vec{\gamma}}
\alpha_{\vec{\delta}}\tau^{[0]}+\sum_{i=1}^n
\sum_{l=1}^{n-a}(-1)^{n+k-l}
\frac{\Psi_{k+1}^{[i]}(t')}{\tau_{\vec{\alpha}_{\overrightarrow{\bf i-1}}}(t')}
\tau_{\vec{\alpha}_{\{\delta_l\}\cup\overrightarrow{i-1}}}(t') \alpha^*_{\vec{\bf k}\setminus\vec{\gamma}}
\alpha_{\vec{\delta}\setminus\{\delta_l\}}\tau^{[0]}
\end{align*}
After inserting the expression of $\alpha_{k+1}^{*[n]}\alpha^*_{\vec{\bf k}\setminus\vec{\gamma}}
\alpha_{\vec{\delta}}\tau^{[0]}$ into $\tau^{+[\vec{\bf n}, \overrightarrow{\bf k+1}]}$ and noting that $H_{k+1,a}$ can be divided into two groups: $H_{k,a}$ and $\{k+1\}\cup H_{k,a-1}$, one can obtain that (\ref{taunpkf2cases}) for $\tau^{+[\vec{\bf n}, \vec{\bf k}]}$ is also correct for $k+1$ with the recursion relation (\ref{crecursion}).

The results of $\tau^{-[\vec{\bf n}, \vec{\bf k}]}$ can be proved in the similar method.
\end{proof}
\noindent{\bf Remark}: Here we would like to point out that the structures of $\tau^{\pm[\vec{\bf n},\vec{\bf k}]}$ should be connected with the product of $\beta_n\cdots\beta_1\beta_k^*\cdots\beta_1^*$ for $\beta_i\in V$ and $\beta_j^*\in V^*$, that is,
\begin{align}
\beta_n\cdots\beta_1\beta_k^*\cdots\beta_1^*
=&\sum_{a=0}^n\sum_{\vec{\gamma}\in H_{k,a}}\sum_{\eta\in S_n}(-1)^{sgn\eta}[\beta_{\eta(a)},\beta^*_{\gamma_a}]_+\cdots
[\beta_{\eta(1)},\beta^*_{\gamma_1}]_+\nonumber\\
&\beta^*_{\vec{n}\setminus\vec{\gamma}}\beta_{\eta(n)}\cdots
\beta_{\eta(a+1)}(-1)^{ak-|\gamma|+(n-a)k}/(n-a)!.
\end{align}
Here $S_n$ is the $n$-th permutation group.
\begin{corollary}\label{coralpdown}
For any $\vec{\gamma}\in H_{k,a}$ and $\vec{\delta}\in H_{n,b}$, one has the following relations
\begin{align*}
\alpha_{\vec{\gamma}}^{*+[\vec{\bf n}]}\tau_{\vec{\alpha}_{\vec{\delta}}}
=\alpha_{\vec{\gamma}}^{*+[\vec{\delta}]}
\tau_{\vec{\alpha}_{\vec{\delta}}},\quad  \alpha_{\vec{\gamma}}^{-[\vec{\bf n}]}\tau_{\vec{\alpha}^*_{\vec{\delta}}}
=\alpha_{\vec{\gamma}}^{-[\vec{\delta}]}
\tau_{\vec{\alpha}^*_{\vec{\delta}}}.
\end{align*}
\end{corollary}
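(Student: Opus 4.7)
The plan is to prove both identities by induction on $n - b = |\vec{\bf n}| - |\vec{\delta}|$, the number of Darboux steps in $\vec{\bf n}$ that are absent from $\vec{\delta}$. The base case $\vec{\bf n} = \vec{\delta}$ is immediate since the two sides coincide by definition. For the inductive step of the first identity, I would select an index $m \in \vec{\bf n}\setminus\vec{\delta}$ and use the commutativity of Darboux transformations (Lemma \ref{commkpdt}) to rearrange the chain $T_d^{[\vec{\bf n}]}$ so that the step involving $\Phi_m$ is performed last. Writing the resulting length-$(n{-}1)$ subchain as $\vec{\bf n}' = \vec{\bf n}\setminus\{m\}$, the single-step extension of the derivation leading to (\ref{alphastarn}) yields the decomposition
\[
\alpha_j^{*+[\vec{\bf n}]} = \alpha_j^{*+[\vec{\bf n}']} + \sum_{l\in\mathbb{Z}} \frac{\tilde{\Psi}_j(t')}{\tau^{+[\vec{\bf n}']}(t')}\, \sigma_{t'}\bigl(\psi_l\,\tau^{+[\vec{\bf n}']}\bigr)\,\psi_l^*,
\]
where $\tilde{\Psi}_j$ denotes $\Psi_j$ after passing through the full chain.

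The heart of the argument is to show that $\alpha_{\vec{\gamma}}^{*+[\vec{\bf n}]}$ and $\alpha_{\vec{\gamma}}^{*+[\vec{\bf n}']}$ give the same result when applied to $\tau_{\vec{\alpha}_{\vec{\delta}}}$. Peeling off the outermost $\alpha^{*+[\vec{\bf n}]}$ of $\alpha_{\vec{\gamma}}^{*+[\vec{\bf n}]}$ one at a time and substituting the decomposition above, each discrepancy takes the form $(\sigma_{t'}\otimes 1)\,S(\tau^{+[\vec{\bf n}']}\otimes\tilde{\tau})$, where $\tilde{\tau}$ is the intermediate KP tau function produced by the remaining $\alpha^*$'s acting on $\tau_{\vec{\alpha}_{\vec{\delta}}}$. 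By Corollary \ref{cortautrankp}, $\tilde{\tau}$ still lies in the $GL_\infty$-orbit and has charge at most $b$, while $\tau^{+[\vec{\bf n}']}$ has charge $n-1 \geq b$ (since $m\notin\vec{\delta}$ forces $\vec{\delta}\subseteq\vec{\bf n}'$). The bilinear relation (\ref{klmkp}) therefore forces $S(\tau^{+[\vec{\bf n}']}\otimes\tilde{\tau}) = 0$, so every discrepancy vanishes and $\alpha_{\vec{\gamma}}^{*+[\vec{\bf n}]}\tau_{\vec{\alpha}_{\vec{\delta}}} = \alpha_{\vec{\gamma}}^{*+[\vec{\bf n}']}\tau_{\vec{\alpha}_{\vec{\delta}}}$. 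Closing with the induction hypothesis completes the first identity.

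The second identity is proved by the symmetric argument with $T_d\leftrightarrow T_i$, $\alpha\leftrightarrow\alpha^*$, $\vec{\bf n}$ now indexing a $T_i$-chain of adjoint eigenfunctions and $\vec{\delta}$ the sub-chain, using Proposition \ref{ndtti} and the dual form of the bilinear relation in the negative-charge sector. The main technical obstacle will be the rearrangement step: Lemma \ref{commkpdt} only guarantees equality of final Lax operators up to the equivalence $\sim$, whereas the decomposition above references specific $t'$-evaluations of tau functions along the chain. This is reconcilable because the coefficients $\Psi_j^{[i]}(t')/\tau_{\vec{\alpha}_{\overrightarrow{\bf i-1}}}(t')$ in (\ref{alphastarn}) can be recast via SEPs (Proposition \ref{sepexpprop}), which depend only on the $\sim$-class of each intermediate tau function; nevertheless, establishing the one-step extension formula rigorously under this equivalence is the delicate point of the argument.
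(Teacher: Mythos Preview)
Your overall strategy is workable, but the key step has a genuine gap. The relation (\ref{klmkp}) asserts $S(g|l\rangle\otimes g|l'\rangle)=0$ only when both factors come from the \emph{same} group element $g\in GL_\infty$; it does \emph{not} say that $S(u\otimes v)=0$ whenever $u\in\mathcal{F}_l$ and $v\in\mathcal{F}_{l'}$ each lie in the orbit with $l\geq l'$ (in Grassmannian language that would mean every $l'$-plane lies inside every $l$-plane, which is false). So knowing via Corollary~\ref{cortautrankp} that $\tilde\tau$ is a KP tau function is not enough to invoke (\ref{klmkp}). The vanishing you need follows instead from Lemma~\ref{s4beta} together with $S(\tau^{[0]}\otimes\tau^{[0]})=0$: writing $\tilde\tau=\beta^*_{\cdots}\alpha_{\vec{\delta}}\tau^{[0]}$ with each $\beta^*_i\in V^*$, one computes that every surviving term of $S\bigl(\alpha_{\vec{\bf n}'}\tau^{[0]}\otimes\tilde\tau\bigr)$ has first tensor factor $\alpha_{\vec{\bf n}'}\alpha_{\delta_l}\tau^{[0]}$, which vanishes because $\delta_l\in\vec{\delta}\subseteq\vec{\bf n}'$ forces $\alpha_{\delta_l}^2=0$. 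Once you replace your appeal to (\ref{klmkp}) by this fermionic argument, the induction on $n-b$ goes through.

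For comparison, the paper organizes the proof differently and avoids the one-step peel. It first treats the case $|\vec\gamma|=1$ with $\vec\delta=\vec{\bf b}=(b,b{-}1,\ldots,1)$, applying the full expansion (\ref{alphastarn}) at once and truncating the sum $\sum_{l=1}^n$ to $\sum_{l=1}^b$ using exactly the fermionic vanishing above; it then passes to general $\vec\delta\in H_{n,b}$ by the commutativity $\alpha_j^{*+[\vec{\bf n}]}=\alpha_j^{*+[(\vec{\bf n}\setminus\vec\delta)\cup\vec\delta]}$ from Lemma~\ref{commkpdt}; and finally inducts on $a=|\vec\gamma|$ by anticommuting the outermost $\alpha_{\gamma_{a+1}}^{*+[\vec{\bf n}]}$ past $\alpha_{\vec\gamma\setminus\{\gamma_{a+1}\}}^{*+[\vec\delta]}$ (both in $V^*$) and applying the $a=1$ case. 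This route sidesteps the need for your one-step decomposition formula and the associated $\sim$ versus $\approx$ bookkeeping you flag as the technical obstacle.
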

\begin{proof}
We firstly prove the case $\vec{\gamma}=\{j\}\in H_{k,1}$.
Assume $\vec{\delta}=\vec{\bf b}=(b,b-1,\cdots,1)$, then by using (\ref{alphastarn}),
\begin{align*}
\alpha_j^{*+[\vec{\bf n}]}\tau_{\alpha_{\vec{\bf b}}}=
\alpha_j^*\alpha_{\vec{\bf b}}\tau^{[0]}+\sum_{l=1}^n
\frac{\Psi_j^{[l]}(t')}{\tau^{[l-1]}(t')}\left(\langle l-1|e^{H(t')}\otimes 1\right)\cdot S(\alpha_{\overrightarrow{l-1}}\tau^{[0]}\otimes \alpha_{\vec{\bf b}}\tau^{[0]}).
\end{align*}
Note that $S(\alpha_{\overrightarrow{l-1}}\tau^{[0]}\otimes \alpha_{\vec{\bf b}}\tau^{[0]})=0$ for $l\geq b+1$. So the sum $\sum_{l=1}^n$ will become into $\sum_{l=1}^b$, which can lead to $\alpha_j^{*+[\vec{\bf n}]}\tau_{\vec{\alpha}_{\vec{\bf b}}}=\alpha_j^{*+[\vec{\bf b}]}
\tau_{\vec{\alpha}_{\vec{\bf b}}}$. Therefore for general $\vec{\delta}\in H_{n,b}$, one has
\begin{align*}
\alpha_{j}^{*+[\vec{\bf n}]}\tau_{\vec{\alpha}_{\vec{\bf n}\setminus\vec{\delta}}}=\alpha_{j}^{*+[(\vec{\bf n}\setminus\vec{\delta})\cup\vec{\delta}]}
\tau_{\vec{\alpha}_{\vec{\delta}}}
=\alpha_{j}^{*+[\vec{\delta}]}
\tau_{\vec{\alpha}_{\vec{\delta}}}.
\end{align*}
Here we have used $\alpha_{j}^{*+[\vec{\bf n}]}=\alpha_{j}^{*+[(\vec{\bf n}\setminus\vec{\delta})\cup\vec{\delta}]}$ derived from $\rho_{j}^{*+[\vec{\bf n}]}=\rho_{j}^{*+[(\vec{\bf n}\setminus\vec{\delta})\cup\vec{\delta}]}$ by Lemma \ref{commkpdt} and $\tau^{+[\vec{\bf n}]}\approx\tau^{+[(\vec{\bf n}\setminus\vec{\delta})\cup\vec{\delta}]}$. Next assume we have proved $\alpha_{\vec{\gamma}}^{*+[\vec{\bf n}]}\tau_{\vec{\alpha}_{\vec{\delta}}}
=\alpha_{\vec{\gamma}}^{*+[\vec{\delta}]}
\tau_{\vec{\alpha}_{\vec{\delta}}}$ for any $\vec{\gamma}\in H_{k,a}$. Then if $\vec{\gamma}=(\gamma_{a+1},\gamma_a,\cdots,\gamma_1)\in H_{k,a+1}$,
\begin{align*}
\alpha_{\vec{\gamma}}^{*+[\vec{\bf n}]}\tau_{\vec{\alpha}_{\vec{\delta}}}
=&\alpha_{\gamma_{a+1}}^{*+[\vec{\bf n}]}\alpha_{\vec{\gamma}\setminus\{\gamma_{a+1}\}}
^{*+[\vec{\bf n}]}\tau_{\vec{\alpha}_{\vec{\delta}}}
=\alpha_{\gamma_{a+1}}^{*+[\vec{\bf n}]}\alpha_{\vec{\gamma}\setminus\{\gamma_{a+1}\}}^{*+[\vec{\delta}]}
\tau_{\vec{\alpha}_{\vec{\delta}}}=(-1)^a\alpha_{\vec{\gamma}\setminus\{\gamma_{a+1}\}}^{*+[\vec{\delta}]}
\alpha_{\gamma_{a+1}}^{*+[\vec{\bf n}]}\tau_{\vec{\alpha}_{\vec{\delta}}}\\
=&(-1)^a\alpha_{\vec{\gamma}\setminus\{\gamma_{a+1}\}}^{*+[\vec{\delta}]}
\alpha_{\gamma_{a+1}}^{*+[\vec{\delta}]}\tau_{\vec{\alpha}_{\vec{\delta}}}=\alpha_{\vec{\gamma}}^{*+[\vec{\delta}]}
\tau_{\vec{\alpha}_{\vec{\delta}}}
\end{align*}

Similarly, one can prove $ \alpha_{\vec{\gamma}}^{-[\vec{\bf n}]}\tau_{\vec{\alpha}^*_{\vec{\delta}}}
=\alpha_{\vec{\gamma}}^{-[\vec{\delta}]}
\tau_{\vec{\alpha}^*_{\vec{\delta}}}$.
\end{proof}
\begin{corollary}\label{coralpup}
$\alpha_{\vec{\bf k}}^{*+[\vec{\bf n}]}{\alpha}_{\overrightarrow{\bf n_1+n}}$ and $\alpha_{\vec{\bf n}}^{-[\vec{\bf k}]}{\alpha}^*_{\overrightarrow{\bf k_1+k}}$ act on $\tau^{[0]}$ in the way below
\begin{align*}
&\alpha_{\vec{\bf k}}^{*+[\vec{\bf n}]}{\alpha}_{\overrightarrow{\bf n_1+n}}\tau^{[0]}
=\sum_{a=0}^k\sum_{\vec{\gamma}\in H_{k,a}}\sum_{\vec{\delta}\in n+H_{n_1,n_1-a}}\mathcal{C}^{+[k,n_1,n]}_{a,\vec{\gamma},\vec{\delta}}\alpha
^{*+[\vec{\delta}\cup\vec{\bf n}]}_{\vec{\bf k}\setminus\vec{\gamma}}\alpha_{\vec{\delta}\cup \vec{\bf n}}
\tau^{[0]},\\
&\alpha_{\vec{\bf n}}^{-[\vec{\bf k}]}{\alpha}^*_{\overrightarrow{\bf k_1+k}}\tau^{[0]}
=\sum_{a=0}^n\sum_{\vec{\gamma}\in H_{n,a}}\sum_{\vec{\delta}\in k+H_{k_1,k_1-a}}\mathcal{C}^{-[n,k_1,k]}_{a,\vec{\gamma},\vec{\delta}}\alpha
^{-[\vec{\delta}\cup\vec{\bf k}]}_{\vec{\bf n}\setminus\vec{\gamma}}\alpha^*_{\vec{\delta}\cup \vec{\bf k}}
\tau^{[0]}.
\end{align*}
Here $\mathcal{C}^{+[k,n_1,n]}_{a,\vec{\gamma},\vec{\delta}}$ and $\mathcal{C}^{-[n,k_1,k]}_{a,\vec{\gamma},\vec{\delta}}$ are some constant independent of $t$, which can be fixed by with $\mathcal{C}^{+[k,n_1,n]}_{0,0,\overrightarrow{\bf n_1}}=\mathcal{C}^{-[n,k_1,k]}_{0,0,\overrightarrow{\bf k_1}}=1$ and the recursion relations below
\begin{align*}
\mathcal{C}^{+[k+1,n_1,n]}_{a,\vec{\gamma},\vec{\delta}}
=\mathcal{C}^{+[k,n_1,n]}_{a,\vec{\gamma},\vec{\delta}},\quad\mathcal{C}^{+[k+1,n_1,n]}_{a+1,\{k+1\}\cup\vec{\gamma},\vec{\delta}\setminus\{\delta_l\}}
=\mathcal{C}^{+[k,n_1,n]}_{a,\vec{\gamma},\vec{\delta}}\sum_{i=1}^l(-1)^{n_1+k-l-1}
\Psi_{k+1}^{+[\vec{\delta}_{\vec{\bf i}}\cup\vec{\bf n}]}(t')\frac{\tau^{+[\delta_l\cup\vec{\delta}_{\overrightarrow{\bf i-1}}\cup \vec{\bf n}]}(t')}{\tau^{+[\vec{\delta}_{\overrightarrow{\bf i-1}}\cup\vec{\bf n}]}(t')},\\
\mathcal{C}^{-[n+1,k_1,k]}_{a,\vec{\xi},\vec{\eta}}
=\mathcal{C}^{-[n,k_1,k]}_{a,\vec{\xi},\vec{\eta}},\quad\mathcal{C}^{-[n+1,k_1,k]}_{a+1,\{n+1\}\cup\vec{\xi},\vec{\eta}\setminus\{\eta_l\}}
=\mathcal{C}^{-[n,k_1,k]}_{a,\vec{\xi},\vec{\eta}}\sum_{i=1}^l(-1)^{k_1+n-l-1}
\Phi_{n+1}^{-[\vec{\eta}_{\vec{\bf i}}\cup\vec{\bf n}]}(t')\frac{\tau^{-[\eta_l\cup\vec{\eta}_{\overrightarrow{\bf i-1}}\cup \vec{\bf n}]}(t')}{\tau^{-[\vec{\eta}_{\overrightarrow{\bf i-1}}\cup\vec{\bf n}]}(t')},
\end{align*}
with $\vec{\gamma}\in H_{k,a}$, $\vec{\delta}\in n+H_{n_1,n_1-a}$, $\vec{\xi}\in H_{n,a}$, $\vec{\eta}\in k+H_{k_1,k_1-a}$.
\end{corollary}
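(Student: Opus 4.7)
The plan is to prove the first formula by induction on $k$; the second formula follows by the analogous argument, using the $T_d\leftrightarrow T_i$ (equivalently $\psi\leftrightarrow\psi^{*}$) duality with induction on $n$. For the base case $k=0$, the left side $\alpha^{*+[\vec{\bf n}]}_{\vec{\bf 0}}\alpha_{\overrightarrow{\bf n_1+n}}\tau^{[0]}$ equals $\alpha_{\overrightarrow{\bf n_1+n}}\tau^{[0]}$, while on the right side only the single term with $a=0$, $\vec{\gamma}=\emptyset$, $\vec{\delta}=n+\overrightarrow{\bf n_1}$ survives, contributing $\mathcal{C}^{+[0,n_1,n]}_{0,0,\overrightarrow{\bf n_1}}\alpha_{\overrightarrow{\bf n_1+n}}\tau^{[0]}=\alpha_{\overrightarrow{\bf n_1+n}}\tau^{[0]}$, so the two sides agree.

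For the inductive step, assume the formula holds for $k$ and apply $\alpha^{*+[\vec{\bf n}]}_{k+1}$ from the left to both sides. Each summand on the right becomes $\alpha^{*+[\vec{\bf n}]}_{k+1}\alpha^{*+[\vec{\delta}\cup\vec{\bf n}]}_{\vec{\bf k}\setminus\vec{\gamma}}\alpha_{\vec{\delta}\cup\vec{\bf n}}\tau^{[0]}$. Moving $\alpha^{*+[\vec{\bf n}]}_{k+1}$ past the $k-a$ factors of $\alpha^{*+[\vec{\delta}\cup\vec{\bf n}]}_{\vec{\bf k}\setminus\vec{\gamma}}$ yields a sign $(-1)^{k-a}$, leaving $\alpha^{*+[\vec{\bf n}]}_{k+1}\alpha_{\vec{\delta}\cup\vec{\bf n}}\tau^{[0]}$, which we expand by formula (\ref{alphastarn}) combined with Lemma \ref{s4beta} applied to $S(\alpha_{\overrightarrow{\bf i-1}}\tau^{[0]}\otimes\alpha_{\vec{\delta}\cup\vec{\bf n}}\tau^{[0]})$; the tail term $(-1)^{n+n_1-a}(1\otimes\alpha_{\vec{\delta}\cup\vec{\bf n}})S$ vanishes on $\alpha_{\overrightarrow{\bf i-1}}\tau^{[0]}\otimes\tau^{[0]}$ by the bilinear identity (\ref{klmkp}). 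The resulting sum over removed chain elements splits into two groups: the $\vec{\bf n}$-group (elements lying in the $\vec{\bf n}$-part of $\vec{\delta}\cup\vec{\bf n}$) combines with the residual $\alpha^{*}_{k+1}\alpha_{\vec{\delta}\cup\vec{\bf n}}\tau^{[0]}$-term to reconstitute $\alpha^{*+[\vec{\delta}\cup\vec{\bf n}]}_{k+1}\alpha_{\vec{\delta}\cup\vec{\bf n}}\tau^{[0]}$ through Corollary \ref{coralpdown} and the chain commutativity of Lemma \ref{commkpdt}, which is the trivial recursion $\mathcal{C}^{+[k+1,n_1,n]}_{a,\vec{\gamma},\vec{\delta}}=\mathcal{C}^{+[k,n_1,n]}_{a,\vec{\gamma},\vec{\delta}}$; the $\vec{\delta}$-group generates new summands with $\{k+1\}\cup\vec{\gamma}$ replacing $\vec{\gamma}$ and $\vec{\delta}\setminus\{\delta_l\}$ replacing $\vec{\delta}$, and identifying $\sigma_{t'}(\alpha_{\{\delta_l\}\cup\vec{\delta}_{\overrightarrow{\bf i-1}}\cup\vec{\bf n}}\tau^{[0]})$ with $\tau^{+[\{\delta_l\}\cup\vec{\delta}_{\overrightarrow{\bf i-1}}\cup\vec{\bf n}]}(t')$ via Proposition \ref{ndttd} yields exactly the stated recursion for $\mathcal{C}^{+[k+1,n_1,n]}_{a+1,\{k+1\}\cup\vec{\gamma},\vec{\delta}\setminus\{\delta_l\}}$.

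The hardest part will be tracking the sign $(-1)^{n_1+k-l-1}$ and verifying the $\vec{\bf n}$-group collapse: combining the $(-1)^{k-a}$ from the initial anticommutation with the $(-1)^{(n+n_1-a)-j}$ from Lemma \ref{s4beta} and the sign from moving $\psi^{*}_l$ through the remaining $\alpha$-factors requires careful bookkeeping of charge parities. The $\vec{\bf n}$-group collapse itself rests on the observation that, ordering the combined chain as $\vec{\bf n}$-first-then-$\vec{\delta}$, the expansion (\ref{alphastarn}) of $\alpha^{*+[\vec{\delta}\cup\vec{\bf n}]}_{k+1}$ is the $\vec{\bf n}$-sum of $\alpha^{*+[\vec{\bf n}]}_{k+1}$ plus a $\vec{\delta}$-indexed tail; one must check that this extra tail is exactly what the $\vec{\delta}$-group above absorbs when paired against $\alpha_{\vec{\delta}\cup\vec{\bf n}}\tau^{[0]}$. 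A useful sanity check is the specialization $n_1=0$, which must reduce to the inductive step of Proposition \ref{kptdtitau}. The second formula is then proved identically with $\alpha\leftrightarrow\alpha^{*}$ and $\vec{\bf n}\leftrightarrow\vec{\bf k}$, using the $T_i$-analogue of (\ref{alphastarn}) derivable from Lemma \ref{transep}.
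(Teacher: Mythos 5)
Your proposal is correct and follows essentially the same route as the paper's proof: induction on $k$, with the inductive step carried out by comparing the expansion of $\alpha^{*+[\vec{\delta}\cup\vec{\bf n}]}_{k+1}$ with that of $\alpha^{*+[\vec{\bf n}]}_{k+1}$ (the paper phrases this as a relation between $\rho^{*+[\vec{\delta}\cup\vec{\bf n}]}_{k+1}$ and $\rho^{*+[\vec{\bf n}]}_{k+1}$, which is exactly your ``$\vec{\bf n}$-sum plus $\vec{\delta}$-indexed tail'' decomposition), applying the tail to $\alpha_{\vec{\delta}\cup\vec{\bf n}}\tau^{[0]}$ to produce the $\vec{\delta}\setminus\{\delta_l\}$ terms, and invoking Corollary \ref{coralpdown} to downgrade the chain superscripts, which yields the stated recursion for $\mathcal{C}^{+[k+1,n_1,n]}$. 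The second identity is likewise dispatched by the $T_d\leftrightarrow T_i$ duality, as in the paper.
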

\begin{proof}
Here we only prove the first one, since the second is almost the same. Firstly, the result of $\alpha_{\vec{\bf k}}^{*+[\vec{\bf n}]}{\alpha}_{\overrightarrow{\bf n_1+n}}\tau^{[0]}$ for $k=0$ is correct. So we can assume it holds for $k$. Then by the similar way in Proposition \ref{kptdtitau} for $\vec{\delta}\in n+H_{n_1,n_1-a}$,
\begin{align*}
&\rho_{k+1}^{*+[\vec{\delta}\cup\vec{\bf n}]}=
\frac{1}{\lambda^{n_1-a}}\rho_{k+1}^{*+[\vec{\bf n}]}+\sum_{i=1}^{n_1-a}\frac{1}{\lambda^{n_1-a+1-i}}\Psi_{k+1}^{+[\vec{\delta}_{\vec{\bf i}}\cup\vec{\bf n}]}(t')\psi^{+[\vec{\delta}_{\overrightarrow{\bf i-1}}\cup\vec{\bf n}]}(t',\lambda),\\
&\alpha_{k+1}^{*+[\vec{\delta}\cup\vec{\bf n}]}=
\alpha_{k+1}^{*+[\vec{\bf n}]}+\sum_{l\in\mathbb{Z}}\sum_{i=1}^{n_1-a}
\frac{\Psi_{k+1}^{+[\vec{\delta}_{\vec{\bf i}}\cup\vec{\bf n}]}(t')}{\tau_{\vec{\alpha}_{\delta_{\overrightarrow{\bf i-1}}}\cup\alpha_{\vec{\bf n}}}(t')}\sigma_{t'}(\psi_l\alpha_{\delta_{\overrightarrow{\bf i-1}}}\cup\alpha_{\vec{\bf n}}\tau^{[0]})\psi_l^*,
\end{align*}
and
\begin{align*}
\alpha_{k+1}^{*+[\vec{\bf n}]}\alpha_{\vec{\delta}\cup \vec{\bf n}}\tau^{[0]}=\alpha_{k+1}^{*+[\vec{\delta}\cup\vec{\bf n}]}\alpha_{\vec{\delta}\cup \vec{\bf n}}\tau^{[0]}+\sum_{l=1}^{n_1-a}\tilde{C}_{l}\alpha
_{\vec{\delta}\setminus\{\delta_l\}}\alpha_{\vec{\bf n}}\tau^{[0]},
\end{align*}
with $\tilde{C}_l=\sum_{i=1}^l(-1)^{n_1-a-l-1}\Psi_{k+1}^{+[\delta_{\vec{\bf i}}\cup\vec{\bf n}]}(t')\frac{\tau_{\vec{\alpha}_{\delta_l}\cup\vec{\alpha}_{\delta_{\overrightarrow{\bf i-1}}}\cup \vec{\alpha}_{\vec{\bf n}}}(t')}{\tau_{\vec{\alpha}_{\delta_{\overrightarrow{\bf i-1}}}\cup\vec{\alpha}_{\vec{\bf n}}}(t')}$ and $\vec{\delta}=(\delta_{n_1-a},\delta_{n_1-a-1},\cdots,\delta_1)\in n+H_{n_1,n_1-a}$. Therefore
\begin{align*}
&\alpha_{\overrightarrow{\bf k+1}}^{*+[\vec{\bf n}]}{\alpha}_{\overrightarrow{\bf n_1+n}}\tau^{[0]}
=\sum_{a=0}^k\sum_{\vec{\gamma}\in H_{k,a}}\sum_{\vec{\delta}\in n+H_{n_1,n_1-a}}\mathcal{C}^{+[k,n_1,n]}_{a,\vec{\gamma},\vec{\delta}}\alpha_{k+1}^{*+[\vec{\bf n}]}\alpha
^{*+[\vec{\delta}\cup\vec{\bf n}]}_{\vec{\bf k}\setminus\vec{\gamma}}\alpha_{\vec{\delta}\cup \vec{\bf n}}
\tau^{[0]}\nonumber\\
=&\sum_{a,\vec{\gamma},\vec{\delta}}\mathcal{C}^{+[k,n_1,n]}_{a,\vec{\gamma},\vec{\delta}}
\Big(\alpha_{k+1}^{*+[\vec{\delta}\cup\vec{\bf n}]}\alpha
^{*+[\vec{\delta}\cup\vec{\bf n}]}_{\vec{\bf k}\setminus\vec{\gamma}}\alpha_{\vec{\delta}\cup \vec{\bf n}}+(-1)^{k-a}\sum_{l=1}^{n_1-a}\tilde{C}_{l}\alpha
^{*+[(\vec{\delta}\setminus{\{\delta_l\}})\cup\vec{\bf n}]}_{\vec{\bf k}\setminus\vec{\gamma}}\alpha
_{\vec{\delta}\setminus\{\delta_l\}}\alpha_{\vec{\bf n}}\Big)\tau^{[0]},
\end{align*}
which is just the result for $k+1$.
Here we have used $\alpha
^{*+[\vec{\delta}\cup\vec{\bf n}]}_{\vec{\bf k}\setminus\vec{\gamma}}\alpha
_{\vec{\delta}\setminus\{\delta_l\}}\alpha_{\vec{\bf n}}\tau^{[0]}=\alpha
^{*+[(\vec{\delta}\setminus{\{\delta_l\}})\cup\vec{\bf n}]}_{\vec{\bf k}\setminus\vec{\gamma}}\alpha
_{\vec{\delta}\setminus\{\delta_l\}}\alpha_{\vec{\bf n}}\tau^{[0]}$ derived by Corollary \ref{coralpdown}.
\end{proof}
\noindent {\bf Remark:} Denote $\mathcal{O}_{n,k}^{\pm}$ as the linear combination of the transformed tau functions $\tau^{\pm[\overrightarrow{\bf n_1},\overrightarrow{\bf k_1}]}$ for $n_1\leq n$ and $k_1\leq k$, under no more than $n$-step $T_d$ and no more than $k$-step $T_i$ with the the generating eigenfunctions and adjoint eigenfunctions in $T^{\pm[\vec{\bf n},\vec{\bf k}]}$, then
\begin{align*}
&\alpha_{\vec{\bf k}}^{*+[\vec{\bf n}]}{\alpha}_{\overrightarrow{\bf n_1+n}}\tau^{[0]}
=\alpha_{\vec{\bf k}}^{*+[\overrightarrow{\bf n+n_1}]}{\alpha}_{\overrightarrow{\bf n_1+n}}\tau^{[0]}+\mathcal{O}_{n_1+n-1,k-1}^+,\\
&\alpha_{\vec{\bf n}}^{-[\vec{\bf k}]}{\alpha}^*_{\overrightarrow{\bf k_1+k}}\tau^{[0]}
=\alpha_{\vec{\bf n}}^{-[\overrightarrow{\bf k+k_1}]}{\alpha}^*_{\overrightarrow{\bf k_1+k}}\tau^{[0]}+\mathcal{O}_{n-1,k+k_1-1}^-.
\end{align*}
Further it is important to note that by Lemma \ref{s4beta},
\begin{align}
S(\mathcal{O}_{n',k'}^\pm\otimes\mathcal{O}_{n,k}^\pm)
=\mathcal{O}_{n',k'-1}^\pm\otimes\mathcal{O}_{n,k+1}^\pm
+\mathcal{O}_{n'+1,k'}^\pm\otimes\mathcal{O}_{n-1,k}^\pm.\label{sootimeso}
\end{align}
\subsection{The bilinear equations in the Darboux chains of the KP hierarchy}\label{subsectbilinkp}
After the preparation above, now we can consider the generalization of the results about $n$-step $T_d$ showed in the bilinear equations (\ref{bilinnmkp}) of the $(l-l')$-th modified KP hierarchy. Firstly by Lemma \ref{slbetaotimesbeta}, Corollary \ref{coralpdown} and Corollary \ref{coralpup}, one can easily obtain the following theorem.
\begin{theorem}\label{stnnkkotnk}
Given $n'\geq n\geq 0$, $k'\geq k\geq 0$,
\begin{align}
S^l\Big(\tau^{+[\overrightarrow{\bf n'},\overrightarrow{\bf k'}]}\otimes\tau^{+[\vec{\bf n},\vec{\bf k}]}\Big)=l!\sum_{\vec{\gamma}\in k+H_{k'-k,l}}(-1)^{lk'-|\vec{\gamma}|}\tau^{+[\overrightarrow{\bf n'},\overrightarrow{\bf k'}\setminus\vec{\gamma}]}\otimes\tau^{+[\vec{\bf n},\vec{\gamma}\cup\vec{\bf k}]},\\
S^l\Big(\tau^{-[\vec{\bf n},\vec{\bf k}]}\otimes\tau^{-[\overrightarrow{\bf n'},\overrightarrow{\bf k'}]}\Big)=l!\sum_{\vec{\gamma}\in n+H_{n'-n,l}}(-1)^{ln'-|\vec{\gamma}|}\tau^{-[\vec{\gamma}\cup\vec{\bf n},\vec{\bf k}]}\otimes\tau^{-[\overrightarrow{\bf n'}\setminus\vec{\gamma},\overrightarrow{\bf k'}]}.
\end{align}
In particular,
\begin{align}
&S^{k'-k}\Big(\tau^{+[\overrightarrow{\bf n'},\overrightarrow{\bf k'}]}\otimes\tau^{+[\vec{\bf n},\vec{\bf k}]}\Big)=(-1)^{\frac{(k'-k)(k'-k-1)}{2}}(k'-k)!\tau^{+[\overrightarrow{\bf n'},\vec{\bf k}]}\otimes\tau^{+[\vec{\bf n},\overrightarrow{\bf k'}]},\label{skk}\\
&S^{n'-n}\Big(\tau^{-[\vec{\bf n},\vec{\bf k}]}\otimes\tau^{-[\overrightarrow{\bf n'},\overrightarrow{\bf k'}]}\Big)=(-1)^{\frac{(n'-n)(n'-n-1)}{2}}(n'-n)!\tau^{-[\overrightarrow{\bf n'},\vec{\bf k}]}\otimes\tau^{-[\vec{\bf n},\overrightarrow{\bf k'}]},\label{snn}\\
&S^{k'-k+1}\Big(\tau^{+[\overrightarrow{\bf n'},\overrightarrow{\bf k'}]}\otimes\tau^{+[\vec{\bf n},\vec{\bf k}]}\Big)=S^{n-n'+1}\Big(\tau^{-[\vec{\bf n},\vec{\bf k}]}\otimes\tau^{-[\overrightarrow{\bf n'},\overrightarrow{\bf k'}]}\Big)=0\label{skk+1}.
\end{align}
\end{theorem}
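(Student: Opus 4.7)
The plan is to reduce the computation of $S^l$ on the tensor product of transformed tau functions to an application of Lemma \ref{slbetaotimesbeta}, with the combinatorial collapse enforced by the KP bilinear equation $S(\tau^{[0]}\otimes\tau^{[0]})=0$. I would start from the Fermionic form $\tau^{+[\vec{\bf n},\vec{\bf k}]}=\alpha^{*+[\vec{\bf n}]}_{\vec{\bf k}}\alpha_{\vec{\bf n}}\tau^{[0]}$ furnished by Proposition \ref{kptdtitau}. The obstacle to directly applying Lemma \ref{slbetaotimesbeta} is that the two tensor factors carry different superscripts ($^{+[\overrightarrow{\bf n'}]}$ versus $^{+[\vec{\bf n}]}$) on the adjoint Fermions. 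Corollary \ref{coralpdown} resolves this: it lets us replace $\alpha^{*+[\vec{\bf n}]}_{\vec{\bf k}}$ by $\alpha^{*+[\overrightarrow{\bf n'}]}_{\vec{\bf k}}$ whenever it acts on $\alpha_{\vec{\bf n}}\tau^{[0]}$, so that both factors become words in the common Fermion family $\{\alpha^{*+[\overrightarrow{\bf n'}]}_j,\alpha_j\}$.

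With this unification, I would identify $\tilde\beta^*_j=\alpha^{*+[\overrightarrow{\bf n'}]}_{k+j}$, $\beta^*_j=\alpha^{*+[\overrightarrow{\bf n'}]}_j$, $\tilde\beta_j=\alpha_{n+j}$, $\beta_j=\alpha_j$, and take $\bar\beta$, $\bar\beta^*$ to be empty, so $k_2=n_2=0$ in Lemma \ref{slbetaotimesbeta}. Then $\tau^{+[\overrightarrow{\bf n'},\overrightarrow{\bf k'}]}\otimes\tau^{+[\vec{\bf n},\vec{\bf k}]}$ matches the left-hand side of \eqref{slak1n2} with $k_1=k'-k$ and $n_1=n'-n$, and the lemma produces $\sum_{j=0}^l C_l^j(-1)^{(l-j)(k_1+n_1)}j!B^+_{k_1,0,j}S^{l-j}$ acting on $\tau^{[0]}\otimes\tau^{[0]}$. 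Because $S(\tau^{[0]}\otimes\tau^{[0]})=0$, only the $j=l$ term survives, and with $n_2=0$ the defining sum for $B^+_{k_1,0,l}$ collapses to its $a=0$ slice, leaving a single sum over $\vec\gamma\in H_{k_1,l}$. Translating back, the first slot becomes $\alpha^{*+[\overrightarrow{\bf n'}]}_{\overrightarrow{\bf k'}\setminus(k+\vec\gamma)}\alpha_{\overrightarrow{\bf n'}}\tau^{[0]}=\tau^{+[\overrightarrow{\bf n'},\overrightarrow{\bf k'}\setminus(k+\vec\gamma)]}$, while the second slot equals $\alpha^{*+[\overrightarrow{\bf n'}]}_{(k+\vec\gamma)\cup\vec{\bf k}}\alpha_{\vec{\bf n}}\tau^{[0]}$, which by one further application of Corollary \ref{coralpdown} reduces to $\tau^{+[\vec{\bf n},(k+\vec\gamma)\cup\vec{\bf k}]}$. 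Reindexing $\vec\gamma\mapsto k+\vec\gamma\in k+H_{k'-k,l}$ and using $|k+\vec\gamma|=|\vec\gamma|+lk$ converts the sign $(-1)^{l(k'-k)-|\vec\gamma|}$ into $(-1)^{lk'-|\vec\gamma|}$, yielding the first equation of the theorem. The specialization $l=k'-k$ reduces $H_{k'-k,l}$ to a single element and produces \eqref{skk}; for $l=k'-k+1$ the set $H_{k'-k,l}$ is empty, which accounts for the vanishing in \eqref{skk+1}. The identity \eqref{snn} and the $T_i$ half of \eqref{skk+1} then follow from the symmetric argument: start from $\tau^{-[\vec{\bf n},\vec{\bf k}]}=\alpha^{-[\vec{\bf k}]}_{\vec{\bf n}}\alpha^*_{\vec{\bf k}}\tau^{[0]}$, unify the $\alpha^-$ superscripts to $^{-[\overrightarrow{\bf k'}]}$ via the second identity of Corollary \ref{coralpdown}, and invoke the companion $B^-$ formula of Lemma \ref{slbetaotimesbeta} with $k_1=n_1=0$ and $k_2=k'-k$, $n_2=n'-n$.

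I expect the main difficulty to be purely bookkeeping: tracking the index sets $\vec{\bf k}\subset\overrightarrow{\bf k'}$, $\vec{\bf n}\subset\overrightarrow{\bf n'}$, the shift $\vec\gamma\mapsto k+\vec\gamma$, and the various signs produced by anticommuting Fermions into decreasing order. Conceptually, the argument is a direct combination of Proposition \ref{kptdtitau}, Corollary \ref{coralpdown}, and Lemma \ref{slbetaotimesbeta}, glued together by the KP vanishing $S(\tau^{[0]}\otimes\tau^{[0]})=0$; no further identities are required.
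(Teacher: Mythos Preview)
Your argument is correct and matches the paper's own proof, which simply cites Lemma \ref{slbetaotimesbeta} together with Corollary \ref{coralpdown} (and Corollary \ref{coralpup}, though the latter is not actually needed for this theorem and only enters in Theorem \ref{stnnkotnkk}). The essential steps---unifying the superscripts via Corollary \ref{coralpdown}, invoking Lemma \ref{slbetaotimesbeta} with $n_2=k_2=0$ (resp.\ $n_1=k_1=0$), and killing all $S^{l-j}$ terms with $j<l$ via $S(\tau^{[0]}\otimes\tau^{[0]})=0$---are exactly the intended mechanism.
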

\noindent{\bf Remark:} Here we do not discuss $S^l\Big(\tau^{-[\overrightarrow{\bf n'},\overrightarrow{\bf k'}]}\otimes\tau^{-[\vec{\bf n},\vec{\bf k}]}\Big)$ and $S^l\Big(\tau^{+[\vec{\bf n},\vec{\bf k}]}\otimes\tau^{+[\overrightarrow{\bf n'},\overrightarrow{\bf k'}]}\Big)$. Actually according to Corollary \ref{coralpup}, there will be many additional terms in them, just as Theorem \ref{stnnkotnkk} below.\\
\noindent{\bf Remark:} (\ref{skk}) and (\ref{snn}) are the generalization of (\ref{klcmkp}). What's more,
we have
\begin{align*}
S\Big(\tau\otimes\tau_{\alpha_{\vec{\bf n}}}\Big)=
\sum_{l=1}^n(-1)^{n-l}\tau_{\alpha_l}\otimes \tau_{\alpha_{\vec{\bf n}}\setminus \alpha_l},\quad S(\tau_{\alpha^*_{\vec{\bf n}}}\otimes\tau)=
\sum_{l=1}^n(-1)^{n-l}\tau_{\alpha^*_{\vec{\bf n}}\setminus \alpha^*_l}\otimes \tau_{\alpha_l^*},
\end{align*}
which implies that
\begin{align*}
&{\rm Res}_\lambda\lambda^{-n}\tau(t'-\varepsilon(\lambda^{-1}))
\tau_{\alpha_{\vec{\bf n}}}(t+\varepsilon(\lambda^{-1}))=
\sum_{l=1}^n(-1)^{n-l}\tau_{\alpha_l}(t')\tau_{\alpha_{\vec{\bf n}}\backslash \alpha_l}(t),\nonumber\\
&{\rm Res}_\lambda\lambda^{-n}\tau_{\alpha^*_{\vec{\bf n}}}(t'-\varepsilon(\lambda^{-1}))
\tau(t+\varepsilon(\lambda^{-1}))=
\sum_{l=1}^n(-1)^{n-l}\tau_{\alpha^*_{\vec{\bf n}}\backslash \alpha^*_l}(t')\tau_{\alpha_l^*}(t).
\end{align*}
In particularly when $n=1$, the relations above are just the bilinear equations (\ref{mkptaubilinear}) of the modified KP hierarchy of Kupershmidt-Kiso version.
\begin{corollary}
$\tau^{\pm[\vec{\bf n}]}$ satisfies the following bilinear equations
\begin{align}
S(\tau^{+[\overrightarrow{\bf n'}]}\otimes \tau^{+[\vec{\bf n}]})=0,\quad S(\tau^{-[\vec{\bf n}]}\otimes \tau^{-[\overrightarrow{\bf n'}]})=0,\quad n'\geq n.
\end{align}
The corresponding Bosonic forms are
\begin{align}
&{\rm Res}_{\lambda}\lambda^{n'-n}\tau^{+[\overrightarrow{\bf n'}]}(t-\varepsilon(\lambda^{-1}))
\tau^{+[\vec{\bf n}]}(t'+\varepsilon(\lambda^{-1}))e^{\xi(t-t',\lambda)}=0,
\label{staupntaupn}\\
&{\rm Res}_{\lambda}\lambda^{n'-n}\tau^{-[\vec{\bf n}]}(t-\varepsilon(\lambda^{-1}))
\tau^{-[\overrightarrow{\bf n'}]}(t'+\varepsilon(\lambda^{-1}))e^{\xi(t-t',\lambda)}=0, \quad n'\geq n.\label{staumntaumn}
\end{align}
\end{corollary}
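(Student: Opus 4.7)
The plan is to derive the two Fermionic identities as immediate specializations of Theorem \ref{stnnkkotnk}, then transport them to Bosonic form by applying $\sigma_t\otimes\sigma_{t'}$ and using the vertex-operator realization of $\psi(\lambda),\psi^*(\lambda)$. For the first identity I apply the first displayed formula of Theorem \ref{stnnkkotnk} with $k=k'=0$ and $l=1$: the sum then runs over $H_{0,1}=\emptyset$, so the right-hand side vanishes identically, yielding $S(\tau^{+[\overrightarrow{\bf n'}]}\otimes\tau^{+[\vec{\bf n}]})=0$ for all $n'\geq n$. For the second identity I identify the single-index $\tau^{-[\vec{\bf n}]}$ with the case of the theorem in which the $T_d$-parts are empty and the $T_i$-parts are $\vec{\bf n}$ and $\overrightarrow{\bf n'}$ (so that $n_{\mathrm{thm}}=n'_{\mathrm{thm}}=0$, $k_{\mathrm{thm}}=n$, $k'_{\mathrm{thm}}=n'$), and apply the second displayed formula with $l=1$; again the indexing set $H_{0,1}$ is empty and the right-hand side vanishes.

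As a sanity check one can argue directly from Lemma \ref{s4beta}: commuting $S$ across $\alpha_{\overrightarrow{\bf n'}}\otimes\alpha_{\vec{\bf n}}$ acting on $\tau^{[0]}\otimes\tau^{[0]}$ produces, besides a term proportional to $S(\tau^{[0]}\otimes\tau^{[0]})=0$, a sum whose typical summand carries the factor $\alpha_{\overrightarrow{\bf n'}}\alpha_l$ with $l\in\{1,\ldots,n\}\subseteq\{1,\ldots,n'\}$. Since $\alpha_l$ already occurs in $\alpha_{\overrightarrow{\bf n'}}$ and elements of $V$ anticommute, the square $\alpha_l\alpha_l=0$ kills every summand; the $T_i$ case is identical after replacing $\alpha_i$ by $\alpha^*_i$ and $V$ by $V^*$.

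For the Bosonic statements I would write $S={\rm Res}_\lambda\,\lambda^{-1}\psi(\lambda)\otimes\psi^*(\lambda)$ and combine \eqref{langlepsi} with the commutation $e^{H(t)}\psi^{(*)}(\lambda)=e^{\pm\xi(t,\lambda)}\psi^{(*)}(\lambda)e^{H(t)}$ to conclude that, for any charge-$l$ vector $g|0\rangle$ with Bosonic realization $\tau(t)=\langle l|e^{H(t)}g|0\rangle$,
\[
\sigma_t(\psi(\lambda)g|0\rangle)=z^{l+1}\lambda^l e^{\xi(t,\lambda)}\tau(t-\varepsilon(\lambda^{-1})),\qquad \sigma_t(\psi^*(\lambda)g|0\rangle)=z^{l-1}\lambda^{1-l}e^{-\xi(t,\lambda)}\tau(t+\varepsilon(\lambda^{-1})).
\]
Inserting these into $S(\tau^{+[\overrightarrow{\bf n'}]}\otimes\tau^{+[\vec{\bf n}]})=0$, where $\tau^{+[\overrightarrow{\bf n'}]}\in\mathcal{F}_{n'}$ and $\tau^{+[\vec{\bf n}]}\in\mathcal{F}_n$, the $\lambda$-exponents combine as $\lambda^{-1}\cdot\lambda^{n'}\cdot\lambda^{1-n}=\lambda^{n'-n}$, and after dropping the nonzero $z^{n'+1}(z')^{n-1}$ prefactor one reads off \eqref{staupntaupn}. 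For the $T_i$ case the charges are $-n$ and $-n'$, and the same calculation produces $\lambda^{-1}\cdot\lambda^{-n}\cdot\lambda^{1+n'}=\lambda^{n'-n}$, giving \eqref{staumntaumn}.

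With Theorem \ref{stnnkkotnk} already in hand the Fermionic half is essentially a one-line observation, so the only place demanding real care is the Boson--Fermion step: accurately tracking the charge shifts $\mathcal{F}_l\to\mathcal{F}_{l\pm 1}$ induced by $\psi(\lambda),\psi^*(\lambda)$ and the resulting $\lambda$-exponents is exactly what produces the uniform factor $\lambda^{n'-n}$ in both bilinear equations.
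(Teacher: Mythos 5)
Your proposal is correct and follows essentially the same route as the paper: the Fermionic identities are exactly the $k=k'=0$ (resp. $n_{\mathrm{thm}}=n'_{\mathrm{thm}}=0$) specializations of Theorem \ref{stnnkkotnk} with $l=1$, where the indexing set $H_{0,1}$ is empty, and the Bosonic forms follow from the standard realization $S={\rm Res}_\lambda\lambda^{-1}\psi(\lambda)\otimes\psi^*(\lambda)$ together with \eqref{langlepsi} and the charge bookkeeping that yields the factor $\lambda^{n'-n}$. Your additional direct check via Lemma \ref{s4beta} (killing each summand through $\alpha_l^2=0$) is a valid and slightly more self-contained confirmation, but it does not change the substance of the argument.
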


\noindent {\bf Remark:} Note that relations (\ref{staumntaumn}) describe the transformed tau functions in the chain of the Darboux transformation $T_i$. In fact, (\ref{staumntaumn}) is also the bilinear equations of $(n'-n)$-th modified KP hierarchy, just by setting $\tau_{-n}=\tau^{-[\vec{\bf n}]}$. What's more, (\ref{staupntaupn}) and (\ref{staumntaumn}) constitute the whole discrete KP hierarchy \cite{Adler1999,Haine2000,Dickey1999}, since the discrete variables in (\ref{staupntaupn}) only takes the non-negative values, while the ones (\ref{staumntaumn}) are all non-positive.

Now we have discussed the bilinear equations between the transformed tau functions under $n'$-step $T_d$ and $k'$-step $T_i$, and the ones under $n$-step $T_d$ and $k$-step $T_i$ for $n'\geq n$, $k'\geq k$. In fact, these relations are not the whole, there should be other bilinear equations involving $\Big(\tau^{\pm[\overrightarrow{\bf n'},\vec{\bf k}]},\tau^{\pm[\vec{\bf n},\overrightarrow{\bf k'}]}\Big)$ and $\Big(\tau^{\pm[\vec{\bf n},\overrightarrow{\bf k'}]},\tau^{\pm[\overrightarrow{\bf n'},\vec{\bf k}]}\Big)$ with $n'\geq n$, $k'\geq k$, which are given in the theorem below.
\begin{theorem}\label{stnnkotnkk}
For $n'\geq n$, $k'\geq k$,
\begin{align}
S\Big(\tau^{\pm[\overrightarrow{\bf n'},\vec{\bf k}]}\otimes\tau^{\pm[\vec{\bf n},\overrightarrow{\bf k'}]}\Big)=&0,\\
S^l(\tau^{\pm[\vec{\bf n},\overrightarrow{\bf k'}]}\otimes\tau^{\pm[\overrightarrow{\bf n'},\vec{\bf k}]})
=&l!\sum_{a=0}^l\sum_{\vec{\gamma}\in k+H_{k'-k,l-a}}
\sum_{\vec{\delta}\in n+H_{n'-n,a}}(-1)^{a(n'-k)+lk'+\frac{1\mp1}{2}\Big((k-k')a+(n'-n)(l-a)\Big)}\nonumber\\
&\times(-1)^{|\vec{\gamma}|+|\vec{\delta}|}\Big(\tau^{\pm[\vec{\delta}\cup\vec{\bf n},\overrightarrow{\bf k'}\setminus\vec{\gamma}]}\otimes\tau^{\pm[\overrightarrow{\bf n'}\setminus\vec{\delta},\vec{\gamma}\cup\vec{\bf k}]}+\sum_{\mu=\pm}M^\pm_{a+n,k'-l+a,n'-a,l-a+k;\mu}\Big).
\end{align}
Particularly,
\begin{align}
S^{k'-k+n'-n}&(\tau^{\pm[\vec{\bf n},\overrightarrow{\bf k'}]}\otimes\tau^{\pm[\overrightarrow{\bf n'},\vec{\bf k}]})
=(-1)^{(n'-n)(k'-k)+\frac{(n'-n)(n'-n-1)}{2}
+\frac{(k'-k)(k'-k-1)}{2}}\nonumber\\
&\times(n'-n+k'-k)!\Big(\tau^{\pm[\overrightarrow{\bf n'},\vec{\bf k}]}\otimes\tau^{\pm[\vec{\bf n},\overrightarrow{\bf k'}]}+M_{n' knk';\pm}^{\pm}
\Big),\label{skknntt}\\
S^{k'-k+n'-n+1}&(\tau^{\pm[\vec{\bf n},\overrightarrow{\bf k'}]}\otimes\tau^{\pm[\overrightarrow{\bf n'},\vec{\bf k}]})
=0,\label{skknn1tt}
\end{align}
where $M_{n' knk';+}^\pm=\mathcal{O}_{n'-1,k-1}^{\pm}\otimes\mathcal{O}_{n,k'}^{\pm}$ and $M_{n' knk';-}^\pm=\mathcal{O}_{n',k}^{\pm}\otimes\mathcal{O}_{n-1,k'-1}^{\pm}$.
\end{theorem}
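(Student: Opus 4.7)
The plan is to reduce the claim to a direct application of Lemma \ref{slbetaotimesbeta}, with the combinatorial bookkeeping of error terms provided by \eqref{sootimeso}. I sketch the $+$ case; the $-$ case follows by a parallel argument in which $\alpha^{-[\vec{\bf k}]}$ plays the role of $\alpha^{*+[\vec{\bf n}]}$, and the $(-1)^{\frac{1\mp1}{2}(\cdots)}$ factor tracks the asymmetry between the two conventions.

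First I would use Proposition \ref{kptdtitau} together with the remark after Corollary \ref{coralpup} to write both factors in matched form. Concretely, I decompose
\begin{align*}
\tau^{+[\vec{\bf n},\overrightarrow{\bf k'}]}&=\alpha^{*+[\vec{\bf n}]}_{\overrightarrow{\bf k'}\setminus\vec{\bf k}}\,\alpha^{*+[\vec{\bf n}]}_{\vec{\bf k}}\,\alpha_{\vec{\bf n}}\tau^{[0]}+(\text{terms in }\mathcal{O}_{n-1,k'-1}^+),\\
\tau^{+[\overrightarrow{\bf n'},\vec{\bf k}]}&=\alpha^{*+[\overrightarrow{\bf n'}]}_{\vec{\bf k}}\,\alpha_{\overrightarrow{\bf n'}\setminus\vec{\bf n}}\,\alpha_{\vec{\bf n}}\tau^{[0]}+(\text{terms in }\mathcal{O}_{n'-1,k-1}^+),
\end{align*}
and then employ Corollary \ref{coralpdown} to replace the superscripts $+[\vec{\bf n}]$ and $+[\overrightarrow{\bf n'}]$ on the common blocks $\alpha^*_{\vec{\bf k}}$ by a single label, absorbing the difference into further $\mathcal{O}^+$-terms. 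The two principal parts are then of the exact shape $\tilde\beta^*_{\overrightarrow{\bf k_1}}\beta^*_{\vec{\bf k}}\tilde\beta_{\overrightarrow{\bf n_1}}\beta_{\vec{\bf n}}\otimes\bar\beta^*_{\overrightarrow{\bf k_2}}\beta^*_{\vec{\bf k}}\bar\beta_{\overrightarrow{\bf n_2}}\beta_{\vec{\bf n}}$ appearing in Lemma \ref{slbetaotimesbeta}, with $k_1=k'-k$, $n_2=n'-n$, $k_2=n_1=0$.

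Next I would apply \eqref{slak1n2}. After the substitution $a\mapsto l-a$, the resulting sum over $(j,\vec\gamma,\vec\delta)$ coincides up to sign with the RHS of the target expansion: the operator $B^+_{k_1,n_2,j}$ in Lemma \ref{slbetaotimesbeta} produces precisely $\vec{\alpha}^*_{\overrightarrow{\bf k'}\setminus\vec{\gamma}}\otimes\vec{\alpha}^*_{\vec{\gamma}\cup\vec{\bf k}}$ times the matching $\vec{\alpha}$-blocks, which after reassembly via Proposition \ref{kptdtitau} give $\tau^{+[\vec{\delta}\cup\vec{\bf n},\overrightarrow{\bf k'}\setminus\vec{\gamma}]}\otimes\tau^{+[\overrightarrow{\bf n'}\setminus\vec{\delta},\vec{\gamma}\cup\vec{\bf k}]}$. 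The $\mathcal{O}^+$-remainders from the first step are carried along by $S^l$ via \eqref{sootimeso}: each application of $S$ keeps them inside the $\mathcal{O}^+_{*,*}$-family, and they accumulate into the indicated $M^+_{a+n,k'-l+a,n'-a,l-a+k;\pm}$ correction. The first identity $S(\tau^{\pm[\overrightarrow{\bf n'},\vec{\bf k}]}\otimes\tau^{\pm[\vec{\bf n},\overrightarrow{\bf k'}]})=0$ follows from the same setup with the two factors interchanged, where the alignment $n'\geq n$, $k'\geq k$ is now compatible with both $V$- and $V^*$-indices, so the relevant $B^+_{\cdot,\cdot,0}$-term reduces to a single $S$ acting on $\tau^{[0]}\otimes\tau^{[0]}$, which vanishes by \eqref{klmkp}.

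Finally, the two extremal identities fall out by specialization. For $l=k'-k+n'-n=k_1+n_2$, only $B^+_{k_1,n_2,k_1+n_2}$ survives, and its explicit form in Lemma \ref{slbetaotimesbeta} yields the leading tensor product $\tau^{+[\overrightarrow{\bf n'},\vec{\bf k}]}\otimes\tau^{+[\vec{\bf n},\overrightarrow{\bf k'}]}$ with the stated sign $(-1)^{(n'-n)(k'-k)+\binom{n'-n}{2}+\binom{k'-k}{2}}$ and the remainder $M^\pm_{n'knk';\pm}$ gathered from the $\mathcal{O}^+$-contributions; for $l=k_1+n_2+1$ one has $B^+_{k_1,n_2,j}=0$ for every $j$ in the sum, and \eqref{sootimeso} shows that any $\mathcal{O}^+$-remainder is annihilated after one further application of $S$, producing \eqref{skknn1tt}. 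The main obstacle will be the sign bookkeeping: matching the exponents $a(k_2+n_2)+jk_1-|\vec\gamma|-|\vec\delta|$ of Lemma \ref{slbetaotimesbeta} to the theorem's exponents $a(n'-k)+lk'+\frac{1\mp1}{2}((k-k')a+(n'-n)(l-a))-|\vec\gamma|-|\vec\delta|$ requires careful tracking of the permutation signs arising when re-ordering $\alpha^{*+[\overrightarrow{\bf n'}]}_{\vec{\bf k}}$ past $\alpha_{\overrightarrow{\bf n'}\setminus\vec{\bf n}}$, together with the signs generated by Corollary \ref{coralpup}'s recursive definition of the constants $\mathcal{C}^{\pm}_{a,\vec\gamma,\vec\delta}$; handling the $-$ case requires the analogous reordering with $\alpha$ and $\alpha^*$ swapped, which accounts for the extra $\frac{1\mp1}{2}$ term.
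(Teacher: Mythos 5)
Your proposal is correct and follows essentially the same route as the paper's (very terse) proof, which likewise reduces everything to Lemma \ref{slbetaotimesbeta} applied to the matched block decompositions supplied by Proposition \ref{kptdtitau} and Corollary \ref{coralpup}, with the remainders propagated through \eqref{sootimeso}. The one point the paper makes explicit that you gloss over is why, in \eqref{skknntt} for the $+$ case, the $\mu=-$ remainder $\mathcal{O}_{n',k}^{+}\otimes\mathcal{O}_{n-1,k'-1}^{+}$ does not appear: that contribution is proportional to $\alpha_{\vec{\bf k}}^{*[\vec{\bf n}]}\alpha_{\overrightarrow{\bf n'}}\tau^{[0]}\otimes\alpha_{\overrightarrow{\bf k'}}^{*[\vec{\bf n}]}\alpha_{\vec{\bf n}}\tau^{[0]}$, i.e.\ to the leading term itself, so it is absorbed there rather than surviving as a separate correction.
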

\begin{proof}
In fact, this theorem can be proved by using Lemma \ref{slbetaotimesbeta} and Corollary \ref{coralpup}. Here we should point out that there is no $\mathcal{O}_{n',k}^+\otimes\mathcal{O}_{n-1,k'-1}^+$-term in $S^{k'-k+n'-n}(\tau^{+[\vec{\bf n},\overrightarrow{\bf k'}]}\otimes\tau^{+[\overrightarrow{\bf n'},\vec{\bf k}]})$, since it is proportional to $\alpha_{\vec{\bf k}}^{*[\vec{\bf n}]}\alpha_{\overrightarrow{\bf n'}}\tau^{[0]}\otimes\alpha_{\overrightarrow{\bf k'}}^{*[\vec{\bf n}]}\alpha_{\vec{\bf n}}\tau^{[0]}$. Similar discussion can be done for $S^{k'-k+n'-n}(\tau^{-[\vec{\bf n},\overrightarrow{\bf k'}]}\otimes\tau^{-[\overrightarrow{\bf n'},\vec{\bf k}]})$ and $S^{k'-k+n'-n+1}(\tau^{\pm[\vec{\bf n},\overrightarrow{\bf k'}]}\otimes\tau^{\pm[\overrightarrow{\bf n'},\vec{\bf k}]})$.
\end{proof}

\noindent{\bf Remark:} Note that the terms $M_{n' knk';+}^+$ in (\ref{skknntt}) are usually the linear combination of the tau functions with some constants as the coefficients. If we set these constants to be zero, then these terms vanish. In fact, this can be realized. For example, we can consider the Darboux transformations generated by the (adjoint) wave functions, instead of the (adjoint) eigenfunctions.

Next let us consider the Darboux chain generated by the wave functions.
\begin{eqnarray*}
&&L\xrightarrow{T_d(\psi(t,\lambda_1))}L^{[1]}
\xrightarrow{T_d(\psi(t,\lambda_2)^{[1]})}L^{[2]}
  \rightarrow\cdots\rightarrow L^{[n-1]}\xrightarrow{T_d(\psi(t,\lambda_n)^{[n-1]})}L^{[n]}\nonumber\\
&&\xrightarrow{T_i(\psi^*(t,\mu_1)^{[n+1]})}L^{(n+1)}\xrightarrow
{T_i(\psi^*(t,\mu_2)^{[n+2]})}\cdots\rightarrow L^{[n+k-1]}\xrightarrow{T_i(\psi^*(t,\mu_k)^{[n+k-1]})}L^{[n+k]}.
\end{eqnarray*}
Denote $T_{\lambda\mu}^{[\vec{\bf n},\vec{\bf k}]}=T_i(\psi^*(t,\mu_k)^{[n+k-1]})\cdots T_i(\psi^*(t,\mu_1)^{[n]})T_d(\psi(t,\lambda_n)^{[n-1]})\cdots T_d(\psi(t,\lambda_1))$ as the successive applications of $n$-step $T_d$ and $k$-step $T_i$, and $\tau_{\lambda\mu}^{[\vec{\bf n},\vec{\bf k}]}$ as the transformed tau functions under $T_{\lambda\mu}^{[\vec{\bf n},\vec{\bf k}]}$.

By the similar method as before, one can obtain
\begin{align}
\tau_{\lambda\mu}^{[\vec{\bf n},\vec{\bf k}]}=\left(\prod_{j=2}^n\lambda_j^{-j+1}\prod_{l=1}^k\mu_{l}^{n-l}\right)
\psi_{\lambda_{\vec{\bf n}};\mu_{\vec{\bf k}}}\tau^{[0]},
\end{align}
where $\psi_{\lambda_{\vec{\bf n}};\mu_{\vec{\bf k}}}
\triangleq\psi^*(\mu_{k})\cdots\psi^*(\mu_{1})
\psi(\lambda_{n})\cdots\psi(\lambda_{1})$. Note that $\tau_{\lambda\mu}^{[\vec{\bf n},\vec{\bf k}]}\approx\psi_{\lambda_{\vec{\bf n}};\mu_{\vec{\bf k}}}\tau^{[0]}$, i.e.,they determine the same (adjoint) wave functions, which means that the corresponding $T_d$ and $T_i$ are the same in the Darboux chain above. So in what follows, we can always use $\tau_{\lambda\mu}^{[\vec{\bf n},\vec{\bf k}]}=\psi_{\lambda_{\vec{\bf n}};\mu_{\vec{\bf k}}}\tau^{[0]}$ instead.
\begin{proposition}
For $n'\geq n$, $k'\geq k$,
\begin{align*}
S^l\Big(\tau_{\lambda\mu}^{[\overrightarrow{\bf n'},\overrightarrow{\bf k'}]}\otimes\tau_{\lambda\mu}^{[\vec{\bf n},\vec{\bf k}]}\Big)=&l!\sum_{\vec{\gamma}\in k+H_{k'-k,l}}(-1)^{lk'-|\vec{\gamma}|}\tau_{\lambda\mu}^{[\overrightarrow{\bf n'},\overrightarrow{\bf k'}\setminus\vec{\gamma}]}\otimes\tau_{\lambda\mu}^{[\vec{\bf n},\vec{\gamma}\cup\vec{\bf k}]},\\
S^l\Big(\tau_{\lambda\mu}^{[\vec{\bf n},\vec{\bf k}]}\otimes\tau_{\lambda\mu}^{[\overrightarrow{\bf n'},\overrightarrow{\bf k'}]}\Big)=&l!\sum_{\vec{\gamma}\in n+H_{n'-n,l}}(-1)^{l(k'-k+n')-|\vec{\gamma}|}\tau_{\lambda\mu}^{[\vec{\gamma}\cup\vec{\bf n},\vec{\bf k}]}\otimes\tau_{\lambda\mu}^{[\overrightarrow{\bf n'}\setminus\vec{\gamma},\overrightarrow{\bf k'}]},\\
S\Big(\tau_{\lambda\mu}^{[\overrightarrow{\bf n'},\vec{\bf k}]}\otimes\tau_{\lambda\mu}^{[\vec{\bf n},\overrightarrow{\bf k'}]}\Big)=&0,\\
S^l\Big(\tau_{\lambda\mu}^{[\vec{\bf n},\overrightarrow{\bf k'}]}\otimes\tau_{\lambda\mu}^{[\overrightarrow{\bf n'},\vec{\bf k}]}\Big)
=&l!\sum_{a=0}^l\sum_{\vec{\gamma}\in k+H_{k'-k,l-a}}
\sum_{\vec{\delta}\in n+H_{n'-n,a}}(-1)^{a(n'-k)+lk'-|\vec{\gamma}|-|\vec{\delta}|}\Big(\tau_{\lambda\mu}^{[\vec{\delta}\cup\vec{\bf n},\overrightarrow{\bf k'}\setminus\vec{\gamma}]}\otimes\tau_{\lambda\mu}^{[\overrightarrow{\bf n'}\setminus\vec{\delta},\vec{\gamma}\cup\vec{\bf k}]}\Big).
\end{align*}
Particularly,
\begin{align*}
S^{k'-k}\Big(\tau_{\lambda\mu}^{[\overrightarrow{\bf n'},\overrightarrow{\bf k'}]}\otimes\tau_{\lambda\mu}^{[\vec{\bf n},\vec{\bf k}]}\Big)=&(-1)^{\frac{(k'-k)(k'-k-1)}{2}}(k'-k)!
\tau_{\lambda\mu}^{[\overrightarrow{\bf n'},\vec{\bf k}]}\otimes\tau_{\lambda\mu}^{[\vec{\bf n},\overrightarrow{\bf k'}]},\\
S^{n'-n}\Big(\tau_{\lambda\mu}^{[\vec{\bf n},\vec{\bf k}]}\otimes\tau_{\lambda\mu}^{[\overrightarrow{\bf n'},\overrightarrow{\bf k'}]}\Big)=&(-1)^{(n'-n)(k'-k)+\frac{(n'-n)(n'-n-1)}{2}}(n'-n)!
\tau_{\lambda\mu}^{[\overrightarrow{\bf n'},\vec{\bf k}]}\otimes\tau_{\lambda\mu}^{[\vec{\bf n},\overrightarrow{\bf k'}]},\\
S^{n'-n+k'-k}\Big(\tau_{\lambda\mu}^{[\vec{\bf n},\overrightarrow{\bf k'}]}\otimes\tau_{\lambda\mu}^{[\overrightarrow{\bf n'},\vec{\bf k}]}\Big)
=&(-1)^{(n'-n)(k'-k)+\frac{(n'-n)(n'-n-1)}{2}+\frac{(k'-k)(k'-k-1)}{2}}\\
&\times(n'-n+k'-k)!\tau_{\lambda\mu}^{[\overrightarrow{\bf n'},\vec{\bf k}]}\otimes\tau_{\lambda\mu}^{[\vec{\bf n},\overrightarrow{\bf k'}]}.
\end{align*}
\end{proposition}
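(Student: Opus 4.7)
The key simplification over Theorem \ref{stnnkotnkk} is that each tau function here is literally a pure Fermionic monomial acting on $\tau^{[0]}$: by definition $\tau_{\lambda\mu}^{[\vec{\bf n},\vec{\bf k}]}=\psi^*(\mu_k)\cdots\psi^*(\mu_1)\psi(\lambda_n)\cdots\psi(\lambda_1)\tau^{[0]}$, with $\psi(\lambda_i)\in V$ and $\psi^*(\mu_j)\in V^*$. So the operator identity of Lemma \ref{slbetaotimesbeta} applies directly with $\beta_i=\psi(\lambda_i)$ and $\beta^*_j=\psi^*(\mu_j)$, and no spectral-density corrections from Corollary \ref{coralpup} intervene; this is precisely why all the ``$\mathcal{O}$-remainder'' terms of Theorem \ref{stnnkotnkk} disappear. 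Moreover $\tau^{[0]}=g|0\rangle$ with $g\in GL_\infty$ satisfies $S(\tau^{[0]}\otimes\tau^{[0]})=0$ (the $l=l'=0$ case of \eqref{klmkp}), so $S^{m}(\tau^{[0]}\otimes\tau^{[0]})=0$ for every $m\geq 1$. Consequently, in the expansion provided by \eqref{slak1n2}, only the $j=l$ summand survives upon evaluation on $\tau^{[0]}\otimes\tau^{[0]}$, and the whole left-hand side collapses to $l!\,B^{+}_{k_1,n_2,l}(\tau^{[0]}\otimes\tau^{[0]})$.

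The four identities then reduce to four distinct choices of the structural parameters $(k_1,n_1,k_2,n_2)$. For $S^l(\tau_{\lambda\mu}^{[\overrightarrow{\bf n'},\overrightarrow{\bf k'}]}\otimes\tau_{\lambda\mu}^{[\vec{\bf n},\vec{\bf k}]})$ all extras sit in the first factor, so $(k_1,n_1,k_2,n_2)=(k'-k,n'-n,0,0)$; since $H_{n_2,a}=H_{0,a}$ is empty unless $a=0$, the defining sum for $B^{+}_{k'-k,0,l}$ collapses to a single sum over $\vec{\gamma}\in H_{k'-k,l}$, and the shift $\vec{\gamma}\mapsto\vec{\gamma}+k\in k+H_{k'-k,l}$ converts $(-1)^{lk_1-|\vec{\gamma}|}$ into $(-1)^{lk'-|\vec{\gamma}|}$, matching the first formula exactly. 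The second formula is obtained symmetrically with $(k_1,n_1,k_2,n_2)=(0,0,k'-k,n'-n)$: now $H_{k_1,j-a}=H_{0,j-a}$ forces $a=j$, leaving a single sum over $\vec{\delta}\in H_{n'-n,l}$ re-indexed by $\vec{\delta}\mapsto\vec{\delta}+n$, and the sign $(-1)^{j(k_2+n_2)-|\vec{\delta}|}$ becomes $(-1)^{l(k'-k+n')-|\vec{\delta}|}$.

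The two crossed cases hinge on a parity observation about $H_{0,\cdot}$. For $S(\tau_{\lambda\mu}^{[\overrightarrow{\bf n'},\vec{\bf k}]}\otimes\tau_{\lambda\mu}^{[\vec{\bf n},\overrightarrow{\bf k'}]})$ the parameters split as $(k_1,n_1,k_2,n_2)=(0,n'-n,k'-k,0)$, so both inner sums of $B^{+}_{0,0,1}$ range over $H_{0,\cdot}$ with positive upper index, which is empty; thus $B^{+}_{0,0,1}=0$ and the bilinear combination vanishes. In contrast, $S^l(\tau_{\lambda\mu}^{[\vec{\bf n},\overrightarrow{\bf k'}]}\otimes\tau_{\lambda\mu}^{[\overrightarrow{\bf n'},\vec{\bf k}]})$ has the opposite split $(k_1,n_1,k_2,n_2)=(k'-k,0,0,n'-n)$, so $B^{+}_{k'-k,n'-n,l}$ carries a genuine double sum over $(\vec{\gamma},\vec{\delta})\in H_{k'-k,l-a}\times H_{n'-n,a}$. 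Re-indexing by $\vec{\gamma}\mapsto\vec{\gamma}+k$ and $\vec{\delta}\mapsto\vec{\delta}+n$ transports the Lemma's sign $(-1)^{a(k_2+n_2)+jk_1-|\vec{\gamma}|-|\vec{\delta}|}$ to the stated $(-1)^{a(n'-k)+lk'-|\vec{\gamma}|-|\vec{\delta}|}$, where $(-1)^{2ak}=1$ is used to trade $n'+k$ for $n'-k$ in the exponent.

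Finally, the three ``Particularly'' closed-form identities are obtained by specialization: $l=k'-k$ in the first formula forces $\vec{\gamma}$ to be the unique full sequence of length $k'-k$, so the sum collapses to a single term, and $lk'-|\vec{\gamma}|=(k'-k)k'-(k+k'+1)(k'-k)/2$ simplifies modulo $2$ to $(k'-k)(k'-k-1)/2$; the $l=n'-n$ case of the second formula is the mirror computation. For $l=n'-n+k'-k$ in the fourth formula, the capacity constraints $\vec{\gamma}\in k+H_{k'-k,l-a}$ and $\vec{\delta}\in n+H_{n'-n,a}$ force $a=n'-n$ together with both $\vec{\gamma},\vec{\delta}$ full, producing the combined sign $(-1)^{(n'-n)(k'-k)+\frac{(k'-k)(k'-k-1)}{2}+\frac{(n'-n)(n'-n-1)}{2}}$. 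The only real obstacle throughout is sign-bookkeeping: keeping the two index shifts consistent and verifying the arithmetic identities on the exponents of $-1$; no conceptual input beyond Lemma \ref{slbetaotimesbeta} and the vanishing $S^m(g|0\rangle\otimes g|0\rangle)=0$ for $m\geq 1$ is required.
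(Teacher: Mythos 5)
Your proof is correct and follows exactly the route the paper intends (the paper states this proposition without a written proof, relying on the preceding formula $\tau_{\lambda\mu}^{[\vec{\bf n},\vec{\bf k}]}=\psi_{\lambda_{\vec{\bf n}};\mu_{\vec{\bf k}}}\tau^{[0]}$ and the remark that $\psi(\lambda_i)$ and $\psi^*(\mu_j)$ may be treated as anticommuting): you apply Lemma \ref{slbetaotimesbeta} to pure Fermionic monomials, kill all but the $j=l$ term via $S^m(\tau^{[0]}\otimes\tau^{[0]})=0$, and track the index shifts $\vec{\gamma}\mapsto\vec{\gamma}+k$, $\vec{\delta}\mapsto\vec{\delta}+n$ in the signs. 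Your sign computations, including the three specializations, check out.
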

\noindent{\bf Remark:} Note that after several applications of the operator $S$, $\tau_{\lambda\mu}^{[\overrightarrow{\bf n'},\overrightarrow{\bf k'}]}\otimes\tau_{\lambda\mu}^{[\vec{\bf n},\vec{\bf k}]}$, $\tau_{\lambda\mu}^{[\vec{\bf n},\vec{\bf k}]}\otimes\tau_{\lambda\mu}^{[\overrightarrow{\bf n'},\overrightarrow{\bf k'}]}$ and $\tau_{\lambda\mu}^{[\vec{\bf n},\overrightarrow{\bf k'}]}\otimes\tau_{\lambda\mu}^{[\overrightarrow{\bf n'},\vec{\bf k}]}$ will become into $\tau_{\lambda\mu}^{[\overrightarrow{\bf n'},\vec{\bf k}]}\otimes\tau_{\lambda\mu}^{[\vec{\bf n},\overrightarrow{\bf k'}]}$ up to a multiplication of a constant. Different from the usual (adjoint) eigenfunctions, we can believe the corresponding fields $\psi(\lambda)$ and $\psi^*(\mu)$ for the wave function and the adjoint wave function respectively can anticommute with each other, since $[\psi(\lambda),\psi^*(\mu)]_+=\mu\delta(\lambda,\mu)$ can be viewed as zero when $\lambda\neq\mu$.
\subsection{Examples of the bilinear equations in the KP Darboux transformations}
\label{subsectkpdtexp}
In this section, we will give some examples of the bilinear equations. Firstly for $(n',k')=(1,1)$ and $(n,k)=(0,0)$, the various bilinear equations are given as follows,
\begin{align}
&S(\tau^{\pm[1,1]}\otimes\tau^{\pm[0,0]})=\pm\tau^{\pm[1,0]}\otimes\tau^{\pm[0,1]},\quad S(\tau^{\pm[0,0]}\otimes\tau^{\pm[1,1]})=\mp\tau^{\pm[1,0]}\otimes\tau^{\pm[0,1]},
\label{stau11tau00}\\
&S(\tau^{\pm[0,1]}\otimes\tau^{\pm[1,0]})=
\pm\tau^{\pm[0,0]}\otimes\tau^{\pm[1,1]}\mp
\tau^{\pm[1,1]}\otimes\tau^{\pm[0,0]},\label{stau01tau10}\\
&S(\tau^{\pm[1,0]}\otimes\tau^{\pm[0,1]})=0,\label{stau10tau01}
\end{align}
where in \eqref{stau01tau10} we have used Corollary \ref{coralpup}.
 The corresponding Bosonic forms are listed below
\begin{align}
{\rm Res}_\lambda \tau^{\pm[1,1]}\Big(t-\varepsilon(\lambda^{-1})\Big)&\tau^{\pm[0,0]}\Big(t'+\varepsilon(\lambda^{-1})\Big)
e^{\xi(t-t',\lambda)}=\pm\tau^{\pm[1,0]}(t)\tau^{\pm[0,1]}(t'),\label{restau11tau00}\\
{\rm Res}_\lambda \tau^{\pm[0,0]}\Big(t-\varepsilon(\lambda^{-1})\Big)&\tau^{\pm[1,1]}\Big(t'+\varepsilon(\lambda^{-1})\Big)
e^{\xi(t-t',\lambda)}
=\mp\tau^{\pm[1,0]}(t)\tau^{\pm[0,1]}(t'),\label{restau00tau11}\\
{\rm Res}_\lambda \tau^{\pm[0,1]}\Big(t-\varepsilon(\lambda^{-1})\Big)&\tau^{\pm[1,0]}\Big(t'+\varepsilon(\lambda^{-1})\Big)
\lambda^{-2}e^{\xi(t-t',\lambda)}
\nonumber\\
&=\pm\tau^{\pm[0,0]}(t)\tau^{\pm[1,1]}(t')\mp\tau^{\pm[1,1]}(t)\tau^{\pm[0,0]}(t'),\label{restau01tau10}\\
{\rm Res}_\lambda \tau^{\pm[1,0]}\Big(t-\varepsilon(\lambda^{-1})\Big)&\tau^{\pm[0,1]}\Big(t'+\varepsilon(\lambda^{-1})\Big)
\lambda^2e^{\xi(t-t',\lambda)}
=0.\label{restau10tau01}
\end{align}
It should be noted that \eqref{stau10tau01} or \eqref{restau10tau01} is the bilinear equation of $2$-nd modified KP hierarchy hierarchies, which is equivalent to the following equations\cite{Kac1998}
\begin{align*}
S^2(\tau^{\pm[0,1]}\otimes\tau^{\pm[1,0]})=
-2\tau^{\pm[1,0]}\otimes\tau^{\pm[0,1]},
\end{align*}
with the corresponding Bosonic form
\begin{align*}
{\rm Res}_{\lambda_1}{\rm Res}_{\lambda_2}&\Big(\lambda_1^{-1}-\lambda_2^{-1}\Big)^2 \tau^{\pm[0,1]}\Big(t-[\lambda_1^{-1}]-[\lambda_2^{-1}]\Big)
\tau^{\pm[1,0]}\Big(t'+[\lambda_1^{-1}]+[\lambda_2^{-1}]\Big)\\
&\times e^{\xi(t-t',\lambda_1)+\xi(t-t',\lambda_2)}
=-2\tau^{\pm[1,0]}(t)\tau^{\pm[0,1]}(t').
\end{align*}

Next we will try to understand \eqref{restau11tau00}, \eqref{restau00tau11} and  \eqref{restau01tau10}. Actually according to Table III and Proposition \ref{sepexpprop},
\begin{align*}
\tau^{\pm[1,1]}(t)=\mp \Omega(\Phi(t),\Psi(t))\tau(t),\quad \tau^{\pm [1,0]}(t)=\Phi(t)\tau(t),\quad \tau^{\pm [0,1]}(t)=\Psi(t)\tau(t),\quad \tau^{[0,0]}=\tau(t).
\end{align*}
Thus \eqref{restau11tau00}, \eqref{restau00tau11} and  \eqref{restau01tau10} will become into
\begin{align}
&{\rm Res}_\lambda \Omega\Big(\Phi(t-\varepsilon(\lambda^{-1})),\Psi(t-\varepsilon(\lambda^{-1}))\Big)
\psi(t,\lambda)\psi^*(t',\lambda)
=-\Phi(t)\Psi(t'),\label{resoppmpp}\\
&{\rm Res}_\lambda
\psi(t,\lambda)\psi^*(t',\lambda)
\Omega\Big(\Phi(t'+\varepsilon(\lambda^{-1})),\Psi(t'+\varepsilon(\lambda^{-1}))\Big)
=\Phi(t)\Psi(t').\label{smpphipsi}\\
&{\rm Res}_\lambda
\Omega\big(\psi(t,\lambda),\Psi(t)\big)\Omega\big(\Phi(t'),\psi^*(t',\lambda)\big)
=\Omega\big(\Phi(t'),\Psi(t')\big)-\Omega\big(\Phi(t),\Psi(t)\big).\label{resooomo}
\end{align}
Note that (\ref{resooomo}) is also obtained in \cite{Loris1997}, but the method is very complicated. If further set $t-t'=[\mu^{-1}]$ in the three relations above (\ref{resoppmpp})-(\ref{resooomo}), one can obtain
\begin{align}
&\Omega\Big(\Phi(t-\varepsilon(\lambda^{-1})),\Psi(t-\varepsilon(\lambda^{-1}))\Big)
-\Omega\Big(\Phi(t),\Psi(t)\Big)=-\frac{1}{\lambda}\Phi(t)
\Psi(t-\varepsilon(\lambda^{-1})),\label{stminusadd}
\end{align}
which is an important property also given in \cite{Aratyn1998,Willox1998}. Conversely by considering (\ref{kpwavebilinear}), the relations (\ref{resoppmpp})-(\ref{resooomo}) can also be derived from (\ref{stminusadd}). In what follows, we will point out that (\ref{stminusadd}) is very crucial in the proof of ASvM formula\cite{Adler1995,Dickey1995}, which connects the actions of the additional symmetries on the wave functions and the tau functions. Also (\ref{stminusadd}) is very useful in the derivations of the bilinear equations of the constrained KP hierarchy \cite{Cheng1994,Loris1997}.

We will end this section with the example of Theorem \ref{stnnkotnkk}. let $(n',k')=(2,2)$ and $(n,k)=(0,1)$, and denote $\tau^{[n,k]}=\tau^{+[\vec{\bf n},\vec{\bf k}]}$, then we will have
\begin{align}
S^3(\tau^{[0,2]}\otimes\tau^{[2,1]})=-6\tau^{[2,1]}\otimes\tau^{[0,2]}
+(c_1^+\tau^{[1,0]}+c_0^+\tau^{[0,0]})\otimes (c_2^-\tau^{[0,2]}+c_1^-\tau^{[0,1]}+c_0^-\tau^{[0,0]}),
\end{align}
where $c_i^\pm$ are some constants. If apply another $S$ on the above relation, one can find
\begin{align}
S^4(\tau^{[0,2]}\otimes\tau^{[2,1]})=0.
\end{align}

\section{The Darboux transformations of the modified KP hierarchy}
In this section, the Darboux transformations of the modified KP hierarchy are discussed in the Fermionic pictures. The transformed tau functions in the Fermionic forms under successive applications of the Darboux transformations are given. Then based upon these results, the bilinear equations in the Darboux chain are obtained. Also some examples are listed in the last subsection.
\subsection{Reviews on some facts of the Darboux transformations of the modified KP hierarchy}
In the modified KP hierarchy, there are three elementary Darboux transformation operators\cite{Oevelrmp1993,Shaw1997,Chengjnmp2018}, which are
\begin{align*}
T_1(\hat\Phi)\triangleq\hat\Phi^{-1},\quad
T_2(\hat\Phi)\triangleq\hat\Phi^{-1}_x\pa,\quad
T_3(\hat\Psi)\triangleq\pa^{-1}\hat\Psi_x,
\end{align*}
which can also be obtained by the Miura links \cite{Shaw1997} between the KP and modified KP hierarchies through $\mathcal{L}\xrightarrow{\text{Miura}}L\xrightarrow{\text{anti-Miura}}
\mathcal{L}^{[1]}$.
Here the meaning of $A^{[n]}$ is the same as the KP case, i.e., the transformed object $A$ under $n$-step Darboux transformation.
It is noted that $T_1$, $T_2$ and $T_3$ can not commute with each other, i.e., $T_jT_l\neq T_lT_j$ with $j,l=1,2,3$.
 So another two types of Darboux transformation operators are introduced \cite{Oevelrmp1993,Shaw1997,Chengjnmp2018}, that is,
\begin{align*}
T_D(\hat\Phi)\triangleq T_2(1^{[1]})T_1(\hat\Phi)=(\hat\Phi^{-1})_x^{-1}\pa\hat\Phi^{-1},\quad
T_I(\hat\Psi)\triangleq T_1(1^{[1]})T_3(\hat\Psi)=\hat\Psi^{-1}\pa^{-1}\hat\Psi_x,
\end{align*}
which are showed that they can commute with each other, i.e., $T_\mu T_\nu=T_\nu T_\mu$ with $\mu,\nu=D, I$,  and therefore they are more applicable. One can see \cite{Chengjnmp2018} for more details.

Under the Darboux transformations of the modified KP hierarchy, the objects in the modified KP hierarchy are transformed in the way shown in Table IV \cite{Chengjnmp2018}.
\begin{center}
\begin{tabular}{llllll}
\multicolumn{6}{c}{Table IV. Elementary Darboux transformations: modified KP $\rightarrow$ modified KP}\\
\hline\hline
$L_{mKP}\rightarrow L_{mKP}^{[1]}$ &$Z^{[1]}=$ & $\hat\Phi_1^{[1]}=$ &$\hat\Psi_1^{[1]}=$&$\tau_0^{[1]}=$&$\tau_1^{[1]}=$\\
\hline
$T_1=\hat\Phi^{-1}$ &$\hat\Phi^{-1}Z$ &$\hat\Phi^{-1}\hat\Phi_1$ & $\int\hat\Phi\hat\Psi_{1x}dx$&$\tau_0$&$\hat\Phi\tau_1$\\
$T_2=\hat\Phi_x^{-1}\pa$  &  $\hat\Phi_x^{-1}\pa Z\pa^{-1}$ & $\hat\Phi_x^{-1}\hat\Phi_{1x}$ &$-\int\hat\Phi_x\Psi_{1}dx$& $\tau_1$&$\hat\Phi_x\tau_1^2/\tau_0$\\
$T_3=\pa^{-1}\hat\Psi_x$  &$\pa^{-1}\hat\Psi_x Z\pa$ & $\int\hat\Psi_x\hat\Phi_1 dx$ & $-\hat\Psi_x^{-1}\hat\Psi_{1x}$&  $\hat\Psi_x\tau_0^2/\tau_1$&$\tau_0$\\
$T_D=(\hat\Phi^{-1})_x^{-1}\pa\hat\Phi^{-1}$ &  $T_D(\hat\Phi)Z\pa^{-1}$ & $(\hat\Phi_1/\hat\Phi)_x/(\hat\Phi^{-1})_x$ &  $\int\hat\Phi_x\hat\Psi_{1}dx/\hat\Phi$&
$\hat\Phi\tau_1$& $-\hat\Phi_x\tau_1^2/\tau_0$ \\
$T_I=\hat\Psi^{-1}\pa^{-1}\hat\Psi_x$  &  $T_I(\hat\Psi) Z\pa$ &  $\int\hat\Psi_x\hat\Phi_1dx/\hat\Psi$ &$(\hat\Psi_1/\hat\Psi)_x/(\hat\Psi^{-1})_x$& $\hat\Psi_x\tau^2_0/\tau_1$ &$\hat\Psi\tau_0$\\
\hline\hline
\end{tabular}
\end{center}
{\bf Remark:} The changes of the tau functions under the Darboux transformation presented in the table above, are obtained by comparing the changes of the dressing operator. These proofs are not strict. In fact, we need to further show that the new tau functions satisfy the bilinear equations (\ref{mkptaubilinear}) of the modified KP hierarchy. We will discuss this questions in the next subsection.

Before further discussion, the lemma below are needed.
\begin{lemma}
Assume $\hat\Phi$ and $\hat\Psi$ are the eigenfunction and the adjoint eigenfunction of the modified KP hierarchy respectively, $w(t,\lambda)$ and $w^*(t,\lambda)$ are the wave and adjoint wave functions, and $\tau_0(t)$ and $\tau_1(t)$ are the tau functions. Then
\begin{eqnarray*}
&&T_1(\hat\Phi)(w(t,\lambda))=\frac{\tau_0^{[1]}(t-\varepsilon(\lambda^{-1}))}{ \tau^{[1]}_1(t)}e^{\xi(t,\lambda)},
\pa^{-1}(T_1^*(\hat\Phi))^{-1}\pa(w^*(t,\lambda))=\frac{ \tau^{[1]}_1(t+\varepsilon(\lambda^{-1}))}{\lambda\tau^{[1]}_0(t)}e^{-\xi(t,\lambda)};\nonumber\\
&&T_2(\hat\Phi)(w(t,\lambda))=\frac{\lambda\tau_0^{[1]}(t-\varepsilon(\lambda^{-1}))}{ \tau^{[1]}_1(t)}e^{\xi(t,\lambda)},
\pa^{-1}(T_2^*(\hat\Phi))^{-1}\pa(w^*(t,\lambda))=\frac{ \tau^{[1]}_1(t+\varepsilon(\lambda^{-1}))}{\lambda^2\tau^{[1]}_0(t)}e^{-\xi(t,\lambda)};\nonumber\\
&&T_3(\hat\Psi)(w(t,\lambda))=\frac{\tau_0^{[1]}(t-\varepsilon(\lambda^{-1}))}{\lambda \tau^{[1]}_1(t)}e^{\xi(t,\lambda)},
\pa^{-1}(T_3^*(\hat\Psi))^{-1}\pa(w^*(t,\lambda))=\frac{ \tau^{[1]}_1(t+\varepsilon(\lambda^{-1}))}{\tau^{[1]}_0(t)}e^{-\xi(t,\lambda)}
\end{eqnarray*}
and
\begin{eqnarray*}
&&T_D(\hat\Phi)(w(t,\lambda))=\frac{\lambda\tau_0^{[1]}(t-\varepsilon(\lambda^{-1}))}{ \tau^{[1]}_1(t)}e^{\xi(t,\lambda)},
\Big(\pa^{-1}(T_D^*(\hat\Phi))^{-1}\pa\Big)(w^*(t,\lambda))=\frac{ \tau^{[1]}_1(t+\varepsilon(\lambda^{-1}))}{\lambda^2\tau^{[1]}_0(t)}e^{-\xi(t,\lambda)};\nonumber\\
&&T_I(\hat\Psi)(w(t,\lambda))=\frac{\tau_0^{[1]}(t-\varepsilon(\lambda^{-1}))}{\lambda \tau^{[1]}_1(t)}e^{\xi(t,\lambda)},
\Big(\pa^{-1}(T_I^*(\hat\Psi))^{-1}\pa\Big)(w^*(t,\lambda))=\frac{ \tau^{[1]}_1(t+\varepsilon(\lambda^{-1}))}{\tau^{[1]}_0(t)}e^{-\xi(t,\lambda)}.
\end{eqnarray*}
\end{lemma}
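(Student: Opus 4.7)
The plan is to verify each of the ten identities by direct substitution, combining four ingredients: (i) the Bosonic expressions \eqref{mkpwave}, \eqref{adwavefunction} for $w(t,\lambda)$ and $w^*(t,\lambda)$; (ii) the derivative formulas $w_x, w^*_x$ in Lemma \ref{mkpfaylemma}; (iii) the closed-form SEP evaluations of Proposition \ref{sepexpressionmkp}; and (iv) the entries of Table IV for the transformed tau pair $(\tau_0^{[1]}, \tau_1^{[1]})$.

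For the forward action on $w$, I would handle $T_1, T_2, T_D$ first: each of these operators acts by multiplication and/or differentiation, so after substituting the Bosonic expression for $w$ and, where needed, the $w_x$ formula from Lemma \ref{mkpfaylemma}, the result is a ratio of tau functions that matches the claimed right-hand side upon inserting the corresponding Table IV entries. For $T_3(\hat\Psi)$ and $T_I(\hat\Psi)$, the forward action is the antiderivative $\partial^{-1}(\hat\Psi_x w)$, which I would recognize as the SEP $\Omega(w,\hat\Psi_x)$; evaluating it via Proposition \ref{sepexpressionmkp} (with $w$ playing the role of the eigenfunction) and comparing to the required expression reduces matters to a Fay-type shift identity for $\hat\Psi$ itself. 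This auxiliary identity I would derive by applying \eqref{fayequation} to the enlarged mKP tau pair $(\hat\Psi_x\tau_0^2/\tau_1,\tau_0)$ associated to $T_3$ in Table IV, then subtracting a suitable multiple of \eqref{fayequation} for the original pair $(\tau_0,\tau_1)$.

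For the adjoint action, the first step is to unpack $(T_j^*(f))^{-1}$ as a pseudo-differential operator: $(T_1^*(\hat\Phi))^{-1}$ is multiplication by $\hat\Phi$; $(T_2^*(\hat\Phi))^{-1} = -\hat\Phi_x\partial^{-1}$; and $(T_3^*(\hat\Psi))^{-1}=\hat\Psi\partial$ (up to suitable signs). After conjugating by the outer $\partial^{\mp 1}$, the action on $w^*$ reduces to an SEP of the form $\int \hat\Phi\, w^*_x\,dx$ or $\int \hat\Phi_x\, w^*\,dx$, which Proposition \ref{sepexpressionmkp} evaluates in closed form as $w^*\hat\Phi(t+\varepsilon(\lambda^{-1}))$ or $w^*(\hat\Phi(t)-\hat\Phi(t+\varepsilon(\lambda^{-1})))$ respectively. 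Matching the outcome to $\tau_1^{[1]}(t+\varepsilon(\lambda^{-1}))/(\lambda^a\tau_0^{[1]}(t))\, e^{-\xi(t,\lambda)}$ again uses the Fay identity for $\hat\Phi$ in the $T_2$ case, derived as above. The composite cases $T_D = T_2(1^{[1]})T_1(\hat\Phi)$ and $T_I = T_1(1^{[1]})T_3(\hat\Psi)$ then follow by composing the elementary results, with the intermediate tau-pair normalizations telescoping to the $T_D, T_I$ rows of Table IV.

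The main obstacle will be book-keeping: tracking the powers of $\lambda$ and the signs introduced by $(T_j^*)^{-1}$ in the $T_2, T_D$ cases where $\partial$ appears non-trivially, and correctly invoking the Fay-type identity on $\hat\Phi$ or $\hat\Psi$ whenever the target expression requires an $x$-derivative of an (adjoint) eigenfunction to be converted into a shift in $t$.
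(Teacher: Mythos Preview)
Your overall framework is sound and matches the paper for the easy cases ($T_1(w)$ by direct substitution; $T_2(w)$ via the $w_x$ formula in Lemma~\ref{mkpfaylemma}; the $T_1$- and $T_3$-adjoint actions on $w^*$ via Proposition~\ref{sepexpressionmkp}; and $T_D,T_I$ by composing $T_1,T_2,T_3$). The problem lies in your treatment of the two hardest cases, $T_3(\hat\Psi)(w)$ and $\pa^{-1}(T_2^*(\hat\Phi))^{-1}\pa(w^*)$.

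You propose to close these by applying the Fay identity \eqref{fayequation} to the \emph{transformed} tau pair, e.g.\ $(\hat\Psi_x\tau_0^2/\tau_1,\,\tau_0)$ from Table~IV. But \eqref{fayequation} is derived from the mKP bilinear relation \eqref{mkptaubilinear}, and at this point in the paper it has \emph{not} been established that the Table~IV entries form a genuine mKP tau pair; the Remark after Table~IV says exactly this, deferring the verification to Theorem~\ref{mkptransformation} in the next subsection. So invoking \eqref{fayequation} for the enlarged pair is circular (or at best forward-references a result not yet available). The paper avoids this: for $\pa^{-1}(T_2^*(\hat\Phi))^{-1}\pa(w^*)$ it first evaluates the two-variable SEP
\[
\int w(t,z)_x\,w^*(t,\lambda)\,dx
= -\frac{\tau_1^2(t+\varepsilon(\lambda^{-1}))}{\tau_0(t+\varepsilon(\lambda^{-1}))\tau_1(t)}\,
w(t+\varepsilon(\lambda^{-1}),z)_x\,\lambda^{-2}e^{-\xi(t,\lambda)}
\]
using only Proposition~\ref{sepexpressionmkp}, Lemma~\ref{mkpfaylemma}, and the elementary identity $\lambda^{-1}e^{\xi(\varepsilon(\lambda^{-1}),z)}=-z^{-1}e^{\xi(\varepsilon(z^{-1}),\lambda)}$; it then integrates against the spectral density $\hat\rho(z)$ from Corollary~\ref{corspectralmkp} to recover $\hat\Phi_x(t+\varepsilon(\lambda^{-1}))$ inside the numerator. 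The $T_3(w)$ case is handled dually. This route uses only previously established facts about the \emph{original} tau pair. To repair your argument, replace the appeal to \eqref{fayequation} for the enlarged pair by this spectral-representation computation.
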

\begin{proof}
Note that the results for $T_D$ and $T_I$ can be derived from the ones for $T_1$, $T_2$ and $T_3$. So we here only need to discuss the case of $T_1$, $T_2$ and $T_3$.

Firstly $T_1(\hat\Phi)(w)$ can be obtained by direct computation. The results for $\Big(\pa^{-1}(T_1^*(\hat\Phi))^{-1}\pa\Big)(w^*)$ can be proved by using Proposition \ref{sepexpressionmkp}. As for $T_2(\hat\Phi)(w)$ and $\Big(\pa^{-1}(T_3^*(\hat\Psi))^{-1}\pa\Big)(w^*)$, they can be derived by Lemma \ref{mkpfaylemma}. Next, we will concentrate on the proofs of $\Big(\pa^{-1}(T_2^*(\hat\Phi))^{-1}\pa\Big)(w^*)$ and $T_3(\hat\Psi)(w)$. Actually according to Proposition \ref{sepexpressionmkp} and Lemma \ref{mkpfaylemma},
\begin{eqnarray*}
\int w(t,z)_x w^*(t,\lambda)dx=-\frac{\tau^2_1(t+\varepsilon(\lambda^{-1}))}
{\tau_0(t+\varepsilon(\lambda^{-1})\tau_1(t)}
w(t+\varepsilon(\lambda^{-1}),z)_x\lambda^{-2}e^{-\xi(t,\lambda)},
\end{eqnarray*}
where we have used $\lambda^{-1}e^{\xi(\varepsilon(\lambda^{-1}),z)}
=-z^{-1}e^{\xi(\varepsilon(z^{-1}),\lambda)}$.
Therefore by the spectral representation of the eigenfunction $\hat\Phi(t)={\rm Res} _{\lambda}\hat\rho(\lambda)w(t,\lambda)$ (see Corollary \ref{corspectralmkp}).
\begin{eqnarray*}
\Big(\pa^{-1}(T_2^*(\Phi))^{-1}\pa\Big)(w^*(t,\lambda))=-{\rm Res}_{z}\hat\rho(z)\int w(t,z)_xw^*(t,\lambda)dx=\frac{\tau^{[1]}_1(t+\varepsilon(\lambda^{-1}))}
{\lambda^2\tau^{[1]}_0(t)}
e^{-\xi(t,\lambda)}.
\end{eqnarray*}
The result of $T_3(\hat\Psi)(w)$ can be proved in the similar way.
\end{proof}

\subsection{Bilinear relations for $\tau_0^{[1]}$ and $\tau_1^{[1]}$}
Just as showed in the last section, one can assume the tau pair of the modified KP hierarchy to be\footnote{In fact, other forms such as $(\tau_0,\tau_1)=\Big(\hat\alpha_0^*g|0\rangle, g|0\rangle\Big)$ with $\hat\alpha_0^*\in V^*$ can also be allowed and the corresponding results are the same.} $(\tau_0,\tau_1)=\Big(g|0\rangle,\hat\alpha_0g|0\rangle\Big)$ with
$g\in GL_\infty$ and $\hat\alpha_0\in V$. Then if denote
\begin{align*}
\hat\alpha={\rm Res}_{\lambda}\hat\rho(\lambda)\psi(\lambda),\quad \hat\alpha^*={\rm Res}_{\lambda}\lambda^{-1}\hat\rho^*(\lambda)\psi^*(\lambda),
\end{align*}
with $\hat\rho(\lambda)$ and $\hat\rho^*(\lambda)$ given in Corollary \ref{corspectralmkp}, then the eigenfunction $\hat\Phi$ and the adjoint eigenfunction $\hat\Psi$ can be written in the forms below.
\begin{eqnarray}
 \hat\Phi(t)=\frac{\sigma_t(\hat\alpha \tau_0)}{\sigma_t
 (\tau_1)},
 &&\hat\Psi(t)=\frac{\sigma_t(\hat\alpha^* \tau_1)}{\sigma_t
 (\tau_0)}.\label{mkpeigenfermionic}
\end{eqnarray}
Then one can rewrite the Darboux transformations of the modified KP hierarchy in the Fermionic forms by (\ref{mkpeigenfermionic}), which are given in the table below.
\begin{center}
\begin{tabular}{lll}
\multicolumn{3}{c}{Table V. Fermionic Darboux transformations: modified KP $\rightarrow$ modified KP}\\
\hline\hline
$L_{mKP}\rightarrow L_{mKP}^{[1]}$ & \ \ \ \ \ \ \ \ \ \ \ \ \    $\tau_0^{[1]}=$& \ \ \ \ \ \ \ \ \ \ \ \ \   $\tau_1^{[1]}=$\\
\hline
\ \ $T_1=\hat\Phi^{-1}$ & \ \ \ \ \ \ \ \ \ \ \ \ \  $\tau_0$& \ \ \ \ \ \ \ \ \ \ \ \ \  $\hat\alpha\tau_0$\\
\ \ $T_2=\hat\Phi_x^{-1}\pa$  &\ \ \ \ \ \ \ \ \ \ \ \ \ $\tau_1$& \ \ \ \ \ \ \ \ \ \ \ \ \ $\hat\alpha\tau_1$\\
\ \ $T_3=\pa^{-1}\hat\Psi_x$  &\ \ \ \ \ \ \ \ \ \ \ \ \  $-\hat\alpha^*\tau_0$& \ \ \ \ \ \ \ \ \ \ \ \ \  $\tau_0$\\
\ \ $T_D=(\hat\Phi^{-1})_x^{-1}\pa\hat\Phi^{-1}$  &\ \ \ \ \ \ \ \ \ \ \ \ \  $\hat\alpha\tau_0$& \ \ \ \ \ \ \ \ \ \ \ \ \  $-\hat\alpha\tau_1$\\
\ \ $T_I=\hat\Psi^{-1}\pa^{-1}\hat\Psi_x$  &\ \ \ \ \ \ \ \ \ \ \ \ \  $-\hat\alpha^*\tau_0$& \ \ \ \ \ \ \ \ \ \ \ \ \  $\hat\alpha^*\tau_1$\\
\hline\hline
\end{tabular}
\end{center}
Here we only illustrate how to convert $\hat\Phi_x\tau_1^2/\tau_0$ into the Fermionic form, and the case $\hat\Psi_x\tau_0^2/\tau_1$ is almost the same. In fact by Lemma \ref{mkpfaylemma} and the spectral representation,
\begin{align*}
\hat\Phi(t)_x\frac{\tau_1(t)^2}{\tau_0(t)}=&
{\rm Res}_\lambda\lambda\hat\rho(\lambda)\tau_1(t-\varepsilon(\lambda^{-1}))e^{\xi(t,\lambda)}
={\rm Res}_\lambda\langle 2|e^{H(t)}\psi(\lambda)\hat\alpha_0g|0\rangle\\
=&\langle 2|e^{H(t)}\hat\alpha\hat\alpha_0g|0\rangle=\sigma_t(\hat\alpha\tau_1).
\end{align*}

Then according to Theorem \ref{stnnkkotnk}, one can obtain
\begin{align}
S\Big(\tau_0^{[1]}\otimes\tau_1^{[1]}\Big)=\tau_1^{[1]}\otimes\tau_0^{[1]},
\end{align}
where for the case of $T_I$, it is proved by Lemma \ref{s4beta} and the fact $S(\tau_0\otimes\tau_1)=\tau_1\otimes\tau_0$.
The corresponding Bosonic forms are presented in the theorem below.
\begin{theorem}\label{mkptransformation}
Let $(\tau_0,\tau_1)$ be the tau functions of the modified KP hierarchy, and denote $\hat \Phi$ and $\hat\Psi$ to be the eigenfunction and the adjoint eigenfunction respectively. Then the following relations hold.
\begin{align*}
&{\rm Res}_{\lambda}\lambda^{-1}\hat\Phi(t'+\varepsilon(\lambda^{-1}))\tau_0(t-\varepsilon(\lambda^{-1}))
\tau_1(t'+\varepsilon(\lambda^{-1}))
e^{\xi(t-t',\lambda)}=\hat\Phi(t)\tau_1(t)\tau_0(t'),\\
&{\rm Res}_{\lambda}\lambda^{-1}\hat\Phi_x(t'+\varepsilon(\lambda^{-1}))\tau_1(t-\varepsilon(\lambda^{-1}))
\frac{\tau_1^2(t'+\varepsilon(\lambda^{-1}))}{\tau_0(t'+\varepsilon(\lambda^{-1}))}
e^{\xi(t-t',\lambda)}=\frac{\hat\Phi_x(t)\tau_1^2(t)}{\tau_0(t)}\tau_1(t'),\\
&{\rm Res}_{\lambda}\lambda^{-1}\hat\Psi_x(t-\varepsilon
(\lambda^{-1}))\frac{\tau_0^2(t-\varepsilon(\lambda^{-1}))}
{\tau_1(t-\varepsilon(\lambda^{-1}))}
\tau_0(t'+\varepsilon(\lambda^{-1}))
e^{\xi(t-t',\lambda)}=\tau_0(t)\frac{\hat\Psi_x(t')\tau_0^2(t')}{\tau_1(t')},
\end{align*}
and
\begin{align*}
{\rm Res}_{\lambda}&\lambda^{-1}\hat\Phi(t-\varepsilon(\lambda^{-1}))
\tau_1(t-\varepsilon(\lambda^{-1}))\hat\Phi_x(t'+\varepsilon(\lambda^{-1}))\frac{\tau_1^2(t'+\varepsilon(\lambda^{-1}))}{\tau_0(t'+\varepsilon
(\lambda^{-1}))}
e^{\xi(t-t',\lambda)}\nonumber\\
&
=\frac{\hat\Phi_x(t)\tau_1^2(t)}{\tau_0(t)}\hat\Phi(t')\tau_1(t'),\\
{\rm Res}_{\lambda}&\lambda^{-1}\hat\Psi_x(t-\varepsilon
(\lambda^{-1}))\frac{\tau_0^2(t-\varepsilon(\lambda^{-1}))}
{\tau_1(t-\varepsilon(\lambda^{-1}))}\hat\Psi(t'+\varepsilon(\lambda^{-1}))\tau_0(t'+\varepsilon(\lambda^{-1}))
e^{\xi(t-t',\lambda)}\nonumber\\
&
=\hat\Psi(t)\tau_0(t)\frac{\hat\Psi_x(t')\tau_0^2(t')}{\tau_1(t')}.
\end{align*}
\end{theorem}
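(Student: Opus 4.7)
The plan is to reduce each of the five identities to the standard bilinear equation \eqref{mkptaubilinear} of the modified KP hierarchy applied to a transformed tau pair $(\tau_0^{[1]}, \tau_1^{[1]})$ produced by one of the five elementary Darboux transformations $T_1, T_2, T_3, T_D, T_I$. Since the text preceding the theorem has identified these five transformed tau pairs both in Bosonic form (Table IV) and in Fermionic form (Table V), and has noted that $S(\tau_0^{[1]}\otimes \tau_1^{[1]})=\tau_1^{[1]}\otimes \tau_0^{[1]}$ holds for each of the five rows, the task amounts to translating this last Fermionic identity to Bosonic form in each case and reading off the resulting identity.

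First, I would verify that Table V really reproduces Table IV, which is needed to make sure each Fermionic expression $\tau_j^{[1]}$ matches the Bosonic object we want in the final statement. For the rows $T_1, T_3, T_I$ (second entry) and $T_D$ (first entry) this is an immediate consequence of the spectral representation in Corollary \ref{corspectralmkp} combined with the Boson-Fermion correspondence: for example, $\sigma_t(\hat\alpha\tau_0)={\rm Res}_\lambda \hat\rho(\lambda)e^{\xi(t,\lambda)}\tau_0(t-\varepsilon(\lambda^{-1}))=\hat\Phi(t)\tau_1(t)$. For the rows involving the ratios $\hat\Phi_x\tau_1^2/\tau_0$ and $\hat\Psi_x\tau_0^2/\tau_1$, I would apply Lemma \ref{mkpfaylemma} (exactly as the author sketches for the $T_2$ entry) to rewrite the ratio as ${\rm Res}_\lambda \lambda\hat\rho(\lambda)\tau_1(t-\varepsilon(\lambda^{-1}))e^{\xi(t,\lambda)}$ and to identify it, via \eqref{langlepsi}, with $\sigma_t(\hat\alpha\tau_1)$; the adjoint case is parallel.

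Next, I would invoke $S(\tau_0^{[1]}\otimes \tau_1^{[1]})=\tau_1^{[1]}\otimes \tau_0^{[1]}$ in each of the five cases. For $T_1, T_2, T_3, T_D$ this follows from Theorem \ref{stnnkkotnk} (with $n,n'$ and $k,k'$ in the minimal range) applied to the Fermionic forms from Table V, and for $T_I$ it follows from Lemma \ref{s4beta} together with the initial pair identity $S(\tau_0\otimes\tau_1)=\tau_1\otimes\tau_0$. Translating this Fermionic bilinear equation to the Bosonic picture via the Boson-Fermion correspondence, exactly as in the derivation of \eqref{mkptaubilinear} from \eqref{klmkp}, yields
\[ {\rm Res}_\lambda\,\lambda^{-1}\,\tau_0^{[1]}(t-\varepsilon(\lambda^{-1}))\,\tau_1^{[1]}(t'+\varepsilon(\lambda^{-1}))\,e^{\xi(t-t',\lambda)}=\tau_1^{[1]}(t)\tau_0^{[1]}(t'). \]

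Finally, substituting into this identity the five Bosonic expressions from Table IV, row by row, gives: $T_1$ yields the first stated relation, $T_2$ the second, $T_3$ the third, $T_D$ the fourth (after the two minus signs produced by $T_D$ on $\tau_1^{[1]}$ on both sides cancel), and $T_I$ the fifth. I expect the only conceptual obstacle to lie in Step~1, i.e., the identification of Table V with Table IV for the ratios $\hat\Phi_x\tau_1^2/\tau_0$ and $\hat\Psi_x\tau_0^2/\tau_1$, since these do not manifestly have the form of a single Fermionic expression until Lemma \ref{mkpfaylemma} and the spectral representation are combined; once those identifications are in hand, the remaining steps are purely mechanical substitution.
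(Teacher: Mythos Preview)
Your proposal is correct and follows essentially the same approach as the paper: establish the Fermionic forms of the five transformed tau pairs (Table~V), deduce $S(\tau_0^{[1]}\otimes\tau_1^{[1]})=\tau_1^{[1]}\otimes\tau_0^{[1]}$ from Theorem~\ref{stnnkkotnk} in the $T_1,T_2,T_3,T_D$ cases and from Lemma~\ref{s4beta} together with $S(\tau_0\otimes\tau_1)=\tau_1\otimes\tau_0$ in the $T_I$ case, then translate to Bosonic form and substitute Table~IV. You have also correctly isolated the only nontrivial verification, namely the identification of $\hat\Phi_x\tau_1^2/\tau_0$ and $\hat\Psi_x\tau_0^2/\tau_1$ with the appropriate Fermionic expressions via Lemma~\ref{mkpfaylemma}.
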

\noindent{\bf Remark:} The relations in this theorem are usually very hard to prove in the Bosonic form, while they can be easily obtained in the Fermionic approach. By this theorem, it is again confirmed that the transformed tau functions under the Darboux transformations are correct.
\subsection{Successive applications of the Darboux transformations}
By using the results in Table IV, one can obtain the lemma below.
\begin{lemma}\label{sepmkpchange}
Under the Darboux transformation of the modified KP hierarchy, the SEPs: $\Omega(\hat\Phi_1,w^*(x,\lambda)_x)$ and $\hat \Omega(w(t,\lambda)_x,\hat\Psi_1)$ are transformed in the way below.
\begin{center}
\begin{tabular}{lll}
\multicolumn{3}{c}{Table VI. The transformed SEPs of the modified KP hierarchy}\\
\hline\hline
$L_{mKP}\rightarrow L_{mKP}^{[1]}$ &     $\Omega(\hat\Phi_1^{[1]},w_x^{*[1]})=$& $\hat \Omega(w_x^{[1]},\hat\Psi_1^{[1]})=$\\
\hline
\ \ $T_1=\hat\Phi^{-1}$ & $\Omega(\hat\Phi_1,w^*_x)$&$\hat \Omega(w_x,\hat\Psi_{1})-\frac{\hat \Omega(\hat\Phi_x,\hat\Psi_{1})}{\hat\Phi}w$\\
\ \ $T_2=\hat\Phi_x^{-1}\pa$  & $\lambda\Big(\Omega(\hat\Phi_1,w^*_x)-\hat\Phi_1w^*
\Big)$&$\lambda^{-1}\Big(\hat \Omega(w_x,\hat\Psi_1)-\frac{\hat \Omega(\hat\Phi_{x},\hat\Psi_1)}
{\hat\Phi_{x}}w_x\Big)$\\
\ \ $T_3=\pa^{-1}\hat\Psi_x$  &  $\lambda^{-1}\Big(\Omega(\hat\Phi_1,w^*_x)
-\frac{\Omega(\hat\Phi_1,\hat\Psi_{x})}{\hat\Psi_{x}}w^*_x\Big)$&$\lambda\Big(\hat \Omega(w_x,\hat\Psi_1)-\hat\Psi_1w
\Big)$\\
\ \ $T_D=(\hat\Phi^{-1})_x^{-1}\pa\hat\Phi^{-1}$  & $\lambda\Big(\Omega(\hat\Phi_1,w^*_x)
-\frac{\hat\Phi_1}{\hat\Phi}\Omega(\hat\Phi,w^*_x)
\Big)$& $\lambda^{-1}\Big(\hat \Omega(w_x,\hat\Psi_{1})-\frac{\hat \Omega(\hat\Phi_x,\hat\Psi_1)}{\hat\Phi_x}w_x\Big)$\\
\ \ $T_I=\hat\Psi^{-1}\pa^{-1}\hat\Psi_x$  & $\lambda^{-1}
\Big(\Omega(\hat\Phi_1,w^*)_x)
-\frac{w^*_x}{\hat\Psi_{x}}\Omega(\hat\Phi_1,\hat\Psi_{x})
\Big)$& $\lambda\Big(\hat \Omega(w_x,\hat\Psi_1)
-\frac{\hat\Psi_1}{\hat\Psi}\hat \Omega(w_x,\hat\Psi)
\Big)$\\
\hline\hline
\end{tabular}
\end{center}

\end{lemma}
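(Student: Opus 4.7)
The plan is to verify each row of Table VI by exploiting the fact that the SEPs $\Omega(\hat\Phi_1,w^*_x)$ and $\hat\Omega(w_x,\hat\Psi_1)$ are determined, up to additive constants, by the first-order relations $\partial_x\Omega(\hat\Phi_1,w^*_x)=\hat\Phi_1 w^*_x$ and $\partial_x\hat\Omega(w_x,\hat\Psi_1)=w_x\hat\Psi_1$ together with their $t_n$-flows inherited from the Lax equation. Thus to check a proposed formula it suffices to differentiate the candidate expression in $x$ (and in $t_n$) and match the product of the transformed (adjoint) eigenfunction with the derivative of the transformed (adjoint) wave function.

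A first reduction is to note that $T_D=T_2(1^{[1]})T_1(\hat\Phi)$ and $T_I=T_1(1^{[1]})T_3(\hat\Psi)$, so the last two rows of Table VI follow by composing the formulas for the three elementary transformations. I would therefore concentrate on the rows $T_1,T_2,T_3$.

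For each elementary transformation $T_\alpha$, I would read $\hat\Phi_1^{[1]}$ and $\hat\Psi_1^{[1]}$ directly from Table IV, and extract $w^{[1]}_x$ and $w^{*[1]}_x$ from the wave-function lemma immediately preceding Lemma \ref{sepmkpchange} (noting that $w^{[1]}=T_\alpha(\hat\Phi)(w)$ and that $w^{*[1]}_x$ is the integrand in $w^{*[1]}=\partial^{-1}(T_\alpha^{*}(\hat\Phi))^{-1}\partial\,w^*$). Substituting into the proposed right-hand side, the verification of $\partial_x(\text{RHS})=\hat\Phi_1^{[1]}w^{*[1]}_x$ becomes an algebraic identity among rational expressions in $\hat\Phi,\hat\Psi,\hat\Phi_1,w,w^*$ and their $x$-derivatives, which closes using the defining relation of the original SEPs. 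As a sample, under $T_1$ one has $\hat\Phi_1^{[1]}=\hat\Phi^{-1}\hat\Phi_1$ and $w^{*[1]}_x=\hat\Phi\,w^*_x$, whence $\hat\Phi_1^{[1]}w^{*[1]}_x=\hat\Phi_1 w^*_x=\partial_x\Omega(\hat\Phi_1,w^*_x)$, matching the Table entry. The adjoint-side identities are treated symmetrically using $\partial_x\hat\Omega(w_x,\hat\Psi_1)=w_x\hat\Psi_1$.

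The main obstacle will be the careful bookkeeping of $\lambda$-prefactors and of the supplementary terms such as $\hat\Phi_1 w^*$ or $\frac{\hat\Omega(\hat\Phi_x,\hat\Psi_1)}{\hat\Phi}w$. The prefactors $\lambda^{\pm 1}$ attached to the rows for $T_2,T_3,T_D,T_I$ originate from the fact that these transformations rescale the wave functions by $\lambda^{\pm 1}$, as recorded in the preceding lemma, while the supplementary terms arise as the boundary contributions whenever $w^{[1]}_x$ or $w^{*[1]}_x$ involves a derivative of $w$ or $w^*$ and one performs the implicit integration by parts. Matching the $t_n$-flows is parallel but longer, and requires the dressing identity $(\mathcal{L}^{[1]})^n=T_\alpha\mathcal{L}^n T_\alpha^{-1}$ to rewrite $(\mathcal{L}^{[1],n})_{\geq 1}$ in terms of $(\mathcal{L}^n)_{\geq 1}$ modulo Darboux-type corrections; the required cancellations come again from the defining Lax-type evolution of the original SEPs.
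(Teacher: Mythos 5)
Your proposal is correct and follows essentially the same route as the paper, which obtains Table VI by direct computation from Table IV together with the preceding wave-function lemma (the paper offers no more detail than ``by using the results in Table IV''). Your sample check for $T_1$, your identification of the $\lambda^{\pm1}$ prefactors with the renormalization of the transformed (adjoint) wave functions, and your reduction of the $T_D$ and $T_I$ rows to compositions of the elementary rows all match how the computation actually closes; the $t_n$-flow verification is superfluous once the $x$-derivative is matched and the integration constant is fixed by the $\lambda\to\infty$ asymptotics.
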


Next the successive applications of the Darboux transformations are given in the proposition below.
\begin{proposition}\label{5DTsuccessive}
Given $\tau_0$ and $\tau_1$ for the modified KP hierarchy, the corresponding changes under the $n$-step Darboux transformation are given in the following table.
\begin{center}
\begin{tabular}{lll}
\multicolumn{3}{c}{Table VII. Fermionic Darboux transformations: mKP $\rightarrow$ mKP}\\
\hline\hline
$L_{mKP}\rightarrow L_{mKP}^{[\vec{\bf n}]}$ & \ \ \ \ \ \ \ \ \ \ \ \ \    $\tau_0^{[\vec{\bf n}]}=$& \ \ \ \ \ \ \ \ \ \ \ \ \   $\tau_1^{[\vec{\bf n}]}=$\\
\hline
\ \ $T_1=\hat\Phi^{-1}$ & \ \ \ \ \ \ \ \ \ \ \ \ \  $\tau_0$& \ \ \ \ \ \ \ \ \ \ \ \ \  $\hat\alpha_{n}\tau_0$\\
\ \ $T_2=\hat\Phi_x^{-1}\pa$  &\ \ \ \ \ \ \ \ \ \ \ \ \ $\hat\alpha_{\overrightarrow{\bf n-1}}\tau_1$& \ \ \ \ \ \ \ \ \ \ \ \ \ $\hat\alpha_{\vec{\bf n}}\tau_1$\\
\ \ $T_3=\pa^{-1}\hat\Psi_x$  &\ \ \ \ \ \ \ \ \ \ \ \ \  $(-1)^n\hat\alpha^*_{\vec{\bf n}}\tau_0$& \ \ \ \ \ \ \ \ \ \ \ \ \  $(-1)^{n-1}\hat\alpha^*_{\overrightarrow{\bf n-1}}\tau_0$\\
\ \ $T_D=(\hat\Phi^{-1})_x^{-1}\pa\hat\Phi^{-1}$  &\ \ \ \ \ \ \ \ \ \ \ \ \  $\hat\alpha_{\vec{\bf n}}\tau_0$& \ \ \ \ \ \ \ \ \ \ \ \ \  $(-1)^n\hat\alpha_{\vec{\bf n}}\tau_1$\\
\ \ $T_I=\hat\Psi^{-1}\pa^{-1}\hat\Psi_x$  &\ \ \ \ \ \ \ \ \ \ \ \ \  $(-1)^n\hat\alpha_{\vec{\bf n}}^*\tau_0$& \ \ \ \ \ \ \ \ \ \ \ \ \  $\hat\alpha_{\vec{\bf n}}^*\tau_1$\\
\hline\hline
\end{tabular}
\end{center}
Here $A^{[\vec{\bf n}]}$ is the same as the one in Section 4, $\hat\alpha_i\in V$  and $\hat\alpha_j^*\in V^*$ are corresponding to the eigenfunction $\Phi_i$ and the adjoint eigenfunction $\Psi_j$ respectively. Sometimes we use $A^{+[\vec{\bf n}]}$ and $A^{-[\vec{\bf n}]}$ to denote the transformed object $A$ under $T_D$ and $T_I$ respectively.
\end{proposition}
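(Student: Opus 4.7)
The plan is to prove each of the five rows in Table VII by induction on $n$, taking the one-step base case $n=1$ directly from Table V (which is read off from the spectral-Fermionic identification $\hat\Phi = \sigma_t(\hat\alpha\tau_0)/\sigma_t(\tau_1)$ and $\hat\Psi = \sigma_t(\hat\alpha^*\tau_1)/\sigma_t(\tau_0)$ used in deriving \eqref{mkpeigenfermionic}). For the inductive step from $n-1$ to $n$, one feeds the hypothesized $(\tau_0^{[\overrightarrow{\bf n-1}]},\tau_1^{[\overrightarrow{\bf n-1}]})$ into the single-step rule of Table V generated by $\hat\Phi_n^{[n-1]}$ (resp.\ $\hat\Psi_n^{[n-1]}$); the desired formula then amounts to the identity $\hat\alpha_n^{[n-1]}\hat\alpha_{\overrightarrow{\bf n-1}}=\hat\alpha_{\vec{\bf n}}$ (or its adjoint analogue), acting on the appropriate underlying state.

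The technical core is to compute the Fermionic vector $\hat\alpha_n^{[n-1]}$ attached to the transformed eigenfunction $\hat\Phi_n^{[n-1]}$ via the spectral representation of Corollary \ref{corspectralmkp}: namely $\hat\alpha_n^{[n-1]} = {\rm Res}_\lambda\hat\rho_n^{[n-1]}(\lambda)\psi(\lambda)$ with $\hat\rho_n^{[n-1]}(\lambda)=\Omega(\hat\Phi_n^{[n-1]}(t'),w^{*[n-1]}(t',\lambda)_{x'})$. Iterating the row of Lemma \ref{sepmkpchange} appropriate to the Darboux operator under consideration expresses $\hat\rho_n^{[n-1]}(\lambda)$ as a power of $\lambda$ (reflecting the charge shift of the chosen operator) times the original $\hat\rho_n(\lambda)$ plus a sum of terms proportional to $\hat\rho_j(\lambda)$ for $j<n$ with coefficients that are ratios of $\hat\Phi_j^{[j-1]}(t')$ (or $\hat\Psi_j^{[j-1]}(t')$) evaluated at the fixed reference point $t'$, hence $t$-independent, exactly paralleling \eqref{alpharelation} in the KP case. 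Taking residues gives $\hat\alpha_n^{[n-1]} = \hat\alpha_n + \sum_{j=1}^{n-1} c_j\hat\alpha_j$ (or the starred analogue).

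Once this decomposition is established, the Fermionic products collapse: since $\hat\alpha_i,\hat\alpha_j\in V$ anticommute and squared to zero, every extra term $c_j\hat\alpha_j\hat\alpha_{\overrightarrow{\bf n-1}}$ with $j\in\{1,\dots,n-1\}$ vanishes, leaving only $\hat\alpha_n\hat\alpha_{\overrightarrow{\bf n-1}}=\hat\alpha_{\vec{\bf n}}$; the same argument with $V^*$ handles the $T_I$ case. The rows $T_1$ and $T_3$ are simpler, because their SEP-transformation rules in Lemma \ref{sepmkpchange} are purely additive (no $\lambda$-factor), and the only nontrivial spot is that the resulting $\hat\alpha_n^{[n-1]}$ acts on $\tau_0$ which carries only the single previous $\hat\alpha_n$ slot — again the extra terms disappear by antisymmetry. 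In the $T_2$ case the role of $(\tau_0,\tau_1)$ is swapped at each step, so the inductive hypothesis must be rolled forward with $\tau_0^{[\overrightarrow{\bf n-1}]}$ and $\tau_1^{[\overrightarrow{\bf n-1}]}$ in the correct positions.

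The main obstacle I anticipate is purely organizational rather than conceptual: five operators must be handled separately, three of them ($T_1,T_2,T_3$) shift the charge sector of the tau pair (so one must verify that $\tau_0^{[\vec{\bf n}]}$ and $\tau_1^{[\vec{\bf n}]}$ lie in $\mathcal{F}_l$ and $\mathcal{F}_{l+1}$ with the correct $l$), and the $\lambda$-factors in Table VI must be tracked to produce the correct power of $\lambda$ that then gets absorbed into the residue defining $\hat\alpha_n^{[n-1]}$. The sign $(-1)^n$ in the $T_D$ and $T_3$ rows is traced to the single sign already appearing at $n=1$ in Table V compounded with itself $n$ times through the inductive substitution $-\hat\alpha_n^{[n-1]}\tau_1^{[\overrightarrow{\bf n-1}]}$. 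With these bookkeeping points in place, the proof follows the same template as Propositions \ref{ndttd} and \ref{ndtti}.
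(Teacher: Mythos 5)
Your overall framework — induction on $n$, the one-step rule from Table V, and the computation of $\hat\alpha_n^{[n-1]}$ by iterating Lemma \ref{sepmkpchange} — coincides with the paper's, and your argument is essentially complete for the rows $T_D$ and $T_I$: there the correction terms in Lemma \ref{sepmkpchange} really are proportional to the earlier SEPs, so $\hat\rho_n^{[n-1]}(\lambda)$ is a power of $\lambda$ times $\hat\rho_n(\lambda)$ plus constant multiples of the $\hat\rho_j(\lambda)$ with $j<n$, hence $\hat\alpha_n^{[n-1]}=\hat\alpha_n+\sum_{j<n}c_j\hat\alpha_j$, and the cross terms die by anticommutativity and $\hat\alpha_j^2=0$ exactly as in Proposition \ref{ndttd}. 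The row $T_1$ is also fine, since its rule for $\Omega(\hat\Phi_1,w^*_x)$ is the identity.

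The gap is in the rows $T_2$ and $T_3$. For these operators (and for the second column of $T_1$) the correction terms in Lemma \ref{sepmkpchange} are \emph{not} proportional to previous SEPs: they are proportional to the (adjoint) wave function itself, e.g.\ $-\hat\Phi_1 w^*$ for $T_2$ and $-\hat\Psi_1 w$ for $T_3$ (and $T_3$ does carry $\lambda^{\pm1}$ factors, contrary to your ``purely additive, no $\lambda$-factor'' remark). Taking the residue against $\psi(\lambda)$ then yields, as the paper computes for $T_2$,
\begin{align*}
\hat\alpha^{[n-1]}_n=\hat\alpha_n-\sum_{l\in\mathbb{Z}}\sum_{i=1}^{n-1}
\frac{\Phi_{n}^{[n-i-1]}(t')}{\tau_0^{[n-i-1]}(t')}\,
\sigma_{t'}\big(\psi_l^*\hat\alpha_{\overrightarrow{\bf n-i-1}}\tau_1\big)\,\psi_l,
\end{align*}
whose correction is an element of $V$ that does \emph{not} lie in the span of $\hat\alpha_1,\dots,\hat\alpha_{n-1}$, so your collapse-by-nilpotency step does not apply. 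The paper instead recognizes the resulting cross term as $(\sigma_{t'}\otimes 1)$ applied to $S(\hat\alpha_{\overrightarrow{\bf n-1}}\tau_1\otimes\hat\alpha_{\overrightarrow{\bf n-i-1}}\tau_1)$, commutes $S$ past the extra $\hat\alpha$'s with Lemma \ref{s4beta}, and kills it with the KP bilinear identity $S(\hat\alpha_{\overrightarrow{\bf n-i-1}}\tau_1\otimes\hat\alpha_{\overrightarrow{\bf n-i-1}}\tau_1)=0$ guaranteed by Proposition \ref{mkp2tau} and Corollary \ref{cortautrankp}. Without this $S$-operator/bilinear-identity ingredient your induction does not close for $T_2$ and $T_3$.
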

\begin{proof}
Firstly, the results in case $T_1$ are obviously by using Lemma \ref{sepmkpchange}. Next we will try to prove the cases of $T_2$ and $T_D$, and other two cases are similar. According to Table V, one can find   $\tau_0^{[n]}=\tau_1^{[n-1]}$ for $T_2$. So here we only need to investigate $\tau_1^{[n]}$. By using Lemma \ref{sepmkpchange},
\begin{align*}
&\hat\rho^{[n-1]}_n(\lambda)=\Omega(\hat\Phi^{(n-1)}_n(t'),w^{(n-1)*}(t',\lambda)_x)
=\lambda^{n-1}\hat\rho_n(\lambda)
-\sum_{i=1}^{n-1}\lambda^i\Phi_{n}^{[n-i-1]}(t')w^{[n-i-1]*}(t',\lambda).
\end{align*}
Further by the similar way in Proposition \ref{ndttd},
\begin{eqnarray*}
\hat\alpha^{[n-1]}_n={\rm Res}_{\lambda}\frac{1}{\lambda^{n-1}}
\hat\rho^{[n-1]}_n\psi(\lambda),\quad \tau_1^{[n]}=\alpha_n^{[n-1]}\tau_1^{[n-1]}.
\end{eqnarray*}
After inserting the expression of $\hat\rho^{[n-1]}_n(\lambda)$,
\begin{align*}
\hat\alpha^{[n-1]}_n
=\hat\alpha_n-\sum_{l\in\mathbb{Z}}\sum_{i=1}^{n-1}
\frac{\Phi_{n}^{[n-i-1]}(t')}{\tau_0^{[n-i-1]}(t')}\sigma_{t'}
(\psi_l^*\hat\alpha_{\overrightarrow{\bf n-i-1}}\tau_1)\cdot\psi_l
\end{align*}
Therefore if assume $\tau_1^{[n-1]}=\hat\alpha_{\overrightarrow{\bf n-1}}\tau_1$, then
\begin{align*}
&\tau_1^{[n]}=\hat\alpha_n^{[n-1]}\hat\alpha_{\overrightarrow{\bf n-1}}\tau_1=\hat\alpha_n\hat\alpha_{\overrightarrow{\bf n-1}}\tau_1-\sum_{i=1}^{n-1}
\frac{\Phi_{n}^{[n-i-1]}(t')}{\tau_0^{[n-i-1]}(t')}S(\hat\alpha_{\overrightarrow{\bf n-1}}\tau_1\otimes\hat\alpha_{\overrightarrow{\bf n-i-1}}\tau_1)\\
=&\hat\alpha_{\vec{\bf n}}\tau_1-\sum_{i=1}^{n-1}
\frac{\Phi_{n}^{[n-i-1]}(t')}{\tau_0^{[n-i-1]}(t')}(-1)^i\Big(\hat\alpha_
{\overrightarrow{\bf n-1}\setminus\overrightarrow{\bf n-i-1}} \otimes 1\Big)\cdot S(\hat\alpha_{\overrightarrow{\bf n-i-1}}\tau_1\otimes\hat\alpha_{\overrightarrow{\bf n-i-1}}\tau_1)=\hat\alpha_{\vec{\bf n}}\tau_1.
\end{align*}

As for $T_D$, we only need to show
\begin{align*}
\hat\alpha_n^{[n-1]}\cdots\hat\alpha_2^{[1]}\hat\alpha_1
=\hat\alpha_n\cdots\hat\alpha_2\hat\alpha_1,
\end{align*}
with $\hat\alpha^{[n-1]}_n={\rm Res}_{\lambda}\lambda^{1-n}
\hat\rho^{[n-1]}_n(\lambda)\psi(\lambda)$ and $\hat\rho^{[n-1]}_n(\lambda)=\Omega(\hat\Phi^{[n-1]}_n(t'),w^{[n-1]*}(t',\lambda)_x)$. In fact, this can be proved by using the relations below derived by Lemma \ref{sepmkpchange}
\begin{align*}
\hat\rho^{[n-1]}_n(\lambda)=\lambda\left(\hat\rho^{[n-2]}_n(\lambda)
-\frac{\hat\Phi^{[n-2]}_n(t')}{\hat\Phi^{[n-2]}_{n-1}(t')}
\hat\rho^{[n-2]}_{n-1}(\lambda)\right),\quad \hat\alpha^{[n-1]}_n=\hat\alpha^{[n-2]}_n
-\frac{\hat\Phi^{[n-2]}_n(t')}{\hat\Phi^{[n-2]}_{n-1}(t')}
\hat\alpha^{[n-2]}_{n-1}.
\end{align*}
\end{proof}
Next consider the successive mixed applications of $T_D$ and $T_I$. And let $\hat T^{\pm[\vec{\bf n}, \vec{\bf k}]}$ has the same meanings as  $T^{\pm[\vec{\bf n}, \vec{\bf k}]}$ in the KP case, just replacing $T_d$ and $T_i$ with $T_D$ and $T_I$ respectively. $A^{\pm[\vec{\bf n},\vec{\bf k}]}$ denote the transformed object $A$ under $\hat T^{\pm[\vec{\bf n}, \vec{\bf k}]}$ such as the transformed tau pair $\Big(\tau_0^{\pm[\vec{\bf n},\vec{\bf k}]}, \tau_1^{\pm[\vec{\bf n},\vec{\bf k}]}\Big)$.

\begin{proposition}\label{mkptdti}
The expressions of
$\Big(\tau_0^{\pm[\vec{\bf n},\vec{\bf k}]}, \tau_1^{\pm[\vec{\bf n},\vec{\bf k}]}\Big)$ are given as follows
\begin{align*}
&\Big(\tau_0^{+[\vec{\bf n},\vec{\bf k}]}, \tau_1^{+[\vec{\bf n},\vec{\bf k}]}\Big)=\hat\alpha_k^{*[n]}\cdots\hat\alpha_1^{*[n]}
\hat\alpha_n\cdots\hat\alpha_1\Big((-1)^k\tau_0^{[0]},(-1)^n\tau_1^{[0]}\Big),\\
&\Big(\tau_0^{-[\vec{\bf n},\vec{\bf k}]}, \tau_1^{-[\vec{\bf n},\vec{\bf k}]}\Big)=\hat\alpha_n^{[k]}\cdots\hat\alpha_1^{[k]}
\hat\alpha_k^*\cdots\hat\alpha_1^*\Big((-1)^k\tau_0^{[0]},(-1)^n\tau_1^{[0]}\Big),
\end{align*}
where $\hat\alpha_j^{*[n]}={\rm Res}_\lambda\lambda^{n-1}\hat\rho_j^{*[n]}(\lambda)\psi^*(\lambda)$ and $\hat\alpha_l^{[k]}={\rm Res}_\lambda\lambda^{k}\hat\rho_l^{[k]}(\lambda)\psi(\lambda)$.
Further
\begin{align*}
&\Big(\tau_0^{+[\vec{\bf n},\vec{\bf k}]}, \tau_1^{+[\vec{\bf n},\vec{\bf k}]}\Big)=\sum_{a=0}^k\sum_{\vec{\gamma}\in H_{k,a}}\sum_{\vec{\delta}\in H_{n,n-a}}\hat C^{+[n,k]}_{a,\vec{\gamma},\vec{\delta}}
\hat\alpha^*_{\vec{\bf k}\setminus\vec{\gamma}}\hat\alpha_{\vec{\delta}}
\Big((-1)^k\tau_0^{[0]},(-1)^n\tau_1^{[0]}\Big),\\
&\Big(\tau_0^{-[\vec{\bf n},\vec{\bf k}]}, \tau_1^{-[\vec{\bf n},\vec{\bf k}]}\Big)=\sum_{a=0}^n\sum_{\vec{\gamma}\in H_{n,a}}\sum_{\vec{\delta}\in H_{k,k-a}}\hat C^{-[n,k]}_{a,\vec{\gamma},\vec{\delta}}
\hat\alpha_{\vec{\bf n}\setminus\vec{\gamma}}\hat\alpha^*_{\vec{\delta}}
\Big((-1)^k\tau_0^{[0]},(-1)^n\tau_1^{[0]}\Big).
\end{align*}
Here $\hat C^{\pm[n,k]}_{a,\vec{\gamma},\vec{\delta}}$ with $\hat C^{+[n,k]}_{0,0,\vec{\bf n}}=\hat C^{-[n,k]}_{0,0,\vec{\bf k}}=1$ can be viewed as the constants independent of $t$, which satisfies the following recursion relations,
\begin{align*}
\hat C^{+[n,k+1]}_{a,\vec{\gamma},\vec{\delta}}=\hat C^{+[n,k]}_{a,\vec{\gamma},\vec{\delta}},\quad
\hat C^{+[n,k+1]}_{a+1,\{k+1\}\cup\vec{\gamma},\vec{\delta}\setminus\{\delta_l\}}=
\hat C^{+[n,k]}_{a,\vec{\gamma},\vec{\delta}}\sum_{i=1}^n(-1)^{n+k-l}\frac{\hat \Omega\Big(\hat \Phi_{i,x}^{[i-1]}(t'),\hat\Psi_{k+1}^{[i-1]}(t')\Big)
\tau_1^{+[\{\delta_l\}\cup\overrightarrow{\bf i-1}]}(t')}
{\hat\Phi_{i,x}^{[i-1]}(t')
\tau_1^{+[i-1]}(t')^2/\tau_0^{+[i-1]}(t')},\\
\hat C^{-[n+1,k]}_{a,\vec{\xi},\vec{\eta}}=\hat C^{-[n,k]}_{a,\vec{\xi},\vec{\eta}},\quad\hat C^{-[n+1,k]}_{a+1,\{n+1\}\cup\vec{\xi},\vec{\eta}\setminus\{\eta_l\}}=
\hat C^{-[n,k]}_{a,\vec{\xi},\vec{\eta}}\sum_{i=1}^k(-1)^{n+k-l}\frac{ \Omega\Big(\hat \Phi_{n+1}^{[i-1]}(t'),\hat\Psi_{i,x}^{[i-1]}(t')\Big)
\tau_0^{-[\{\eta_l\}\cup\overrightarrow{\bf i-1}]}(t')}
{\hat\Phi_{i,x}^{[i-1]}(t')
\tau_0^{-[i-1]}(t')^2/\tau_1^{-[i-1]}(t')},
\end{align*}
where $\vec{\gamma}\in H_{k,a}$, $\vec{\delta}\in n+H_{n_1,n_1-a}$, $\vec{\xi}\in H_{n,a}$, $\vec{\eta}\in k+H_{k_1,k_1-a}$.
\end{proposition}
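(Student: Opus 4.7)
The plan is to mirror the proof of Proposition \ref{kptdtitau} from the KP case, with the transformations $T_D$ and $T_I$ in place of $T_d$ and $T_i$. The essential complications, compared with the KP proof, are the sign factors $(-1)^n$ and $(-1)^k$ attached to $\tau_1$ and $\tau_0$ in Proposition \ref{5DTsuccessive} and the fact that tau pairs (rather than single tau functions) must be tracked.

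First I would establish the ``product forms'' in the first line of the proposition by two successive applications of Proposition \ref{5DTsuccessive}. The $n$-step $T_D$ chain already produces $(\tau_0^{+[\vec{\bf n}]},\tau_1^{+[\vec{\bf n}]}) = \hat\alpha_n\cdots\hat\alpha_1\bigl(\tau_0^{[0]},(-1)^n\tau_1^{[0]}\bigr)$. Viewing this as the initial data for a $k$-step $T_I$ chain, one applies Proposition \ref{5DTsuccessive} again; the intermediate adjoint eigenfunctions $\hat\Psi_j^{[n]},\ldots,\hat\Psi_j^{[n+k-1]}$ are represented, via the spectral representation of Corollary \ref{corspectralmkp} applied to the transformed hierarchy, by the fermions $\hat\alpha_j^{*[n]}={\rm Res}_\lambda \lambda^{n-1}\hat\rho_j^{*[n]}(\lambda)\psi^*(\lambda)$. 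This gives the first product form, the extra $(-1)^k$ in front of $\tau_0^{[0]}$ being generated by $k$ applications of $T_I$. The second product form follows by the symmetric argument, performing $T_I^{[\vec{\bf k}]}$ first and then $T_D$ applied $n$ times.

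Next I would prove the expanded sum forms by induction on $k$ for the $+$ case (the $-$ case being symmetric). Using Lemma \ref{sepmkpchange} iteratively along the $T_D$ chain, the transformed spectral density satisfies a decomposition of the form
\begin{equation*}
\hat\rho_j^{*[n]}(\lambda)=\frac{1}{\lambda^{n-1}}\hat\rho_j^{*[0]}(\lambda)+\sum_{i=1}^{n-1}\frac{1}{\lambda^{n-1-i}}\frac{\hat\Omega(\hat\Phi_{i,x}^{[i-1]}(t'),\hat\Psi_j^{[i-1]}(t'))}{\hat\Phi_{i,x}^{[i-1]}(t')\tau_1^{+[i-1]}(t')^2/\tau_0^{+[i-1]}(t')}w^{+[i-1]*}(t',\lambda),
\end{equation*}
which yields an identity
\begin{equation*}
\hat\alpha_j^{*[n]}=\hat\alpha_j^{*}+\sum_{l\in\mathbb Z}\sum_{i=1}^{n-1}c_{j,i}(t')\,\sigma_{t'}\!\bigl(\psi_l\,\hat\alpha_{\overrightarrow{\bf i-1}}\tau_1^{[0]}\bigr)\cdot\psi_l^*
\end{equation*}
analogous to \eqref{alphastarn}. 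The base case $k=0$ is exactly the $T_D$ part of Proposition \ref{5DTsuccessive}. For the inductive step, one applies $\hat\alpha_{k+1}^{*[n]}$ to the $k$-th expansion; the leading $\hat\alpha_{k+1}^{*}$ term extends the factor $\hat\alpha^*_{\vec{\bf k}\setminus\vec\gamma}$ to $\hat\alpha^*_{\overrightarrow{\bf k+1}\setminus\vec\gamma}$ (contributing the first recursion $\hat C^{+[n,k+1]}_{a,\vec\gamma,\vec\delta}=\hat C^{+[n,k]}_{a,\vec\gamma,\vec\delta}$), while each correction term $\sigma_{t'}(\psi_l\cdot)\psi_l^*$ is commuted past $\hat\alpha^*_{\vec{\bf k}\setminus\vec\gamma}$ using Lemma \ref{s4beta} to produce an action of $S$ on $\hat\alpha_{\overrightarrow{\bf i-1}}\tau_1^{[0]}\otimes\hat\alpha^*_{\vec{\bf k}\setminus\vec\gamma}\hat\alpha_{\vec\delta}\tau_1^{[0]}$. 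Evaluating this with Lemma \ref{s4beta} removes exactly one factor $\hat\alpha_{\delta_l}$ from $\hat\alpha_{\vec\delta}$ and introduces $\tau_1^{+[\{\delta_l\}\cup\overrightarrow{\bf i-1}]}(t')$ in the numerator, yielding the second recursion for $\hat C^{+[n,k+1]}_{a+1,\{k+1\}\cup\vec\gamma,\vec\delta\setminus\{\delta_l\}}$. The decomposition $H_{k+1,a+1}=H_{k,a+1}\sqcup(\{k+1\}\cup H_{k,a})$ assembles the two contributions into the claimed form.

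The main obstacle is bookkeeping for the sign $(-1)^{n+k-l}$ in the recursion: it arises as the product of three independent sources, namely the $(-1)^n$ attached to $\tau_1^{+[\vec{\bf n}]}$, the $(-1)^{n-l}$ generated by $S$ acting on $1\otimes\hat\alpha_{\vec\delta}$ in Lemma \ref{s4beta}, and a further sign from anticommuting the correction $\psi_l^*$ past the $k$-many factors $\hat\alpha^*_{\vec{\bf k}\setminus\vec\gamma}$ already present. All other manipulations are routine once the analogue of Corollary \ref{coralpdown} is noted, i.e.\ that $\hat\alpha_j^{*[n]}$ acting on $\hat\alpha_{\vec\delta}\tau^{[0]}_1$ with $\vec\delta\in H_{n,b}$ truncates to $\hat\alpha_j^{*[\vec\delta]}\hat\alpha_{\vec\delta}\tau^{[0]}_1$; this is what permits the sum over $i$ in the recursion to run only up to the relevant index rather than all the way to $n$. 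The $-$ case is proved by the identical scheme with the roles of $T_D\leftrightarrow T_I$, $\hat\alpha\leftrightarrow\hat\alpha^*$, and $\tau_0\leftrightarrow\tau_1$ interchanged.
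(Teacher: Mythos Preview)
Your approach is essentially the same as the paper's: derive the decomposition of $\hat\rho_j^{*[n]}$ from Lemma \ref{sepmkpchange}, convert it to a fermionic identity for $\hat\alpha_j^{*[n]}$ analogous to \eqref{alphastarn}, and then run the induction on $k$ exactly as in Proposition \ref{kptdtitau}. The paper's proof is in fact nothing more than the statement of the $\hat\rho_j^{*[n]}$ and $\hat\alpha_j^{*[n]}$ formulas followed by ``by the similar methods in Proposition \ref{kptdtitau}''.

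One correction to your displayed formula for $\hat\rho_j^{*[n]}$: iterating the $T_D$ row of Table VI gives
\[
\hat\rho_j^{*[n]}(\lambda)=\lambda^{-n}\hat\rho_j^{*[0]}(\lambda)-\sum_{i=1}^{n}\lambda^{-(n+1-i)}\frac{\hat\Omega\bigl(\hat\Phi_{i,x}^{[i-1]}(t'),\hat\Psi_j^{[i-1]}(t')\bigr)}{\hat\Phi_{i,x}^{[i-1]}(t')}\,w_x^{[i-1]}(t',\lambda),
\]
i.e.\ the leading power is $\lambda^{-n}$, the sum runs to $n$, and the correction involves $w_x^{[i-1]}$ (an object of wave-function type, carrying $e^{\xi}$), not the adjoint wave function $w^{*}$. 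This distinction is what makes the residue against $\psi^*(\lambda)$ produce the $S$-type structure $\sigma_{t'}(\psi_l\,\hat\alpha_{\overrightarrow{\bf i-1}}\tau_1^{[0]})\cdot\psi_l^*$ that you (correctly) write in the next line. With that fix the rest of your argument goes through as stated.
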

\begin{proof}
Firstly according to Lemma \ref{mkpfaylemma} and Lemma \ref{sepmkpchange}, one can obtain
\begin{align*}
&\hat\rho_j^{*[n]}(\lambda)=\lambda^{-n}\hat\rho_j^{*[0]}(\lambda)-\sum_{i=1}^n\frac{\hat \Omega\Big(\hat \Phi_{i,x}^{[i-1]}(t'),\hat\Psi_j^{[i-1]}(t')\Big)}
{\lambda^{n+1-i}\hat\Phi_{i,x}^{[i-1]}(t')}w_x^{[i-1]}(t',\lambda),\\
&\hat\alpha_j^{*[n]}=\hat\alpha_j^{*[0]}
-\sum_{l\in\mathbb{Z}}
\sum_{i=1}^n(-1)^{i-1}\frac{\hat \Omega\Big(\hat \Phi_{i,x}^{[i-1]}(t'),\hat\Psi_j^{[i-1]}(t')\Big)
\tau_0^{+[i-1]}(t')}
{\hat\Phi_{i,x}^{[i-1]}(t')
\tau_1^{+[i-1]}(t')^2}\sigma_{t'}
(\psi_l\hat\alpha_{\overrightarrow{\bf i-1}}\tau_1^{[0]})\cdot\psi^*_l.
\end{align*}
Then by the similar methods in Proposition \ref{kptdtitau}, one can prove this proposition.
\end{proof}
\noindent{\bf Remark:} Note that in the modified KP case, the tau pairs $(\tau_0,\tau_1)$ and $(\tau_0',\tau_1')$ can determine the same dressing structures if and only if $(\tau_0,\tau_1)=c(\tau_0',\tau_1')$. Therefore the sign in Proposition
\ref{5DTsuccessive} and Proposition \ref{mkptdti} can not be moved to ensure that the transformed modified KP systems share the same Lax and dressing structures.

By similar methods as Corollary \ref{coralpdown} and \ref{coralpup}, one can obtain the two corollaries below.
\begin{corollary}
For any $\vec{\gamma}\in H_{k,a}$ and $\vec{\delta}\in H_{n,b}$, one has the following relation
\begin{align*}
\hat\alpha_{\vec{\gamma}}^{*+[\vec{\bf n}]}\hat\alpha_{\vec{\delta}}\Big(\tau_{0}^{[0]},\tau_1^{[0]}\Big)
=\hat\alpha_{\vec{\gamma}}^{*+[\vec{\delta}]}
\hat\alpha_{\vec{\delta}}\Big(\tau_{0}^{[0]},\tau_1^{[0]}\Big),\quad \hat \alpha_{\vec{\gamma}}^{-[\vec{\bf n}]}\hat\alpha^*_{\vec{\delta}}\Big(\tau_{0}^{[0]},\tau_1^{[0]}\Big)
=\hat\alpha_{\vec{\gamma}}^{-[\vec{\delta}]}
\hat\alpha^*_{\vec{\delta}}\Big(\tau_{0}^{[0]},\tau_1^{[0]}\Big).
\end{align*}
\end{corollary}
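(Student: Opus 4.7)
The plan is to mimic the argument used for the KP analog Corollary \ref{coralpdown}, making use of the explicit Fermionic expressions of $\hat\alpha_j^{*[n]}$ and $\hat\alpha_l^{[k]}$ given in the proof of Proposition \ref{mkptdti}. I will focus on the first identity; the second is proved in exactly the same way by swapping the roles of $T_D$ and $T_I$ and interchanging $\hat\alpha_l$ with $\hat\alpha_l^*$.

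First I would handle the base case $\vec{\gamma}=\{j\}\in H_{k,1}$ and $\vec{\delta}=\vec{\bf b}$. From the explicit formula
\begin{align*}
\hat\alpha_j^{*+[\vec{\bf n}]}=\hat\alpha_j^{*}-\sum_{l\in\mathbb{Z}}\sum_{i=1}^{n}(-1)^{i-1}
\frac{\hat\Omega(\hat\Phi_{i,x}^{[i-1]}(t'),\hat\Psi_j^{[i-1]}(t'))\tau_0^{+[i-1]}(t')}{\hat\Phi_{i,x}^{[i-1]}(t')\,\tau_1^{+[i-1]}(t')^2}\sigma_{t'}\bigl(\psi_l\hat\alpha_{\overrightarrow{\bf i-1}}\tau_1^{[0]}\bigr)\psi_l^*,
\end{align*}
the difference $\hat\alpha_j^{*+[\vec{\bf n}]}-\hat\alpha_j^{*+[\vec{\bf b}]}$ consists exactly of the terms with $i=b+1,\dots,n$. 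When applied to $\hat\alpha_{\vec{\bf b}}(\tau_0^{[0]},\tau_1^{[0]})$, each such term produces a factor of the shape $S\bigl(\hat\alpha_{\overrightarrow{\bf i-1}}\tau_1^{[0]}\otimes\hat\alpha_{\vec{\bf b}}\tau_1^{[0]}\bigr)$, which, in view of Proposition \ref{5DTsuccessive} (identifying $\hat\alpha_{\overrightarrow{\bf i-1}}\tau_1^{[0]}$ and $\hat\alpha_{\vec{\bf b}}\tau_1^{[0]}$ with transformed tau functions $\tau_1^{+[\overrightarrow{\bf i-1}]}$ and $\tau_1^{+[\vec{\bf b}]}$ in $\mathcal{F}_{b+1}$ respectively), corresponds to a bilinear pairing between the modified KP tau pair in the Darboux chain; the analog of the vanishing bilinear relation \eqref{staupntaupn} in the mKP setting (which follows from the same Fermionic mechanism together with Theorem \ref{stnnkkotnk} applied to $\mathcal{F}_{b+1}\otimes\mathcal{F}_{b+1}$) forces this pairing to be zero whenever $i-1\geq b$. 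Hence the excess terms vanish and we obtain $\hat\alpha_j^{*+[\vec{\bf n}]}\hat\alpha_{\vec{\bf b}}(\tau_0^{[0]},\tau_1^{[0]})=\hat\alpha_j^{*+[\vec{\bf b}]}\hat\alpha_{\vec{\bf b}}(\tau_0^{[0]},\tau_1^{[0]})$.

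Next I would promote this to arbitrary $\vec{\delta}\in H_{n,b}$. Since $T_D$ Darboux transformations commute among themselves (the mKP analog of Lemma \ref{commkpdt}), and since $\tau_1^{+[\vec{\bf n}]}\approx\tau_1^{+[(\vec{\bf n}\setminus\vec{\delta})\cup\vec{\delta}]}$ up to scalar, the spectral density $\hat\rho_j^{*[\vec{\bf n}]}$ can be equivalently read off from the reordered chain, yielding $\hat\alpha_j^{*+[\vec{\bf n}]}=\hat\alpha_j^{*+[(\vec{\bf n}\setminus\vec{\delta})\cup\vec{\delta}]}$ as Fermionic vectors. Combining with the base case gives the result for $a=1$ and general $\vec{\delta}$. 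For general $\vec{\gamma}=(\gamma_a,\dots,\gamma_1)\in H_{k,a}$, induct on $a$: writing $\hat\alpha_{\vec{\gamma}}^{*+[\vec{\bf n}]}=\hat\alpha_{\gamma_a}^{*+[\vec{\bf n}]}\hat\alpha_{\vec{\gamma}\setminus\{\gamma_a\}}^{*+[\vec{\bf n}]}$, pushing the inductive step through the anticommutation of Fermionic fields $\hat\alpha_i^{*}\in V^*$, and applying the single-generator identity just proved to each factor in turn yields the general identity.

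The main obstacle, I expect, will be certifying the vanishing of $S(\hat\alpha_{\overrightarrow{\bf i-1}}\tau_1^{[0]}\otimes\hat\alpha_{\vec{\bf b}}\tau_1^{[0]})$ for $i-1\geq b$ in the mKP setting, since the excerpt has not yet explicitly stated the corresponding bilinear equation for the $T_D$ Darboux chain of the modified KP hierarchy. Once this bilinear identity is in hand (it should follow from Lemma \ref{sl4beta} applied to the Fermionic forms $\hat\alpha_{\overrightarrow{\bf i-1}}\tau_1^{[0]}$ with charge increased by one, exactly paralleling the derivation of \eqref{staupntaupn} in $\mathcal{F}_0$), everything else reduces to symbolic manipulations with the Clifford algebra and the anticommutation $[\hat\alpha_i^{*},\hat\alpha_j^{*}]_+=0$.
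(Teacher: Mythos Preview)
Your approach is essentially the same as the paper's, which simply refers back to the proof of Corollary~\ref{coralpdown} in the KP case. The overall scheme---base case $\vec{\gamma}=\{j\}$ with $\vec{\delta}=\vec{\bf b}$, then general $\vec{\delta}$ via commutativity of $T_D$, then induction on $a$---is exactly what is intended.

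One small oversight: the operator $\hat\alpha_j^{*+[\vec{\bf n}]}$ acts on the \emph{pair} $(\tau_0^{[0]},\tau_1^{[0]})$, so you must check both components. For the $\tau_0$-component the extra terms (with $i=b+1,\dots,n$) produce $(\sigma_{t'}\otimes 1)\cdot S\bigl(\hat\alpha_{\overrightarrow{\bf i-1}}\tau_1^{[0]}\otimes\hat\alpha_{\vec{\bf b}}\tau_0^{[0]}\bigr)$, not the $\tau_1\otimes\tau_1$ expression you wrote; this lives in $\mathcal{F}_i\otimes\mathcal{F}_b$ rather than $\mathcal{F}_{b+1}\otimes\mathcal{F}_{b+1}$. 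The vanishing for $i\geq b+1$ still holds, and in fact you do not need to invoke any mKP bilinear identity: by Lemma~\ref{s4beta} the expression expands as a sum over $l\in\{1,\dots,b\}$ (respectively $l\in\{0,1,\dots,b\}$ for the $\tau_1$-component) of terms with first tensor factor $\hat\alpha_l\hat\alpha_{i-1}\cdots\hat\alpha_0\,\tau_0^{[0]}$, and since $l\leq b\leq i-1$ the repeated $\hat\alpha_l$ kills each term via $\hat\alpha_l^2=0$. This is precisely the same mechanism that makes $S(\alpha_{\overrightarrow{l-1}}\tau^{[0]}\otimes\alpha_{\vec{\bf b}}\tau^{[0]})=0$ work in the KP proof, so your anticipated ``main obstacle'' is no obstacle at all.
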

\begin{corollary}\label{corhatalpup}
$\hat\alpha_{\vec{\bf k}}^{*+[\vec{\bf n}]}{\hat\alpha}_{\overrightarrow{\bf n_1+n}}$ and $\hat\alpha_{\vec{\bf n}}^{-[\vec{\bf k}]}{\hat\alpha}^*_{\overrightarrow{\bf k_1+k}}$ act on $\Big(\tau_0^{[0]},\tau_1^{[0]}\Big)$ in the way below
\begin{align*}
&\hat\alpha_{\vec{\bf k}}^{*+[\vec{\bf n}]}\hat{\alpha}_{\overrightarrow{\bf n_1+n}}\Big(\tau_0^{[0]},\tau_1^{[0]}\Big)
=\sum_{a=0}^k\sum_{\vec{\gamma}\in H_{k,a}}\sum_{\vec{\delta}\in n+H_{n_1,n_1-a}}\hat{\mathcal{C}}^{+[k,n_1,n]}_{a,\vec{\gamma},\vec{\delta}}\hat\alpha
^{*+[\vec{\delta}\cup\vec{\bf n}]}_{\vec{\bf k}\setminus\vec{\gamma}}\hat\alpha_{\vec{\delta}\cup \vec{\bf n}}
\Big(\tau_0^{[0]},\tau_1^{[0]}\Big),\\
&\hat\alpha_{\vec{\bf n}}^{-[\vec{\bf k}]}{\hat\alpha}^*_{\overrightarrow{\bf k_1+k}}\Big(\tau_0^{[0]},\tau_1^{[0]}\Big)
=\sum_{a=0}^n\sum_{\vec{\gamma}\in H_{n,a}}\sum_{\vec{\delta}\in k+H_{k_1,k_1-a}}\hat{\mathcal{C}}^-_{a,\vec{\gamma},\vec{\delta}}\hat\alpha
^{-[\vec{\delta}\cup\vec{\bf k}]}_{\vec{\bf n}\setminus\vec{\gamma}}\hat\alpha^*_{\vec{\delta}\cup \vec{\bf k}}
\Big(\tau_0^{[0]},\tau_1^{[0]}\Big).
\end{align*}
Here $\hat{\mathcal{C}}^{+[k,n_1,n]}_{a,\vec{\gamma},\vec{\delta}}$ and $\hat{\mathcal{C}}^{-[n,k_1,k]}_{a,\vec{\gamma},\vec{\delta}}$ with $\hat{\mathcal{C}}^{+[k,n_1,n]}_{0,0,\overrightarrow{\bf n_1}}=\hat{\mathcal{C}}^{-[n,k_1,k]}_{0,0,\overrightarrow{\bf k_1}}=1$ are some constants independent of $t$, satisfying
$\hat{\mathcal{C}}^{+[k+1,n_1,n]}_{a,\vec{\gamma},\vec{\delta}}
=\hat{\mathcal{C}}^{+[k,n_1,n]}_{a,\vec{\gamma},\vec{\delta}}$, $\hat{\mathcal{C}}^{-[n+1,k_1,k]}_{a,\vec{\gamma},\vec{\delta}}
=\hat{\mathcal{C}}^{-[n,k_1,k]}_{a,\vec{\gamma},\vec{\delta}}$ and
\begin{align*}
&\hat{\mathcal{C}}^{+[k+1,n_1,n]}_{a+1,\{k+1\}\cup\vec{\gamma},\vec{\delta}\setminus\{\delta_l\}}=
\hat{\mathcal{C}}^{+[k,n_1,n]}_{a,\vec{\gamma},\vec{\delta}}\sum_{i=1}^l(-1)^{n_1+k-l-1}\frac{\hat \Omega\Big(\hat \Phi_{n+\delta_{i},x}^{+[\delta_{\overrightarrow{\bf i-1}}\cup\vec{\bf n}]}(t'),\hat\Psi_{k+1}^{+[\delta_{\overrightarrow{\bf i-1}}\cup\vec{\bf n}]}(t')\Big)
\tau_1^{+[\{\delta_l\}\cup\delta_{\overrightarrow{\bf i-1}}\cup\vec{\bf n}]}(t')}
{\hat\Phi_{n+\delta_{i},x}^{+[\delta_{\overrightarrow{\bf i-1}}\cup\vec{\bf n}]}(t')
\tau_1^{+[\delta_{\overrightarrow{\bf i-1}}\cup\vec{\bf n}]}(t')^2/\tau_0^{+[\delta_{\overrightarrow{\bf i-1}}\cup\vec{\bf n}]}(t')}.\\
&\hat{\mathcal{C}}^{-[n+1,k_1,k]}_{a+1,\{n+1\}\cup\vec{\xi},\vec{\eta}\setminus\{\eta_l\}}=
\hat{\mathcal{C}}^{-[n,k_1,k]}_{a,\vec{\xi},\vec{\eta}}
\sum_{i=1}^l(-1)^{k_1+n-l-1}\frac{ \Omega\Big(\hat \Phi_{n+1}^{-[\eta_{\overrightarrow{\bf i-1}}\cup\vec{\bf n}]}(t'),\hat\Psi_{k+\eta_{i},x}^{-[\eta_{\overrightarrow{\bf i-1}}\cup\vec{\bf n}]}(t')\Big)
\tau_0^{-[\{\eta_l\}\cup\eta_{\overrightarrow{\bf i-1}}\cup\vec{\bf n}]}(t')}
{\hat\Psi_{k+\eta_{i},x}^{-[\eta_{\overrightarrow{\bf i-1}}\cup\vec{\bf n}]}(t')
\tau_0^{-[\eta_{\overrightarrow{\bf i-1}}\cup\vec{\bf n}]}(t')^2/\tau_1^{-[\eta_{\overrightarrow{\bf i-1}}\cup\vec{\bf n}]}(t')},
\end{align*}
with $\vec{\gamma}\in H_{k,a}$, $\vec{\delta}\in n+H_{n_1,n_1-a}$, $\vec{\xi}\in H_{n,a}$, $\vec{\eta}\in k+H_{k_1,k_1-a}$.
\end{corollary}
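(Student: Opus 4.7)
The plan is to mirror the strategy used for Corollary \ref{coralpup} in the KP setting, proceeding by induction on $k$ for the first formula (and symmetrically on $n$ for the second), with the tau-pair and sign bookkeeping adapted to the modified KP setting. The base case $k=0$ reduces to the trivial identity $\hat\alpha_{\overrightarrow{\bf n_1+n}}(\tau_0^{[0]},\tau_1^{[0]})=\hat\alpha_{\overrightarrow{\bf n_1+n}}(\tau_0^{[0]},\tau_1^{[0]})$ with $\hat{\mathcal{C}}^{+[0,n_1,n]}_{0,0,\overrightarrow{\bf n_1}}=1$.

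The inductive step requires an expression for $\hat\alpha_{k+1}^{*+[\vec\delta\cup\vec{\bf n}]}$ in terms of $\hat\alpha_{k+1}^{*+[\vec{\bf n}]}$. First I would derive, by iterating the $T_D$-row of Lemma \ref{sepmkpchange} along the extended Darboux chain from step $\vec{\bf n}$ to step $\vec\delta\cup\vec{\bf n}$, the analog of the formula for $\hat\alpha_j^{*[n]}$ obtained in Proposition \ref{mkptdti}, namely
\begin{align*}
\hat\alpha_{k+1}^{*+[\vec\delta\cup\vec{\bf n}]}=\hat\alpha_{k+1}^{*+[\vec{\bf n}]}+\sum_{l\in\mathbb Z}\sum_{i=1}^{n_1-a}(-1)^{i-1}\frac{\hat\Omega\bigl(\hat\Phi_{n+\delta_i,x}^{+[\vec\delta_{\overrightarrow{\bf i-1}}\cup\vec{\bf n}]}(t'),\hat\Psi_{k+1}^{+[\vec\delta_{\overrightarrow{\bf i-1}}\cup\vec{\bf n}]}(t')\bigr)\,\tau_0^{+[\vec\delta_{\overrightarrow{\bf i-1}}\cup\vec{\bf n}]}(t')}{\hat\Phi_{n+\delta_i,x}^{+[\vec\delta_{\overrightarrow{\bf i-1}}\cup\vec{\bf n}]}(t')\,\tau_1^{+[\vec\delta_{\overrightarrow{\bf i-1}}\cup\vec{\bf n}]}(t')^2}\,\sigma_{t'}\bigl(\psi_l\,\hat\alpha_{\vec\delta_{\overrightarrow{\bf i-1}}}\hat\alpha_{\vec{\bf n}}\tau_1^{[0]}\bigr)\,\psi_l^{*},
\end{align*}
obtained by the same spectral-representation argument used to derive \eqref{alphastarn} in the KP case. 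Next, I apply $\hat\alpha_{k+1}^{*+[\vec{\bf n}]}$ to the inductive hypothesis and rewrite it as $\hat\alpha_{k+1}^{*+[\vec\delta\cup\vec{\bf n}]}$ minus the correction sum above; the first piece contributes the $\vec\gamma\in H_{k,a}$ part of the target formula via $\hat{\mathcal{C}}^{+[k+1,n_1,n]}_{a,\vec\gamma,\vec\delta}=\hat{\mathcal{C}}^{+[k,n_1,n]}_{a,\vec\gamma,\vec\delta}$, while the correction, after computing $S\bigl(\hat\alpha_{\vec\delta_{\overrightarrow{\bf i-1}}}\hat\alpha_{\vec{\bf n}}\tau_1^{[0]}\otimes\hat\alpha^{*+[\vec\delta\cup\vec{\bf n}]}_{\vec{\bf k}\setminus\vec\gamma}\hat\alpha_{\vec\delta\cup\vec{\bf n}}\tau^{[0]}\bigr)$ with Lemma \ref{s4beta} and invoking the preceding corollary to replace $\hat\alpha^{*+[\vec\delta\cup\vec{\bf n}]}_{\vec{\bf k}\setminus\vec\gamma}$ by $\hat\alpha^{*+[(\vec\delta\setminus\{\delta_l\})\cup\vec{\bf n}]}_{\vec{\bf k}\setminus\vec\gamma}$, produces the $\{k+1\}\cup\vec\gamma$ part with the stated coefficient.

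The hard part will be the sign bookkeeping. One must combine three alternating signs: the $(-1)^{k-a}$ coming from anticommuting $\hat\alpha_{k+1}^{*+[\vec{\bf n}]}$ past $\hat\alpha^{*+[\vec\delta\cup\vec{\bf n}]}_{\vec{\bf k}\setminus\vec\gamma}$, the $(-1)^{i-1}$ inside the above expansion of $\hat\alpha_{k+1}^{*+[\vec\delta\cup\vec{\bf n}]}$, and the sign $(-1)^{n_1-a-l}$ produced when $S$ sweeps through the length-$(n_1-a)$ product $\hat\alpha_{\vec\delta\cup\vec{\bf n}}$ via Lemma \ref{s4beta}, and confirm that together they yield the $(-1)^{n_1+k-l-1}$ prefactor in the recursion for $\hat{\mathcal{C}}^{+[k+1,n_1,n]}_{a+1,\{k+1\}\cup\vec\gamma,\vec\delta\setminus\{\delta_l\}}$. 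Attention is also needed to ensure the tau-pair signs $(-1)^k$ and $(-1)^n$ that appear in Proposition \ref{mkptdti} are consistent throughout the induction. Once this is settled, the second formula follows by a parallel induction under the symmetry $T_D\leftrightarrow T_I$, $\hat\alpha\leftrightarrow\hat\alpha^*$, $\hat\Phi\leftrightarrow\hat\Psi$, $\tau_0\leftrightarrow\tau_1$, with the $\Omega$-SEP (rather than $\hat\Omega$) entering because of the $T_I$-row of Lemma \ref{sepmkpchange}.
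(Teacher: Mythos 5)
Your proposal follows exactly the route the paper intends: the paper offers no separate proof of this corollary, stating only that it follows ``by similar methods as'' Corollaries \ref{coralpdown} and \ref{coralpup}, and your induction on $k$ — with the mKP analogue of \eqref{alphastarn} derived from Lemma \ref{sepmkpchange}, the $S$-action computed via Lemma \ref{s4beta}, and the index-lowering via the preceding corollary — is precisely the adaptation of the paper's proof of Corollary \ref{coralpup} that this remark calls for. The ingredients you list (including the $\hat\Omega/\Phi_x$-type coefficients matching those in the proof of Proposition \ref{mkptdti} and the three sign sources to be combined into $(-1)^{n_1+k-l-1}$) are the correct ones, so the proposal is sound and essentially identical in approach.
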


\subsection{Bilinear equations in the Darboux chains of the modified KP hierarchy}
\label{subsectbilinmkp}
 The bilinear equations in the Darboux chains of the modified KP hierarchy are mainly the ones for $\tau_0^{\pm[i,j]}\otimes \tau_0^{\pm[l,m]}$, $\tau_1^{\pm[i,j]}\otimes \tau_1^{\pm[l,m]}$, $\tau_0^{\pm[i,j]}\otimes \tau_1^{\pm[l,m]}$ and $\tau_1^{\pm[i,j]}\otimes \tau_0^{\pm[l,m]}$ with $i,l\in\{n',n\}$ and $j,m\in\{k',k\}$, which are in fact all contained in the KP case. But there are still some particular bilinear relations for modified KP itself, that is the ones for $\tau_0^{\pm[i,j]}\otimes \tau_1^{\pm[l,m]}$ and $\tau_1^{\pm[i,j]}\otimes \tau_0^{\pm[l,m]}$. In fact by noting the facts $S(\tau_0\otimes\tau_1)=\tau_1\otimes\tau_0$ and $S(\tau_1\otimes\tau_0)=0$, one can get the following results by using similar methods in Theorem \ref{stnnkkotnk} and Theorem \ref{stnnkotnkk}.
 \begin{theorem}
Given $n'\geq n$, $k'\geq k$, $j=0,1$, $l\geq0$,
\begin{align*}
&S^l\Big(\tau_j^{+[\overrightarrow{\bf n'},\overrightarrow{\bf k'}]}\otimes\tau_{1-j}^{+[\vec{\bf n},\vec{\bf k}]}\Big)=l!\sum_{i=0}^{1-j}\sum_{\vec{\gamma}\in k+H_{k'-k,l-i}}(-1)^{(l-i)(k'-1)-|\vec{\gamma}|}\tau_{i+j}^{+[\overrightarrow{\bf n'},\overrightarrow{\bf k'}\setminus\vec{\gamma}]}\otimes\tau_{1-i-j}^{+[\vec{\bf n},\vec{\gamma}\cup\vec{\bf k}]},\\
&S^l\Big(\tau_j^{-[\vec{\bf n},\vec{\bf k}]}\otimes\tau_{1-j}^{-[\overrightarrow{\bf n'},\overrightarrow{\bf k'}]}\Big)=l!\sum_{i=0}^{1-j}\sum_{\vec{\gamma}\in n+H_{n'-n,l-i}}(-1)^{(l-i)(n'-1)-|\vec{\gamma}|}\tau_{i+j}^{[\vec{\gamma}\cup\vec{\bf n},\vec{\bf k}]}\otimes\tau_{1-i-j}^{[\overrightarrow{\bf n'}\setminus\vec{\gamma},\overrightarrow{\bf k'}]},\\
&S\Big(\tau_j^{\pm[\overrightarrow{\bf n'},\vec{\bf k}]}\otimes\tau_{1-j}^{\pm[\vec{\bf n},\overrightarrow{\bf k'}]}\Big)=(1-j)\tau_{1-j}^{\pm[\overrightarrow{\bf n'},\vec{\bf k}]}\otimes\tau_j^{\pm[\vec{\bf n},\overrightarrow{\bf k'}]},\\
&S^l(\tau_j^{\pm[\vec{\bf n},\overrightarrow{\bf k'}]}\otimes\tau_{1-j}^{\pm[\overrightarrow{\bf n'},\vec{\bf k}]})
=l!\sum_{i=0}^{1-j}\sum_{a=0}^{l-i}\sum_{\vec{\gamma}\in k+H_{k'-k,l-i-a}}
\sum_{\vec{\delta}\in n+H_{n'-n,a}}(-1)^{a(n'-k)+(l-i)(k'-1)-|\vec{\gamma}|-|\vec{\delta}|}\times\nonumber\\
&(-1)^{\frac{1\mp1}{2}\Big((k-k')a+(n'-n)(l-i-a)\Big)}
\Big(\tau_{i+j}^{\pm[\vec{\delta}\cup\vec{\bf n},\overrightarrow{\bf k'}\setminus\vec{\gamma}]}\otimes\tau_{1-i-j}^{\pm[\overrightarrow{\bf n'}\setminus\vec{\delta},\vec{\gamma}\cup\vec{\bf k}]}+\sum_{\mu=\pm}\mathcal{M}^{\pm(i+j)}_{a+n,k'-l+i+a,n'-a,l-i-a+k;\mu}\Big),
\end{align*}
In particular,
\begin{align*}
S^{k'-k+1-j}&\Big(\tau_j^{+[\overrightarrow{\bf n'},\overrightarrow{\bf k'}]}\otimes\tau_{1-j}^{+[\vec{\bf n},\vec{\bf k}]}\Big)=(-1)^{\frac{(k'-k)(k'-k-3)}{2}}(k'-k+1-j)!\tau_1^{+[\overrightarrow{\bf n'},\vec{\bf k}]}\otimes\tau_0^{+[\vec{\bf n},\overrightarrow{\bf k'}]},\\
S^{n'-n+1-j}&\Big(\tau_j^{-[\vec{\bf n},\vec{\bf k}]}\otimes\tau_{1-j}^{-[\overrightarrow{\bf n'},\overrightarrow{\bf k'}]}\Big)=(-1)^{\frac{(n'-n)(n'-n-3)}{2}}(n'-n+1-j)!\tau_1^{-[\overrightarrow{\bf n'},\vec{\bf k}]}\otimes\tau_0^{-[\vec{\bf n},\overrightarrow{\bf k'}]},\\
S^{k'-k+n'-n+1-j}&(\tau_j^{\pm[\vec{\bf n},\overrightarrow{\bf k'}]}\otimes\tau_{1-j}^{\pm[\overrightarrow{\bf n'},\vec{\bf k}]})
=(-1)^{(n'-n)(k'-k)+\frac{(n'-n)(n'-n-3)}{2}
+\frac{(k'-k)(k'-k-3)}{2}}\nonumber\\
&\times(n'-n+k'-k+1-j)!\Big(\tau_1^{\pm[\overrightarrow{\bf n'},\vec{\bf k}]}\otimes\tau_0^{\pm[\vec{\bf n},\overrightarrow{\bf k'}]}+\mathcal{M}_{n'knk';\pm}^{\pm}
\Big),\\
S^{k'-k+2-j}&\Big(\tau_j^{+[\overrightarrow{\bf n'},\overrightarrow{\bf k'}]}\otimes\tau_{1-j}^{+[\vec{\bf n},\vec{\bf k}]}\Big)=0,\quad S^{n'-n+2-j}\Big(\tau_j^{-[\overrightarrow{\bf n'},\overrightarrow{\bf k'}]}\otimes\tau_{1-j}^{-[\vec{\bf n},\vec{\bf k}]}\Big)=0,\\
&S^{k'-k+n'-n+2-j}(\tau_j^{\pm[\vec{\bf n},\overrightarrow{\bf k'}]}\otimes\tau_{1-j}^{\pm[\overrightarrow{\bf n'},\vec{\bf k}]})
=0,
\end{align*}
where $\mathcal{M}_{n'knk';+}^{\pm(i)}
=\mathcal{O}^{\pm(i)}_{n'-1,k-1}\otimes\mathcal{O}^{\pm(1-i)}_{n,k'}$, $\mathcal{M}_{n'knk';-}^{\pm(i)}=\mathcal{O}^{\pm (i)}_{n',k}\otimes
\mathcal{O}^{\pm(1-i)}_{n-1,k'-1}$ and $\mathcal{O}^{\pm(i)}_{n,k}$ means the linear combinations of the transformed tau functions in the form of $\tau_i^{\pm[\overrightarrow{\bf n_1},\overrightarrow{\bf k_1}]}$ for $n_1\leq n$ , $k_1\leq k$ and $i=0,1$, under no more than $n$-step $T_D$ and no more than $k$-step $T_I$ with the the generating eigenfunctions and adjoint eigenfunctions in $\hat T^{\pm[\vec{\bf n},\vec{\bf k}]}$.
\end{theorem}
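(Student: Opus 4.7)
The plan is to reduce everything to the Fermionic expressions given by Proposition \ref{mkptdti}, push $S^l$ through the $\hat\alpha$- and $\hat\alpha^*$-strings by means of Lemma \ref{slbetaotimesbeta}, and then repackage the leftover $S$-actions on the ``vacuum'' factor $(\tau_0^{[0]}, \tau_1^{[0]})$. The crucial input distinguishing the modified KP case from the KP case is the pair of identities
\begin{equation*}
S(\tau_0^{[0]}\otimes\tau_1^{[0]})=\tau_1^{[0]}\otimes\tau_0^{[0]},\qquad
S(\tau_1^{[0]}\otimes\tau_0^{[0]})=0,
\end{equation*}
which follow from Proposition \ref{ndttd} applied to $(\tau_0,\tau_1)=(g|0\rangle,\hat\alpha_0g|0\rangle)$. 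These two relations are precisely the source of the index $i\in\{0,1-j\}$: once $S$ is allowed to act on the rightmost factor, it will contribute at most one extra $\tau_1\otimes\tau_0$-swap if the second slot starts with $\tau_1$ (i.e.\ when $j=0$), and nothing if it starts with $\tau_0$ (i.e.\ when $j=1$).

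First I would treat the pure-$T_D$ (resp.\ pure-$T_I$) chains, which corresponds to the first two displayed formulas. Writing
\begin{equation*}
\tau_j^{+[\overrightarrow{\bf n'},\overrightarrow{\bf k'}]}=\pm\hat\alpha^{*[n']}_{\overrightarrow{\bf k'}}\hat\alpha_{\overrightarrow{\bf n'}}\tau_j^{[0]},\qquad
\tau_{1-j}^{+[\vec{\bf n},\vec{\bf k}]}=\pm\hat\alpha^{*[n]}_{\vec{\bf k}}\hat\alpha_{\vec{\bf n}}\tau_{1-j}^{[0]}
\end{equation*}
(with signs from Proposition \ref{5DTsuccessive}), I apply Lemma \ref{slbetaotimesbeta} with $\vec{\beta}$-blocks matching the $\hat\alpha$'s. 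This reduces $S^l$ applied to the tensor product to a sum over the $B_{k_1,n_2,j}^{\pm}$-terms, each of which is $S^{l-j}$ acting on $\tau_j^{[0]}\otimes \tau_{1-j}^{[0]}$. Only $j=l-i$ with $i\in\{0,1-j\}$ gives a nonzero surviving $S$-action on the vacuum (again by the base identities above), which exactly produces the index $i$ in the theorem. Corollary \ref{corhatalpup} then packages the remaining $\hat\alpha$-strings as genuine transformed tau functions $\tau_{i+j}^{+[\overrightarrow{\bf n'},\overrightarrow{\bf k'}\setminus\vec\gamma]}\otimes \tau_{1-i-j}^{+[\vec{\bf n},\vec\gamma\cup\vec{\bf k}]}$. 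The specializations (the $S^{k'-k+1-j}$ and $S^{k'-k+2-j}$ identities) follow by noting that $B_{k_1,n_2,j}^\pm=0$ for $j>k_1+n_2$, so the summation truncates at $l=(k'-k)+(1-j)$, and vanishes for $l$ one larger. The sign in the top-index identity comes from $B^\pm_{k_1,n_2,k_1+n_2}$ in Lemma \ref{slbetaotimesbeta}.

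Second I would handle the mixed chains $\tau_j^{\pm[\vec{\bf n},\overrightarrow{\bf k'}]}\otimes \tau_{1-j}^{\pm[\overrightarrow{\bf n'},\vec{\bf k}]}$. Here the $S(\hat T^+\otimes\hat T^+)$-calculation follows the template of Theorem \ref{stnnkotnkk}: I would use Corollary \ref{corhatalpup} to move the ``shared'' $\hat\alpha_{\vec{\bf n}}$ and $\hat\alpha^*_{\vec{\bf k}}$ blocks to the right, then apply Lemma \ref{slbetaotimesbeta} where both $A^\pm_{n,j}$ and $B^\pm_{k_1,n_2,j}$ contribute, giving the double sum over $a$ and $l-a$. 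The additional $\mathcal{M}_{\cdots;\mu}^\pm$-terms arise from the ``crossed'' contractions which, as in the KP case, only involve $\hat\alpha$-strings applied to the $\mathcal{O}^{\pm(i)}$-spaces, and whose structure is dictated by (the modified KP analogue of) the relation \eqref{sootimeso}. The sign $(-1)^{\frac{1\mp1}{2}(\cdots)}$ bookkeeps whether we are in the $T_D$-first or $T_I$-first chain, mirroring Theorem \ref{stnnkotnkk}.

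The main obstacle will be the sign tracking—specifically, showing that the exponents $(-1)^{(l-i)(k'-1)-|\vec\gamma|}$ and the parity factors $(-1)^{\frac{(k'-k)(k'-k-3)}{2}}$ in the top-index identities are correct. The $-3$ (as opposed to the $-1$ in the KP case, cf.\ \eqref{skk}) appears because of the extra swap $S(\tau_0\otimes\tau_1)=\tau_1\otimes\tau_0$, which contributes one additional sign relative to the KP count. A careful bookkeeping, keeping track of the parity shift introduced by the $(i+j,1-i-j)$-permutation of the two tau-slots, is the only delicate ingredient; all the structural identities are already supplied by Lemma \ref{slbetaotimesbeta}, Corollary \ref{corhatalpup} and Proposition \ref{mkptdti}.
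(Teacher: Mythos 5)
Your proposal follows essentially the same route the paper indicates for this theorem: reduce to the Fermionic expressions of Proposition \ref{mkptdti} and Proposition \ref{5DTsuccessive}, push $S^l$ through the $\hat\alpha$- and $\hat\alpha^*$-strings via Lemma \ref{slbetaotimesbeta} and Corollary \ref{corhatalpup}, and use the two base identities $S(\tau_0\otimes\tau_1)=\tau_1\otimes\tau_0$ and $S(\tau_1\otimes\tau_0)=0$ to produce the extra summation index $i$ --- which is precisely the paper's stated justification (``similar methods in Theorem \ref{stnnkkotnk} and Theorem \ref{stnnkotnkk}'' combined with those two facts). The only minor slip is attributing the two base identities to Proposition \ref{ndttd}; they actually come from Lemma \ref{s4beta} together with $S(\tau_0\otimes\tau_0)=0$, as the paper notes just before the theorem statement.
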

We end this subsection with the bilinear equations in the Darboux chain generated by the wave functions.
Denote $\hat T_{\lambda\mu}^{[\vec{\bf n},\vec{\bf k}]}$ as the successive applications of $n$-step $T_D$ and $k$-step $T_I$ in the following way,
\begin{align*}
&\hat T_{\lambda\mu}^{[\vec{\bf n},\vec{\bf k}]}=T_I(w^*(t,\mu_k)^{[n+k-1]})\cdots T_I(w^*(t,\mu_1)^{[n]})T_D(w(t,\lambda_n)^{[n-1]})\cdots T_D(w(t,\lambda_1)^{[0]}).
\end{align*}
And let $\Big(\tau_{0,\lambda\mu}^{\pm[\vec{\bf n},\vec{\bf k}]},\tau_{1,\lambda\mu}^{\pm[\vec{\bf n},\vec{\bf k}]}\Big)$ be the transformed tau functions under $T_{\lambda\mu}^{\pm[\vec{\bf n},\vec{\bf k}]}$ starting from $\Big(\tau_0,\tau_1\Big)$. By the similar method as before, one can obtain
\begin{align*}
\Big(\tau_{0,\lambda\mu}^{[\vec{\bf n},\vec{\bf k}]},\tau_{1,\lambda\mu}^{[\vec{\bf n},\vec{\bf k}]}\Big)=\left(\prod_{j=2}^n\lambda_j^{-j+1}\prod_{l=1}^k\mu_{l}^{n-l}\right)
\psi_{\lambda_{\vec{\bf n}};\mu_{\vec{\bf k}}}\Big((-1)^k\tau_0^{[0]},(-1)^n\tau_1^{[0]}\Big),
\end{align*}
where $\psi_{\lambda_{\vec{\bf n}};\mu_{\vec{\bf k}}}$ is the same as the KP case. Note that $$\Big(\tau_{0,\lambda\mu}^{[\vec{\bf n},\vec{\bf k}]},\tau_{1,\lambda\mu}^{[\vec{\bf n},\vec{\bf k}]}\Big)\approx \psi_{\lambda_{\vec{\bf n}};\mu_{\vec{\bf k}}}\Big((-1)^k\tau_0^{[0]},(-1)^n\tau_1^{[0]}\Big),$$
Therefore they determine the same (adjoint) wave functions. So in what follows, we can always believe $\Big(\tau_{0,\lambda\mu}^{+[\vec{\bf n},\vec{\bf k}]},\tau_{1,\lambda\mu}^{+[\vec{\bf n},\vec{\bf k}]}\Big)= \psi_{\lambda_{\vec{\bf n}};\mu_{\vec{\bf k}}}\Big((-1)^k\tau_0^{[0]},(-1)^n\tau_1^{[0]}\Big)$ instead. Similarly as before, one can obtain the proposition below.
\begin{proposition}
For $n'\geq n$, $k'\geq k$, $j=0,1$,
\begin{align*}
S^l\Big(\tau_{j,\lambda\mu}^{[\overrightarrow{\bf n'},\overrightarrow{\bf k'}]}\otimes\tau_{1-j,\lambda\mu}^{[\vec{\bf n},\vec{\bf k}]}\Big)=&l!\sum_{i=0}^{1-j}\sum_{\vec{\gamma}\in k+H_{k'-k,l-i}}(-1)^{(l-i)(k'-1)-|\vec{\gamma}|}\tau_{i+j,\lambda\mu}^{[\overrightarrow{\bf n'},\overrightarrow{\bf k'}\setminus\vec{\gamma}]}\otimes\tau_{1-i-j,\lambda\mu}^{[\vec{\bf n},\vec{\gamma}\cup\vec{\bf k}]},\\
S^l\Big(\tau_{j,\lambda\mu}^{[\vec{\bf n},\vec{\bf k}]}\otimes\tau_{1-j,\lambda\mu}^{[\overrightarrow{\bf n'},\overrightarrow{\bf k'}]}\Big)=&l!\sum_{i=0}^{1-j}\sum_{\vec{\gamma}\in n+H_{n'-n,l-i}}(-1)^{(l-i)(k'-k+n'-1)-|\vec{\gamma}|}\tau_{i+j,\lambda\mu}^{[\vec{\bf n}\cup\vec{\gamma},\vec{\bf k}]}\otimes\tau_{1-i-j,\lambda\mu}^{[\overrightarrow{\bf n'}\setminus\vec{\gamma},\overrightarrow{\bf k'}]},\\
S\Big(\tau_{j,\lambda\mu}^{[\overrightarrow{\bf n'},\vec{\bf k}]}\otimes\tau_{1-j,\lambda\mu}^{[\vec{\bf n},\overrightarrow{\bf k'}]}\Big)=&(1-j)\tau_{1-j,\lambda\mu}^{[\overrightarrow{\bf n'},\vec{\bf k}]}\otimes\tau_{j,\lambda\mu}^{[\vec{\bf n},\overrightarrow{\bf k'}]},\\
S^l\Big(\tau_{j,\lambda\mu}^{[\vec{\bf n},\overrightarrow{\bf k'}]}\otimes\tau_{1-j,\lambda\mu}^{[\overrightarrow{\bf n'},\vec{\bf k}]}\Big)
=&l!\sum_{i=0}^{1-j}\sum_{a=0}^l\sum_{\vec{\gamma}\in k+H_{k'-k,l-i-a}}
\sum_{\vec{\delta}\in n+H_{n'-n,a}}(-1)^{a(n'-k)+(l-i)(k'-1)}\\
&\times(-1)^{-|\vec{\gamma}|-|\vec{\delta}|}
\Big(\tau_{i+j,\lambda\mu}^{[\vec{\delta}\cup\vec{\bf n},\overrightarrow{\bf k'}\setminus\vec{\gamma}]}\otimes\tau_{1-i-j,\lambda\mu}^{[\overrightarrow{\bf n'}\setminus\vec{\delta},\vec{\gamma}\cup\vec{\bf k}]}\Big).
\end{align*}
In particular,
\begin{align*}
S^{k'-k+1-j}\Big(\tau_{j,\lambda\mu}^{[\overrightarrow{\bf n'},\overrightarrow{\bf k'}]}\otimes\tau_{1-j,\lambda\mu}^{[\vec{\bf n},\vec{\bf k}]}\Big)=&(-1)^{\frac{(k'-k)(k'-k-3)}{2}}(k'-k+1-j)!
\tau_{1,\lambda\mu}^{[\overrightarrow{\bf n'},\vec{\bf k}]}\otimes\tau_{0,\lambda\mu}^{[\vec{\bf n},\overrightarrow{\bf k'}]},\\
S^{n'-n+1-j}\Big(\tau_{j,\lambda\mu}^{[\vec{\bf n},\vec{\bf k}]}\otimes\tau_{1-j,\lambda\mu}^{[\overrightarrow{\bf n'},\overrightarrow{\bf k'}]}\Big)=&(-1)^{(n'-n)(k'-k)+\frac{(n'-n)(n'-n-3)}{2}}(n'-n+1-j)!
\tau_{1,\lambda\mu}^{[\overrightarrow{\bf n'},\vec{\bf k}]}\otimes\tau_{0,\lambda\mu}^{[\vec{\bf n},\overrightarrow{\bf k'}]},\\
S^{n'-n+k'-k+1-j}\Big(\tau_{j,\lambda\mu}^{[\vec{\bf n},\overrightarrow{\bf k'}]}\otimes\tau_{1-j,\lambda\mu}^{[\overrightarrow{\bf n'},\vec{\bf k}]}\Big)
=&(-1)^{(n'-n)(k'-k)+\frac{(n'-n)(n'-n-3)}{2}+\frac{(k'-k)(k'-k-3)}{2}}\\
&\times(n'-n+k'-k+1-j)!\tau_{1,\lambda\mu}^{[\overrightarrow{\bf n'},\vec{\bf k}]}\otimes\tau_{0,\lambda\mu}^{[\vec{\bf n},\overrightarrow{\bf k'}]}.
\end{align*}
\end{proposition}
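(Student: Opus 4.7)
The plan is to reduce this entirely to Lemma \ref{slbetaotimesbeta} together with the base bilinear identities for the tau pair $(\tau_0^{[0]},\tau_1^{[0]})$. The starting point is the Fermionic expression
\begin{align*}
\Big(\tau_{0,\lambda\mu}^{[\vec{\bf n},\vec{\bf k}]},\tau_{1,\lambda\mu}^{[\vec{\bf n},\vec{\bf k}]}\Big)
= \psi^*(\mu_k)\cdots\psi^*(\mu_1)\psi(\lambda_n)\cdots\psi(\lambda_1)\,\Big((-1)^k\tau_0^{[0]},(-1)^n\tau_1^{[0]}\Big),
\end{align*}
displayed just before the proposition. I would set $\beta_i=\psi(\lambda_i)$ and $\beta^*_j=\psi^*(\mu_j)$, so that in each of the four tensor products appearing in the proposition the common factors are precisely $\psi^*(\mu_1)\cdots\psi^*(\mu_k)\psi(\lambda_1)\cdots\psi(\lambda_n)$ applied to the left entry of each base pair, and the extra Fermions are $\psi(\lambda_{n+1}),\ldots,\psi(\lambda_{n'})$ and/or $\psi^*(\mu_{k+1}),\ldots,\psi^*(\mu_{k'})$ distributed between the two factors of the tensor.

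For the four types of tensor product, the extras are placed differently. For $\tau_{j,\lambda\mu}^{[\overrightarrow{\bf n'},\overrightarrow{\bf k'}]}\otimes\tau_{1-j,\lambda\mu}^{[\vec{\bf n},\vec{\bf k}]}$ all extras sit on the left factor, so Lemma \ref{slbetaotimesbeta} applies directly with $n_2=k_2=0$ and only the $a=0$ slice of $B^+_{k'-k,0,j'}$ survives; the reversed order $\tau_{j,\lambda\mu}^{[\vec{\bf n},\vec{\bf k}]}\otimes\tau_{1-j,\lambda\mu}^{[\overrightarrow{\bf n'},\overrightarrow{\bf k'}]}$ uses the $B^-$ version by symmetry. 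For the ``crossed'' tensor $\tau_{j,\lambda\mu}^{[\overrightarrow{\bf n'},\vec{\bf k}]}\otimes\tau_{1-j,\lambda\mu}^{[\vec{\bf n},\overrightarrow{\bf k'}]}$ the extras are split (the left factor carries the extra $\psi$'s and the right factor carries the extra $\psi^*$'s); these split extras anticommute independently past $S$ by Lemma \ref{s4beta}, and after extracting them I am again in the situation of Lemma \ref{slbetaotimesbeta}, but now with $n_1=k_2=0$ and nontrivial $a$-sum. The remaining tensor $\tau_{j,\lambda\mu}^{[\vec{\bf n},\overrightarrow{\bf k'}]}\otimes\tau_{1-j,\lambda\mu}^{[\overrightarrow{\bf n'},\vec{\bf k}]}$ is the most general, with all four index families $n_1,n_2,k_1,k_2$ nonzero, and the full $B^\pm_{k_1,n_2,j'}$ expansion is needed.

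In every case, applying Lemma \ref{slbetaotimesbeta} produces a sum $\sum_{j'} C_l^{j'}(\cdots)\,S^{l-j'}(\tau_j^{[0]}\otimes\tau_{1-j}^{[0]})$. To finish, I evaluate the residual $S^{l-j'}$ on the base pair using
$$S(\tau_0^{[0]}\otimes\tau_1^{[0]})=\tau_1^{[0]}\otimes\tau_0^{[0]},\qquad S(\tau_1^{[0]}\otimes\tau_0^{[0]})=0,\qquad S(\tau_j^{[0]}\otimes\tau_j^{[0]})=0,$$
which are immediate consequences of the modified KP bilinear equation \eqref{mkptaubilinear} and the KP bilinear identities satisfied by each $\tau_j^{[0]}$ (Proposition \ref{mkp2tau}). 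Only $l-j'=0$ survives for $j=1$, and both $l-j'=0$ and $l-j'=1$ survive for $j=0$; these two contributions are exactly the $i=0$ and $i=1$ terms in the index $i=0,\ldots,1-j$ on the right-hand side. Repackaging the Fermion monomials $\tilde\beta^*_{\overrightarrow{\bf k'-k}\setminus\vec\gamma}\beta^*_{\vec{\bf k}}\tilde\beta_{\overrightarrow{\bf n'-n}}\beta_{\vec{\bf n}}\tau^{[0]}_{\cdot}$ back into $\tau_{\cdot,\lambda\mu}^{[\cdot,\cdot]}$ produces the announced formulas. The ``particular'' closed-form identities, $S^{k'-k+1-j}$, $S^{n'-n+1-j}$, and $S^{n'-n+k'-k+1-j}$, follow by selecting the top-degree term $j'=k'-k$, $n'-n$, or $n'-n+k'-k$ in Lemma \ref{slbetaotimesbeta}, whose explicit evaluation is recorded there; and the vanishing statements come from $B^\pm_{k_1,n_2,j'}=0$ for $j'>k_1+n_2$.

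The only real obstacle is sign bookkeeping: signs enter from the $(-1)^k,(-1)^n$ prefactors in the base pair, from $(-1)^{(l-j')(n_1+n_2+k_1+k_2)}$ and the internal $(-1)^{a(\cdots)+jk_1-|\vec\gamma|-|\vec\delta|}$ coming from $B^\pm_{k_1,n_2,j'}$, and from commuting extras across $S$ via Lemma \ref{s4beta}. Matching them to the stated signs $(-1)^{(l-i)(k'-1)-|\vec\gamma|}$, $(-1)^{(l-i)(k'-k+n'-1)-|\vec\gamma|}$, $(-1)^{a(n'-k)+(l-i)(k'-1)-|\vec\gamma|-|\vec\delta|}$ and the numerical prefactors $(-1)^{(n'-n)(k'-k)+\frac{(n'-n)(n'-n-3)}{2}+\frac{(k'-k)(k'-k-3)}{2}}$ in the threshold identities is tedious but mechanical; I expect no conceptual difficulty beyond what is already packaged in Lemma \ref{slbetaotimesbeta}.
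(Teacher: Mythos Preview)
Your approach is correct and is essentially what the paper does: the paper simply states ``Similarly as before, one can obtain the proposition below,'' pointing back to the preceding modified KP theorem and the analogous KP wave-function proposition, both of which rest on Lemma~\ref{slbetaotimesbeta} applied to the Fermionic monomial $\psi_{\lambda_{\vec{\bf n}};\mu_{\vec{\bf k}}}$ together with the base identities $S(\tau_0^{[0]}\otimes\tau_1^{[0]})=\tau_1^{[0]}\otimes\tau_0^{[0]}$ and $S(\tau_1^{[0]}\otimes\tau_0^{[0]})=0$. One small slip: for the crossed tensor $\tau_{j,\lambda\mu}^{[\overrightarrow{\bf n'},\vec{\bf k}]}\otimes\tau_{1-j,\lambda\mu}^{[\vec{\bf n},\overrightarrow{\bf k'}]}$ the extras on the left are the $\psi$'s and on the right the $\psi^*$'s, so in the notation of Lemma~\ref{slbetaotimesbeta} it is $k_1=n_2=0$ (not $n_1=k_2=0$) that makes $B^+_{k_1,n_2,j'}$ collapse; your verbal description of the mechanism is right, only the index labels are swapped.
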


\subsection{Examples of the bilinear equations in the modified KP Darboux transformation}
\label{subsectmkpdtexp}
Here we will present some examples of the bilinear equations in the subsection above. For $(n',k',n,k)=(1,0,0,0)$, $(n',k',n,k)=(0,1,0,0)$ and $(n',k',n,k)=(1,1,0,0)$,
\begin{align*}
&S(\tau_0^{[1,0]}\otimes\tau_1^{[0,0]})=\tau_1^{[1,0]}\otimes \tau_0^{[0,0]},\quad S(\tau_0^{[0,0]}\otimes\tau_1^{[1,0]})=\tau_1^{[0,0]}\otimes \tau_0^{[1,0]}-\tau_0^{[1,0]}\otimes \tau_1^{[0,0]},\\
&S(\tau_0^{[0,0]}\otimes\tau_1^{[0,1]})=\tau_1^{[0,0]}\otimes \tau_0^{[0,1]},\quad S(\tau_0^{[0,1]}\otimes\tau_1^{[0,0]})=\tau_1^{[0,1]}\otimes \tau_0^{[0,0]}-\tau_0^{[0,0]}\otimes \tau_1^{[0,1]},\\
&S(\tau_0^{[1,1]}\otimes\tau_1^{[0,0]})=\tau_1^{[1,1]}\otimes \tau_0^{[0,0]}-\tau_0^{[1,0]}\otimes \tau_1^{[0,1]},\\ &S(\tau_0^{[0,0]}\otimes\tau_1^{[1,1]})=\tau_1^{-[0,0]}\otimes \tau_0^{[1,1]}+\tau_0^{[1,0]}\otimes \tau_1^{[0,1]},\\
&S(\tau_0^{[1,0]}\otimes\tau_1^{[0,1]})=\tau_1^{[1,0]}\otimes\tau_0^{[0,1]},\quad S(\tau_0^{[0,1]}\otimes\tau_1^{[1,0]})=\tau_0^{[1,1]}\otimes\tau_1^{[0,0]}
-\tau_0^{[0,0]}\otimes\tau_1^{[1,1]}+\tau_1^{[0,1]}\otimes\tau_0^{[1,0]},
\end{align*}
and
\begin{align*}
&S(\tau_1^{[1,0]}\otimes\tau_0^{[0,0]})=0,\quad S(\tau_1^{[0,0]}\otimes\tau_0^{[1,0]})=-\tau_1^{[1,0]}\otimes \tau_0^{[0,0]},\\
&S(\tau_1^{[0,0]}\otimes\tau_0^{[0,1]})=0,\quad S(\tau_1^{[0,1]}\otimes\tau_0^{[0,0]})=-\tau_1^{[0,0]}\otimes \tau_0^{[0,1]},\\
&S(\tau_1^{[1,1]}\otimes\tau_0^{[0,0]})=-\tau_1^{[1,0]}\otimes \tau_0^{[0,1]}=-S(\tau_1^{[0,0]}\otimes\tau_0^{[1,1]}),\\
&S(\tau_1^{[1,0]}\otimes\tau_0^{[0,1]})=0,\quad S(\tau_1^{[0,1]}\otimes\tau_0^{[1,0]})=-\tau_1^{[0,0]}\otimes\tau_0^{[1,1]}
+\tau_1^{[1,1]}\otimes\tau_0^{[0,0]}.
\end{align*}
Here we have set $\tau_j^{[0,1]}=\tau_j^{\pm[0,1]}$, $\tau_j^{[1,0]}=\tau_j^{\pm[1,0]}$ and $\tau_j^{[1,1]}=\tau_j^{+[1,1]}$.
These bilinear relations can be written into
\begin{align*}
&S(\tau_j^{[l,0]}\otimes\tau_{1-j}^{[1-l,0]})=
-\delta_{l0}\tau_j^{[1,0]}\otimes\tau_{1-j}^{[0,0]}
+(1-j)\tau_{1-j}^{[l,0]}\otimes\tau_{j}^{[1-l,0]},\\
&S(\tau_j^{[0,m]}\otimes\tau_{1-j}^{[0,1-m]})=
-\delta_{m1}\tau_j^{[0,0]}\otimes\tau_{1-j}^{[0,1]}
+(1-j)\tau_{1-j}^{[0,m]}\otimes\tau_{j}^{[0,1-m]},\\
&S(\tau_j^{[l,m]}\otimes\tau_{1-j}^{[1-l,1-m]})=
-\delta_{m1}\tau_j^{[l,0]}\otimes\tau_{1-j}^{[1-l,1]}
+\delta_{l0}\tau_j^{[1,m]}\otimes\tau_{1-j}^{[0,1-m]}
+(1-j)\tau_{1-j}^{[l,m]}\otimes\tau_{j}^{[1-l,1-m]}.
\end{align*}
The corresponding Bosonic forms are
\begin{align*}
\bullet\quad{\rm Res}_\lambda &\lambda^{2(j+l-1)}\tau_j^{[l,0]}(t-\varepsilon(\lambda^{-1}))
\tau_{1-j}^{[1-l,0]}(t'+\varepsilon(\lambda^{-1}))e^{\xi(t,\lambda)-\xi(t',\lambda)}\\
=&-\delta_{l0}\tau_j^{[1,0]}(t)\tau_{1-j}^{[0,0]}(t')
+(1-j)\tau_{1-j}^{[l,0]}(t)\tau_{j}^{[1-l,0]}(t'),\\
\bullet\quad{\rm Res}_\lambda &\lambda^{2(j-m)}\tau_j^{[0,m]}(t-\varepsilon(\lambda^{-1}))
\tau_{1-j}^{[0,1-m]}(t'+\varepsilon(\lambda^{-1}))e^{\xi(t,\lambda)-\xi(t',\lambda)}\\
=&-\delta_{m1}\tau_j^{[0,0]}(t)\tau_{1-j}^{[0,1]}(t')
+(1-j)\tau_{1-j}^{[0,m]}(t)\tau_{j}^{[0,1-m]}(t'),\\
\bullet\quad{\rm Res}_\lambda &\lambda^{2(j+l-m)-1}\tau_j^{[l,m]}(t-\varepsilon(\lambda^{-1}))
\tau_{1-j}^{[1-l,1-m]}(t'+\varepsilon(\lambda^{-1}))e^{\xi(t,\lambda)-\xi(t',\lambda)}\\
=&-\delta_{m1}\tau_j^{[l,0]}(t)\tau_{1-j}^{[1-l,1]}(t')
+\delta_{l0}\tau_j^{[1,m]}(t)\tau_{1-j}^{[0,1-m]}(t')
+(1-j)\tau_{1-j}^{[l,m]}(t)\tau_{j}^{[1-l,1-m]}(t').
\end{align*}
If we insert the relations below derived by Table II into the relation above,
\begin{align*}
\tau_{0}^{[1,0]}(t)=\Phi(t)\tau_1(t),&\quad \tau_{1}^{[1,0]}(t)=-\frac{\Phi_x(t)\tau_1^2(t)}{\tau_0(t)},\\
\tau_{0}^{[0,1]}(t)=\frac{\Psi_x(t)\tau_0^2(t)}{\tau_1(t)},&\quad \tau_{1}^{[0,1]}(t)=\Psi(t)\tau_0(t),\\
\tau_{0}^{[1,1]}(t)=-\Omega(\Phi(t),\Psi_x(t))\tau_0(t),&\quad \tau_{1}^{[1,1]}(t)=\hat\Omega(\Phi_x(t),\Psi(t))\tau_1(t),
\end{align*}
then one can obtain many important relations about the modified KP hierarchy. Here we only take $S(\tau_0^{[1,0]}\otimes\tau_1^{[0,0]})$, $S(\tau_0^{[1,1]}\otimes\tau_1^{[0,0]})$ and $S(\tau_1^{[0,1]}\otimes\tau_0^{[1,0]})$ as examples, which are
\begin{align}
{\rm Res}_\lambda &\Phi(t-\varepsilon(\lambda^{-1}))\tau_1(t-\varepsilon(\lambda^{-1}))
\tau_{1}(t'+\varepsilon(\lambda^{-1}))e^{\xi(t,\lambda)-\xi(t',\lambda)}=
-\frac{\Phi_x(t)\tau_1^2(t)}{\tau_0(t)}\tau_{0}(t'),\\
{\rm Res}_\lambda&\Omega(\Phi(t-\varepsilon(\lambda^{-1})),\Psi_x(t-\varepsilon(\lambda^{-1})))
w(t,\lambda)w^*(t',\lambda)=\Phi(t)\Psi(t')
-\hat\Omega(\Phi_x(t),\Psi(t)),\label{resomegawwstar}\\
{\rm Res}_\lambda&
\Psi(t-\varepsilon(\lambda^{-1}))\Phi(t'+\varepsilon(\lambda^{-1}))w(t,\lambda)w^*(t',\lambda)
=\Omega(\Phi(t'),\Psi_x(t'))
+\hat\Omega(\Phi_x(t),\Psi(t)).\label{respsiphiwwstar}
\end{align}
Here we would like to point out (\ref{resomegawwstar}) or \eqref{respsiphiwwstar} can lead to
\begin{align}
&\Omega(\Phi(t-\varepsilon(\lambda^{-1})),\Psi_x(t-\varepsilon(\lambda^{-1})))
+\hat\Omega(\Phi_x(t),\Psi(t))=\Phi(t)\Psi(t-\varepsilon(\lambda^{-1})),\label{omegplus}
\end{align}
which can be viewed as the
analogue of (\ref{stminusadd}) in the modified KP hierarchy.
(\ref{omegplus}) is also very important in the proof of the ASvM formula in the mKP case and derivation of the bilinear equations of the $l$-constrained modified KP hierarchy: $L^l=(L^l)_{\geq1}+\Phi\partial^{-1}\Psi\partial$ (one can refer to \cite{Chengjgp2018,Chenjnmp2019} for these results).
\section{The Darboux transformations of the BKP hierarchy}
\subsection{Reviews on the Darboux transformations of the BKP hierarchy}
The Darboux transformation of the BKP hierarchy \cite{He2007,He2006,Yang2020} can be constructed by the union of $T_d$ and $T_i$ (or $T_D$ and $T_I$), that is
\begin{align}
T(\Phi_B)=T_i(\Phi_B)T_d(\Phi_B)=T_I(\Phi_B)T_D(\Phi_B)
=1-2\Phi_B\partial^{-1}\Phi_{B,x},
\end{align}
where $\Phi_B$ is the eigenfunction of the BKP hierarchy, satisfying $\Phi_{B,t_{2k+1}}=(L_B^{2k+1})_{\geq 1}(\Phi_B)$. Under $T(\Phi_B)$, the eigenfunction $\Phi_{B1}(t)$ and the tau function $\tau_B(t)$ will become into
\begin{align}
\Phi_{B1}(t)\xrightarrow{T(\Phi_B(t))}\Phi_{B1}(t)^{[1]}=
\frac{\Omega_B(\Phi_{B1}(t),\Phi_B(t))}{\Phi_B(t)},\quad \tau_{B}(t)\xrightarrow{T(\Phi_B(t))}\tau_{B}(t)^{[1]}=\Phi_B(t)\tau_{B}(t),\label{trbkpobj}
\end{align}
where $A^{[\vec{\bf k}]}$ or $A^{[k]}$ means the transformed objects under $k$-step $T(\Phi_B)$, i.e., $T_{B}^{[\vec{\bf k}]}=T(\Phi_{B,n}^{[n-1]})\cdots T(\Phi_{B1}^{[0]})$, while $A^{[\{i\}]}$ denotes the transformed one under $T(\Phi_{Bi})$. In particular by using (\ref{omegab}),
\begin{align*}
\psi_B(t,\lambda)^{[1]}=T(\Phi_B(t))(\psi_B(t,\lambda))
=\frac{\tau_B^{[1]}(t-2\widetilde{\varepsilon}(\lambda^{-1}))}
{\tau_B^{[1]}(t)}e^{\tilde{\xi}(t,\lambda)}.
\end{align*}

Next we try to express the Darboux transformation of the BKP hierarchy in the Fermionic pictures. Firstly, assume $\tau_B(t)=\langle 0|e^{H_B(t)}g|0\rangle$ for $g\in O_\infty$ (we also use $\tau_B=g|0\rangle$). Then if denote $$\alpha_B=\sqrt{2}{\rm Res}_{\lambda}\lambda^{-1}\rho_B(\lambda)\phi(\lambda)\in V_B=\oplus_{l\in\mathbb{Z}}\mathbb{C}\phi_l$$
with $\rho_B(\lambda)$ given in Proposition \ref{propspecbkp}, then according to (\ref{bkpwavefunction}) and (\ref{bkpspphi}) one can find
\begin{align*}
\Phi_B(t)=\frac{\langle 0|e^{H_B(t)}\alpha_B g|0\rangle}{\langle 0|e^{H_B(t)}g|0\rangle}.
\end{align*}
Therefore in the Fermionic picture,
\begin{align}
\tau_{B}\xrightarrow{T(\Phi_B(t))}\tau_{B}^{[1]}=\alpha_B\tau_{B}.\label{fbkpdt}
\end{align}
Note that according to Lemma \ref{sbbeta},
\begin{align*}
S_B(\alpha_B\tau_{B}\otimes \alpha_B\tau_{B})=&
\tau_B\otimes\alpha_B^2\tau_B-\alpha_B^2\tau_B\otimes\tau_B
+\frac{1}{2}\alpha_B\tau_{B}\otimes \alpha_B\tau_{B}=\frac{1}{2}\alpha_B\tau_{B}\otimes\alpha_B\tau_{B},
\end{align*}
where we have used $\alpha_B^2$ is a constant. The corresponding Bosonic form is as follows,
\begin{align*}
{\rm Res}_\lambda \lambda^{-1}\Phi_B(t-2\widetilde{\varepsilon}(\lambda^{-1}))\tau_B(t-2\widetilde{\varepsilon}(\lambda^{-1}))
&\Phi_B(t'+2\widetilde{\varepsilon}(\lambda^{-1}))\tau_B(t'-2\widetilde{\varepsilon}(\lambda^{-1}))
e^{\xi(t,\lambda)-\xi(t',\lambda)}\\
&=\Phi_B(t)\tau_B(t)
\Phi_B(t')\tau_B(t').
\end{align*}
This relation is also obtained in \cite{Loris1999}.
\subsection{Bilinear equations in the Darboux transformation of the BKP hierarchy}\label{subsectbilinbkp}
First of all, let's discuss the transformed BSEP under the BKP Darboux transformation, in order to see the changes of the BKP tau functions under the successive applications of the Darboux transformation. By using (\ref{trbkpobj}),
\begin{lemma}\label{sepbdt}
Under the BKP Darboux transformation $T(\Phi_{B})$,
\begin{eqnarray*}
\Omega(\Phi^{[1]}_{B1}(t),\psi^{[1]}_B(t,\lambda)_x)
=\Omega(\Phi_{B1}(t),\psi_B(t,\lambda)_x)
-\frac{2\Omega(\Phi_{B1}(t),\Phi_B(t)_x)}{\Phi_B(t)^2}\Omega(\Phi_B(t),\psi_B(t,\lambda)_x).
\end{eqnarray*}
\end{lemma}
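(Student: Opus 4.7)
The identity is most efficiently verified by direct differentiation in $x$ using the characteristic property $\Omega(f,g)_x = fg$. Since both sides are SEPs associated with the transformed eigenfunction $\Phi_{B1}^{[1]}$ of the BKP hierarchy with respect to $L_B^{[1]}$ and the transformed wave function $\psi_B^{[1]}(t,\lambda)$, they are determined up to an additive $x$-independent constant (the remaining freedom being fixed by the compatible $t_{2k+1}$-flows that both transformed quantities inherit). So it suffices to check that the $x$-derivatives of the two sides coincide.

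First I would write out the explicit action of $T(\Phi_B)=1-2\Phi_B\partial^{-1}\Phi_{B,x}$ on $\Phi_{B1}$ and $\psi_B$ using the definition of $\Omega_B$ and the convention $\partial^{-1}(fg)=\Omega(f,g)$:
\begin{align*}
\Phi_{B1}^{[1]} \;=\; \frac{\Omega_B(\Phi_{B1},\Phi_B)}{\Phi_B}
\;=\; \frac{\Omega(\Phi_B,\Phi_{B1,x})-\Omega(\Phi_{B1},\Phi_{B,x})}{\Phi_B},
\qquad
\psi_B^{[1]}(t,\lambda) \;=\; \psi_B - 2\Phi_B\,\Omega(\psi_B,\Phi_{B,x}),
\end{align*}
so that $\psi_B^{[1]}(t,\lambda)_x = \psi_{B,x} - 2\Phi_{B,x}\,\Omega(\psi_B,\Phi_{B,x}) - 2\Phi_B\Phi_{B,x}\psi_B$.

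Next I would differentiate the right-hand side of the claimed identity term by term. The first summand contributes $\Phi_{B1}\psi_{B,x}$; the second, by the product rule combined with $(1/\Phi_B^2)_x=-2\Phi_{B,x}/\Phi_B^3$ and $\Omega(\Phi_{B1},\Phi_{B,x})_x=\Phi_{B1}\Phi_{B,x}$, produces four tracked terms. On the other side, $\Phi_{B1}^{[1]}\psi_B^{[1]}(t,\lambda)_x$ expands into six pieces by pairing the two summands of $\Phi_{B1}^{[1]}$ with the three summands of $\psi_B^{[1]}(t,\lambda)_x$. The bookkeeping is then a matter of matching these collections using the integration-by-parts identities (mod constants)
\begin{align*}
\Omega(\Phi_B,\Phi_{B1,x}) \;=\; \Phi_B\Phi_{B1}-\Omega(\Phi_{B,x},\Phi_{B1}),
\qquad
\Omega(\Phi_B,\psi_{B,x}) \;=\; \Phi_B\psi_B-\Omega(\Phi_{B,x},\psi_B),
\end{align*}
together with the symmetry $\Omega(f,g)=\Omega(g,f)$ up to a constant.

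The main obstacle will be this term-matching: the cross-term proportional to $\Omega(\Phi_{B1},\Phi_{B,x})\Omega(\Phi_B,\psi_{B,x})\Phi_{B,x}/\Phi_B^3$ must combine correctly with two pieces of $\Phi_{B1}^{[1]}\psi_B^{[1]}(t,\lambda)_x$ that both involve $\Omega(\psi_B,\Phi_{B,x})$, and one has to invoke the integration-by-parts identity twice to line them up. Once the $x$-derivatives are shown to agree, the remaining integration constants are absorbed into the definition of the SEP. An alternative, more conceptual route is to use the factorization $T(\Phi_B)=T_i(\Phi_B)T_d(\Phi_B)$ and compose the two changes of KP-type SEPs supplied by Lemma~\ref{transep}, translating BKP objects via $w^*(t,\lambda)=\lambda^{-1}\psi_B(t,-\lambda)$ from \eqref{wspsib}; this avoids some integration-by-parts but demands careful sign and $\lambda$-weight tracking between BKP and mKP conventions.
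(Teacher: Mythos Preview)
Your direct-differentiation approach is correct in principle and will close, but it is not how the paper proceeds. The paper instead uses the B-type operator identity $T(\Phi_B)^*\,\partial\,T(\Phi_B)=\partial$, which immediately gives $\psi_B^{[1]}(t,\lambda)_x=(T(\Phi_B)^{-1})^*(\psi_B(t,\lambda)_x)$ without ever expanding $T(\Phi_B)(\psi_B)$ and differentiating. With this in hand, the paper then exploits the factorization $T(\Phi_B)=T_i(\Phi_B)T_d(\Phi_B)$ and applies the KP-type SEP transformation rules of Lemma~\ref{transep} in one step, writing
\[
\Omega(\Phi_{B1}^{[1]},\psi_{B,x}^{[1]})
=-\Phi_B^{-1}(T_d(\Phi_B)^{-1})^*(\psi_{B,x})\cdot\Omega\!\big(T_d(\Phi_B)(\Phi_{B1}),\Phi_B\big)
+\Omega\!\big(T_d(\Phi_B)(\Phi_{B1}),(T_d(\Phi_B)^{-1})^*(\psi_{B,x})\big),
\]
from which the claimed formula follows.

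Your ``alternative route'' via $T_i\circ T_d$ and Lemma~\ref{transep} is thus essentially the paper's argument; the ingredient you did not mention, and which makes that route clean, is precisely the identity $T^*\partial T=\partial$ (a manifestation of the BKP constraint $L_B^*=-\partial L_B\partial^{-1}$). This identity bypasses the six-term expansion of $\Phi_{B1}^{[1]}\psi_{B,x}^{[1]}$ and the repeated integration-by-parts bookkeeping that your primary approach requires. Both methods are valid; the operator-identity approach is shorter and structurally explains why the answer involves only the untransformed $\psi_{B,x}$ through $(T_d^{-1})^*$, whereas your brute-force check establishes equality of $x$-derivatives and then absorbs the integration constant into the SEP convention.
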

\begin{proof}
By using $T(\Phi_B)^*\cdot\partial \cdot T(\Phi_B)=\partial$, one can find $\psi^{[1]}_B(t,\lambda)_x=(T(\Phi_B)^{-1})^*(\psi_B(t,\lambda)_x)$. Then
\begin{align*}
\Omega(\Phi^{[1]}_{B1}(t),\psi^{[1]}_B(t,\lambda)_x)
=&-\Phi_B^{-1}\cdot(T_d(\Phi_B)^{-1})^*(\psi_B(t,\lambda)_x)\cdot
\Omega\Big(T_d(\Phi_B)\big(\Phi_{B1}\big),\Phi_B\Big)\\
&+\Omega\Big(T_d(\Phi_B)\big(\Phi_{B1}\big),
(T_d(\Phi_B)^{-1})^*(\psi_B(t,\lambda)_x)\Big),
\end{align*}
leads to the corresponding result.
\end{proof}
\begin{proposition}\label{propbkpdt}
Under $n$-step BKP Darboux transformations,
\begin{eqnarray*}
\tau_{B}^{[\vec{\bf n}]}=\alpha_{B,n}^{[n-1]}\cdots\alpha_{B,2}^{[1]}\alpha_{B,1}^{[0]} \tau_{B}^{[0]}=\sum_{j=0}^{[n/2]}\sum_{\vec{\gamma}\in H_{n,n-2j}}C^{[\vec{\bf n}]}_{j,\vec{\gamma}}\alpha_{B,\vec{\gamma}}\tau_{B}^{[0]},
\end{eqnarray*}
where $\alpha_{B,\vec{\gamma}}=\alpha_{B,\gamma_{n-2j}}\cdots\alpha_{B,\gamma_{1}}$ and $\alpha_{B,i}$ is corresponding to the BKP eigenfunction $\Phi_{Bi}(t)$ and $C^{[\vec{\bf n}]}_{j,\vec{\gamma}}$ is the constant independent of $t$ with $C^{[\vec{\bf n}]}_{0,\vec{\bf n}}=1$.
\end{proposition}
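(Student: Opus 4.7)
The plan is induction on $n$. The base case $n=1$ is immediate from (\ref{fbkpdt}) together with $\alpha_{B,1}^{[0]}=\alpha_{B,1}$, giving $\tau_B^{[1]}=\alpha_{B,1}\tau_B^{[0]}$ and $C^{[\vec{\bf 1}]}_{0,\{1\}}=1$. For the inductive step, assume the first equality holds for $n$. By (\ref{trbkpobj}) one has $\tau_B^{[\overrightarrow{\bf n+1}]}(t)=\Phi_{B,n+1}^{[n]}(t)\tau_B^{[n]}(t)$. Using the spectral representation (\ref{bkpspphi}) to write $\Phi_{B,n+1}^{[n]}(t)={\rm Res}_\lambda\lambda^{-1}\rho_{B,n+1}^{[n]}(\lambda)\psi_B^{[n]}(t,\lambda)$, and the tau-function expression (\ref{bkpwavefunction}) giving $\psi_B^{[n]}(t,\lambda)\tau_B^{[n]}(t)=\tau_B^{[n]}(t-2\widetilde{\varepsilon}(\lambda^{-1}))e^{\widetilde{\xi}(t,\lambda)}$, together with the neutral Boson--Fermion correspondence relation $\sqrt{2}\langle 0|\phi(\lambda)=\langle 0|e^{-H_B(2\widetilde{\varepsilon}(\lambda^{-1}))}$, converts multiplication by $\Phi_{B,n+1}^{[n]}(t)$ into left action of a Fermionic field on the tau vector. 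This yields $\tau_B^{[\overrightarrow{\bf n+1}]}=\alpha_{B,n+1}^{[n]}\tau_B^{[\vec{\bf n}]}$ with $\alpha_{B,n+1}^{[n]}=\sqrt{2}\,{\rm Res}_\lambda\lambda^{-1}\rho_{B,n+1}^{[n]}(\lambda)\phi(\lambda)\in V_B$, mirroring the derivation in Proposition \ref{ndttd}. Combined with the induction hypothesis this gives the first equality.

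For the second equality, apply Lemma \ref{sepbdt} at each step: since $\rho_{B,j}^{[k]}(\lambda)=\Omega(\Phi_{Bj}^{[k]}(t'),\psi_B^{[k]}(t',-\lambda)_x)$, the lemma produces the two-term recursion
\begin{align*}
\rho_{B,j}^{[k]}(\lambda)=\rho_{B,j}^{[k-1]}(\lambda)+c_{jk}^{[k-1]}\rho_{B,k}^{[k-1]}(\lambda),\qquad c_{jk}^{[k-1]}=-\frac{2\Omega(\Phi_{Bj}^{[k-1]}(t'),\Phi_{Bk}^{[k-1]}(t')_x)}{(\Phi_{Bk}^{[k-1]}(t'))^{2}},
\end{align*}
with $c_{jk}^{[k-1]}$ independent of $t$ since it is evaluated at $t'$. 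Iterating and taking residues shows $\alpha_{B,j}^{[k]}$ is a $t$-independent linear combination of $\alpha_{B,1},\ldots,\alpha_{B,j}$ with the coefficient of $\alpha_{B,j}$ equal to $1$. Substituting these expressions into $\alpha_{B,n}^{[n-1]}\cdots\alpha_{B,1}^{[0]}\tau_B^{[0]}$, the product becomes a $\mathbb{C}$-linear combination of monomials $\alpha_{B,i_n}\cdots\alpha_{B,i_1}\tau_B^{[0]}$.

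The last step is to reduce each such monomial to the normalized form $\alpha_{B,\vec{\gamma}}\tau_B^{[0]}$ using the Clifford relations (\ref{neufermirelation}). For $\alpha,\beta\in V_B$ the anticommutator $[\alpha,\beta]_+$ is a scalar (explicitly $\sum_l a_l b_{-l}(-1)^l$), so Wick-reordering $\alpha_{B,i_n}\cdots\alpha_{B,i_1}$ produces either ordered antisymmetric products $\alpha_{B,\vec{\gamma}}$ with all distinct indices, or terms where $j$ disjoint pairs of factors contract to scalars and $n-2j$ factors remain, indexed by $\vec{\gamma}\in H_{n,n-2j}$. All resulting coefficients are constants in $t$, yielding the claimed expansion. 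The normalization $C^{[\vec{\bf n}]}_{0,\vec{\bf n}}=1$ follows by tracking the unique leading term in which no contractions occur and the $\alpha_{B,j}$-summand is selected from each $\alpha_{B,j}^{[j-1]}$, giving exactly $\alpha_{B,n}\cdots\alpha_{B,1}\tau_B^{[0]}$ with coefficient $1$. Lemmas \ref{sbbeta}--\ref{sblbeta} provide a systematic framework that can be used as a cross-check for the contraction bookkeeping.

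The main obstacle is precisely this contraction structure: in contrast to the KP case (Proposition \ref{ndttd}), where charged Fermions $\psi_i$ strictly anticommute with each other and the Fermionic tau function is a single ordered monomial, the neutral Fermions obey $[\phi_m,\phi_n]_+=(-1)^m\delta_{m,-n}$. Hence pairs of $\alpha_{B,i}$'s can contract to nonzero scalars, forcing the sum over subsets of sizes $n,n-2,n-4,\ldots$ rather than over a single top-degree term. The coefficients $C^{[\vec{\bf n}]}_{j,\vec{\gamma}}$ arise from the combined effect of the recursion coefficients $c_{jk}^{[k-1]}$ and these Wick contractions; tracking them explicitly would be tedious, but only the \emph{shape} of the expansion is needed for the proposition and for the subsequent derivation of bilinear identities in the BKP Darboux chain.
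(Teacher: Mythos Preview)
Your proof is correct and follows essentially the same approach as the paper: induction on $n$ for the first equality, then Lemma \ref{sepbdt} to obtain the two-term recursion $\alpha_{B,i}^{[j]}=\alpha_{B,i}^{[j-1]}+c_{B,i}^{[j-1]}\alpha_{B,j}^{[j-1]}$, which expresses each $\alpha_{B,i}^{[j]}$ as a constant-coefficient combination of the $\alpha_{B,l}$, whereupon the product expands into the claimed sum with $C^{[\vec{\bf n}]}_{0,\vec{\bf n}}=1$. Your explicit invocation of the Clifford relation $[\alpha,\beta]_+\in\mathbb{C}$ for $\alpha,\beta\in V_B$ to explain why the expansion runs over subsets of sizes $n,n-2,\ldots$ is a welcome clarification that the paper leaves implicit, simply writing ``this proposition can be proved by induction on $n$'' and omitting the recursion for $C^{[\vec{\bf n}]}_{j,\vec{\gamma}}$ as too complicated.
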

\begin{proof}
According to Lemma \ref{sepbdt} and $\alpha^{[j]}_{B,i}=\sqrt{2}{\rm Res}_\lambda\lambda^{-1}\rho^{[j]}_{B,i}(\lambda)\phi(\lambda)$, one can find for $i>j$,
\begin{eqnarray*}
\rho^{[j]}_{B,i}(\lambda)=\rho^{[j-1]}_{B,i}(\lambda)+c_{B,i}^{[j-1]}\rho^{[j-1]}_{B,j}(\lambda),\quad
\alpha^{[j]}_{B,i}=\alpha^{[j-1]}_{B,i}+c_{B,i}^{[j-1]}
\alpha^{[j-1]}_{B,j},
\end{eqnarray*}
where $c_{B,i}^{[j-1]}=-2\Omega(\Phi_{Bi}^{[j-1]}(t'),\Phi_{Bj}^{[j-1]}(t')_x)/\Phi_{Bj}^{[j-1]}(t')^2$ is some constant independent of $t$. Therefore
\begin{align*}
\alpha^{[j]}_{B,i}=\alpha^{[0]}_{B,i}+\sum_{l=0}^ja^{[j]}_{i,l}\alpha^{[0]}_{B,l},\quad i>j.
\end{align*}
Here $a^{[j]}_{i,l}$ is the constant satisfying $a^{[j]}_{i,j}=c_{B,i}^{[j-1]}$ and $a^{[j+1]}_{i,l}=a^{[j]}_{i,l}+C_{B,i}^{[j]}a_{j+1,l}^{[j]}$. Then this proposition can be proved by induction on $n$. The recursion relations for constants $C^{[\vec{\bf n}]}_{0,\vec{\bf n}}=1$ are very complicated. Since in what follows we only need to know $C^{[\vec{\bf n}]}_{j,\vec{\gamma}}$ is constant, so we omit its recursion relation here.
\end{proof}
{\noindent \bf Remark:} Different from the KP and mKP case, $\alpha_{Bi}$ can not commute or anticommute with each other, and $\alpha_{Bi}^2\neq 0$. So the expression of $\tau_B^{[\vec{\bf n}]}$ is more complicated, compared with $\tau^{\pm[\vec{\bf n}]}$ in KP case and $\tau_i^{\pm[\vec{\bf n}]}$ ($i=0,1$) in mKP case.\\
{\noindent \bf Remark:} Since $\tau_{B}^{[\vec{\bf n}]}$ is the tau function of the BKP hierarchy, it satisfies $S_B\Big(\tau_{B}^{[n]}\otimes\tau_{B}^{[n]}\Big)=\frac{1}{2}\tau_{B}^{[n]}
\otimes\tau_{B}^{[n]}$. Particularly for $n=2$, the corresponding Bosonic form is \begin{align*}
&{\rm Res}_\lambda \lambda^{-1} \Omega_B(\phi_{B2}(t-2\widetilde{\varepsilon}(\lambda^{-1})),\phi_{B1}(t-2\widetilde{\varepsilon}(\lambda^{-1})))
\tau_B(t-2\widetilde{\varepsilon}(\lambda^{-1}))\\
&\times\Omega_B(\phi_{B2}(t'-2\widetilde{\varepsilon}(\lambda^{-1})),\phi_{B1}(t'-2\widetilde{\varepsilon}(\lambda^{-1})))
\tau_B(t'+\varepsilon(\lambda^{-1}))e^{\xi(t,\lambda)-\xi(t',\lambda)}\\
=&\Omega_B(\phi_{B2}(t),\phi_{B1}(t))
\Omega_B(\phi_{B2}(t'),\phi_{B1}(t'))
\tau_B(t)\tau_B(t').
\end{align*}
This relation is also obtained in \cite{Loris1999}, where the corresponding proof is very complicated. But here in the Fermionic approach, we can easily get it, which tells us that the Fermionic method is more efficient.

\begin{corollary}\label{coralpbtau}
$\alpha_{B,\vec{\bf n}}\tau_B^{[0]}$ can be written into
\begin{align*}
\alpha_{B,\vec{\bf n}}\tau_B^{[0]}=\sum_{l=0}^{[n/2]}\sum_{\vec{\delta}\in H_{n,n-2l}}\tilde{C}^{[\vec{\bf n}]}_{l,\vec{\delta}}\tau_{B}^{[\vec{\delta}]},
\end{align*}
where $\tilde{C}^{[\vec{\bf n}]}_{l,\vec{\delta}}$ satisfies $\tilde{C}^{[\vec{\bf n}]}_{0,\vec{\bf n}}=1$ and $\tilde{C}^{[\vec{\bf n}]}_{l,\vec{\delta}}=-\sum_{i=0}^{j}\sum_{\vec{\gamma}\in H_{n,n-2i}}C^{[\vec{\bf n}]}_{i,\vec{\gamma}}\tilde{C}
^{[\vec{\gamma}]}_{l-i,\vec{\delta}}$ for $l\geq 1$, $\vec{\delta}\in H_j=\{\vec{\gamma}=(\gamma_{n-2l},\cdots,\gamma_{1})|\vec{\gamma}\in H_{n-2j,n-2l}, \vec{\gamma}\notin H_{n-2j-2,n-2l}\}$ with $H_{n,n-2l}=\cup_{j=0}^lH_j$.
\end{corollary}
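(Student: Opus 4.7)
The plan is to invert the expansion in Proposition \ref{propbkpdt} and then unwind it by induction on $n$. The leading coefficient there, namely $C^{[\vec{\bf n}]}_{0,\vec{\bf n}}=1$, tells us that in the identity
\[
\tau_B^{[\vec{\bf n}]}=\alpha_{B,\vec{\bf n}}\tau_B^{[0]}
+\sum_{i=1}^{[n/2]}\sum_{\vec{\gamma}\in H_{n,n-2i}}
C^{[\vec{\bf n}]}_{i,\vec{\gamma}}\,\alpha_{B,\vec{\gamma}}\tau_B^{[0]},
\]
the term $\alpha_{B,\vec{\bf n}}\tau_B^{[0]}$ can be solved for explicitly as
\[
\alpha_{B,\vec{\bf n}}\tau_B^{[0]}=\tau_B^{[\vec{\bf n}]}
-\sum_{i=1}^{[n/2]}\sum_{\vec{\gamma}\in H_{n,n-2i}}
C^{[\vec{\bf n}]}_{i,\vec{\gamma}}\,\alpha_{B,\vec{\gamma}}\tau_B^{[0]}.
\]
Since every $\vec{\gamma}$ appearing in this remainder has strictly smaller length ($|\vec{\gamma}|=n-2i<n$), the strategy is to treat this as a strict-descent recursion.

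First I would verify the base cases. For $n=0$ the statement is tautological, and for $n=1$ one has $\alpha_{B,1}\tau_B^{[0]}=\tau_B^{[\{1\}]}$ by \eqref{fbkpdt}, so $\tilde C^{[\vec{\bf 1}]}_{0,\{1\}}=1$ as required. Next, as the inductive step, I would assume the claim holds for every strict subset of $\{1,\dots,n\}$ (playing the role of $\vec{\bf n}$). For each $\vec{\gamma}\in H_{n,n-2i}$ with $i\geq1$, the inductive hypothesis gives
\[
\alpha_{B,\vec{\gamma}}\tau_B^{[0]}
=\sum_{l'=0}^{[(n-2i)/2]}\sum_{\vec{\delta}\in H_{n-2i,n-2i-2l'}}
\tilde C^{[\vec{\gamma}]}_{l',\vec{\delta}}\,\tau_B^{[\vec{\delta}]},
\]
where the $\vec{\delta}$'s are understood as subsets of $\vec{\gamma}$. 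Substituting this back and re-indexing by $l=l'+i$ produces an expansion
\[
\alpha_{B,\vec{\bf n}}\tau_B^{[0]}
=\tau_B^{[\vec{\bf n}]}
+\sum_{l=1}^{[n/2]}\sum_{\vec{\delta}}\tilde C^{[\vec{\bf n}]}_{l,\vec{\delta}}\,\tau_B^{[\vec{\delta}]},
\]
whose coefficient of $\tau_B^{[\vec{\delta}]}$ collects contributions coming from all intermediate $\vec{\gamma}$ that contain $\vec{\delta}$. Comparing with the recursion in the statement amounts to checking that these contributions are exactly enumerated by the pairs $(i,\vec{\gamma})$ with $\vec{\gamma}\in H_{n,n-2i}$, $\vec{\delta}\subset\vec{\gamma}$, and noting that $\tilde C^{[\vec{\gamma}]}_{l-i,\vec{\delta}}$ vanishes unless the inclusion is compatible with the stratification $H_{n,n-2l}=\cup_{j}H_j$ by maximal element. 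Since all $C^{[\vec{\bf n}]}_{i,\vec{\gamma}}$ are $t$-independent by Proposition \ref{propbkpdt}, the $\tilde C^{[\vec{\bf n}]}_{l,\vec{\delta}}$ so produced are again constants, and $\tilde C^{[\vec{\bf n}]}_{0,\vec{\bf n}}=1$ is inherited directly.

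The main obstacle is the combinatorial bookkeeping in the last step: matching the reindexed sum $(i,\vec{\gamma},l',\vec{\delta})\mapsto(l,\vec{\delta})$ with the stratification by $H_j$ and verifying that the boundary indices ($i=0$ versus $i\geq 1$, and the range $l\leq j$) are compatible with the statement. I would handle this by fixing $\vec{\delta}$ and $l$, describing the set of admissible $\vec{\gamma}$ as precisely those that are subsets of $\{1,\dots,n\}$ containing $\vec{\delta}$ with $|\vec{\gamma}|=n-2i$, and checking that this matches the summation range indicated in the statement. The rest of the argument is a routine linear-algebra inversion of an upper-triangular (with unit diagonal) system indexed by subsets of $\{1,\dots,n\}$, and therefore presents no further difficulty.
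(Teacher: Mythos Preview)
Your proposal is correct and follows essentially the same approach as the paper: invert the triangular relation from Proposition~\ref{propbkpdt}, isolate $\alpha_{B,\vec{\bf n}}\tau_B^{[0]}$, and proceed by strong induction on $n$, substituting the inductive hypothesis for each shorter $\alpha_{B,\vec{\gamma}}\tau_B^{[0]}$ and reindexing via $l=l'+i$. The paper's own proof is precisely this argument, with the stratification $H_{n,n-2l}=\cup_{j=0}^l H_j$ used at the end to organize the coefficient recursion; the combinatorial bookkeeping you flag as the main obstacle is exactly what the paper carries out explicitly.
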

\begin{proof}
It is obviously correct for $n=0$ and $n=1$. If we assume it is corrected for $<n$, then by Proposition \ref{propbkpdt},
\begin{align*}
&\alpha_{B,\vec{\bf n}}\tau_B^{[0]}=\tau_B^{[\vec{\bf n}]}-\sum_{i=1}^{[n/2]}\sum_{\vec{\gamma}\in H_{n,n-2i}}C^{[\vec{\bf n}]}_{i,\vec{\gamma}}\alpha_{B,\vec{\gamma}}\tau_{B}^{[0]}\\
=&\tau_B^{[\vec{\bf n}]}-\sum_{i=1}^{[n/2]}\sum_{l=0}^{[n/2]-i}\sum_{\vec{\gamma}\in H_{n,n-2i}}\sum_{\vec{\delta}\in H_{n-2i,n-2i-2l}}C^{[\vec{\bf n}]}_{i,\vec{\gamma}}\tilde{C}
^{[\vec{\gamma}]}_{l,\vec{\delta}}\tau_{B}^{[\gamma_{\vec{\delta}}]}\\
=&\tau_B^{[\vec{\bf n}]}-\sum_{l=1}^{[n/2]}\sum_{j=0}^{l}\sum_{\vec{\delta}\in H_j}\sum_{i=0}^j\sum_{\vec{\gamma}\in H_{n,n-2i}}C^{[\vec{\bf n}]}_{i,\vec{\gamma}}\tilde{C}
^{[\vec{\gamma}]}_{l-i,\vec{\delta}}\tau_{B}^{[\gamma_{\vec{\delta}}]}
\end{align*}
Then by noting that $H_{n,n-2l}=\cup_{j=0}^lH_j$, this corollary can be proved.
\end{proof}

Denote $\mathcal{O}_n$ to be the linear combination of the transformed tau functions, under no more than $n$-step $T_B$ with the corresponding generating eigenfunctions in $T_B^{[\vec{\bf n}]}$. Then one can obtain the following theorem on the bilinear equations of $\tau_B^{[n]}$.
\begin{theorem}
Given $n'\geq n$, one has the following relations
\begin{align*}
S_B^l\Big(\tau_{B}^{[n']}\otimes\tau_{B}^{[n]}\Big)=&
(-1)^{(n'-n)l}\sum_{j=0}^l2^{-(l-j)}C_l^j\sum_{k=0}^{[j/2]}a_{n'-n,j,k}\sum_{\vec{\gamma}\in n+H_{n'-n,j-2k}}(-1)^{-|\vec{\gamma}|+n(j-2k)}\\
&\times\Big(\tau_B^{[\overrightarrow{\bf n'}\setminus\vec{\gamma}]}\otimes\tau_B^{[\vec{\gamma}\cup\vec{\bf n}]}+\mathcal{O}_{n'-j+2k-2}\otimes\mathcal{O}_{n+j-2k}
+\mathcal{O}_{n'-j+2k}\otimes\mathcal{O}_{n+j-2k-2}\Big),\\
S_B^l\Big(\tau_{B}^{[n]}\otimes\tau_{B}^{[n']}\Big)=&
(-1)^{(n'-n)l}\sum_{j=0}^l2^{-(l-j)}C_l^j\sum_{k=0}^{[j/2]}a_{n'-n,j,k}\sum_{\vec{\gamma}\in n+H_{n'-n,j-2k}}(-1)^{-|\vec{\gamma}|+n(j-2k)}\\
&\times\Big(\tau_B^{[\vec{\gamma}\cup\vec{\bf n}]}\otimes\tau_B^{[\overrightarrow{\bf n'}\setminus\vec{\gamma}]}+\mathcal{O}_{n+j-2k-2}\otimes\mathcal{O}_{n'-j+2k}
+\mathcal{O}_{n+j-2k}\otimes\mathcal{O}_{n'-j+2k-2}\Big).
\end{align*}
\end{theorem}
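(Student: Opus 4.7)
The plan is to reduce the computation of $S_B^l(\tau_B^{[n']}\otimes\tau_B^{[n]})$ to a direct application of Lemma \ref{sblbeta} after first factoring out the common piece of the two transformed tau functions, and then using Corollary \ref{coralpbtau} to convert fermion products back to tau functions. By Proposition \ref{propbkpdt} and Corollary \ref{coralpbtau}, the expansions
\begin{align*}
\tau_B^{[\vec{\bf n'}]}=\alpha_{B,\vec{\bf n'}}\tau_B^{[0]}+\mathcal{O}_{n'-2},\qquad \tau_B^{[\vec{\bf n}]}=\alpha_{B,\vec{\bf n}}\tau_B^{[0]}+\mathcal{O}_{n-2}
\end{align*}
hold, so $\tau_B^{[n']}\otimes\tau_B^{[n]}=\alpha_{B,\vec{\bf n'}}\tau_B^{[0]}\otimes\alpha_{B,\vec{\bf n}}\tau_B^{[0]}+\mathcal{O}_{n'-2}\otimes\mathcal{O}_n+\mathcal{O}_{n'}\otimes\mathcal{O}_{n-2}$. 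Since $n'\geq n$, I would factor $\alpha_{B,\vec{\bf n'}}=\alpha_{B,\vec{\bf n'}\setminus\vec{\bf n}}\cdot\alpha_{B,\vec{\bf n}}$ to rewrite the leading piece as $(\alpha_{B,\vec{\bf n'}\setminus\vec{\bf n}}\otimes 1)\bigl(\alpha_{B,\vec{\bf n}}\tau_B^{[0]}\otimes\alpha_{B,\vec{\bf n}}\tau_B^{[0]}\bigr)$.

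Next I would apply Lemma \ref{sblbeta} with the $\beta_i$'s chosen as $\alpha_{B,n+1},\ldots,\alpha_{B,n'}$, which re-indexes $H_{n'-n,j-2k}$ to $n+H_{n'-n,j-2k}$ and replaces the sign $(-1)^{-|\vec{\gamma}|}$ in $\mathfrak{A}^+_{n'-n,j-2k}$ by $(-1)^{-|\vec{\gamma}|+n(j-2k)}$ via the standard shift-of-index argument. The residual $S_B^{l-j}$ then acts on $\alpha_{B,\vec{\bf n}}\tau_B^{[0]}\otimes\alpha_{B,\vec{\bf n}}\tau_B^{[0]}=\tau_B^{[\vec{\bf n}]}\otimes\tau_B^{[\vec{\bf n}]}+(\text{lower})$, and because $\tau_B^{[\vec{\bf n}]}$ is itself a BKP tau function the fundamental identity $S_B(\tau_B^{[\vec{\bf n}]}\otimes\tau_B^{[\vec{\bf n}]})=\tfrac12\tau_B^{[\vec{\bf n}]}\otimes\tau_B^{[\vec{\bf n}]}$ applied $l-j$ times produces the prefactor $2^{-(l-j)}$. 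This accounts for the $(-1)^{(n'-n)l}$, $2^{-(l-j)}C_l^j$ and $a_{n'-n,j,k}$ in the stated formula.

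To finish I would convert back to tau functions using Corollary \ref{coralpbtau} at the $n$-step level, which gives $\alpha_{B,(\vec{\bf n'}\setminus\vec{\bf n})\setminus\vec{\gamma}}\tau_B^{[\vec{\bf n}]}=\tau_B^{[\vec{\bf n'}\setminus\vec{\gamma}]}+\mathcal{O}_{n'-j+2k-2}$ and $\alpha_{B,\vec{\gamma}}\tau_B^{[\vec{\bf n}]}=\tau_B^{[\vec{\gamma}\cup\vec{\bf n}]}+\mathcal{O}_{n+j-2k-2}$. Expanding the tensor product $(A+\mathcal{O})\otimes(B+\mathcal{O}')$ yields the principal term $\tau_B^{[\vec{\bf n'}\setminus\vec{\gamma}]}\otimes\tau_B^{[\vec{\gamma}\cup\vec{\bf n}]}$ together with two cross terms of the form $\mathcal{O}_{n'-j+2k-2}\otimes\mathcal{O}_{n+j-2k}$ and $\mathcal{O}_{n'-j+2k}\otimes\mathcal{O}_{n+j-2k-2}$; the doubly-lower cross term is absorbed into either of these buckets. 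The mixed pieces $\mathcal{O}_{n'-2}\otimes\mathcal{O}_n$ and $\mathcal{O}_{n'}\otimes\mathcal{O}_{n-2}$ from the first step, together with $S_B^{l-j}$ acting on the lower-order remnants for each $j<l$, are likewise re-absorbed into the same three $\mathcal{O}\otimes\mathcal{O}$ categories using relation \eqref{sootimeso}-type splittings. The second identity, for $S_B^l(\tau_B^{[n]}\otimes\tau_B^{[n']})$, is obtained by the symmetric calculation using the companion formula $S_B^l(1\otimes\alpha_{B,\vec{\bf n'}\setminus\vec{\bf n}})$ (with $\mathfrak{A}^-$ in place of $\mathfrak{A}^+$) from Lemma \ref{sblbeta}.

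The main obstacle will be bookkeeping rather than any new structural input: one must verify that each term produced by the residual $S_B^{l-j}$-action on the various lower-order $\mathcal{O}$-remnants, together with the cross terms in Corollary \ref{coralpbtau}, fits exactly into the three prescribed $\mathcal{O}_{*}\otimes\mathcal{O}_{*}$ categories stated in the theorem, and that the shift of indices in Lemma \ref{sblbeta} produces precisely the sign $(-1)^{-|\vec{\gamma}|+n(j-2k)}$ and the range $\vec{\gamma}\in n+H_{n'-n,j-2k}$. This is purely combinatorial but requires careful sign-tracking, particularly because the $\alpha_{B,i}$'s neither commute nor anticommute and $\alpha_{B,i}^2$ is a nonzero constant, which is what makes the $\mathcal{O}$-remainders appear in the first place.
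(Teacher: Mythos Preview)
Your proposal is correct and follows essentially the same strategy as the paper: both arguments rest on Lemma \ref{sblbeta} for the action of $S_B^l$ on products of neutral fermions, Proposition \ref{propbkpdt} and Corollary \ref{coralpbtau} for the passage between $\tau_B^{[\vec{\bf m}]}$ and $\alpha_{B,\vec{\bf m}}\tau_B^{[0]}$, and the BKP bilinear identity $S_B(\tau_B^{[\vec{\bf n}]}\otimes\tau_B^{[\vec{\bf n}]})=\tfrac12\tau_B^{[\vec{\bf n}]}\otimes\tau_B^{[\vec{\bf n}]}$ to produce the factor $2^{-(l-j)}$. The only cosmetic difference is that the paper first observes that this diagonal identity lets one reduce to the case $n=0$ and then computes $S_B^l(\tau_B^{[n]}\otimes\tau_B^{[0]})$ directly, whereas you keep both levels $n'\geq n$ and implement the same reduction by factoring $\alpha_{B,\vec{\bf n'}}=\alpha_{B,\vec{\bf n'}\setminus\vec{\bf n}}\cdot\alpha_{B,\vec{\bf n}}$; the substance is identical. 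One small caution: your reference to ``relation \eqref{sootimeso}-type splittings'' points to the KP operator $S$ rather than $S_B$, so in a full write-up you should invoke the $S_B$ analogue obtained from Lemma \ref{sbbeta}.
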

\begin{proof}
Since $S_B^l\Big(\tau_{B}^{[n]}\otimes\tau_{B}^{[n]}\Big)=2^{-l}
\Big(\tau_{B}^{[n]}\otimes\tau_{B}^{[n]}\Big)$, so we can only consider $S_B^l\Big(\tau_{B}^{[n]}\otimes\tau_{B}^{[0]}\Big)$. According to Lemma \ref{sblbeta}, Proposition \ref{propbkpdt} and Corollary \ref{coralpbtau},
\begin{align*}
&S_B^l\Big(\tau_{B}^{[n]}\otimes\tau_{B}^{[0]}\Big)=\sum_{j=0}^{[n/2]}\sum_{\vec{\gamma}\in H_{n,n-2j}}S_B^l(\alpha_{B,\vec{\gamma}}\otimes 1) (\tau_{B}^{[0]}\otimes \tau_{B}^{[0]})\\
=&(-1)^{nl}\sum_{j=0}^{[n/2]}\sum_{\vec{\gamma}\in H_{n,n-2j}}C^{[\vec{\bf n}]}_{j,\vec{\gamma}}\sum_{i=0}^l2^{-(l-j)}C_l^j\sum_{k=0}^{[i/2]}a_{n-2j,i,k}
\sum_{\vec{\delta}\in H_{n-2j,i-2k}}(-1)^{-|\vec{\delta}|}\\
&\times\Big(\tau_B^{[\vec{\gamma}\setminus\gamma_{\vec{\delta}}]}
\otimes\tau_B^{[\vec{\delta}]}+\mathcal{O}_{n-2j-i+2k-2}\otimes\mathcal{O}_{i-2k}
+\mathcal{O}_{n-2j-i+2k}\otimes\mathcal{O}_{i-2k-2}).
\end{align*}
Further by using $C^{[\vec{\bf n}]}_{0,\vec{\bf n}}=1$, one can obtain the corresponding result.
\end{proof}
\noindent{\bf Remark:} It can be proved that the results of $S_B^l\Big(\tau_{B}^{[n']}\otimes\tau_{B}^{[n]}\Big)$ and $S_B^l\Big(\tau_{B}^{[n]}\otimes\tau_{B}^{[n']}\Big)$ are equivalent in the Bosonic form by using the fact ${\rm Res}_\lambda f(\lambda)=-{\rm Res}_\lambda f(-\lambda)$.

\subsection{Examples of the bilinear equations in the BKP Darboux transformation}
\label{subsectbkpdtexp}
In this section, one give some example for the bilinear equations for the BKP hierarchy.
\begin{eqnarray*}
&&S_B(\tau_B^{[0]}\otimes\tau_B^{[1]})=\tau_B^{[0]}\otimes\tau_B^{[1]}-\frac{1}{2}\tau_B^{[1]}\otimes\tau_B^{[0]},\\
&&S_B(\tau_B^{[2]}\otimes\tau_B^{[0]})=\tau_{B}^{[\{1\}]}\otimes\tau_{B}^{[\{2\}]}
-\tau_{B}^{[\{2\}]}\otimes\tau_{B}^{[\{1\}]}+\frac{1}{2}\tau_B^{[2]}\otimes\tau_B^{[0]}\\
&&S_B(\tau_B^{[\{1\}]}\otimes\tau_B^{[\{2\}]})=\tau_{B}^{[2]}\otimes\tau_{B}^{[0]}
-\tau_{B}^{[0]}\otimes\tau_{B}^{[2]}+\frac{1}{2}\tau_B^{[\{1\}]}\otimes\tau_B^{[\{2\}]}.
\end{eqnarray*}
The corresponding Bosonic forms are
\begin{align}
{\rm Res}_\lambda \lambda^{-1} &\tau_B^{[1]}(t-2\widetilde{\varepsilon}(\lambda^{-1}))
\tau_B^{[0]}(t'+\varepsilon(\lambda^{-1}))e^{\widetilde{\xi}(t,\lambda)-\widetilde{\xi}(t',\lambda)}
=2\tau_B^{[0]}(t)\tau_B^{[1]}(t')-\tau_B^{[1]}(t)\tau_B^{[0]}(t'),\label{stau1tau0mbkp}\\
{\rm Res}_\lambda \lambda^{-1} &\Omega_B(\Phi_{B2}(t-2\widetilde{\varepsilon}(\lambda^{-1})),\Phi_{B1}(t-2\widetilde{\varepsilon}(\lambda^{-1})))
w(t,\lambda)w(t',\lambda)\nonumber\\
&=2\Phi_{B1}(t)\Phi_{B2}(t')-2\Phi_{B2}(t)\Phi_{B1}(t')+
\Omega_B(\Phi_{B2}(t),\Phi_{B1}(t)),\label{resomeww}\\
{\rm Res}_\lambda \lambda^{-1} &\Phi_{B1}(t-2\widetilde{\varepsilon}(\lambda^{-1}))
\Phi_{B2}(t'+2\widetilde{\varepsilon}(\lambda^{-1}))w(t,\lambda)w(t',\lambda)\nonumber\\
&=2\Omega_B(\Phi_{B2}(t),\Phi_{B1}(t))-2\Omega_B(\Phi_{B2}(t'),\Phi_{B1}(t'))
+\Phi_{B1}(t)\Phi_{B2}(t'),\label{resphib12ww}
\end{align}
where we have used $\tau_{B}^{[\{1\}]}(t)=\Phi_{B1}(t)\tau_B(t)$, $\tau_{B}^{[\{2\}]}(t)=\Phi_{B2}(t)\tau_B(t)$ and $\tau_B^{[2]}=\Omega_B(\Phi_{B2}(t),\Phi_{B1}(t))\tau_B(t)$.
Here the first relation (\ref{stau1tau0mbkp}) is just the bilinear equation of the modified BKP hierarchy.
By noting that $\Omega_B(\Phi_B,1)=\Phi_B$, so if we set $\Phi_{B1}=1$ (one should recall that $1$ is also the eigenfunction of the BKP hierarchy), then
the relation (\ref{resomeww}) is transformed into the bilinear equations of the modified BKP hierarchy\cite{Jimbo1983,Wang2019}. Further it can be found that (\ref{resomeww}) and \eqref{resphib12ww} can give rise to the following relation \cite{Cheng2010}.
\begin{align}
&\Omega_B(\phi_{B2}(t-2\widetilde{\varepsilon}(\lambda^{-1})),\phi_{B1}(t-2\widetilde{\varepsilon}(\lambda^{-1})))-
\Omega_B(\phi_{B2}(t),\phi_{B1}(t))\nonumber\\
=&\Phi_{B1}(t)\Phi_{B2}(t-2\widetilde{\varepsilon}(\lambda^{-1}))
-\Phi_{B2}(t)\Phi_{B1}(t-2\widetilde{\varepsilon}(\lambda^{-1})),
\end{align}
which is also very important in deriving the ASvM formula in the additional symmetries of the BKP hierarchy \cite{Tu2007,Li2015} and getting the bilinear equations \cite{Shen2011} of the $l$ constrained BKP hierarchy: $L_B^l=(L_B^l)_{\geq 0}+\Phi_{B1}\partial^{-1}\Phi_{B2,x}-\Phi_{B2}\partial^{-1}\Phi_{B1,x}$ by using the method in Subsection \ref{subsectkpdtexp}.
\section{Conclusions and Discussions}
In this paper, we have established various bilinear equations of the transformed tau functions under the Darboux transformations for the KP, modified KP and BKP hierarchies, which are given in Subsection \ref{subsectbilinkp}, Subsection \ref{subsectbilinmkp} and Subsection \ref{subsectbilinbkp} respectively. All these results are based upon two key things. One is the important relations about the free Fermions in Subsection \ref{subsectionrelationfreeferm}, another is the transformed tau functions in Fermionic forms under the successive applications of the Darboux transformations (see Proposition \ref{kptdtitau}, Proposition \ref{mkptdti} and Proposition \ref{propbkpdt}). Here we would like to point out the major difficulty in this paper. Note that free Fermionic field $\alpha\in V$ do not commute or anticommute with $\alpha^*\in V^*$, and similarly the neutral free Fermionic field $\alpha_B\in V_B$ also do not commute or anticommute with each other. Also $\alpha_B^2\neq 0$, different from $\alpha^2=\alpha^{*2}=0$. All these facts bring much difficulty for mixed using $T_d$ and $T_i$ (or $T_D$ and $T_I$), and for the BKP Darboux transformation. The bilinear equations in KP case involving mixed using $T_d$ and $T_i$, and the ones for the modified KP and BKP hierarchy should be new ones. These bilinear equations are usually very hard to prove in the Bosonic forms, while they can be easily obtained by using the Fermionic approach. The corresponding examples are given in Subsection \ref{subsectkpdtexp}, Subsection \ref{subsectmkpdtexp} and Subsection \ref{subsectbkpdtexp} respectively.

There should be some questions needing further discussions. Though we have obtained all the possible bilinear equations in this paper, it will be very interesting to discuss the relations among them and check which bilinear equations are essential in determining the transformed tau functions in the Darboux chain. What's more, since the bilinear equations can be viewed as the Pl\"ucker relations in the infinite dimensional Grassmannian, it will be very meaningful to determine this kind of correspondence. Just as showed in \cite{Helminckcjm2001,Helminckprims2001}, $T_d$ and $T_i$ are interpreted as the transformations of the points in the KP Grassmannian, but the corresponding Pl\"ucker relations are not considered. In the future, we will consider these questions.
%%%%%%%%%%%%%%%%%%%%%%%%%%%%%%%%%%%%%%%%%%%%%%%%%%%%%%%%%%%%%%

\end{document}